\documentclass{lmcs}
\pdfoutput=1

\usepackage{lastpage}
\lmcsdoi{20}{3}{14}
\lmcsheading{}{\pageref{LastPage}}{}{}%
{Jun.~30,~2023}{Aug.~07,~2024}{}

\usepackage{stmaryrd}  
\usepackage{tikz}
\usetikzlibrary{arrows,positioning,fit,calc}
\usepackage[utf8]{inputenc}
\usepackage{amsmath}

\usepackage{amsbsy}
\usepackage{bussproofs}
\usepackage{cite} 
\usepackage{caption}
\usepackage{subcaption}
\usepackage{xspace,centernot}
\usepackage[normalem]{ulem}
\newcommand*{\nat}{\ensuremath{\mathbb{N}}}

\newcommand{\logicize}[1]{{\ensuremath{\mathbf{#1}}}\xspace}
\newcommand{\klogic}{\logicize{K}}
\newcommand{\dlogic}{\logicize{D}}
\newcommand{\tlogic}{\logicize{T}}
\newcommand{\blogic}{\logicize{B}}
\newcommand{\kf}{\logicize{K4}}
\newcommand{\df}{\logicize{D4}}
\newcommand{\Bf}{\logicize{B4}}
\newcommand{\sr}{\logicize{S4}}
\newcommand{\sv}{\logicize{S5}}
\newcommand{\kv}{\logicize{K5}}

\newcommand{\Inv}{\mathsf{Inv}}
\newcommand{\cl}{\mathsf{cl}}
\newcommand\changeJ[1]{#1}
\newcommand\change[1]{#1}
\newcommand{\LG}{\logicize{L}}

\usepackage{complexity}
\renewclass{\EEXP}{2EXP}
\renewcommand{\M}{\mathcal{M}}

\newcommand{\G}{\logicize{G}}

\usepackage{amsmath}
\usepackage{amsthm}
\usepackage{amsfonts}
\usepackage{amssymb}

\usepackage{multicol}
\usepackage{thmtools}

\theoremstyle{remark}
\newtheorem{remark}[thm]{Remark}
\newtheorem{example}[thm]{Example}

\theoremstyle{definition}
\newtheorem{translation}[thm]{Translation}

\newcommand{\mycap}[1]{\text{\sc#1}}

\newcommand{\trueset}[1]{{\left\llbracket #1 \right\rrbracket}}
\newcommand{\diam}[1]{\langle #1 \rangle}

\newcommand{\stbox}[1]{[ #1 ]}
\newcommand{\al}{\alpha}
\newcommand{\alb}{\beta}

\newcommand{\proc}{\ensuremath{W}}  

\newcommand{\impl}{\rightarrow}
\newcommand{\tran}{\ensuremath{R}}
\newcommand{\trana}{\ensuremath{R_\al}}

\newcommand{\false}{\mathtt{ff}}
\newcommand{\true}{\mathtt{tt}}

\newcommand{\act}{\mycap{Ag}}

\renewcommand{\R}{{\mathcal{R}}}

\newcommand{\ML}{modal logic}
\newcommand{\NI}{Negative Introspection}

\newcommand{\mn}{\ensuremath{\mu}}
\newcommand{\mx}{\ensuremath{\nu}}

\newcommand{\transl}[3]{\ensuremath{\mathsf{F}_{#1}^{#2}({#3})}}

\newcommand{\nStt}[2]{.#1\langle#2\rangle}
\newcommand{\fix}{\ensuremath{\mathsf{fx}}}
\newcommand{\form}{\ensuremath{\Phi}}

\newcommand{\closed}{\texttt{x}}

\newcommand{\resp}{\emph{resp.}~}

\newcommand{\sub}{\ensuremath{\mathsf{sub}}}
\newcommand{\subc}{\ensuremath{\overline{\mathsf{sub}}}}

\newcommand{\ax}{\ensuremath{\mathsf{ax}}}

\newcommand{\finb}{\mathsf{fp}}
\newcommand{\infb}{\mathsf{ip}}

\begin{document}

\title[Complexity results for modal logic with recursion]{Complexity results for modal logic with recursion via translations and tableaux}

\author[L. Aceto]{Luca Aceto\lmcsorcid{0000-0002-2197-3018}}[a,b]

\author[A. Achilleos]{Antonis Achilleos\lmcsorcid{0000-0002-1314-333X}}[a]

\author[E. Anastasiadi]{Elli Anastasiadi\lmcsorcid{0000-0001-7526-9256}}[a,c]

\author[A. Francalanza]{Adrian Francalanza\lmcsorcid{0000-0003-3829-7391}}[d]

\author[A. Ing\'{o}lfsd\'{o}ttir]{Anna Ing\'{o}lfsd\'{o}ttir\lmcsorcid{0000-0001-8362-3075}}[a]

\address{Dept. of Computer Science, Reykjavik University,
	Reykjavik, Iceland}	
\email{luca@ru.is, antonios@ru.is, annai@ru.is}  

\address{Gran Sasso Science Institute
L’Aquila, Italy}
\address{Dept. of Information Technology, Uppsala University, Uppsala, Sweden}
\email{elli.anastasiadi@it.uu.se}

\address{Dept. of Computer Science, ICT, University of Malta, Msida, Malta}
\email{adrian.francalanza@um.edu.mt}

\thanks{ 
	This work has been funded by the projects ``Open Problems in the Equational Logic of Processes (OPEL)'' (grant no.~196050), ``Epistemic Logic for Distributed Runtime Monitoring'' (grant no.~184940), ``Mode(l)s of Verification and Monitorability'' (MoVeMent) (grant no~217987) of the Icelandic Research Fund, and the project ``Runtime and Equational Verification of Concurrent Programs'' (ReVoCoP) (grant 222021) of the Reykjavik University Research Fund.}

	
	
\begin{abstract} 

This paper studies the complexity of classical modal logics and of their extension with fixed-point operators, using 
translations to transfer results across logics. In particular, we show several complexity results
for 
multi-agent logics via translations to and from the $\mu$-calculus and modal logic, which allow us to transfer known upper and lower bounds. We also use these translations to introduce terminating and non-terminating tableau systems for the logics we study, based on Kozen's tableau for the $\mu$-calculus and the one of Fitting and Massacci for modal logic.
Finally, we describe these tableaux with $\mu$-calculus formulas, thus reducing the satisfiability of each of these logics to the satisfiability of the $\mu$-calculus, resulting in a general \EEXP\ upper bound for satisfiability testing.
\end{abstract}

\maketitle

\section{Introduction}\label{intro_multi_ml}

In this paper, we study a family of multi-modal logics with fixed-point operators that are interpreted over restricted classes of Kripke models.
One can consider these logics as extensions of the usual multi-agent logics of knowledge and belief  \cite{Fagin1995ReasoningAboutKnowledge} by adding recursion to their syntax or of the $\mu$-calculus \cite{Kozen1983} by interpreting formulas over different classes of frames and thus giving an epistemic interpretation to the modalities.
We define \emph{translations} between these logics, and we demonstrate how 
one can rely on these translations to
prove  finite-model theorems, complexity bounds, and tableau termination for each logic in the family,
in a systematic fashion. 

Modal logic comes in several variations \cite{blackburn2006handbook}. 
Some of these, such as multi-modal logics of knowledge and belief  \cite{Fagin1995ReasoningAboutKnowledge}, are of particular interest to Epistemology and other application areas. 
Semantically, the classical modal logics used in epistemic (but also other) contexts result from imposing certain restrictions on their models.
On the other hand, the modal $\mu$-calculus \cite{Kozen1983} can be seen as an extension of the smallest normal modal logic $\klogic$
 with greatest and least fixed-point operators, $\mx X$ and $\mn X$ respectively.

In this article, we explore the situation where one allows both recursion (\emph{i.e.} fixed-point) operators in a multi-modal language and imposes restrictions on the semantic models.
In particular, we are interested in the complexity of satisfiability for the resulting logics.
Satisfiability for the 
$\mu$-calculus is known to be \EXP-complete \cite{Kozen1983}, while for the modal logics between $\klogic$ and $\sv$ the problem is \PSPACE-complete or \NP-complete, depending on whether they have \NI\ \cite{ladnermodcomp,Halpern2007Characterizing}.
In the multi-modal case, satisfiability for those modal logics becomes \PSPACE-complete, and is \EXP-complete with the addition of a common knowledge operator \cite{Halpern1992}.

Our primary method for proving complexity results is through translations to and from the multi-modal $\mu$-calculus.
We show that we can use surprisingly simple translations from modal logics without recursion to the base modal logic $\klogic_n$, that is, K with $n$ agent-indexed modalities, reproving the \PSPACE\ upper bound for these logics (Theorem \ref{thm:no-fixpoint-translations-work} and Corollary \ref{cor:modalupper}).
These translations and our constructions to prove their correctness do not generally transfer to the corresponding logics with recursion.
We present translations from specific logics to the $\mu$-calculus and back, and we discuss the remaining open cases.
We discover, through the existence of our translations, that several restrictions induced on the frames do not affect the complexity of the satisfiability problem.
	As a result, we prove that all logics with axioms among $D$, $T$, and $4$, and the  least-fixed-point fragments of logics that also have $B$,  
have their satisfiability in \EXP, and a matching lower bound for the logics with axioms from $D,T$, and $B$ (Corollaries \ref{cor:mu-upper1} and \ref{cor:mu-upper2}).%

To address the limitations of the translation-based approach to proving complexity results, we then present tableaux for the discussed logics, based on the ones by Kozen for the $\mu$-calculus \cite{Kozen1983}, and by Fitting and Massacci for modal logic \cite{fitting1972tableau,Massacci1994}.
We give tableau-termination conditions for every logic with a finite-model property (Theorem \ref{thm:tableaux}).
Based on these tableaux, we give a polynomial-space decision procedure for the single-agent logics with axioms among $D$, $T$, and $4$, thus proving that satisfiability for these logics is \PSPACE-complete (Theorem \ref{thm:4-PSPACE}).
Finally, we give a general satisfiability-preserving translation from each logic to the $\mu$-calculus, using our tableaux. The translation describes a tableau branch with an exponentially larger $\mu$-calculus formula, establishing a \EEXP-upper bound for all our logics. 
The decidability of the satisfiability problem for logics with $B$ or $5$ was an open problem (see, for example, Remark \ref{remark:finitemodelB5}, based on \cite{Dagostino2013S5}).

The addition of recursive operators to \ML\ increases expressiveness.
An important example is that of \emph{common knowledge} or \emph{common belief}, which can be expressed with a greatest fixed-point thus:
 $\mx X.(\varphi \land \bigwedge_\al [\al] X)$, 
 where $\varphi$ describes the property that is common knowledge, and $\al$ ranges over a (finite) set of agents. 
However, the combination of epistemic logics and fixed-points
can potentially express more interesting epistemic concepts.
For instance, in the context of a belief interpretation,  the formula
$\mn X. \bigvee
_{\al 
}(
[\al]\varphi \vee [\al] X)$, 
can be viewed as claiming that there is a rumour of $\varphi$.
It would be interesting to see what other meaningful sentences of epistemic interest one can express using recursion. 
Furthermore, the family of logics we consider allows each agent to behave according to a different logic. 
This flexibility allows one to mix different interpretations of modalities, such as a temporal interpretation for one agent and an epistemic interpretation for another.
Such logics can even resemble hyper-logics \cite{Hyperproperties} if a set of agents represents different streams of computation.
Combinations of epistemic and temporal or hyper-logics have recently been used to express safety and privacy properties of systems \cite{EpistLogicRV}.

The paper is organized as follows. Section \ref{sec:multi_ml_background} gives the necessary background and an overview of current results.
Section \ref{sec:translations} defines a class of translations that provide us with several upper and lower bounds, and identifies conditions under which they can be composed.
In Section \ref{sec:multi} we give tableaux systems for our multi-modal logics with recursion. 
Section \ref{sec:KfourMu} proves a \PSPACE\ upper bound for the single-agent logics with axioms among $D$, $T$, and $4$; Section \ref{sec:TtT} presents the translation that establishes the \EEXP\ upper bounds for all the modal logics with recursion.
We conclude in Section \ref{sec:multi_ml_conclusion} with a set of open questions and directions.

This is an extended version of \cite{complexity-translations-gandalf}, with complete proofs for our results. The results from Sections \ref{sec:KfourMu} and \ref{sec:TtT} are also new in this version.

\section{Definitions and Background}\label{sec:multi_ml_background} 

This section introduces the logics that we study and the necessary background on the complexity of  \ML\ and the $\mu$-calculus.

\subsection{The Multi-Modal Logics with Recursion}
We start by defining the syntax of the logics.
\begin{defi}
	We consider formulas  constructed from the following grammar:
	\begin{align*}
	\varphi,\psi \in L
	::&=
	p &&|~~
	\neg p &&|~~
	\true                     &&|~~\false      &	&|~~X     
	&&|~~\varphi\land\psi            &&|~~\varphi\lor\psi    \\
	&|~~\langle\al\rangle\varphi &&|~~[\al]\varphi    
	&&|~~\mn X.\varphi              &&|~~\mx X.\varphi                                                     ,
	\end{align*}
	where $X$ comes from a countably infinite set of logical (or fixed-point) variables, \change{$\mycap{LVar}$}, $\al$ from a finite set of agents, $\act$, and $p$ from a finite 
	set of propositional variables\change{, $\mycap{PVar}$}. 
	When $\act = \{\al \}$, 
	$\Box \varphi$ 
	stands for
	$[\al]\varphi$, and $\Diamond \varphi$ 
	for
	$\diam{\al}\varphi$. 
	We also write $[A]\varphi$, with $A \subseteq \act$, to mean $\displaystyle\bigwedge_{\al \in A} [\al]\varphi$ and $\diam{A}\varphi$ 
	\change{for} 
	$\displaystyle\bigvee_{\al \in A} \diam{\al}\varphi$. As usual, 
	an empty conjunction stands for $\true$ and and empty disjunction stands for $\false$.
\end{defi}

A formula is closed when every occurrence of a variable $X$  is in the scope of some recursive operator $\mx X$ or $\mn X$. 
Henceforth we 
consider
only 
closed formulas, unless we specify otherwise.

\change{Moreover, for recursion-free closed formulas we associate the notion of \textit{modal depth}, which is the maximum nesting depth 
	of the modal operators%
	\footnote{\change{The modal depth of recursive formulas can be either zero, or infinite. However, this will not affect the results we present later \changeJ{on.}}}. The modal depth of $\varphi$ is defined inductively as:
\begin{itemize}
\item $md(p) = md(\neg p) = md(\true) = md(\false) = 0$, where $p \in \mycap{PVar}$,
\item $md(\varphi \lor \psi) = md(\varphi \land \psi) = \max(md(\varphi),md(\psi))$, and
\item $md(\stbox{\al}\varphi) = md(\diam{\al}\varphi)=  1 + md(\varphi)$, where $\al \in \act$.
\end{itemize}}
We assume that each recursion variable $X$ appears in a unique fixed-point formula $\fix(X)$, which is either of the form $\mn X.\varphi$ or $\mx X.\varphi$.
If $\fix(X)$ is a least-fixed-point (\resp greatest-fixed-point) formula, then $X$ is called a least-fixed-point (\resp greatest-fixed-point) variable.
We can define a partial order over fixed-point variables, such that 
$X\leq Y$ iff $\fix(X)$ is a subformula of $\fix(Y)$, and $X < Y$ when $X \leq Y$ and $X \neq Y$.
If $X$ is $\leq$-minimal among the free variables of $\varphi$, then 
we define the \emph{closure} of $\varphi$ to be 
\label{closure-formula}
\changeJ{$\cl(\varphi)=\cl(\varphi[\fix(X)/X])$},
 where $\varphi[\psi/X]$ is the usual substitution operation.
 Note that if $\varphi$ is closed, then 
 $\cl(\varphi)=\varphi$.
 
 We define $\sub(\varphi)$ as the set of subformulas of $\varphi$. The size of $\varphi$, denoted by $|\varphi|$, is the cardinality of $\sub(\varphi)$, which is bounded by the length of $\varphi$ as a string of symbols.
 Negation, $\neg \varphi$, and implication, $\varphi \impl \psi$, can be defined in the usual way.
 Then, we define $\subc(\varphi) = \sub(\varphi) \cup \{ \neg \psi \mid \psi \in \sub(\varphi) \}$.
 


\paragraph{Semantics}

We interpret formulas on 
the states of a
\emph{Kripke model}.
%
A Kripke model, or simply 
model, 
is
a triple
\changeJ{$\M=$} $( \proc,\tran,V)$
where $\proc$ is a nonempty set of states, 
${\tran}\subseteq\proc\times\act \times \proc$ is a transition relation, and $V:\proc\to 2$\change{$^{\mycap{PVar}}$} determines the propositional variables that are true at each state. 
$( \proc,\tran)$ is called a \emph{frame}.
We usually write $(u,v)\in \trana$ or $u \trana v$ instead of $(u,\al,v)\in \tran$, or $u \tran v$, \change{when $\act$ is a singleton $\{\al\}$}.

Formulas are evaluated in the context of 
an \emph{environment} $\rho:\mycap{LVar}\to 2^\proc$, which gives values to the logical variables. 
For an environment $\rho$, variable $X$, and set $S \subseteq \proc$, we write $\rho[X \mapsto S]$ for the environment that maps $X$ to $S$ and all $Y \neq X$ to $\rho(Y)$.
The semantics for 
our
formulas is given through a function $ \trueset{ -  }_\M$, defined in Table \ref{table:semantics}.
\begin{table}
\begin{align*}
\noalign{\noindent
$ \trueset{\true, \rho }= \proc, 
\hfill 
 \trueset{\false,\rho }=\emptyset ,
\hfill
 \trueset{ p, \rho } = \{s \mid p\in V(s) \}
,
\hfill
 \trueset{ \neg p, \rho } = \proc {\setminus}  \trueset{ p, \rho } 
,
$}
 \trueset{[\al]\varphi,\rho }&=\left\{s~\big|~ \forall t. \ sR_{\al}t\text{ implies } t\in \trueset{\varphi,\rho }\right\},
& \trueset{\varphi_1{\land}\varphi_2, \rho }&= \trueset{\varphi_1,\rho }\cap \trueset{\varphi_2,\rho }, 
\\
 \trueset{\langle\al\rangle\varphi,\rho }&=\left\{s~\big|~ \exists t. \ sR_{\al}t\text{ and } t\in \trueset{\varphi,\rho }\right\},
 & 
 \trueset{\varphi_1{\lor}\varphi_2, \rho }&= \trueset{\varphi_1,\rho }\cup \trueset{\varphi_2,\rho } , 
\\
 \trueset{\mn X.\varphi,\rho }&=\bigcap\left\{S~\big|~ S \supseteq  \trueset{\varphi,\rho[X\mapsto S] }\right\},
 &
 \trueset{ X, \rho }& = \rho(X),
\\
 \trueset{\mx X.\varphi,\rho }&=\bigcup\left\{S~\big|~ S \subseteq  \trueset{\varphi, \rho[X\mapsto S] }\right\} .
\end{align*}
\caption{Semantics of modal formulas on a model $\M = (W,R,V)$. We omit $\M$ from the 
	notation.}
\label{table:semantics}
\end{table}
The semantics of $\neg \varphi$ 
is constructed 
as usual, where $ \trueset{ \neg X, \rho }_\M = \proc {\setminus} \rho(X)$.

We call a pair $(\M,s)$ of a model and a state a \emph{pointed model}.
We sometimes use  
$(\M,s) \models_\rho \varphi$ for $s \in \trueset{\varphi,\rho}_\M$, and 
as the environment 
has no effect on the semantics of a closed formula $\varphi$, 
we 
often 
drop it from the notation 
 and 
write 
$(\M,s) \models \varphi$ or $s \in \trueset{\varphi}_\M$.
%
If $(\M,s) \models \varphi$, we say that $\varphi$ is true, or satisfied, in $s$. 
When the particular model does not matter, or is clear from the context, we may omit it. 


Depending on how we 
further 
restrict our 
syntax 
and 
the model, we can describe several logics.
Without further restrictions, the resulting logic 
is the $\mu$-calculus \cite{Kozen1983}. The max-fragment (resp. min-fragment) of the $\mu$-calculus is the fragment that only allows 
the $\mx X$ (resp. the $\mn X$) recursive operator.
If 
$|\act| = k \in \mathbb{N}^+$
and we allow no recursive operators (or recursion variables), then we have the basic modal logic $\klogic_k$ (or $\klogic$, if $k=1$), and further restrictions on the frames can result in a wide variety of modal logics (see, for instance, \cite{MLBlackburnRijkeVenema}). 
We give names to the following frame conditions, or frame constraints, for the case where $\act=\{\al\}$.
These conditions correspond to the usual axioms for normal modal logics --- see \cite{blackburn2006handbook,MLBlackburnRijkeVenema,Fagin1995ReasoningAboutKnowledge}, which we will revisit in Section \ref{sec:translations}.
\begin{multicols}{2}
\begin{description}
	\item[$D$] 
	$\tran$ is \textit{serial}: 
	$\forall s\exists t, ~s  \tran t $;
	\item[$T$] $\tran$ is \textit{reflexive}: 
	$\forall s, ~s  \tran s $;
	\item[$B$] $\tran$ is \textit{symmetric}: 
	$\forall s,t,~ {s \tran t \text{ implies } t \tran s}$;
	\item[$4$] $\tran$ is \textit{transitive}: 
	$\forall s,t,r,$ if $s\tran t$ and $t \tran r$ then $s \tran r$;
	\item[$5$] $\tran$ is \textit{euclidean}: 
	$\forall s,t,r,$ if $s \tran t$ and $s \tran r$,  then $t \tran r$.
\end{description}
\end{multicols}

We consider modal logics that are interpreted over models that satisfy a combination of these constraints for each agent. 
$T$, which we call Factivity, is a special case of $D$, which we call Consistency. Constraint $4$ is 
Positive Introspection and $5$ is called Negative Introspection.\footnote{These are names for properties or axioms of a logic. When we refer to these conditions as conditions on a frame or model, we may refer to them with the name of the corresponding condition on relations, namely: seriality, reflexivity, symmetry, transitivity, and euclidicity.}
Given a logic $\LG$ and constraint $c$, we write $\LG+c$ for the logic that is interpreted over all models with frames that satisfy  all the constraints of $\LG$ and $c$.
The name of a single-agent logic is a combination of the constraints that apply to its frames, including $K$, if the constraints are contained in $\{4,5\}$.
Therefore, logic $\dlogic$
 is $\klogic + D$, $\tlogic$ is $\klogic+T$, $\blogic$ is $\klogic+B$, $\kf = \klogic + 4$, 
$\df = \klogic + D + 4 = \dlogic + 4$, 
and so on.
We use the special names $\sr$ for $\logicize{T4}$ and $\sv$ for $\logicize{T45}$.
We define a (multi-agent) logic $\LG$ on $\act$ as a map from agents to single-agent logics.
$\LG$ is interpreted on Kripke models of the form $(\proc,\tran,V)$, where for every $\al\in\act$, $(\proc,\trana)$ is a frame for $\LG(\al)$.

For a logic $\LG$, 
we write $\LG^\mu$ for the logic that results from $\LG$ after we allow recursive operators in the syntax --- in case they were not allowed in \LG.
Furthermore, if for every $\al \in \act$, $\LG(\al)$ is the same single-agent logic ${\LG}$, we write $\LG$ as ${\LG}_k$, where $|\act|=k$.
Therefore, the $\mu$-calculus is $\klogic^\mu_k$.

From now on, unless we explicitly say otherwise, by a logic, we mean one of the logics we 
have 
defined above.
%
%
We call a formula satisfiable  for a logic $\LG$, if it is satisfied in some  state of a model  for $\LG$. 

\begin{example}\label{ex:inv_common_K}
	For a formula $\varphi$, we  define $\Inv(\varphi)= \mx X.(\varphi \land [\act]X)$. $\Inv(\varphi)$ asserts that $\varphi$ is true in \emph{all} reachable states, or, alternatively, it can be read as an assertion that $\varphi$ is common knowledge.
	We  dually define 
	$Eve(\varphi)= \mn X.(\varphi \lor \diam{\act}X)$, which asserts that $\varphi$ is true in \emph{some} reachable state. 
\end{example}

In what follows, we will often use the concept of \textit{bisimilarity}, when arguing that two different Kripke models satisfy the same formulas. 

\begin{defi}[Bisimulation]\label{def:strong_bisim}
A binary relation $S \subseteq W_1 \times W_2$ over the set of states of two Kripke models $\M_1 = (W_1,R_1,V_1)$ and  $\M_2 = (W_2,R_2,V_2)$ is a bisimulation if and only if whenever $s_1 S s_2$, for each agent $\al$:
\begin{itemize}
\item for each state $s_1'$, if $(s_1, s_1') \in R_{1,\al}$, then there exists a state $s_2'$ such that  $(s_2, s_2') \in R_{2,\al}$ and $s_1' S s_2'$, 
\item for each state $s_2'$, if $(s_2, s_2') \in R_{2,\al}$, then there exists a state $s_1'$ such that  $(s_1, s_1') \in R_{1,\al}$ and $s_1' S s_2'$, and
\item $p \in V(s_1)$ iff $p \in V(s_2)$, for each propositional variable $p$.
\end{itemize}
Two pointed models $(\M_1,s_1)$ and $(\M_2,s_2)$ 
are bisimilar, notation $(\M_1,s_1) \sim (\M_2,s_2)$,  if they are related by a bisimulation $S$.
\end{defi}

It is well-known that bisimulation is an equivalence relation. 

\begin{thm}[Hennessy-Milner Theorem \cite{hennessy1985algebraic,mu_calc_model_check,handbook_model_check}]\label{thm:hmt}
    The $\mu$-calculus is bisimulation-invariant. I.e. for every formula  $\varphi \in L$  and pointed models $(\mathcal{M}_1, s_1), (\mathcal{M}_2, s_2)$,
    if $(\mathcal{M}_1, s_1) \sim (\mathcal{M}_2, s_2)$ and  $(\mathcal{M}_1, s_1) \models \varphi$,
    then $(\mathcal{M}_2, s_2) \models \varphi$.

\end{thm}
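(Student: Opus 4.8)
The plan is to strengthen the statement so that an induction on the structure of $\varphi$ goes through in the presence of free recursion variables, and then to reduce the fixed-point cases to a nested transfinite induction over the approximants of the fixed points. Say that two environments $\rho_1:\mycap{LVar}\to 2^{W_1}$ and $\rho_2:\mycap{LVar}\to 2^{W_2}$ are \emph{$S$-consistent} for a bisimulation $S\subseteq W_1\times W_2$ if, for every variable $Y$ and all states $t_1,t_2$ with $t_1 S t_2$, we have $t_1\in\rho_1(Y)$ iff $t_2\in\rho_2(Y)$. The statement I would prove is: if $S$ is a bisimulation between $\M_1$ and $\M_2$, $\rho_1$ and $\rho_2$ are $S$-consistent, and $s_1 S s_2$, then $s_1\in\trueset{\varphi,\rho_1}_{\M_1}$ iff $s_2\in\trueset{\varphi,\rho_2}_{\M_2}$, for every $\varphi\in L$ (not necessarily closed). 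The theorem then follows by taking $S$ to be a bisimulation witnessing $(\M_1,s_1)\sim(\M_2,s_2)$ and noting that closed formulas do not depend on the environment, so any pair of environments (vacuously $S$-consistent on the relevant variables) will do.

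Most cases of the induction are routine. The cases $\true$ and $\false$ are immediate; $p$ and $\neg p$ use the propositional clause of Definition~\ref{def:strong_bisim}; the variable case $X$ is exactly the $S$-consistency of $\rho_1$ and $\rho_2$; and $\varphi_1\land\varphi_2$ and $\varphi_1\lor\varphi_2$ follow directly from the induction hypothesis. The modal cases $\diam{\al}\varphi$ and $[\al]\varphi$ are the only ones where the back-and-forth conditions of $S$ enter: for $\diam{\al}\varphi$, an $\al$-successor of $s_1$ lying in $\trueset{\varphi,\rho_1}_{\M_1}$ is matched, via the forth clause, with an $\al$-successor of $s_2$, which lies in $\trueset{\varphi,\rho_2}_{\M_2}$ by the induction hypothesis applied to the same environments; the converse uses the back clause; the case of $[\al]\varphi$ is dual.

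The real work is in the fixed-point cases. For $\mn X.\varphi$, consider the operators $F_i(U)=\trueset{\varphi,\rho_i[X\mapsto U]}_{\M_i}$ on $2^{W_i}$ for $i\in\{1,2\}$, which are monotone because $X$ occurs only positively; by the Knaster--Tarski/Kleene fixed-point theorem, $\trueset{\mn X.\varphi,\rho_i}_{\M_i}=\bigcup_{\kappa}F_i^{\kappa}(\emptyset)$, where $F_i^{\kappa}$ denotes the $\kappa$-fold transfinite iteration of $F_i$ (iterating past $\omega$ is necessary, since the models may be infinite). I would then prove, by transfinite induction on $\kappa$, that $\rho_1[X\mapsto F_1^{\kappa}(\emptyset)]$ and $\rho_2[X\mapsto F_2^{\kappa}(\emptyset)]$ are $S$-consistent: the successor step applies the \emph{structural} induction hypothesis to the strictly smaller subformula $\varphi$ with exactly these environments, and the limit step follows by taking unions. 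Taking unions over all ordinals then yields $s_1\in\trueset{\mn X.\varphi,\rho_1}_{\M_1}$ iff $s_2\in\trueset{\mn X.\varphi,\rho_2}_{\M_2}$. The case of $\mx X.\varphi$ is dual, using the decreasing approximation $\bigcap_{\kappa}F_i^{\kappa}(W_i)$ instead.

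I expect the fixed-point cases to be the main obstacle: one must interleave a transfinite induction over approximants inside the structural induction, check that the approximant environments stay $S$-consistent, and note that the iteration cannot in general be stopped at $\omega$. Once the $S$-consistency of the approximants is in place, the rest is bookkeeping.
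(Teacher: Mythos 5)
The paper does not prove Theorem~\ref{thm:hmt} at all --- it is imported as a classical result with citations --- so there is nothing in the text to diverge from; your argument is the standard proof from the literature and it is correct. The strengthening to $S$-consistent environments, the use of the back-and-forth clauses only in the modal cases, and the nested transfinite induction on approximants (which is legitimate here since $X$ occurs only positively, so the operators $F_i$ are monotone, and the paper's own Lemma~\ref{lem:iterativeTarski} supplies the iterative characterization you invoke) are exactly the right ingredients, and the approximant environments do remain $S$-consistent at successor and limit stages as you claim.
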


\subsection{Known Results}\label{sec:multi_ml_known_results}
For logic $\LG$, the satisfiability problem for $\LG$, or $\LG$-satisfiability, is the problem that asks, given a formula $\varphi$, if $\varphi$ is satisfiable. Similarly, the model checking problem for $\LG$ asks if $\varphi$ is true at a given state of a given \change{finite} model. 

 Ladner \cite{ladnermodcomp} established the classical result of \PSPACE-completeness for the satisfiability of \klogic, \tlogic, \dlogic, \kf, \df, and \sr\ as well as \NP-completeness for the satisfiability of \sv.
   Halpern and R\^{e}go later characterized the \NP--\PSPACE\ gap  for one-action logics by the presence or absence of Negative Introspection \cite{Halpern2007Characterizing}, resulting in Theorem \ref{thm:ladhalp} below.
    Later, Rybakov and Shkatov proved the \PSPACE-completeness of \blogic\ and \tlogic\blogic in \cite{RybakovShkatovBComp}.

There is plenty of relevant work on 
the $\mu$-calculus on restricted frames, 
mainly in its single-agent form.
Alberucci and Facchini examine the alternation hierarchy of the $\mu$-calculus over reflexive, symmetric, and transitive frames in \cite{alberucci_facchini_2009}.
D'Agostino and Lenzi have studied the $\mu$-calculus over different classes of frames in great detail.  
In \cite{DAGOSTINO20104273transitive}, they reduce the $\mu$-calculus over finite transitive frames to first-order logic.
In \cite{Dagostino2013S5}, they prove that $\sv^\mu$-satisfiability is \NP-complete, and that the two-agent version of $\sv^\mu$ does not have the finite-model property.
In \cite{Dagostino2015modal}, they consider finite symmetric frames, and they prove that $\blogic^\mu$-satisfiability is in 2\EXP, and \EXP-hard.
They also examine planar frames in \cite{DAGOSTINO201840planar}, where they show that the alternation hierarchy of the $\mu$-calculus over planar frames is infinite.

\begin{thmC}[\cite{ladnermodcomp,Halpern2007Characterizing,RybakovShkatovBComp,Dagostino2013S5}]\label{thm:ladhalp}
	If $\LG\in \{\klogic, \tlogic, \dlogic, \blogic, \tlogic\blogic, \kf, \df, \sr\}$, then 
	$\LG$-satisfiability is \PSPACE-complete; and $\LG+5$-satisfiability and $(\LG+5)^\mu$-satisfiability are \NP-complete.
\end{thmC}

\begin{thmC}[\cite{Halpern1992}]\label{thm:halpKkcomp}
	If $k>1$ and $\LG$ has a combination of constraints from $D, T, 4, 5$ and no recursive operators, then 
	$\LG_k$-satisfiability is \PSPACE-complete.
\end{thmC}

\begin{remark}\label{remark:Halpern-and-Moses-should-have-proved-more}
	We note that Halpern and Moses in \cite{Halpern1992} only prove these bounds for the cases of $\klogic_k, \tlogic_k, \sr_k, \logicize{KD45}_k,$ and $\sv_k$;
	moreover D'Agostino and Lenzi in \cite{Dagostino2013S5} only prove the \NP-completeness of satisfiability for $\sv^\mu$. However, it is not hard to see that their respective methods also work for the rest of the logics of Theorems \ref{thm:ladhalp} and \ref{thm:halpKkcomp}.
	\qed
\end{remark}


\begin{thmC}[\cite{Kozen1983}]\label{prp:muCalc-sat}
	The satisfiability problem for the $\mu$-calculus 
	is \EXP-complete.
\end{thmC}

 When a formula is given in conjunction with a model and one of its states, the question of whether the given formula holds in that fixed model at the given state is called the \textit{model checking problem}.

\begin{thmC}[\cite{emerson2001model}]\label{thm:mu-calc-MC}
	The model checking problem for the $\mu$-calculus is in $\NP \cap \coNP$.\footnote{In fact, the problem is known to be in $\UP \cap \coUP$, and can be solved in pseudo-polynomial time --- see \cite{jurdzinski1998deciding,CaludeJKLS22,JurdzinskiL17,lics_Lehtinen18}.}
\end{thmC}

Finally we have the following results about the complexity of satisfiability, when we have recursive operators. Theorem \ref{prp:fragmantsMu} and Proposition \ref{prp:EXP-hard-k2} have already been observed in \cite{AcetoAFI20}. 

\begin{thm}\label{prp:fragmantsMu}
	The satisfiability problem for the min- and max-fragments of the $\mu$-calculus is \EXP-complete, even 
	when $|\act| = 1$. 
\end{thm}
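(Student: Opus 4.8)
The plan is to dispatch the upper bound by inclusion and to obtain \EXP-hardness --- already for a single agent --- by a reduction from a canonical \EXP-complete problem. For the upper bound there is nothing to prove: the satisfiability problem for a sub-language is literally a restriction of the satisfiability problem for the whole language, and every formula of the min- or of the max-fragment is a $\mu$-calculus formula with the same models and the same semantics, so membership in \EXP is immediate from Theorem~\ref{prp:muCalc-sat}, for any number of agents. The content of the statement is therefore the \EXP-hardness of each fragment, and that it already holds when $|\act|=1$, i.e.\ with a single pair of modalities $\Box,\Diamond$.

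For the min-fragment I would reduce from the acceptance problem of alternating polynomial-space Turing machines, a standard characterisation of \EXP. Given such a machine $M$ and input $x$, I construct in polynomial time a closed single-action min-fragment formula $\varphi_{M,x}$ that is satisfiable iff $M$ accepts $x$, following the classical Fischer--Ladner encoding: states of a model stand for individual tape cells, a path of fixed polynomial length encodes a configuration, and auxiliary propositions record the cell index in binary, the scanned symbol, the head position and control state, and finitely many ``phase'' markers that let the single transition relation serve both as ``next cell of the same configuration'' and as ``first cell of a successor configuration''. With this layout, ``the current configuration is well formed and every one of its successor configurations is a legal next configuration'' is a \emph{fixed-point-free} modal formula $\mathsf{wf}$ of polynomial size, since the transition function is local and only bounded look-ahead along the path (of depth polynomial in the space bound) is needed. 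As $M$ accepts $x$ iff there is a well-founded accepting computation subtree, this is captured by a least fixed point
\[
 \mathsf{Acc} \;=\; \mu X.\bigl(\mathsf{wf}\;\wedge\;\bigl(\mathsf{acc}\;\vee\;(\mathsf{ex}\wedge\Diamond X)\;\vee\;(\mathsf{all}\wedge\Box X\wedge\Diamond\true)\bigr)\bigr),
\]
where $\mathsf{acc},\mathsf{ex},\mathsf{all}$ mark accepting, existential and universal configurations, and $\varphi_{M,x}$ is $\mathsf{Acc}$ conjoined with a modal, poly-size formula fixing the initial configuration. Since $\mathsf{wf}$ carries no fixed point, $\varphi_{M,x}$ lies in the min-fragment, which gives the claim there.

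For the max-fragment I would argue symmetrically, running the same construction on the rejection condition: ``$M$ rejects $x$'' says that the $\forall$-player survives the computation game (reaches a rejecting configuration or plays forever), a co-inductive property captured by a greatest fixed point of the shape $\nu X.\bigl(\mathsf{wf}\wedge(\mathsf{rej}\vee(\mathsf{all}\wedge\Diamond X)\vee(\mathsf{ex}\wedge\Box X))\bigr)$, which is in the max-fragment; as \EXP is closed under complement, this is again \EXP-hard. (Alternatively, one may start from the \EXP-completeness of $\klogic_k$ with common knowledge from \cite{Halpern1992}: $\nu X.(\psi\wedge\bigwedge_\al[\al]X)$ is already a max-fragment formula, and the $k$ agents can be collapsed to one by subdividing each $\al$-edge through a fresh state carrying a colour proposition $p_\al$, translating $[\al]\psi$ to $\Box(p_\al\rightarrow\Box\psi')$ and $\diam{\al}\psi$ to $\Diamond(p_\al\wedge\Diamond\psi')$ where $\psi'$ is the translation of $\psi$; the conjunct forcing models into this subdivided shape is $\nu Y.(\mathsf{wf}'\wedge\Box Y)$ with $\mathsf{wf}'$ fixed-point-free, hence still in the max-fragment.)

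I expect the main obstacle to be exactly this last insistence on staying within a single fixed-point type while using only one modality. The well-formedness and transition-legality conditions are safety-flavoured, so the tempting formulation uses a $\nu$ (which would spoil the min-fragment), and in the subdivision argument one is tempted into a $\mu$; the point is that $\mathsf{wf}$ must be written with no recursion whatsoever and then threaded through the single governing fixed point, which is sound because it only has to hold along the finite accepting subtree (respectively, along the infinite surviving play) that that fixed point already traverses. Designing this encoding, together with the binary cell-indexing and bounded-branching bookkeeping that make one modality suffice, is where essentially all the work lies; once it is in place, correctness is a routine induction on the length of the computation.
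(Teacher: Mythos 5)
Your proposal is correct in outline and, at bottom, rests on the same mathematical content as the paper's argument, but it reaches it by a genuinely more self-contained route. The paper's proof is a two-line sketch: membership in \EXP\ by inclusion (as you say), and hardness by citing that the PDL formula built in Fischer--Ladner's \EXP-hardness reduction can, following Pratt, be written using only the $\mn$- (respectively $\mx$-) translation of the star operators, so the same reduction lands inside each fragment. You instead re-derive the hardness directly from the acceptance problem for alternating polynomial-space machines, which is exactly the source problem underlying the Fischer--Ladner encoding; your dualization for the max-fragment (encode rejection co-inductively and use closure of \EXP\ under complement) plays the role that Pratt's two star-translations play in the citation-based argument. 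What your version buys is independence from the PDL literature and an explicit exhibit of where the single fixed point and the single modality each do their work; what the paper's version buys is brevity and the reuse of a construction whose correctness is already established. Your parenthetical alternative for the max-fragment (reducing from $\klogic_k$ with common knowledge and collapsing agents by subdividing edges) is closer in spirit to the paper's Proposition~\ref{prp:EXP-hard-k2}, but note that that proposition only covers $k>1$, so the agent-collapse step is genuinely extra work you would have to carry out.

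One point deserves more care than your sketch gives it, because it is the place where a literal reading of your formula fails: since this is a \emph{satisfiability} reduction, the model is adversarially chosen, and both $\Box X$ at universal configurations (in the $\mn$ case) and $\Box X$ at existential configurations (in the $\mx$ case) are vacuously satisfied in a model that simply omits the required successor configurations. Your $\Diamond\true$ conjunct shows you are aware of the degenerate no-successor case, but it does not suffice: the formula must additionally force, at each branching configuration, the \emph{existence} in the model of a successor encoding each legal move of the machine (a conjunction of $\Diamond$-formulas folded into your well-formedness condition), and the recursion must be threaded through the polynomially many cell-advancing steps before the existential/universal branching is applied at configuration boundaries. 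These are standard repairs and are implicit in the Fischer--Ladner construction the paper cites, so they do not invalidate your approach; but they are exactly the bookkeeping that your closing paragraph correctly identifies as ``where essentially all the work lies,'' and a full proof would have to spell them out.
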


\begin{proof}[Proof sketch]
	It is known that satisfiability for the min- and max-fragments of the $\mu$-calculus (on one or more action) is \EXP-complete. It is in \EXP\ due to Theorem \ref{prp:muCalc-sat} and as shown in \cite{Pratt_1981}, 
	these fragments  suffice to describe the PDL formula that is constructed by the reduction used in \cite{fischer1979propositional} to prove \EXP-hardness for PDL. Therefore, that reduction can be adjusted to prove that 
	satisfiability for the min- and max-fragments of the $\mu$-calculus  is \EXP-complete.
\end{proof}

As observed in Example \ref{ex:inv_common_K}, one can express  that formula $\varphi$  is common knowledge using the greatest fixed-point operator. 
Since validity for $\LG_k$ with common knowledge (and without recursive operators) and $k > 1$ is \EXP-complete \cite{Halpern1992}\footnote{Similarly to Remark \ref{remark:Halpern-and-Moses-should-have-proved-more}, \cite{Halpern1992} does not explicitly cover all these cases, but the techniques can be adjusted.}, $\LG_k^\mu$ 
\change{is}
$\EXP$-hard.

\begin{prop}\label{prp:EXP-hard-k2}
	Satisfiability for $\LG_k^\mu$, where $k>1$, is \EXP-hard.
\end{prop}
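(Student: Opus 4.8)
The plan is to reduce from the satisfiability problem for $\LG_k$ augmented with a common-knowledge operator, which is $\EXP$-hard for $k>1$ by the result of Halpern and Moses \cite{Halpern1992} (adjusted as noted in the footnote to cover all the relevant $\LG$). Given a formula $\varphi$ in that language, I would first observe that the common-knowledge modality $C\psi$ can be translated, uniformly and in linear time, into the greatest-fixed-point formula $\Inv(\psi) = \mx X.(\psi \land [\act]X)$ introduced in Example \ref{ex:inv_common_K}; this is faithful over the class of frames for $\LG_k$ because $\Inv(\psi)$ is exactly the conjunction of all $[\al_{i_1}]\cdots[\al_{i_n}]\psi$, which is the standard semantics of common knowledge (note the semantics of the greatest fixed point in Table \ref{table:semantics} is frame-uniform, so no constraint is disturbed by the translation). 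Applying this replacement to every common-knowledge subformula of $\varphi$ yields a formula $\varphi'$ of $\LG_k^\mu$ with $|\varphi'|$ polynomial (indeed linear) in $|\varphi|$.

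Next I would argue that $\varphi$ is $\LG_k$-with-common-knowledge-satisfiable if and only if $\varphi'$ is $\LG_k^\mu$-satisfiable. Both logics are interpreted over the very same class of Kripke models — those whose $\al$-reduct is an $\LG(\al)$-frame for each $\al$ — so this is a matter of checking that $(\M,s)\models C\psi$ iff $(\M,s)\models \Inv(\psi)$ for every pointed model $\M$ in that class and every state $s$, which follows by the Knaster–Tarski characterization of $\trueset{\mx X.\,\cdot}$ together with an easy induction on the number of modal steps. Hence the map $\varphi \mapsto \varphi'$ is a polynomial-time many-one reduction, and $\EXP$-hardness of $\LG_k^\mu$-satisfiability follows from the $\EXP$-hardness of the source problem.

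The only mildly delicate point — and the main obstacle, such as it is — is making sure the source problem is genuinely $\EXP$-hard for the specific $\LG$ at hand: \cite{Halpern1992} states the bound explicitly only for $\klogic_k$, $\tlogic_k$, $\sr_k$, $\logicize{KD45}_k$, and $\sv_k$, so for the remaining combinations of $D,T,4,5$ one must invoke the remark (Remark \ref{remark:Halpern-and-Moses-should-have-proved-more}) that the Halpern–Moses technique adapts. Given that, and given that we only need \emph{hardness} here (the matching $\EXP$ upper bound for the fragments we can handle comes separately from Theorem \ref{prp:muCalc-sat} via the later translations), the argument is complete: $\LG_k^\mu$-satisfiability is $\EXP$-hard whenever $k>1$.
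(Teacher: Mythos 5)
Your proposal is correct and matches the paper's argument: the paper likewise derives the hardness by expressing common knowledge as the greatest fixed point $\Inv(\psi)=\mx X.(\psi\land[\act]X)$ and appealing to the \EXP-completeness of $\LG_k$ with common knowledge from \cite{Halpern1992} (together with Remark \ref{remark:Halpern-and-Moses-should-have-proved-more} for the logics not explicitly covered there). The only cosmetic difference is that the paper phrases the source problem in terms of validity rather than satisfiability, which is immaterial since \EXP{} is closed under complement.
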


\subsection{Background on Fixed Points}\label{subsec:fixpoints}

In this subsection, we mention useful facts about fixed-point formulas.

We fix a model $\M$, a formula $\mn X.\varphi$, and an environment $\rho$.
Let $S_X^0 = \emptyset$; for every successor ordinal, let 
$S_X^{\zeta + 1} = \trueset{\varphi,\rho[X \mapsto S_X^\zeta]}_\M$, and for every limit ordinal 
$\zeta$, let 
$S_X^{\zeta} = \bigcup_{\eta < \zeta} S_X^\eta$.

\begin{lem}[Iterative Characterization of the Least Fixed Point]\label{lem:iterativeTarski}
    For every model $\M$, environment $\rho$, and $\mn X.\varphi \in \LG$, there exists some ordinal $\xi$ such that 
    $\trueset{\mn X\varphi,\rho}_\M = \trueset{\varphi,\rho[X \mapsto S_X^\xi]}_\M = S_X^{\xi }$. 
\end{lem}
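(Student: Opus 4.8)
The plan is to run the standard transfinite Kleene-iteration argument for monotone operators, using the Knaster--Tarski definition of $\trueset{\mn X.\varphi,\rho}_\M$ from Table~\ref{table:semantics}. Fix the operator $F\colon 2^\proc \to 2^\proc$ by $F(S) = \trueset{\varphi,\rho[X\mapsto S]}_\M$, so that $S_X^{\zeta+1} = F(S_X^\zeta)$ and $\trueset{\mn X.\varphi,\rho}_\M = \bigcap\{S \mid F(S) \subseteq S\}$. The whole proof then splits into: (i) monotonicity of $F$, (ii) the chain $(S_X^\zeta)_\zeta$ is non-decreasing and stabilizes at some ordinal $\xi$, and (iii) the stable value $S_X^\xi$ coincides with the Knaster--Tarski least fixed point.

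For step (i), I would prove by induction on the structure of $\varphi$ that $\trueset{\varphi,\rho'}_\M$ is monotone in $\rho'(Y)$ for every variable $Y$ free in $\varphi$; specializing to $Y = X$ gives $S \subseteq S'$ implies $F(S) \subseteq F(S')$. This is routine: the grammar of $L$ applies negation only to propositional variables, so every occurrence of a recursion variable is positive, and each connective ($\land$, $\lor$, $[\al]$, $\diam{\al}$, and nested $\mn$, $\mx$) is monotone in the value assigned to a free variable, the fixed-point cases being handled precisely by the strengthened induction hypothesis. For step (ii), using monotonicity I would show by transfinite induction that $\zeta \le \eta$ implies $S_X^\zeta \subseteq S_X^\eta$: the base case is $\emptyset = S_X^0 \subseteq S_X^1$; the successor step follows from $S_X^{\zeta-1} \subseteq S_X^\zeta$ and monotonicity of $F$; limit steps are immediate since $S_X^\zeta$ is defined as a union. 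Now if $S_X^{\zeta+1} \supsetneq S_X^\zeta$ held for every ordinal $\zeta$, then $\zeta \mapsto S_X^\zeta$ would be strictly increasing and hence injective on the whole class of ordinals, contradicting that $2^\proc$ is a set. Hence there is an ordinal $\zeta$ with $S_X^{\zeta+1} = S_X^\zeta$; let $\xi$ be the least such. Then $F(S_X^\xi) = S_X^{\xi+1} = S_X^\xi$, i.e.\ $\trueset{\varphi,\rho[X\mapsto S_X^\xi]}_\M = S_X^\xi$, which is the second equality of the statement.

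For step (iii), it remains to prove $\trueset{\mn X.\varphi,\rho}_\M = S_X^\xi$. Since $S_X^\xi$ is a fixed point of $F$ it is in particular a pre-fixed point ($F(S_X^\xi) \subseteq S_X^\xi$), so it belongs to the family intersected in the semantics of $\mn X.\varphi$, giving $\trueset{\mn X.\varphi,\rho}_\M \subseteq S_X^\xi$. For the reverse inclusion I would show by transfinite induction that $S_X^\zeta \subseteq S$ for every $S$ with $F(S) \subseteq S$: the base case is $\emptyset \subseteq S$; at a successor, the hypothesis $S_X^\zeta \subseteq S$ together with monotonicity gives $S_X^{\zeta+1} = F(S_X^\zeta) \subseteq F(S) \subseteq S$; limit cases hold since a union of subsets of $S$ is a subset of $S$. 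Taking $\zeta = \xi$ and intersecting over all pre-fixed points $S$ yields $S_X^\xi \subseteq \bigcap\{S \mid F(S) \subseteq S\} = \trueset{\mn X.\varphi,\rho}_\M$, so the two sets are equal and the lemma follows.

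The argument is entirely standard; the only genuine content is the positivity/monotonicity induction of step (i), and the one point to state carefully is why the non-decreasing chain $(S_X^\zeta)_\zeta$ must stabilize --- namely because every $S_X^\zeta$ lives in the set $2^\proc$, so a strictly increasing chain indexed by all ordinals is impossible. No serious obstacle is expected.
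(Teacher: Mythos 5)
Your proposal is correct and follows essentially the same route as the paper's proof: establish monotonicity of $S \mapsto \trueset{\varphi,\rho[X\mapsto S]}_\M$, observe that the approximants form a non-decreasing chain, argue stabilization by a Hartogs-style cardinality argument (the paper cites Hartogs' theorem directly, you derive it from the impossibility of an injection of the ordinals into the set $2^\proc$), and identify the stable value with the Knaster--Tarski least fixed point. You spell out the structural-induction proof of monotonicity and both inclusions in more detail than the paper does, but the decomposition and key ideas are the same.
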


\begin{proof}
    First we observe that the map $S \mapsto \trueset{\varphi,\rho[X \mapsto S]}_\M$ is monotone, and therefore by the Knaster-Tarski Theorem \cite{tarski1955lattice,knaster1928theoreme}, it has a least fixed-point, which is $\trueset{\mn X\varphi,\rho}_\M = \bigcap \{S \mid S \supseteq \trueset{\varphi,\rho[X \mapsto S]}_\M $. 
    Moreover, if $\xi \leq \zeta$, then $S^{\xi}$ is included in $S^{\zeta}$, which means that the approximation forms a non-decreasing chain w.r.t. set inclusion. 
    By transfinite induction, for every ordinal $\zeta$, $S_X^\zeta \subseteq \trueset{\mn X\varphi,\rho}_\M$, and by Hartogs' Theorem on well-ordered sets \cite{hartogs1915problem}, there is an ordinal $\xi$ such that $S_X^\xi = \trueset{\varphi,\rho[X \mapsto S_X^\xi]}_\M$. 
    As $\trueset{\mn X\varphi,\rho}_\M$ is a least fixed point, the lemma follows.
\end{proof}

Let $\M = (W,R,V)$ be a $\logicize{L}$-model and consider a subformula-closed set $\Phi = \sub(\varphi)$ for some $\varphi \in L$, such that for every $X \in \Phi$, there exists a unique fixed-point formula $\fix(X) \in \Phi$ that binds $X$. 
		Let the environment $\rho$ be such that $\rho(X) = \trueset{\fix(X),\rho}_\M$ for every variable $X$.
		We say that $\rightarrow \subseteq (W\times L)^2$ is a dependency relation on $\M$ and $\varphi$ when it satisfies the following conditions:
		\begin{itemize}
			\item if $u\models_\rho \psi_1 \land \psi_2$, then $(u,\psi_1 \land\psi_2) \rightarrow (u,\psi_1)$ and $(u,\psi_1 \land\psi_2) \rightarrow (u,\psi_2)$;
			\item if $u\models_\rho \psi_1 \lor \psi_2$, then $(u,\psi_1 \lor\psi_2) \rightarrow (u,\psi_1)$ and $u \models_\rho \psi_1$ or \changeJ{$(u,\psi_1 \lor\psi_2) \rightarrow (u,\psi_2)$} and $u \models_\rho \psi_2$;
			\item if $u\models_\rho [\al]\psi$, then $(u,[\al]\psi) \rightarrow (v,\psi)$ for all $v \in W$, such that $u R_\al v$;
			\item if $u\models_\rho \diam{\al}\psi$, then $(u,\diam{\al}\psi) \rightarrow (v,\psi)$ for some $v \in W$, such that $u R_\al v\models \psi$;
			\item if $u\models_\rho \mn X.\psi$ or $u\models_\rho \mx X.\psi$, then $(u,\fix(X)) \rightarrow (u,\psi)$; and
			\item if $u\models_\rho X$, then $(u,X) \rightarrow (u,\fix(X))$.
		\end{itemize}
		For each variable $X$ and dependency relation $\rightarrow$, we also define $\xrightarrow{X}$, such that $(w_1,\psi_1) \xrightarrow{X} (w_2,\psi_2)$ whenever $(w_1,\psi_1) \rightarrow (w_2,\psi_2)$ and $\psi_1 \neq Y$ for all variables $Y$ where $\fix(Y)$ is not a subformula of $\fix(X)$.
		We call a dependency relation $\rightarrow$ lfp-finite, if every least-fixed-point variable $X$ appears finitely many times  on every $\xrightarrow{X}$-sequence.
		
		\begin{thm}\label{thm:lfp-finite-dep}
            Let $\M = (W,R,V)$ be a $\logicize{L}$-model, $w\in W$, and $\varphi \in L$, such that $\M,w\models \varphi$. 
            Then, there exists an lfp-finite dependency relation on $\M$ and $\varphi$. 
        \end{thm}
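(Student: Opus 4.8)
The plan is to build the dependency relation $\rightarrow$ by a greedy, stage-by-stage construction that follows the semantics, and then to argue that the resulting relation is lfp-finite by appealing to the iterative (ordinal-approximant) characterization of least fixed points from Lemma~\ref{lem:iterativeTarski}. The key idea is to attach to each pair $(u,\psi)$ that appears, where $\psi$ is (a substitution instance corresponding to) a least-fixed-point variable or its unfolding, an \emph{ordinal rank} measuring at which approximant stage $u$ enters $\trueset{\fix(X),\rho}_\M$, and to make all the choices in the construction (the disjunct chosen for $\lor$, the witness chosen for $\diam{\al}$) in a way that never increases this rank along an $\xrightarrow{X}$-step and strictly decreases it whenever we pass through $X$ itself (i.e.\ take an edge $(u,X)\rightarrow(u,\fix(X))$). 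Since ordinals are well-founded, no least-fixed-point variable can then occur infinitely often on an $\xrightarrow{X}$-sequence.

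First I would fix the environment $\rho$ with $\rho(X)=\trueset{\fix(X),\rho}_\M$ for all $X$ (well-defined since each $X$ has a unique binder and we are in a closed formula), and record the standard fact that $u\models_\rho\psi$ is then equivalent to $u\in\trueset{\psi,\rho}_\M$ and behaves compositionally on $\sub(\varphi)$. Next, for each least-fixed-point variable $X$, using Lemma~\ref{lem:iterativeTarski}, I would define $\mathrm{rank}_X(u)$ to be the least ordinal $\zeta$ such that $u\in S_X^{\zeta}$ (equivalently, the least $\zeta$ with $u\in\trueset{\varphi_X,\rho[X\mapsto S_X^{<\zeta}]}_\M$ where $\fix(X)=\mu X.\varphi_X$); this is a successor ordinal whenever it is defined, and crucially, if $u\models_\rho\fix(X)$ and we unfold once, then $u\models_\rho\varphi_X$ under the environment where $X$ denotes the \emph{strictly smaller} approximant $S_X^{\zeta-1}$, so any later re-encounter of $X$ at a state $v$ reached ``through'' this unfolding must have $\mathrm{rank}_X(v)<\mathrm{rank}_X(u)$. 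With this in hand I would define $\rightarrow$ clause by clause exactly as in the statement, resolving the existential choices: for $u\models_\rho\psi_1\lor\psi_2$ pick a disjunct $u$ actually satisfies (and, to keep ranks from rising, one satisfied at no larger rank, which is automatic); for $u\models_\rho\diam{\al}\psi$ pick a successor $v$ with $v\models_\rho\psi$ of minimal relevant rank. One then checks, by a routine case analysis on the clauses together with monotonicity of $\trueset{-}_\M$, that every edge $(u,\psi_1)\xrightarrow{X}(v,\psi_2)$ satisfies $\mathrm{rank}_X(v)\le\mathrm{rank}_X(u)$ when this quantity is defined for both (the modal and boolean clauses preserve it; the unfolding clause for $\fix(X)$ leaves the state fixed and does not increase it), and that an edge of the form $(u,X)\rightarrow(u,\fix(X))$ is followed — after the mandatory unfolding step $(u,\fix(X))\rightarrow(u,\varphi_X)$ — by a strict drop before $X$ can reappear.

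The main obstacle, and the part requiring the most care, is the bookkeeping with \emph{several} nested least-fixed-point variables and the interaction with the definition of $\xrightarrow{X}$: along an $\xrightarrow{X}$-sequence we may also traverse edges that unfold or revisit variables $Y$ with $\fix(Y)$ a subformula of $\fix(X)$, and for such inner $Y$ the approximant stages are computed relative to an environment that itself depends on the current value assigned to $X$. The clean way around this is to process the fixed-point variables in $\leq$-increasing order and to define, for the outermost relevant $X$ on a given sequence, a single ordinal measure that decreases (weakly on every step, strictly on each pass through $X$); occurrences of strictly smaller variables $Y<X$ are handled by an inner induction or by lexicographic combination, but since $\xrightarrow{X}$ by definition only forbids jumping \emph{out} to variables not below $\fix(X)$, the sequence stays within the scope of $\fix(X)$ and the outer measure $\mathrm{rank}_X$ is well-defined throughout. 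Once the monotonicity-and-strict-decrease invariant is established, well-foundedness of the ordinals immediately yields that $X$ occurs only finitely often on any $\xrightarrow{X}$-sequence, which is exactly lfp-finiteness, completing the proof.
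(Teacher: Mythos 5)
Your proposal is correct and takes essentially the same route as the paper's proof: both rest on Lemma~\ref{lem:iterativeTarski} and use the ordinal approximant stage at which a state enters the least fixed point as a well-founded measure that never increases along dependency edges and strictly decreases at each pass through $X$, with nested variables handled by induction on the formula. The paper packages this as a transfinite-recursive construction of stage-indexed relations $\rightarrow_\zeta$ built from the sets $V_\chi^{\zeta}$ --- which in particular makes precise the one point you leave slightly informal, namely that the rank must be attached to pairs $(u,\chi)$ for each subformula $\chi$ rather than to states alone --- but the underlying argument is the same.
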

        \begin{proof}
		The claim amounts to the memoryless determinacy of parity games and our proof is inspired by \cite{zielonka1998infinite}.
        We prove the existence of an lfp-finite dependency relation on $\M$ and $\varphi$ by induction on $\varphi$.
        To do so, and to handle the case of free recursion variables, 
        we can extend the definition of a dependency relation to formulas with free variables, such that the condition that if $u\models_\rho X$, then $(u,X) \rightarrow (u,\fix(X))$, is omitted when $X$ is free in $\varphi$.
        For the remainder of this proof, we maintain that if $(u,\psi)$ appears in a pair in a dependency relation, then $u \models_\rho \psi$.

        All cases for $\varphi$ are straightforward, except for the fixed points.
        If $\varphi = \mx X.\psi$, then we can construct a lfp-finite dependency relation $\rightarrow$ on $\M$ and $\psi$, and then add the pairs $(\varphi,\psi)$ and $(X,\varphi)$ in $\rightarrow$. If $Y$ is a least-fixed-point variable, then any new $\rightarrow$-path where $Y$ appears infinitely often would go through $X$ infinitely often, and therefore $\rightarrow$ is still lfp-finite.
        If $\varphi = \mn X.\psi$, then by Lemma \ref{lem:iterativeTarski}, there exists some ordinal $\xi$, with $\trueset{\varphi,\rho} = \trueset{\psi,\rho[X \mapsto S_X^\xi]}$.
        For each $\zeta \leq \xi$, we define $\rightarrow_\zeta$ and prove that it is an lfp-finite dependency relation on $\M$ and $\psi$ for $\rho[X \mapsto S_X^\xi]$, where $X$ does not appear infinitely often on any $\xrightarrow{X}_\zeta$ path:
        \begin{description}
            \item[Case $\zeta =0$] for each subformula $\chi$ of $\psi$, 
            let $V_\chi^{0} = \trueset{\chi,\rho[X \mapsto \emptyset]}$.
            Then, $\rightarrow_0$
            is an lfp-finite dependency relation on $\M$ and $\psi$, for $\rho[X \mapsto \emptyset]$ with the additional pairs $((u,\varphi),(u,\psi))$, for every $u \in V_\psi^0$.
            
            \item[Case $\zeta = \xi+ 1$ for some $\xi$] then for each subformula $\chi$ of $\psi$, 
            let 
            \[
                V_\chi^{\xi + 1} = \trueset{\chi,\rho[X \mapsto S_X^{\xi + 1}]}\setminus \trueset{\chi,\rho[X \mapsto S_X^\xi]}.  
            \]
            Let $\rightarrow_{\xi +1}^0$ be an lfp-finite dependency relation on $\M$ and $\psi$, for $\rho[X \mapsto S_X^{\xi + 1}]$.
            Then, we define 
            \[\rightarrow_{\xi+1} = \{((u,\chi),x) \mid 
            (u,\chi) \rightarrow_{\xi} x; ~~ \text{ or } 
            u \in V_\chi^{\xi +1 } \text{ and } (u,\chi) \rightarrow_{\xi +1}^0 x\} \]
            \[
            \cup \{((u,X),(u,\varphi)), ((v,\varphi),(v,\psi)) \mid u \in S_X^{\xi}, v \in V_\chi^{\xi + 1}
            .
            \]
            \item[if $\zeta$ is a limit ordinal] we define 
                $
                    \rightarrow_{\zeta} = \bigcup_{\eta < \zeta} \rightarrow_{\eta} 
                $.
        \end{description}
        It is straightforward to verify that $\rightarrow_{\zeta}$ is a dependency relation on $\M$ and $\varphi$ for $\rho$.
            Notice that if $(u,X) \rightarrow_{\zeta} (u,\varphi)$, then $u \in V_\psi^{\eta}$ for some $\eta < \zeta$. Therefore, no $\xrightarrow{X}_\zeta$-path can have infinite occurrences of $X$, and $\rightarrow_{\zeta}$ is an lfp-finite dependency relation on $\M$ and $\varphi$ for $\rho$, thus completing the proof.
        \end{proof}                                     
%

\section{Complexity Through Translations}\label{sec:translations}

We examine 
\LG-satisfiability and use formula translations to reduce the satisfiability of one logic to the satisfiability of another.
We investigate the properties of these translations and how they compose with each other, and we achieve complexity bounds for several logics.

In this section,
a formula translation from logic $\LG_1$ to logic $\LG_2$ is a mapping $f$ on formulas such that each formula $\varphi$ is $\LG_1$-satisfiable if and only if $f(\varphi)$ is $\LG_2$-satisfiable.
We only consider translations that can be computed in polynomial time, and therefore, our translations are polynomial-time reductions, transferring complexity bounds between logics.

According to Theorem \ref{prp:muCalc-sat}, $\klogic^\mu_k$-satisfiability, that is, satisfiability for the $\mu$-calculus, is \EXP-complete, and therefore for each logic $\LG$, we aim to connect $\klogic^\mu_k$ and $\LG$ via a sequence of translations in either direction, to prove complexity bounds for $\LG$-satisfiability.

%
%

\subsection{Translating Towards \texorpdfstring{$\klogic_k$}{K\textunderscore k}}
We begin by presenting translations from logics with more frame conditions to logics with fewer ones.
To this end, we study how taking the closure of a frame under one condition affects other frame conditions.

\subsubsection{Composing Frame Conditions}
\label{subsec:compositionality}

We now discuss how the conditions for frames affect each other. 
For example, to construct a transitive frame, one can take the transitive closure of a possibly non-transitive frame. 
The resulting frame will satisfy condition $4$. 
As we will see, taking the closure of a frame under condition $x$ may affect whether that frame maintains condition $y$, depending on $x$ and $y$.
In the following, we observe that one can apply the frame closures in certain orders that preserve the properties one acquires with each application.
We begin by formally defining the closure of a frame with respect to some frame condition. 
\changeJ{\begin{defi}\label{def:closures}
    Let $F = (W,R)$ be a frame, and $x$ a frame condition among \changeJ{$D,$} $ T, B, 4,$ and $ 5$. The closure of $R_\al$ under $x$ is denoted as $R_{\al}^{x}$ and defined by: 
    \begin{itemize}
        \item $R_\al \cup \{(u,u) \mid u \in W \text{ and } \forall v \in W.~(u,v)\notin R_\al\}$, if $x = D$.
        \item $R_\al \cup \{(u,u) \mid u \in W\}$, if $x = T$.
        \item $R_\al \cup \{(u,v) \mid (v,u) \in R_\al\}$, if $x = B$.
        \item $R_\al^{+}$ (i.e., the transitive closure of $R_\al$), if $x = 4$.
        \item $R_\al^{5}$ is the least relation that includes $R_\al$ such that if $(s,u), (s,v) \in R_\al^{5}$, then 
        $(u,v) \in R_\al^{5}$.
    \end{itemize}
\end{defi}
The definition of closure given above can be generalised to sets of agents thus: }
\begin{defi}
Let $F = (W,R)$ be a frame, and $x$ a frame restriction among \changeJ{$D,$} $ T, B, 4,$ and $ 5$.
For \changeJ{a set of } agents \changeJ{$A \subseteq \act$}, 
${R}^{x,A}$ is defined by
\begin{itemize}
\item ${R}^{x,A}_\beta = {R_\beta^{x}}$, if $\beta \in A$, and
\item ${R}^{x,A}_\beta = R_\beta$, otherwise.
\end{itemize}
We define ${F}^{x,A} = (W,{R}^{x,A})$.
\end{defi}

 We make the following observation.



\begin{lem}\label{lem:conditionsarepreserved}
	Let $x$ be a frame condition among $D, T, B, 4,$ and $5$, and $y$ a frame restriction among $T, B, 4,$ and $5$, such that $(x,y) \not\in \{(4,B), (5,T), (5,B)\}$.
	Then, for every frame $F$ that satisfies $x$, for some agent $\al$, ${F}^{y,\al}$ also satisfies $x$.
%
	%
	%
\end{lem}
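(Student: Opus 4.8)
The plan is to verify the statement by a case analysis on the pair $(x,y)$, checking for each admissible pair that adding the $y$-edges to a frame satisfying $x$ cannot destroy property $x$. Since property $x$ is preserved trivially when we add no edges, the only work is to see what the newly-added edges can do. I would first record, for each $y$, exactly which edges $F^{y,\al}$ adds over $F$ (for the fixed agent $\al$; edges for other agents are untouched, so those relations keep whatever they had): for $y=T$ we add all loops $(u,u)$; for $y=B$ we add the converse of every existing edge; for $y=4$ we add all edges needed for transitivity, i.e. $(u,w)$ whenever there is an $R_\al$-path from $u$ to $w$; for $y=5$ we add all edges $(u,v)$ forced by the euclidean closure from a common source. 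Then I would go through the values of $x$.

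For $x=T$ (reflexivity): any superrelation of a reflexive relation is reflexive, so $F^{y,\al}$ is reflexive for every $y$; done. For $x=D$ (seriality): again seriality only requires each state to have some successor, and we never remove edges, so any superrelation of a serial relation is serial; this handles all four values of $y$. These two are immediate and I would dispatch them in one line each.

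The substantive cases are $x=B$, $x=4$, and $x=5$. For $x=B$ (symmetry) we must check $y\in\{T,4,5\}$ (the pair $(x,y)=(B,T),(B,4),(B,5)$ are all admissible). Symmetry of $R_\al^y$: for $y=T$ the added loops are symmetric; for $y=4$, if $(u,w)$ is added because of a path $u\to\cdots\to w$, then by symmetry of $R_\al$ each edge reverses, giving a path $w\to\cdots\to u$, hence $(w,u)$ is in the transitive closure too; for $y=5$, if $(u,v)$ is added from a common source $s$ with $(s,u),(s,v)\in R_\al^5$, an easy induction on the euclidean-closure construction shows $R_\al^5$ is itself symmetric when $R_\al$ is (symmetry is preserved at each closure step), so $(v,u)$ is there as well. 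For $x=4$ (transitivity) we must check $y\in\{T,5\}$ — note $(4,B)$ is excluded. For $y=T$: suppose in $R_\al^T$ we have $(a,b)$ and $(b,c)$; each of these is either an original $R_\al$-edge or a loop; if either is a loop the conclusion $(a,c)$ is immediate, and if both are original edges, transitivity of $R_\al$ gives $(a,c)\in R_\al\subseteq R_\al^T$. For $y=5$: I claim that if $R_\al$ is transitive, then $R_\al^5=R_\al$ — more precisely, a transitive relation is closed under the euclidean step in a way that... actually here one must be careful, because transitive relations need not be euclidean; but $R_\al^5$ only adds the edges forced, and one shows by induction on the construction of $R_\al^5$ that every added edge $(u,v)$ already lies in $R_\al$ (using transitivity: $(s,u),(s,v)\in R_\al$ does not by itself give $(u,v)\in R_\al$ — so this needs more thought). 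This is the step I expect to be the main obstacle: the interaction of transitivity with the euclidean closure, and likewise (for $x=5$, $y\in\{T,4\}$) of euclidicity with the reflexive and transitive closures. The clean way is probably to characterise $R_\al^5$ concretely — on each connected component the euclidean closure relates $u$ to $v$ whenever $u$ and $v$ have a common predecessor, and iterating this stabilises quickly — and then check directly that a transitive (resp. euclidean) relation stays transitive (resp. euclidean) after throwing in these specific extra edges. For $x=5$ and $y=T$ one uses that euclidicity plus reflexivity behave well together; for $x=5$ and $y=4$, that the transitive closure of a euclidean relation is still euclidean, which follows by an induction on path length using the euclidean property repeatedly. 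I would organise the proof as: the two trivial cases $x\in\{D,T\}$ in a sentence, then a short paragraph each for $x=B$, $x=4$, $x=5$, in each case using the explicit description of the edges added by $y$ and a small induction where the closure is defined iteratively (i.e. for $y\in\{4,5\}$).
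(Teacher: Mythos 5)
The paper states this lemma without proof (it is presented as an observation, with Figure~\ref{fig:frame_hierarchy} summarising the claimed preservations), so there is no official argument to compare yours against; on its own terms, your plan --- describe explicitly which edges each $y$-closure adds, then check each admissible pair $(x,y)$ --- is the natural one, and most of your cases go through. The cases $x\in\{D,T\}$ are immediate since closures only add edges; symmetry is preserved by all four closures (for $y=5$ your observation is right, and in fact the euclidean rule adds $(u,v)$ and $(v,u)$ simultaneously, so the set of added edges is symmetric whatever $R_\al$ is); $(4,T)$ is the two-case check you give; and $(5,4)$ reduces, as you say, to ``the transitive closure of a euclidean relation is euclidean'', which is true and provable by the induction you sketch. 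One misreading: you set out to handle $(x,y)=(5,T)$, but that pair is in the excluded set $\{(4,B),(5,T),(5,B)\}$ --- and rightly so, since the paper's own remark gives the counterexample $\{(s,t),(t,t)\}$; for $x=5$ you only owe $y\in\{4,5\}$.

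The genuine gap is the case $(x,y)=(4,5)$, which you correctly flag as the main obstacle but do not close --- and it cannot be closed. Your tentative claim that $R_\al^5=R_\al$ when $R_\al$ is transitive is false, and no repair exists, because the euclidean closure of a transitive relation need not be transitive at all. Take $W=\{a,b,c,d\}$ and $R=\{(a,b),(c,b),(c,d)\}$: this is vacuously transitive, since no state has both an incoming and an outgoing edge. The euclidean closure must contain $(b,d)$, forced by the common source $c$; but no edge ever enters $a$ (an edge into $a$ could only be created by the euclidean rule if one already existed), so the only edge out of $a$ in $R^5$ is $(a,b)$, and $(a,d)\notin R^5$ while $(a,b),(b,d)\in R^5$. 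Hence this instance of the lemma is false as stated, and your instinct that this step ``needs more thought'' was detecting a real problem rather than a difficulty of technique. (The case is rescued by extra hypotheses such as reflexivity, under which the euclidean closure is an equivalence relation and hence transitive; and the preservation actually used in the paper's model constructions is the true direction $(5,4)$, i.e.\ taking the transitive closure after the euclidean one.) The honest conclusion for your write-up is to prove all the other cases and record a counterexample for $(4,5)$, not to search for a missing argument.
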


According to Lemma \ref{lem:conditionsarepreserved}, frame conditions are preserved as seen in Figure \ref{fig:frame_hierarchy}. 
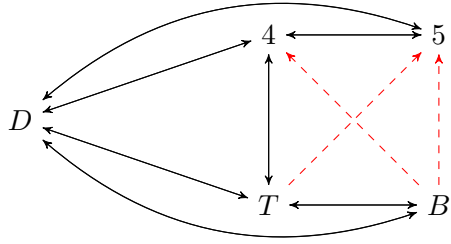
\begin{figure}
	\begin{center}
		\begin{tikzpicture}[scale=0.65,->,>=stealth',node distance=1.6cm, 
			main node/.style={circle,draw,font=\large\bfseries}]
			
			\node (c) { };
			\node (4) [above left of=c] {$4$};
			\node (t) [below left of=c] {$T$};
			\node (b) [below right of=c] {$B$};
			\node (5) [above right of=c] {$5$};
			\node (d) [left= 4cm of c] {$D$};
			
			\path 
			(t) edge (4)
			edge [red,dashed] (5)
			edge (b)
			edge (d)
			

			(d) edge (t) 
			edge [bend right] (b)
			edge (4)
			edge [bend left] (5)
			
			(4) 
			edge (d)
			edge (5)
			edge (t)
			(b) edge [red,dashed] (4)
			edge [red,dashed] (5)
			edge (t)
			edge [bend left] (d)
			(5) edge (4)
			edge [bend right](d)
			;
		\end{tikzpicture}
	\end{center}
	\caption{The frame-property-preservation graph}
	\label{fig:frame_hierarchy}
\end{figure}
%
%
%
In Figure \ref{fig:frame_hierarchy}, an arrow from $x$ to $y$ indicates that property $x$ is preserved through the closure of a frame under $y$. 
Dotted red arrows indicate one-way arrows.

\begin{remark}
	We note that, in general, not all frame conditions are preserved by the closure with respect to some other condition. For example, the accessibility relation $\{(s,t),(t,t)\}$ is euclidean, but its reflexive closure $\{(s,t),(t,t),(s,s)\}$ is not.
\end{remark}

There is at least one linear ordering of the frame conditions $D$, $T$, $B$, $4$, and $5$, such that all preceding conditions are preserved 
by
closures under the conditions following them. We call such an order a closure-preserving order.
We use the linear order $D, T, B, 4, 5$ in the rest of the paper.

For each logic $\LG$, $\al \in \act$, and frame $F = (W,R)$, 
we write $R_\al^\LG$ for the relation that results by applying the closures of the conditions of $\LG(\al)$ following the above-mentioned linear order.
For example, $R_\al^{\Bf_{\al}}$ is the relation we obtain by first applying the symmetric closure, and then the reflexive closure to the relation of agent $\al$. 
We define
$R^\LG = (R_\al^\LG)_{\al \in \act}$, and $F^\LG = (W,R^\LG)$. 


%

\subsubsection{Modal Logics}
\label{subsec:trans-ml}

We start with translations that map logics 
without recursive operators to logics with fewer constraints.
As mentioned in Subsection \ref{sec:multi_ml_known_results}, all of the logics  $\LG\in \{\klogic, \tlogic, \dlogic, \kf, \df, \sr\}$ and $\LG+5$ with one agent have known complexity results, and the complexity of modal logic is well-studied for multi-agent modal logics as well. 
The missing cases are very few and concern the combination of frame conditions (other than $5$) as well as the multi-agent case. However, we take this 
opportunity to present an 
intuitive introduction to our general translation method.
%
%
Each frame condition that we introduced in Section \ref{sec:multi_ml_background} is associated with an axiom for modal logic, such that whenever a model has the condition, every substitution instance of the axiom is satisfied in all states of the model (see \cite{blackburn2006handbook,MLBlackburnRijkeVenema,Fagin1995ReasoningAboutKnowledge}). For each frame condition $x$ and agent $\al$, we define the axiom $\mathsf{ax}^x_\al$ as follows: 
\begin{multicols}{3}
	\begin{description}
		\item[$\ax^D_\al$: ] $\diam{\al}\true$
		\item[$\ax^T_\al$: ] $[\al]p\impl p$
		\item[$\ax^B_\al$: ] $\diam{\al}[\al]p \impl p $
		\item[$\ax^4_\al$: ] $[\al]p \impl [\al][\al]p$
		\item[$\ax^5_\al$: ] $\diam{\al}[\al]p\impl [\al]p$
	\end{description}
\end{multicols}

For each formula $\varphi$ and $d \geq 0$, let $\Inv_d(\varphi) = \bigwedge_{
	i \leq d}[\act]^i\varphi$, where $[\act]^{0} \varphi = \varphi$ and $[\act]^{i+1} \varphi = [\act][\act]^{i} \varphi$.
Our first translations are defined using the above axioms. Intuitively, for each formula $\varphi$, condition $x$, and set of agents $A$, the formula $\transl{A}{x}{\varphi}$ is satisfiable over frames whose accessibility relations for the agents in $A$ meet some set of conditions that precede $x$ in the above-mentioned fixed linear order if, and only, if $\varphi$ is satisfiable over frames where those accessibility relations also satisfy condition $x$. 

%

\begin{translation}[One-step Translation]\label{translation:ml}
	Let $A\subseteq \act$ and let $x$ be one of the frame conditions.
	For every formula $\varphi$, let $d=md(\varphi)$ 
	if $x\neq 4, 5$, and $d=md(\varphi)|\varphi|$, if $x=4$ or $x=5$. We define:
	$$ \transl{A}{x}{\varphi} = \varphi \wedge \Inv_d\Bigg(\bigwedge_{\substack{\psi \in \subc(\varphi) \\ \al \in A}} \ax^x_\al[
	\psi 
	/ p]\Bigg).
	$$
\end{translation}




\begin{thm}\label{thm:no-fixpoint-translations-work}
	Let $A\subseteq \act$, $x$ be one of the frame conditions, and let $\LG_1, \LG_2$ be logics without recursion operators, such that 
	$$\LG_1(\al) =  \begin{cases}
	\LG_2(\al) + x & \text{ when } \al \in A, \text{ and} \\ 
	\LG_2(\al) & \text{ otherwise. }
	\end{cases}$$
	Assume that $\LG_2(\al)$ only includes frame conditions that precede $x$ in the fixed order of frame conditions.
	Then, $\varphi$ is $\LG_1$-satisfiable if and only if $ \transl{A}{x}{\varphi}$ is $\LG_2$-satisfiable.
\end{thm}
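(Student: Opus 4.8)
The plan is to prove the two directions separately, exploiting the fact that $\transl{A}{x}{\varphi}$ forces the relevant axioms to hold along a bounded number of modal steps, which suffices because the modal depth of $\varphi$ (or of the instances of $\ax^x_\al$ that appear) is bounded.

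For the easy direction, suppose $\varphi$ is $\LG_1$-satisfiable, say $(\M,s)\models\varphi$ with $\M$ an $\LG_1$-model. Every $\LG_1$-model is in particular an $\LG_2$-model, since $\LG_1(\al)$ imposes more constraints than $\LG_2(\al)$ for each $\al$. Since $\M$ satisfies condition $x$ for every agent in $A$, each axiom $\ax^x_\al$ is valid in $\M$ under any substitution, in particular under $[\psi/p]$ for $\psi\in\subc(\varphi)$; hence the big conjunction inside $\Inv_d$ holds at every state of $\M$, and a fortiori $\Inv_d(\cdots)$ holds at $s$. Thus $(\M,s)\models\transl{A}{x}{\varphi}$, witnessing $\LG_2$-satisfiability.

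For the hard direction, suppose $(\M,s)\models\transl{A}{x}{\varphi}$ for some $\LG_2$-model $\M=(W,R,V)$. The strategy is to pass to the closure model $\M^x = (W,R^{x,A},V)$ obtained by taking the closure of $R_\al$ under condition $x$ for each $\al\in A$ (leaving the other relations untouched), and to argue (i) that $\M^x$ is an $\LG_1$-model, and (ii) that $(\M^x,s)\models\varphi$. For (i): the new relations satisfy $x$ by construction; and by Lemma~\ref{lem:conditionsarepreserved} together with the fact that $\LG_2(\al)$ only uses conditions preceding $x$ in the fixed closure-preserving order $D,T,B,4,5$, those preceding conditions survive the closure — so $\M^x$ satisfies all constraints of $\LG_2(\al)$ plus $x$ for $\al\in A$, i.e.\ all constraints of $\LG_1(\al)$; for $\al\notin A$ nothing changes. (One must be slightly careful when $A$ has several agents: the closures for distinct agents act on disjoint relations, so they do not interfere, and Lemma~\ref{lem:conditionsarepreserved} applies agent-by-agent.)

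The crux is (ii): showing that adding the $x$-closure edges does not destroy satisfaction of $\varphi$ at $s$. The key lemma I would prove, by induction on the structure of a subformula $\psi\in\subc(\varphi)$, is: for every state $u$ reachable from $s$ in at most $d - md(\psi)$ steps (in $R^{x,A}$, say), $u\in\trueset{\psi}_{\M}$ iff $u\in\trueset{\psi}_{\M^x}$. The guarantee provided by $\Inv_d(\bigwedge \ax^x_\al[\psi/p])$ is exactly that every such $u$ validates the relevant instances of the $x$-axiom, and these axioms are precisely what is needed to show that a box or diamond sees the same truth value for a subformula of $\psi$ after the extra edges are added. For $x\in\{T,B\}$ the new edges are ``one-step'' and the bound $d=md(\varphi)$ with the plain $\Inv_d$ suffices; for $x=4$ or $x=5$, a single closure edge can correspond to a path of length up to roughly $|W|$, so one replays the axiom instance many times, and this is why the translation uses $d = md(\varphi)\cdot|\varphi|$ — giving enough depth of invariance to cover all the ``shortcuts'' that matter for the finitely many subformulas. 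The main obstacle, and the place where the argument needs genuine care, is packaging this induction uniformly across the five conditions $x$ while keeping track of exactly how many steps of invariance each new edge consumes; concretely, for $x=4$ one wants that if $\ax^4_\al[\psi/p]$ holds along a path then $[\al]\psi$ is ``inherited'' along the transitive-closure edge, and for $x=5$ one argues analogously using the euclidean axiom, bounding the relevant detours by the number of $R_\al$-successors, hence by $|\varphi|$ after the invariance depth is scaled accordingly.
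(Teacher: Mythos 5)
Your easy direction and the broad outline of the hard direction (pass to the $x$-closure, then show by induction on $\psi\in\subc(\varphi)$ that satisfaction survives, using the axiom instances guaranteed by $\Inv_d$) match the paper's strategy, but two essential ingredients are missing, and without them the induction fails.

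First, you work directly on the original model $\M$, whereas the paper first replaces $(\M,s)$ by its \emph{tree unfolding} (Definition~\ref{def:unfolding}, justified by Lemma~\ref{lem:ufoldings-are-bisimilar} and Theorem~\ref{thm:hmt}) and then truncates to the states $W_d$ at depth at most $d$. This is not a convenience: take $x=B$ and $\psi=[\al]\psi'$ true at a state $u$ reachable from $s$ in $k\leq d-md(\psi)$ steps. The symmetric closure gives $u$ a new successor $v$ for every $v$ with $v R_\al u$, and in an arbitrary model such a $v$ need not be reachable from $s$ within $d$ steps (or at all), so the instance $\diam{\al}[\al]\psi'\impl\psi'$ is simply not guaranteed at $v$ and you cannot conclude $(\M,v)\models\psi'$. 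In the unfolding the only $R_\al$-predecessor of $u$ is its parent, which sits at depth $k-1$, so the axiom is available exactly where you need it. Relatedly, measuring reachability in $R^{x,A}$ rather than in $R$ breaks the depth bookkeeping (for $x=5$ the closure genuinely creates new reachable states), and truncation forces a re-application of the $D$-closure, which the paper handles explicitly.

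Second, and more seriously, your treatment of $x=4$ and $x=5$ does not work as stated. A single transitive-closure edge from $v$ to $u$ corresponds to an $R_\al$-path of \emph{unbounded} length, not length at most $|\varphi|$: the number of $R_\al$-successors of a state is not bounded by $|\varphi|$, and in any case path length is not bounded by out-degree. Replaying $\ax^4_\al[\psi'/p]$ along a path of length $\ell$ requires the axiom at $\ell$ consecutive states, so no fixed invariance depth $d$ ``covers all the shortcuts'' directly. The paper's actual argument is different: it defines the sets $b^\varphi_\al(v)$, $d^\varphi_\al(v)$ of box and diamond subformulas true at $v$, observes that along an $R_\al$-path these are monotone, calls a state $\al$-stable when they stop changing, and uses the Pigeonhole Principle to show every $R_\al$-path of length greater than $|\varphi|$ contains a stable state. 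It then performs surgery on the unfolding, rerouting edges past stable states, to obtain a model $\M'$ in which all $R_\al$-paths relevant to $\varphi$ have length at most $|\varphi|$; only then does the choice $d=md(\varphi)\cdot|\varphi|$ suffice to push $[\al]\psi'$ along every closure edge. The case $x=5$ needs the analogous surgery for all agents with condition $4$ or in $A$. This stable-state/pigeonhole construction is the crux of the theorem for $x\in\{4,5\}$, and your proposal has no substitute for it.
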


Before proving this theorem, we will define the \emph{unfolding} of a model, which 
will be used both in its proof but also repeatedly through the paper. Our definition is a variation of the one in \cite{GORANKO2007249}.

\begin{defi}[Tree unfolding, \cite{SAHLQVIST1975110,GORANKO2007249}]\label{def:unfolding}
Let $\M =( \proc,\tran,V)$, and $w_0 \in \proc$. We define $\M' =( \proc',\tran',V')$, as follows: 
\begin{itemize}
\item $\proc'$ consists of all finite sequences $w_0 \al_1 w_1 \al_2 \cdots \al_n w_n$ ($n \geq 0$), where $w_i \in \proc$, for every $0 \leq i \leq n$, and $w_i\tran_{\al_{i+1}} w_{i+1}$, for every $0 \leq i \leq n-1$.

\item 
$R_{\al} = \{ ( u , u \al w ) \mid  ( u , u \al w ) \in \proc' \times \proc' \text{ and } w \in \proc \}$.
\item For every propositional variable $p$, $w_0 \al_1 w_1 \al_2 \cdots \al_n w_n$ $\in V'(p)$ iff $w_n \in V(p)$.
\end{itemize}
The pair ($\M',w_0$) is called the \emph{tree unfolding} of $(\M,w_0)$.
\end{defi}

\begin{lemC}[\cite{SAHLQVIST1975110}]\label{lem:ufoldings-are-bisimilar}
The tree unfolding $(\M',w)$ of $(\M,w)$ is bisimilar to $(\M,w)$.
Furthermore, if $\M$ is an $\LG$-model, then the closure of 
$(\M',w)$
under the $\LG$ conditions is bisimilar to $(\M,w)$. 
\changeJ{Moreover, the identity mapping is a bisimulation between $(\M',w)$ and its closure under the $\LG$ conditions.}
\end{lemC}
\begin{proof}
	The first statement follows from \cite{SAHLQVIST1975110} -- but see also \cite{GORANKO2007249}. 
        We observe that one bisimulation in particular is the one that maps each state $v$ of $\M$ to all the paths in the unfolding that end with $v$. We call this bisimulation $\R$.
	For the second statement, simply observe that if $R_{\al}$ is serial in $\M$, then it is also serial in the unfolding. Moreover, any transition that is introduced to satisfy any other closure conditions preserves all bisimulations, and $\R$ in particular, thus also yielding the third statement. 
\end{proof}

We also state the following general observation, for the case of transitive frames, which  will be used in the proof of one of the cases of Theorem \ref{thm:no-fixpoint-translations-work}.

\begin{lem}\label{lem:transitive_box_diamond_inclussion}
Let $\M$ be a Kripke structure. 
For each $v \in W$ and $\al \in \act$, let $b_\al(v) = \{ [\al]\psi  \mid (\M,v) \models [\al]\psi \}$, and $d_\al(v) = \{ \diam{\al}\psi\mid (\M,v) \models \diam{\al}\psi \}$.
Assume $ R_\al$ is transitive and $uR_\al v$. Then:
\begin{enumerate}
\item $b_{\al}(u)
\subseteq 
b_\al(v)$, and
\item $d_{\al}(v)
\subseteq 
d_\al(u)$.
\end{enumerate}
\end{lem}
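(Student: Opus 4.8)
The plan is to prove both inclusions directly from the transitivity of $R_\al$, by unwinding the semantics of $[\al]$ and $\diam{\al}$ from Table \ref{table:semantics}. Both statements are essentially immediate, so I do not expect a real obstacle; the only point to keep straight is in which direction transitivity is being applied and how the quantifier on the witnessing successor changes between the two items.

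For item (1), I would fix an arbitrary $[\al]\psi \in b_\al(u)$, i.e.\ assume $(\M,u) \models [\al]\psi$, and show $(\M,v) \models [\al]\psi$. To that end, take any $w \in W$ with $v R_\al w$; combining the hypothesis $u R_\al v$ with $v R_\al w$ and transitivity of $R_\al$ gives $u R_\al w$, whence $(\M,u) \models [\al]\psi$ yields $(\M,w) \models \psi$. Since $w$ was an arbitrary $R_\al$-successor of $v$, this establishes $(\M,v) \models [\al]\psi$, that is, $[\al]\psi \in b_\al(v)$.

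For item (2), I would fix an arbitrary $\diam{\al}\psi \in d_\al(v)$, i.e.\ assume $(\M,v) \models \diam{\al}\psi$, and show $(\M,u) \models \diam{\al}\psi$. By the semantics of $\diam{\al}$ there is some $w \in W$ with $v R_\al w$ and $(\M,w) \models \psi$; again $u R_\al v$ together with $v R_\al w$ and transitivity give $u R_\al w$, and this same $w$ now witnesses $(\M,u) \models \diam{\al}\psi$, so $\diam{\al}\psi \in d_\al(u)$.

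I expect no difficulty beyond this single application of transitivity in each case; the only subtlety worth flagging is that the successor $w$ is universally quantified in item (1) but existentially quantified in item (2), which is exactly why the inclusion runs from $u$ to $v$ for the box set and from $v$ to $u$ for the diamond set.
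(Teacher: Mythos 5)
Your proof is correct and is exactly the ``straightforward'' argument the paper has in mind (the paper omits the details, stating only that the lemma follows directly from transitivity). Both inclusions are established by the single composition $u \mathrel{R_\al} v \mathrel{R_\al} w \Rightarrow u \mathrel{R_\al} w$, applied to an arbitrary successor in the box case and to the existential witness in the diamond case, just as you do.
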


The proof of the above lemma is straightforward, using the proviso that $R_\alpha$ is transitive.
We are now ready to proceed with the proof of Theorem \ref{thm:no-fixpoint-translations-work}. 
Intuitively, this theorem claims that satisfiability over frames with condition $x$ is reducible to satisfiability over frames without condition $x$---that is,
if one is asking whether there exists some frame with condition $x$ that satisfies a formula $\varphi$, it suffices to translate the formula as we have defined in the statement of that theorem, and verify the satisfiability of the resulting formula, over arbitrary frames. 

The main step in the proof of Theorem \ref{thm:no-fixpoint-translations-work} is to show that, from a model of the translation of a formula $\varphi$, we can construct a model for $\varphi$  that has the property $x$. For each property $x$, we show the correctness of our model construction by proving that  the constructed model satisfies all subformulas of $\varphi$ and thus also $\varphi$.

\begin{proof}[Proof of Theorem \ref{thm:no-fixpoint-translations-work}]
	Assume first that $(\M,w)\models\varphi$, where $\M$ is an $\LG_1$-model (and therefore also an $\LG_2$-model).
	For every subformula $\psi$ of $\varphi$ and $\al \in A$, $\ax^x_\al[\psi/ p]$ is an instantiation of the axiom $\ax^x_\al$, and therefore it holds at all states of $\M$ that are reachable from $w$.
	Thus, $\displaystyle(\M,w) \models \Inv_d\Bigg(\bigwedge_{\substack{\psi \in \subc(\varphi) \\ \al \in A}} \ax^x_\al[\psi/ p]\Bigg)$, which yields that $(\M,w) \models \transl{A}{x}{\varphi}$.
	
	For the other direction, assume that $(\M,w)\models\transl{A}{x}{\varphi}$ 
	for some $\LG_2$-model $\M = (W,R,V)$	
	and state $w$.
	We will use $\M$ to construct an $\LG_1$-model for $\varphi$. 
	For convenience and by Lemma \ref{lem:ufoldings-are-bisimilar} and Theorem \ref{thm:hmt}, we assume we are working with 
    the unfolding of $(\M, w)$ which is defined 
	in Definition \ref{def:unfolding}. In particular, that implies 
    that for every $\al,\beta \in \act$, if $\al \neq \beta$, then 
	$R_\al \cap R_\beta = \emptyset$.
	For each $k \geq 0$, we use the notation $W_k = \{ w_0 \al_1 w_1 \al_2 \cdots \al_r w_r \in W \mid  w_0 = w, \text{ and } 0 \leq r \leq k \}$,
 that is, $W_k$ is the collection of paths from $w \in \mathcal{M}$ of length at most $k$.
 Thus, $W_i \subseteq W_j$, for each $i \leq j$.
%
	Let
	$d=md(\varphi)$ 
	if $x\neq 4$ and $x \neq 5$, and $d=md(\varphi)|\varphi|$, if $x=4$ or $x=5$.
We now distinguish the following cases:
\begin{enumerate}
    \item $x \neq 4, 5$, 
    therefore $\LG_2(\al)$ does not have condition $4$ for any $\al$,
    \item $x=4$, and 
    \item $x = 5$.
\end{enumerate}
We tackle each of these cases separately.

\paragraph{First case: $x \neq 4, 5$ and $\LG_2(\al)$ does not have condition $4$ for any $\al$.}
We define the models: 
\begin{itemize}
	\item $\M^c = (W,R^{\LG_2},V)$;
	\item $\M' = (W',R',V')$, where 
            \begin{itemize}
                \item $W' = W_d$, 
            \item 
            for each $\al \in \act$, $R'_\al = (R^{\LG_2} \cap W_d^2)^D$, if $\LG_2(\al)$ has $D$, and 
            $R'_\al = R^{\LG_2} \cap W_d^2$, otherwise,
            and 
            \item $V'$ is the restriction of $V$ on $W'$; and 
            \end{itemize}
	\item $\M^x = (W',R^{'x,A},V')$ is the $\LG_1$-closure of $\M'$.
\end{itemize}
By Lemma \ref{lem:ufoldings-are-bisimilar}, $(\M,w)$ is bisimilar to $(\M^c,w)$.
We also observe that $\M'$ remains an $\LG_2$-model: removing states only affects condition $D$, and the $D$-closure does not affect the other conditions.
Finally, $\M^x$ is an $\LG_1$-model. 
We will now prove that $\M^x$ satisfies $\varphi$. To this end, we first connect satisfaction in the structures $\M^c$ and $\M'$, and then proceed to do the same for $\M'$ and $\M^x$.

\noindent{\textbf{Claim: For every formula $\psi$ with $md(\psi) \leq d$, and every $v \in W_{d-md(\psi)}$, $(\M^c, v ) \models \psi$ if and only if $(\M',v) \models \psi$.}}
We proceed to prove 
the claim 
by induction on $\psi$. 
All propositional cases are straightforward. 
For $\psi = \diam{\al}\psi'$, 
        we have 
        that $md(\psi)>0$, and therefore $v \in W_{d-1}$.
        By Lemma \ref{lem:ufoldings-are-bisimilar} and Theorem \ref{thm:hmt}, 
        $(\M^c, v ) \models \psi$ if and only if $(\M,v) \models \psi$.
        Therefore, 
        $(\M^c, v ) \models \diam{\al}\psi'$ if and only if $(\M,v) \models \diam{\al}\psi'$ if and only if there is some $u \in W_d$, such that $v R_\al u$ and $(\M, u) \models \psi'$.
        We have that $(\M, u) \models \psi'$ if and only if $(\M^c, u) \models \psi'$ if and only if $(\M', u) \models \psi'$, by the inductive hypothesis.
        Therefore, if  $(\M^c, v ) \models \diam{\al}\psi'$, we have that  there is some $u \in W_{d-md(\psi)+1} = W_{d-md(\psi')}$, such that $v R_\al u$ and $(\M', u) \models \psi'$, yielding that $(\M',v) \models \diam{\al}\psi'$.
        On the other hand, if $(\M',v) \models \diam{\al}\psi'$, then there exists 
        some $u \in W_d$, such that $v R'_\al u$ and $(\M', u) \models \psi'$.
        Since $\LG_2(\al)$ does not have condition $4$, $u \in W_{d-md(\psi)+1} = W_{d-md(\psi')}$, and by the inductive hypothesis $(\M^c, u) \models \psi'$. 
        By the construction of $R'$, we have that $v R^{\LG_2}_\al u$, and therefore $(\M^c, v ) \models \diam{\al}\psi'$.
        The case of $\psi = [\al]\psi'$ is reduced to the case of $\diam{\al}\neg\psi'$, thus completing the proof of the claim.
        %
	Specifically, we have that $(\M',w) \models \varphi$.

	We will now conclude the proof of 
        the first case
        by showing that for every subformula $\psi$ of $\varphi$, and every $v \in W_{d-md(\psi)}$, 
	$(\M', v) \models \psi$ implies that $(\M^x,v )\models \psi$.
From this, it follows that $\M^x$ is an $\LG_1$-model  for $\varphi$, and we are done. 

	We proceed by considering the closure conditions in turn, in each case proving the above claim by induction on the structure of $\psi$.
    Observe that all propositional cases for $\psi$ are straightforward.
    Furthermore, for every $\al \in \act$, $R'_\al \subseteq R^{'x,A}_\al$, and therefore all diamond formulas are preserved from $\M'$ to $\M^x$. 
    Therefore, it suffices to consider the case for $\psi = [\al]\psi'$.
    If $\al \notin A$, then $R'_\al = R^{'x,A}_\al$, and therefore $(\M', v) \models [\al]\psi'$ implies that $(\M^x,v )\models [\al]\psi'$ by the inductive hypothesis. 
    To continue, we assume that $\psi = [\al]\psi'$, where $\al \in A$.
	We observe that $md(\psi)>0$, and therefore $v \in W_{d-1}$.
	Specifically, if $md(\psi)=e$, then $v \in W_{d-e}$.
        From $(\M,w) \models \transl{A}{x}{\varphi}$, we have that $(\M,v) \models \ax_\al^x[\psi'/p]$.
        By the inductive hypothesis, it suffices to prove that 
	if there is some $v {R_\al}^{' x,A} u$, but not $v R_\al' u$, then
        $\M',u \models \psi'$.
	We take cases for $x$:
	\begin{description}
		\item[$x=D$] We observe that 
                        from $(\M,v) \models \ax_\al^x[\psi'/p]$  
                        we get $(\M,v) \models \diam{\al}\true$, which with $md(\diam{\al}\true)=1$,
                        $v \in W_{d-1}$, and the claim above, yield that 
                        $(\M',v) \models \diam{\al}\true$, implying that 
                        $v R'_\al u$ for some $u \in W'$. 
                        Therefore, from Definition \ref{def:closures}, 
                        $v {R_\al}^{' x,A} u$ implies $v R_\al' u$, contradicting our assumptions.
		\item[$x=T$] 
                        By Definition $\ref{def:closures}$, if $v {R_\al}^{' x,A} u$, but not $v R_\al' u$, then $u = v$.
                     We observe that 
                        from $(\M,v) \models \ax_\al^x[\psi'/p]$  
                        we get $(\M,v) \models [\al]\psi' \impl \psi'$.
                        Furthermore, $md([\al]\psi' \impl \psi')= md([\al]\psi') =e$.
                        Therefore, 
                        $v \in W_{d-e}$ and the claim above yields that 
                        $(\M',v) \models [\al]\psi' \impl \psi'$.
                        Therefore, $(\M',v) \models \psi$ yields $(\M',v) \models \psi'$ and this case is complete. 
%
%
\item[$x = B$]
%
%
 By Definition $\ref{def:closures}$, if $v {R_\al}^{' x,A} u$, but not $v R_\al' u$, then $u R_\al' v$ and $u \in W_{d-e-1}$.
                     We observe that 
                        from $(\M,u) \models \ax_\al^x[\psi'/p]$  
                        we get 
                        $(\M,u) \models \diam{\al}[\al]\psi' \impl \psi'$.
                        Furthermore, $md(\diam{\al}[\al]\psi' \impl \psi')= md([\al]\psi') + 1 =e + 1$.
                        The claim above yields that 
                        $(\M',u) \models \diam{\al}[\al]\psi' \impl \psi'$.
                        From $u R_\al' v$ we get that 
                         $(\M',u) \models \diam{\al}\psi$, which then yields together with $(\M',u) \models \diam{\al}[\al]\psi' \impl \psi'$ that 
                        $(\M',u) \models \psi'$ and this case is complete. 
\end{description}

\paragraph{We now consider the case for $x = 4$.}

For this case, we construct the models $\M'$ and $\M^x$ in a different way.
Intuitively, we want to ensure that $\M'$ does not prematurely cut off any long paths in the model that are relevant to the evaluation of the formula.
Therefore, in its construction, we want to detect when certain states behave similarly and we (carefully) eliminate certain branches of the unfolding $\M^c$.

We first observe a statement that is 
analogous
to Lemma \ref{lem:transitive_box_diamond_inclussion}. 
For each $v \in W_k$ and $\al \in A$, let 
\[b^\varphi_\al(v) = \{ [\al]\psi \in \subc(\varphi) \mid (\M^c,v) \models [\al]\psi 
\},\] and 
\[ d^\varphi_\al(v) = \{ \diam{\al}\psi  \in \subc(\varphi) \mid (\M^c,v) \models \diam{\al}\psi 
\}.\]
Assume that 
$v \in W_k$, where $k \leq d$ and $uR_\al v$.  Then, 
$(\M^c,v) \models \bigwedge_{\psi \in \subc(\varphi)} [\al]\psi \impl [\al][\al]\psi$
and therefore:
\begin{enumerate}
\item $b^\varphi_{\al}(u)
\subseteq 
b^\varphi_\al(v)$, and
\item $d^\varphi_{\al}(v)
\subseteq 
d^\varphi_\al(u)$.
\end{enumerate}
If, additionally, $\LG_2(\al)$ has constraint $B$, then 
$b^\varphi_{\al}(u)=
b^\varphi_\al(v)$ and
$d^\varphi_{\al}(v)=
d^\varphi_\al(u)$.

We call a state $sv \in W_{k+1}$ \emph{$\al$-stable}, for some $\al\in A$, when 
$s R_\al^c sv \in W_{k+1}$, 
and
$b^\varphi_\al(s)= b^\varphi_\al(sv)$ and $d^\varphi_\al(s) = d^\varphi_\al(sv)$, and write $s \triangleright_\al sv$.
Observe that
if $u R_\al^c v$ and $v$ is not $\al$-stable, then either $b^\varphi_\al(v)$ is a strict superset of $b^\varphi_\al(u)$ or $d^\varphi_\al(v)$ is a strict subset of $d^\varphi_\al(u)$. Therefore, 
by the Pigeonhole Principle,
 for $\al \in A$, in any sequence $s R_\al sv_1 R_\al \cdots R_\al sv_1 \cdots v_l$, 
 where each $sv_1 \cdots v_i \in W_i$ and $l\geq |\varphi|$, there must be an $\al$-stable state. 

%
%


Recall that we assume that $(\M,w)\models\transl{A}{x}{\varphi}$ 
	for some $\LG_2$-model $\M = (W,R,V)$. 
Let 
$\M' = (W',R',V')$, where 
\begin{itemize}
    \item for each $\al \notin A$, $R'_\al = R^c_\al$, and for each $\al \in A$, $(a,b)\in R'_\al$ if one of the following holds:
    \begin{enumerate}
        \item $a R^c_\al b$ and $\not\exists c \triangleright_\al a$,
        \item $\exists c \triangleright_\al a$, such that $ c R^c_\al b$,
        \item $b \triangleright_\al a$ and  $\LG_2(\al)$ has  $B$, or 
        \item $\exists c \triangleright_\al b$ such that $ c R^c_\al a$ and  $\LG_2(\al)$ has  $B$;
    \end{enumerate}
    \item $W' = \{ v \in W \mid v \text{ is reachable from } w \text{ by } R' \}$; and
    \item $V'$ is the restriction of $V$ on $W'$.
\end{itemize}
Observe that if $a \triangleright_\al b R^c_\al c$, then $c$ is not 
reachable from $w$  by  $R' $.
Therefore, for every $\al \in A$, there is no sequence 
$p R'_\al pv_1 R'_\al \cdots R'_\al pv_1 \cdots v_l$ in $\M'$, where each $pv_1 \cdots v_i \in W_{k_i}$, $k_1 < \cdots < k_l$, and $l > |\varphi|$.
Moreover,
if $\LG_2(\al)$ has  $B$, then $a R^c_\al b$ implies $a \triangleright_\al b$, and therefore item $(4)$ is redundant.

$\M'$ is now a $\LG_2$-model: it is not hard to see that conditions $D$ and $T$ are not affected from $R^c$ to $R'$, and the last two items in the construction of $R'_\al$ above ensure that when $\LG_s(\al)$ has condition $B$, then each new pair in $R'_\al$ also introduces its symmetric one.

We prove,
by lexicographic induction on $(|\psi|,k)$, that 
for every formula $\psi \in \sub(\varphi)$ 
and every $v \in W_k \cap W'$, where $k\leq d-md(\psi)|\varphi|$, 
$(\M^c, v) \models \psi$ if and only if 
$(\M',v) \models \psi$. The propositional cases are straightforward. 
%
%
The 
modal cases $\psi=\diam{\al}\psi'$ or $\psi=[\al]\psi'$ ensure that $md(\psi)>0$, and therefore $v \in W_{d-1}$, and there is some $(v,u)\in R'_\al$.
If $\al \notin A$,  then this case follows from the inductive hypothesis and the observation that the accessible states 
from $v$ in $\M^c$ and in $\M'$ are the same.
Similarly, if $\al \in A$ and $v$ is not $\al$-stable, then $\LG_2(\al)$  does not have constraint $B$ and the accessible states 
from $v$ in $\M^c$ and in $\M^s$ are the same.

If $\al \in A$ and $v'\triangleleft_\al v$ for some $v' \in W$, then
$v'$ is not $\al$-stable, because otherwise $v\notin W'$, as it would not be 
reachable from  $w$  by $R'$ in $\M^c$.
From $v'\triangleleft_\al v$, we get that $(\M^c,v') \models \psi$ if and only if 
$(\M^c,v) \models \psi$, and $v' \in W_{k-1}$.
The inductive hypothesis for $(|\psi|,k-1)$ then yields that 
$(\M',v') \models \psi$ if and only if 
$(\M^c,v') \models \psi$.
Let $\psi=\diam{\al}\psi'$. 
From the above, $(\M^c,v) \models \diam{\al}\psi'$ if and only if $(\M^c,v') \models \diam{\al}\psi'$ if and only if $(\M',v') \models \diam{\al}\psi'$.
But if $(\M',v') \models \diam{\al}\psi'$, then there exists some $v' R'_\al u$, such that $(\M',u) \models \psi'$. But by construction we also have that $v R'_\al u$, and therefore $(\M',v) \models \diam{\al}\psi'$.
On the other hand, if $(\M',v) \models \diam{\al}\psi'$, then 
there exists some $v R'_\al u$, such that $(\M',u) \models \psi'$. By 
construction we also have the following two cases:
\begin{itemize}
    \item $\LG_2(\al)$ has constraint $B$ and $u = v'$, and therefore $(\M',v') \models \psi'$, yielding, by the induction hypothesis  that $(\M^c,v') \models \psi'$. Therefore $(\M^c,v) \models \diam{\al}\psi'$.
    \item $v' R'_\al u$, and therefore $(\M',v') \models \diam{\al}\psi'$, and following the biimplications above, we also have $(\M^c,v) \models \diam{\al}\psi'$.
\end{itemize}


%

Let $\M^x = (W',{R'}^{x,A},V')$.
We now
prove that for every subformula $\psi$ of $\varphi$, and every $v \in W_{d-md(\psi)|\varphi|}$, $(\M', v) \models \psi$ if and only if $(\M^x,v) \models \psi$.
As usual, the propositional cases are straightforward.
If $\psi = \diam{\al}\psi'$ and $(\M', v) \models \psi$, then $(\M^x,v) \models \psi$, because diamond formulas are preserved by the introduction of pairs in the accessibility relation.
If $\psi = [\al]\psi'$, $(\M', v) \models \psi$, and $\al \notin A$, then 
for every $(v,u) \in R'_\al$, $(\M', u) \models \psi'$ and $u \in W_{k+1}$.
By the inductive hypothesis, since $v R'_\al u$ iff $v R'^{x,A}_\al u$, 
for every $(v,u) \in R'^{x,A}_\al$, $(\M^x, u) \models \psi'$, yielding 
$(\M^x, v) \models \psi$.
If $\psi = [\al]\psi'$, $(\M', v) \models \psi$, and $\al \in A$, then 
by the induction hypothesis it suffices to prove that 
$(\M', u) \models \psi'$ for every $R'_\al$-reachable state $u$ from $v$.
Let $v R'_\al v_1 R'_\al \cdots R'_\al v_l = u$.
We know that $v \in W_{d-md(\psi)|\varphi|}$ and that 
$v_i \in W_{d-md(\psi)|\varphi| + |\varphi|} = W_{d-(md(\psi) - 1)|\varphi|}$.
From $(\M,w) \models \transl{A}{x}{\varphi}$, we have that $(\M,v_i) \models \ax_\al^x[\psi'/p]$, yielding $(\M^c,v_i) \models \ax_\al^x[\psi'/p]$.
Therefore, $(\M^c,v_i) \models [\al]\psi' \impl [\al][\al] \psi'$, which gives 
$(\M^c,v_i) \models \psi'$ by straightforward induction on $i$.
Therefore, $(\M^c,u) \models \psi'$, yielding $(\M',u) \models \psi'$, as $u \in  W_{d-(md(\psi) - 1)|\varphi|} =  W_{d-md(\psi')|\varphi|}$.
This completes the case for $\psi = [\al]\psi'$.
The other direction of the biimplication is dual, and the case for $x=4$ is complete.


\paragraph{We now consider the case for $x = 5$.}


We construct $\M^c$ and $\M^x$ as in the above cases, but to construct model $\M'$ we need to take into account the agents with constraint $4$ or $5$ in $A$, in a similar way as it was done for the case for $x = 4$.
Similarly to the case for $x=4$, 
we define  $b^\varphi_\al(v)$ and $d^\varphi_\al(v)$.
Assume that 
$uR_\al v$.  Then, 
Lemma \ref{lem:transitive_box_diamond_inclussion} tells us that 
if $\LG_2(\varphi)$ has constraint $4$, then 
$b^\varphi_{\al}(u)
\subseteq 
b^\varphi_\al(v)$, and
$d^\varphi_{\al}(v)
\subseteq 
d^\varphi_\al(u)$.
If, additionally, $\LG_2(\al)$ has constraint $B$, then 
$b_{\al}(u)=
b_\al(v)$ and
$d_{\al}(v)=
d_\al(u)$.

Observe that for every $\chi \in \subc(\varphi)$, $\al \in A$ and $v_1,v_2 \in W_d$, where $v_1 R_\al^c v_2$,  $(\M,w) \models \transl{A}{x}{\varphi}$ yields that $(\M^c,v_1) \models \diam{\al}\chi$ implies $(\M^c,v_2) \models \diam{\al}\chi$, and $(\M^c,v_2) \models [\al]\chi$ implies $(\M^c,v_1) \models [\al]\chi$.
Therefore, 
 if $\al \in A$, $uR_\al v$, and $u,v \in W_d$, then 
$b^\varphi_{\al}(v)
\subseteq 
b^\varphi_\al(u)$, and
$d^\varphi_{\al}(u)
\subseteq 
d^\varphi_\al(v)$.
If, additionally, $\LG_2(\al)$ has constraint $B$ or $4$, then 
$b^\varphi_{\al}(v)
= 
b^\varphi_\al(u)$, and
$d^\varphi_{\al}(u)
= 
d^\varphi_\al(v)$.

We remind to the reader that, as in the case $x=4$, every sufficiently long sequence of states has an $\al$-stable state.
%
%
%
%
%
%
Let 
$\M' = (W',R',V')$, where 
\begin{itemize}
    \item $W' = \{ v \in W_d \mid v \text{ has no stable state as a strict prefix}\}$;
    \item for each $\al \in \act$, 
    if $\LG_2(\al)$ does not have  constraint $4$ 
    and $\al \notin A$, then 
    $R'_\al = R^c_\al$, 
    \item 
    if $\LG_2(\al)$  has constraint $4$ 
    or $\al \in A$, 
    then  $(a,b)\in R'_\al$ if $a,b \in W'$ and one of the following holds:
    \begin{enumerate}
        \item $a R^c_\al b$ and $\not\exists c \triangleright_\al a$,
        \item $\exists c \triangleright_\al a$, such that $ c R^c_\al b$,
        \item $b \triangleright_\al a$ and  $\LG_2(\al)$ has  $B$, or 
        \item $\exists c \triangleright_\al b$ such that $ c R^c_\al a$ and  $\LG_2(\al)$ has  $B$; 
        and
    \end{enumerate}
     %
    \item $V'$ is the restriction of $V$ on $W'$.
\end{itemize}
Observe that if $a \triangleright_\al b R^c_\al c$, then $c$ is not 
reachable from $w$  by  $R' $.
Similarly to the case of $x=4$, if $\LG(\al)$ has constraint $4$ or $\al \in A$, then there is no sequence 
$p R'_\al pv_1 R'_\al \cdots R'_\al pv_1 \cdots v_l$ in $\M'$, where each $pv_1 \cdots v_i \in W_{k_i}$, $k_1 < \cdots < k_l$, and $l > |\varphi|$.
Moreover,
if $\LG_2(\al)$ has  $B$, then $a R^c_\al b$ implies $a \triangleright_\al b$, and therefore item $(4)$ is redundant.

$\M'$ is now a $\LG_2$-model: it is not hard to see that conditions $D$, $T$, and $4$ are not affected from $R^c$ to $R'$, and the last two items in the construction of $R'_\al$ above ensure that when $\LG_s(\al)$ has condition $B$, then each new pair in $R'_\al$ also introduces its symmetric one.
We can prove that for every $v \in W_k$ and $\psi \in \sub(\varphi)$, if $k \leq d - md(\psi)|\varphi|$, then $(\M^c,v) \models \psi$ if and only if $(\M',v) \models \psi$ with similar arguments as for the case of $x=4$, using the observation that 
if $\LG(\al)$ has constraint $4$ or $\al \in A$, then there is no sequence 
$p R'_\al pv_1 R'_\al \cdots R'_\al pv_1 \cdots v_l$ in $\M'$, where each $pv_1 \cdots v_i \in W_{k_i}$, $k_1 < \cdots < k_l$, and $l > |\varphi|$.

What remains is to 
prove that for every subformula $\psi$ of $\varphi$, and every $v \in W_{d-md(\psi)|\varphi|}$, $(\M', v) \models \psi$ if and only if $(\M^x,v) \models \psi$.
As usual, we use induction on $\psi$, and the propositional cases are straightforward.
If $\psi = \diam{\al}\psi'$ and $(\M', v) \models \psi$, then $(\M^x,v) \models \psi$, because diamond formulas are preserved by the introduction of pairs in the accessibility relation.
If $\psi = [\al]\psi'$, $(\M', v) \models \psi$, and $\al \notin A$, then 
for every $(v,u) \in R'_\al$, $(\M', u) \models \psi'$ and $u \in W_{k+1}$.
By the inductive hypothesis, since $v R'_\al u$ iff $v R'^{x,A}_\al u$, 
for every $(v,u) \in R'^{x,A}_\al$, $(\M^x, u) \models \psi'$, yielding 
$(\M^x, v) \models \psi$.



We now prove
 that  $\psi = [\al]\psi'$, $\al \in A$, and
%
$(\M', v) \models \psi$ implies 
$(\M^x, v) \models \psi$, for every $v \in W_{d-md(\psi)}$. To this end, assume that $(\M', v) \models \psi$ and $u$ is a state such that $v {R'_\al}^x u$, but $v R_\al' u$ does not hold.
It suffices to prove that $(\M^c,u)\models \psi'$.
From our assumption that $v {R'_\al}^x u$, but not $v R_\al' u$, and by the euclidean closure condition, 
we can see that there are some $a, b, c \in W'$, such that $a R'_\al b, c$, and $v$ is 
$ R'_\al$-reachable from $b$, and $u$ is  $R'_\al$-reachable from $c$.

Assume that $(\M^c,u) \not \models \psi'$, to reach a contradiction. 
We can see that there is some 
$u'$ that is $R'_\al$-reachable from 
$a$ and
$u' R'_\al u$, such that
$(\M^c,u') \models \diam{\al} \neg \psi'$;
from the above observation, this yields that $(\M^c,a) \models \diam{\al} \neg \psi'$.
But, using the same observation, $(\M', v) \models [\al]\psi'$ yields $(\M', a) \models [\al]\psi'$, which is a contradiction.
Thus, we conclude that $(\M^c,u) \not \models \psi'$, which is what we wanted to prove.

The other direction of the biimplication is dual, and the proof is complete.
\end{proof}


%

%

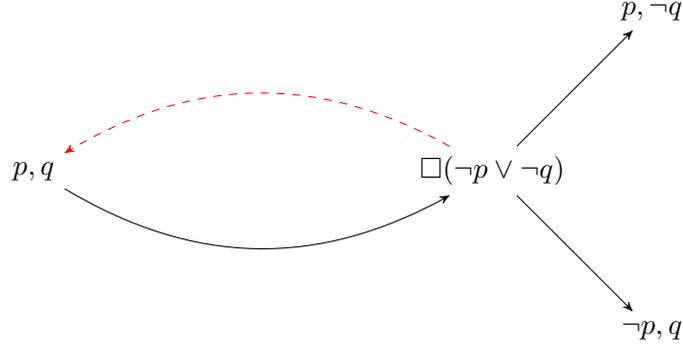
\begin{figure}
\begin{center}
\begin{tikzpicture}[scale=0.6,->,>=stealth',node distance=3cm, auto, main node/.style={draw,font=\large\bfseries}]

\node (1) {$p,q$};
\node (2) [right = 5] {$\Box(\neg p \vee\neg q )$};
\node (3) [above right of=2] {$p,\neg q$};
\node (4) [below right of=2]{$\neg p,q$};

\path 
	(1) edge [bend right](2)
%
    (2) edge [red,dashed,bend right](1)
	(2) edge (3)
		edge (4);
\end{tikzpicture}
\end{center}
\caption{An example where the use of the alternative axiom for symmetricity would yield an incorrect translation.}
\label{fig:notGoodSymetricAxiom}
\end{figure}

\begin{remark}
The attentive reader might have noted that we used the contra-positive version of the axioms for symmetric and euclidean conditions on a frame. Indeed, it turns out that our translation would not be correct if we adopted the usual axioms, namely $ p \rightarrow \Box \Diamond p$ for symmetry and $\Diamond p \rightarrow \Box \Diamond p$ for euclidicity.
To see this, let us assume that there is only one agent and consider the version of a translation defined based on the usual version of the axiom for symmetry, which would map a formula $\varphi$ to
\[
    \varphi \wedge \Inv_d\Bigg(
        \bigwedge_{\substack{\psi \in \subc(\varphi) 
        }} 
            \psi \rightarrow \Box \Diamond \psi 
     \Bigg).
\]
 It is possible in a non-symmetric frame with states $s R t$ 
 to have 
 $t \models \Box \psi$  but $s\not\models  \psi$ for some subformula $\Box \psi$ of $\varphi$.  
 In this case our construction of a symmetric model would not work, as adding symmetric edge of $s R t$ would not preserved the satisfiability of $\Box \psi$ at $t$. 
 For example, Figure \ref{fig:notGoodSymetricAxiom} shows a model where adding edges for the symmetric closure would still satisfy the translation, but it would not preserve the truth of the formula $\Box (\neg p \vee \neg q)$. 
It is then hard to see how such a model would describe a symmetric one, or how that version of the axiom could be used to ``force'' a model of the translation to be easily turned into a symmetric one.
 See Figure \ref{fig:notGoodSymetricAxiom} for such an example.
\end{remark}

%
%
%

Theorem \ref{thm:no-fixpoint-translations-work} together with Ladner's classic result for the complexity of the satisfiability of $\klogic$ \cite{ladnermodcomp} yield an alternative proof of the following statement. 
\begin{cor}\label{cor:modalupper}
	The satisfiability problem for $\LG+x$, with $x \in \tlogic, \dlogic, \blogic, \kf, \kv$, without fixed-point operators
	is in \PSPACE.
\end{cor}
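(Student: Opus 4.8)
The plan is to obtain these bounds as a direct consequence of Theorem~\ref{thm:no-fixpoint-translations-work} together with Ladner's classical result that $\klogic$-satisfiability is in \PSPACE~\cite{ladnermodcomp}. Each logic in the list is of the form $\klogic+x$ for a single frame condition $x\in\{T,D,B,4,5\}$, so I would apply Theorem~\ref{thm:no-fixpoint-translations-work} once, taking $\LG_1=\klogic+x$, $\LG_2=\klogic$, and $A=\act$: the hypothesis that $\LG_2(\al)$ contains only frame conditions preceding $x$ in the fixed order $D,T,B,4,5$ holds vacuously, so $\varphi$ is $(\klogic+x)$-satisfiable if and only if $\transl{\act}{x}{\varphi}$ is $\klogic$-satisfiable. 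The same scheme handles logics carrying several conditions: after eliminating the greatest (in the order $D,T,B,4,5$) frame condition still present, the target logic is exactly the one obtained by dropping it, whose remaining conditions all precede the next one to be removed --- this is precisely what the closure-preserving order and Lemma~\ref{lem:conditionsarepreserved} guarantee --- so Theorem~\ref{thm:no-fixpoint-translations-work} applies again, and after finitely many steps we land in $\klogic$.

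Next I would check that $\varphi\mapsto\transl{\act}{x}{\varphi}$ is a polynomial-time many-one reduction. By definition it sends $\varphi$ to $\varphi\wedge\Inv_d\big(\bigwedge_{\psi\in\subc(\varphi),\,\al\in A}\ax^x_\al[\psi/p]\big)$ with $d\le md(\varphi)\,|\varphi|$; since $|\subc(\varphi)|\le 2|\varphi|$, each axiom instance has size $O(|\varphi|)$, and $\Inv_d$ only prepends at most $d$ layers of $[\act]$, the output has size polynomial in $|\varphi|$ (in the subformula measure used throughout the paper) and is clearly computable in polynomial time. Composing a constant number of such steps --- one per frame condition of $\LG$ --- keeps the reduction polynomial, so \PSPACE\ membership of $\klogic$-satisfiability transfers to each $\LG$ in the list (and, read in the multi-agent setting, to $\klogic_k$-satisfiability via Theorem~\ref{thm:halpKkcomp}), which is what is claimed.

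There is essentially no obstacle here beyond bookkeeping: the real content is Theorem~\ref{thm:no-fixpoint-translations-work} itself, whose correctness proof establishes that each one-step translation faithfully encodes satisfiability over frames with the extra condition. The only point that needs attention, and only for logics carrying more than one condition, is that the conditions must be eliminated in a closure-preserving order so that at every stage the intermediate logic $\LG_2$ meets the hypothesis of Theorem~\ref{thm:no-fixpoint-translations-work} --- all its remaining conditions precede the one being removed. For the single-axiom logics $\tlogic,\dlogic,\blogic,\kf,\kv$ even this is trivial, since $\LG_2=\klogic$ already after one step, and the corollary follows immediately.
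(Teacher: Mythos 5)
Your proposal is correct and matches the paper's own argument, which likewise derives the corollary directly from Theorem~\ref{thm:no-fixpoint-translations-work} (applied with $\LG_2=\klogic$) together with Ladner's \PSPACE\ bound for $\klogic$-satisfiability, the translation being a polynomial-time (quadratic-size) reduction. The extra remarks on composing one-step translations in the closure-preserving order are consistent with how the paper handles logics with several frame conditions.
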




\subsubsection{Modal Logics with Recursion}
\label{sec:transl_mu_cal}

In the remainder of this section we will modify our translations and proof technique, in order to lift our results to 
logics with fixed-point operators.
It is not clear whether the translations of Subsection \ref{subsec:trans-ml} can be extended straightforwardly in the case of logics with recursion, by using unbounded invariance $\Inv$, instead of the bounded  $\Inv_d$. 
To see this, consider the following example. 
\begin{example}\label{ex:all_paths_fin}
	Consider the formula 
$\varphi_f=$ \change{$\mn$}$ X. \Box X$, which requires all paths in the model to be finite, and thus is satisfiable over arbitrary frames, but not over reflexive ones. In the setting of logics with fixed-point operators, formulas may have open subformulas. Therefore, in adapting our translations, it is natural to consider their closures, as defined on page~\pageref{closure-formula}.
The set of the closures of the subformulas of $\varphi_f$ is $\{ \varphi_f, \Box \varphi_f \}$.
The reader may observe that for the case of reflexive frames, we do not need to conjunct over the negations of the subformulas of $\varphi$.
Without those the translation  would give
$$\varphi_t := \varphi_f \land \Inv((\Box \varphi_f \impl \varphi_f) \land (\Box\Box \varphi_f \impl \Box\varphi_f)
).
$$
 \change{Indeed, on reflexive frames,} the formulas $\Box \varphi_f \impl \varphi_f$ and $\Box\Box \varphi_f \impl \Box\varphi_f$ are valid, and therefore $\varphi_t$ is equivalent to $\varphi_f$, which is \klogic-satisfiable.
%
This was not an issue 
in Subsection \ref{subsec:trans-ml}, as
the finiteness of all the paths in a model 
cannot be expressed without recursion.

One would then naturally wonder whether conjoining over $\subc(\varphi_f)$ in the translation
 would make a difference.
The answer is affirmative, as the tranlation 
	$$ \varphi_f \wedge \Inv\Bigg(\bigwedge_{\psi \in \subc(\varphi_f) } 
\Box\psi \impl \psi 
\Bigg)
$$
would then yield a formula that is not satisfiable over reflexive frames.
However, our constructions would not work to prove that such a translation preserves satisfiability.
For example, consider $\mn X.\Box(p \impl(r \land( q \impl X)))$, whose translation is satisfied on a 
 model that satisfies at the same time $p$ and $q$.
We invite the reader to verify the details.
\end{example}

The only case \change{where} the 
approach that we used for the logics without recursion can be applied 
is for 
the case of seriality (condition $D$), as $\Inv(\diam{\al}\true)$ directly ensures the seriality of a model.

\begin{translation}
Let $A\subseteq \act$. We define:
	$$ 
	\transl{A}{D^{\mu}}{\varphi}
	= \varphi  \wedge \Inv\big(\bigwedge_{\al \in A}\diam{\al} \true 
	\big).$$  
\end{translation}

\begin{thm}\label{thm:mu_calc_serial_transl}
	Let $A\subseteq \act$ and $|\act|=k$, 
	and let $\LG$ be a logic, such that 
	$\LG(\al) = \dlogic$ when $\al \in A$, and $\klogic$ otherwise.
	Then, $\varphi$ is $\LG$-satisfiable if and only if 
	$\transl{A}{D^{\mu}}{\varphi}$ is $\klogic_k^\mu$-satisfiable.
\end{thm}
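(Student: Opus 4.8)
The plan is to prove the two directions of the biconditional separately, with the forward direction being essentially trivial and the backward direction requiring a model transformation.

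\textbf{Forward direction.} Suppose $\varphi$ is $\LG$-satisfiable, say $(\M,w)\models\varphi$ where $\M=(W,R,V)$ is an $\LG$-model. Since $\LG(\al)=\dlogic$ for $\al\in A$, the relation $R_\al$ is serial for every $\al\in A$, so $\diam{\al}\true$ holds at every state of $\M$, hence $\bigwedge_{\al\in A}\diam{\al}\true$ holds at every state and in particular at every state reachable from $w$. Therefore $(\M,w)\models\Inv(\bigwedge_{\al\in A}\diam{\al}\true)$, and so $(\M,w)\models\transl{A}{D^{\mu}}{\varphi}$. Finally, since dropping the seriality requirement only enlarges the class of admissible models, $\M$ is also a $\klogic_k^\mu$-model, so $\transl{A}{D^{\mu}}{\varphi}$ is $\klogic_k^\mu$-satisfiable.

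\textbf{Backward direction.} Suppose $(\M,w)\models\transl{A}{D^{\mu}}{\varphi}$ for some $\klogic_k^\mu$-model $\M=(W,R,V)$. By Lemma~\ref{lem:ufoldings-are-bisimilar} and Theorem~\ref{thm:hmt} we may assume $\M$ is the tree unfolding of a model at $w$; in particular we may restrict attention to the states reachable from $w$, at all of which $\bigwedge_{\al\in A}\diam{\al}\true$ holds by the $\Inv$ conjunct. Now I would build an $\LG$-model $\M'=(W,R',V)$ by taking, for each $\al\in A$, the $D$-closure $R'_\al=R_\al^{D}$ of $R_\al$ (as in Definition~\ref{def:closures}), and $R'_\al=R_\al$ for $\al\notin A$. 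By construction $\M'$ is an $\LG$-model. The crucial point is that $\M'$ and $\M$ agree on all formulas at every state reachable from $w$: the only new pairs added are reflexive loops $(u,u)$ at states $u$ that have no $R_\al$-successor, but every such reachable $u$ already satisfies $\diam{\al}\true$, hence actually does have an $R_\al$-successor, so in fact \emph{no} new pairs are added among the reachable states. Thus the identity relation restricted to the reachable part is a bisimulation between $(\M,w)$ and $(\M',w)$, and by Theorem~\ref{thm:hmt} we get $(\M',w)\models\varphi$, witnessing $\LG$-satisfiability of $\varphi$.

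\textbf{Main obstacle.} There is no serious obstacle here; the argument is deliberately simple because $\Inv(\diam{\al}\true)$ directly forces seriality without any need to bound modal depth or track subformulas, which is precisely why (as the surrounding text notes) seriality is the one frame condition whose recursion-free translation lifts verbatim to the fixed-point setting. The one point requiring a little care is to justify the passage to the tree unfolding so that ``every reachable state satisfies $\diam{\al}\true$'' is available — this relies on bisimulation-invariance of the full $\mu$-calculus (Theorem~\ref{thm:hmt}) together with the invariance of the relevant frame conditions under unfolding and closure (Lemma~\ref{lem:ufoldings-are-bisimilar}) — and then to observe that, on the reachable part, the $D$-closure is in fact the identity transformation, so no further bisimulation argument between $\M$ and $\M'$ is even needed beyond noting they coincide there.
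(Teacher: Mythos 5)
Your proof is correct and matches the argument the paper intends: the paper states this theorem without an explicit proof, remarking only that $\Inv(\diam{\al}\true)$ directly ensures the seriality of a model, and your two directions (seriality of an $\LG$-model yields the invariant conjunct; conversely the invariant, after passing to the tree unfolding so that every state is reachable, forces an $\al$-successor everywhere so the $D$-closure adds nothing) spell out precisely that reasoning. No gaps.
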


For the cases of reflexivity and transitivity, our simple translations substitute the modal subformulas of a  formula to implicitly enforce the corresponding condition.

\begin{translation}
Let $A\subseteq \act$. The function $\transl{A}{T^{\mu}}{-}$ is defined homomorphically apart from the following two clauses when $\al \in A$.
\begin{itemize}
	\item 
$
\transl{A}{T^{\mu}}{[\al]\varphi}
= \stbox{\al} \transl{A}{T^{\mu}}{\varphi} \wedge \transl{A}{T^{\mu}}{\varphi}$;
\item 
$
\transl{A}{T^{\mu}}{\diam{\al}\varphi}
= \diam{\al} \transl{A}{T^{\mu}}{\varphi} \vee \transl{A}{T^{\mu}}{\varphi}$.
\end{itemize}
\end{translation}

\begin{thm}\label{thm:mu_calc_reflex_translation}
	Let $\LG_1, \LG_2$ be logics, such that 
	$\LG_1(\al) = \LG_2(\al) + T$ when $\al \in A$, and $\LG_1(\al) =\LG_2(\al)$ otherwise, and $\LG_2(\al)$ at most includes frame conditions $D$ or $B$.
	Then, $\varphi$ is $\LG_1$-satisfiable if and only if 
	$\transl{A}{T^{\mu}}{\varphi}$ is $\LG_2$-satisfiable.
\end{thm}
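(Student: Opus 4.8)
The plan is to recognize that $\transl{A}{T^{\mu}}{-}$ is exactly the syntactic counterpart of passing to the reflexive closure of a model over the agents in $A$: in a model whose $R_\al$ is reflexive, $[\al]\psi$ holds at a state $v$ iff $\psi$ holds at $v$ \emph{and} at every $R_\al$-successor of $v$, which, read over the \emph{non}-reflexive relation, is precisely $[\al]\transl{A}{T^{\mu}}{\psi}\wedge\transl{A}{T^{\mu}}{\psi}$, and dually for $\diam{\al}$. Since this translation uses no invariance operator (unlike $\transl{A}{x}{-}$ in Subsection~\ref{subsec:trans-ml}), I expect it to handle fixed-point formulas with no extra machinery.

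Concretely, for a model $\M=(W,R,V)$ write $\M^T=(W,R^{T,A},V)$ for the frame obtained by adding all self-loops to $R_\al$ for $\al\in A$ (Definition~\ref{def:closures}); note $\M$ and $\M^T$ share their state set, hence their environments. The core step is to prove, by induction on an arbitrary (possibly open) formula $\psi$ and for every environment $\rho$, that $\trueset{\transl{A}{T^{\mu}}{\psi},\rho}_\M=\trueset{\psi,\rho}_{\M^T}$. The literal, Boolean, and variable cases are immediate since the translation is the identity there and $\M,\M^T$ agree on $V$. The modal cases for $\al\notin A$ follow from the induction hypothesis because $R^{T,A}_\al=R_\al$ and the translation is homomorphic there. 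The modal cases for $\al\in A$ are the whole point of the translation and are settled by the observation that the $R^{T,A}_\al$-successors of $v$ are exactly $\{v\}$ together with the $R_\al$-successors of $v$: expanding $\trueset{[\al]\psi',\rho}_{\M^T}$ (resp. $\trueset{\diam{\al}\psi',\rho}_{\M^T}$) accordingly and applying the induction hypothesis to $\psi'$ yields $\trueset{[\al]\transl{A}{T^{\mu}}{\psi'},\rho}_\M\cap\trueset{\transl{A}{T^{\mu}}{\psi'},\rho}_\M$ (resp. the dual union), which is $\trueset{\transl{A}{T^{\mu}}{[\al]\psi'},\rho}_\M$. Finally, for $\psi=\mn X.\psi'$ and $\psi=\mx X.\psi'$ the translation is homomorphic, so $\transl{A}{T^{\mu}}{\mn X.\psi'}=\mn X.\transl{A}{T^{\mu}}{\psi'}$; applying the induction hypothesis to $\psi'$ under every environment $\rho[X\mapsto S]$ shows that the two families of (post/pre-)fixed points defining the respective denotations coincide, hence so do their intersection and union.

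From this equivalence the theorem follows quickly. For the ``only if'' direction, an $\LG_1$-model $\N$ is in particular an $\LG_2$-model (the conditions of $\LG_2(\al)$ are a subset of those of $\LG_1(\al)$), and since $R_\al$ is already reflexive for each $\al\in A$ we have $\N^T=\N$; so the equivalence turns $(\N,w)\models\varphi$ into $(\N,w)\models\transl{A}{T^{\mu}}{\varphi}$, witnessing $\LG_2$-satisfiability. For the ``if'' direction, starting from an $\LG_2$-model $\M$ of $\transl{A}{T^{\mu}}{\varphi}$, I would pass to $\M^T$: taking the reflexive closure restricted to $A$ only adds self-loops, and this operation preserves seriality and symmetry --- the only frame conditions $\LG_2(\al)$ may carry by hypothesis --- so $\M^T$ is an $\LG_1$-model; by the equivalence, $(\M^T,w)\models\varphi$. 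The preservation step is an instance of Lemma~\ref{lem:conditionsarepreserved}, since $(D,T)$ and $(B,T)$ are not among its excluded pairs.

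The main obstacle, such as it is, is bookkeeping in the inductive equivalence: it must be stated for open subformulas and \emph{all} environments, so that the fixed-point clauses have a usable induction hypothesis; and one must verify that the reflexive closure restricted to $A$ keeps the model in the class of $\LG_1$-models, which is exactly where the hypothesis that the conditions of $\LG_2(\al)$ lie among $D$ and $B$ is used. Beyond that, no genuinely hard construction is required, since --- in contrast with the translations of Subsection~\ref{subsec:trans-ml} --- there is no need to truncate, unfold, or otherwise surgically modify the model.
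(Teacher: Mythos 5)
Your proposal is correct and follows essentially the same route as the paper: both directions hinge on relating a model to its reflexive closure over the agents in $A$ and invoking Lemma~\ref{lem:conditionsarepreserved} to see that this closure preserves $D$ and $B$. The only (harmless) difference is that you package the argument as a single semantic identity $\trueset{\transl{A}{T^{\mu}}{\psi},\rho}_\M=\trueset{\psi,\rho}_{\M^T}$ proved for arbitrary models and environments, whereas the paper runs two separate inductions --- an equivalence on the already-reflexive $\LG_1$-model for one direction and a one-way implication into the closed model for the other.
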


\begin{proof}
	First we assume that $(\M,w)\models \varphi$, 
	where $\M = (W,R,V)$ is an $\LG_1$-model and $w \in W$. 	
	 We can easily verify, by induction on every subformula $\psi$ of $\varphi$ on every environment $\rho$ and state $s$ of $W$, that 
	 $(\M,s)\models_\rho\psi$ if and only if $(\M,s)\models_\rho\transl{A}{T^{\mu}}{\psi}$.
	 We then conclude that
	 $(\M,w)\models\transl{A}{T^{\mu}}{\varphi}$, and thus the translated formula remains satisfiable. 
	 
	 For the converse implication, we assume an $\LG_2$-model $\M= (W,R,V)$ and $w \in W$, such that 
	 $(\M,w)\models\transl{A}{T^{\mu}}{\varphi}$. 
%
%
	We construct an $\LG_1$-model that satisfies ${\varphi}$:
	let $\M_p = (W,R^p,V)$, where for each $\al \in A$, $R^p_\al = R_\al \cup \{(u,u)  \mid u \in W\}$, the reflexive closure of $R_\al$, and for 
		each $\al \notin A$, $R^p_\al = R_\al$. 
By Lemma \ref{lem:conditionsarepreserved}, we have that the symmetry and seriality conditions of $\M$ are preserved in $\M_p$.
It now suffices to prove that for every $v \in W$, environment $\rho$, and $\psi \in \sub(\varphi)$,
$(\M,v) \models_\rho \transl{A}{T^{\mu}}{\psi}$ 
implies 
$(\M_p,v) \models_\rho \psi$.
We proceed by induction on $\psi$.
The cases of propositional and recursion variables and boolean connectives are straightforward.
The cases of fixed-points are also not hard by using the inductive hypothesis.
Finally, 
we treat the modal cases. 
If $\psi = \diam{\al}\psi'$, since we only introduce new pairs to the accessibility relation, it is straightforward to see that $(\M_p,v) \models_\rho \psi$.
If $\psi = [\al]\psi'$ and $(\M,v) \models_\rho \transl{A}{T^{\mu}}{\psi}$, then 
$(\M,v) \models_\rho [\al]\transl{A}{T^{\mu}}{\psi'} \land \transl{A}{T^{\mu}}{\psi'}$, and therefore 
$(\M,v) \models_\rho [\al]\transl{A}{T^{\mu}}{\psi'}$ and $(\M,v) \models_\rho \transl{A}{T^{\mu}}{\psi'}$. 
By the definition of $R^p$, if $v R_\al^p u$ for some $u \in W$, then
$u=v$ or  $v R_\al u$. 
From $(\M,v) \models_\rho [\al]\transl{A}{T^{\mu}}{\psi'}$ and $(\M,v) \models_\rho \transl{A}{T^{\mu}}{\psi'}$ we then get that $(\M,u) \models_\rho \transl{A}{T^{\mu}}{\psi'}$ for both cases, and the inductive hypothesis yields that $(\M_p,u) \models_\rho \transl{A}{T^{\mu}}{\psi'}$, completing the proof.
\end{proof}

For the case of transitivity, we use a restricted form of the invariance operator: for each $\varphi \in L$ and $\al \in \act$, $\Inv^\al(\varphi) = \mx X.(\varphi \land [\al]X)$ and $Eve^\al(\varphi) = \mn X.(\varphi \lor \diam{\al}X)$.

\begin{translation}
Let $A\subseteq \act$.
	The function $\transl{A}{4^{\mu}}{-}$ is defined to be such that
	\begin{itemize}
		\item $\transl{A}{4^{\mu}}{[\al] \psi} = \Inv^\al([\al]\transl{A}{4^{\mu}}{\psi})$ for each $\al \in A$
		\item $\transl{A}{4^{\mu}}{\diam{\al} \psi} = Eve^\al(\diam{\al}\transl{A}{4^{\mu}}{\psi})$ for each $\al \in A$, and
		
		\item $\transl{A}{4^{\mu}}-$ commutes with all other operations.
		
	\end{itemize}
\end{translation}


\begin{thm}\label{thm:mu_calc_transitive_transl}
	Let $A\subseteq \act$, $\LG_1, \LG_2$ be logics, such that 
	$\LG_1(\al) = \LG_2(\al) + 4$ when $\al \in A$, and $\LG_2(\al)$ otherwise, and $\LG_2(\al)$ at most includes frame conditions $D, T, B$.
	Then, $\varphi$ is $\LG_1$-satisfiable if and only if 
	$\transl{A}{4^{\mu}}{\varphi}$ is $\LG_2$-satisfiable.
\end{thm}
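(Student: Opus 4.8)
The plan is to mirror the structure of the proof of Theorem \ref{thm:mu_calc_reflex_translation}, replacing the reflexive-closure construction with a transitive-closure construction and handling the fixed-point operators with care. For the forward direction, I would assume $(\M,w) \models \varphi$ with $\M = (W,R,V)$ an $\LG_1$-model, so each $R_\al$ with $\al \in A$ is already transitive. I would prove by induction on the structure of every subformula $\psi$ of $\varphi$, over all environments $\rho$ and states $s$, that $(\M,s) \models_\rho \psi$ iff $(\M,s) \models_\rho \transl{A}{4^\mu}{\psi}$. The only nontrivial cases are the modal ones: for $\psi = [\al]\psi'$ with $\al \in A$, since $R_\al$ is transitive, the set of states reachable from $s$ by one $R_\al$-step coincides with the set reachable by an $\Inv^\al$-style iteration, so $\trueset{[\al]\psi',\rho}_\M = \trueset{\Inv^\al([\al]\transl{A}{4^\mu}{\psi'}),\rho}_\M$; dually for $\diam{\al}\psi'$ using $Eve^\al$. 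One must be slightly careful that the added fixed-point operators $\mx X, \mn X$ introduced by the translation use fresh variables, so no variable capture occurs, and that the inductive hypothesis is applied under the correct environments when unfolding the fixed points. This yields $(\M,w) \models \transl{A}{4^\mu}{\varphi}$, and since $\LG_1$ is at least as restrictive as $\LG_2$, $\M$ is also an $\LG_2$-model, establishing one direction.

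For the converse, I would assume an $\LG_2$-model $\M = (W,R,V)$ with $(\M,w) \models \transl{A}{4^\mu}{\varphi}$ and build the $\LG_1$-model $\M_t = (W,R^t,V)$, where $R^t_\al = R_\al^{+}$ (the transitive closure) for $\al \in A$, and $R^t_\al = R_\al$ otherwise. By Lemma \ref{lem:conditionsarepreserved} — using that $(4,y) \notin \{(4,B),(5,T),(5,B)\}$ is the relevant obstruction, and here we close under $4$ while $\LG_2(\al)$ only has conditions among $D,T,B$, each of which is preserved by transitive closure (the pairs $(D,4),(T,4),(B,4)$ are all edges in Figure \ref{fig:frame_hierarchy}) — $\M_t$ is indeed an $\LG_1$-model. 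The core claim is then: for every $v \in W$, every environment $\rho$, and every $\psi \in \sub(\varphi)$, $(\M,v) \models_\rho \transl{A}{4^\mu}{\psi}$ iff $(\M_t,v) \models_\rho \psi$. I would prove this by induction on $\psi$; propositional variables, recursion variables, and boolean connectives are immediate, and the fixed-point cases follow from the induction hypothesis together with the monotonicity of the semantic operators (since the translation commutes with $\mn X$ and $\mx X$). The modal cases are where the transitive closure enters: for $\psi = [\al]\psi'$ with $\al \in A$, a state $u$ with $v R^t_\al u$ is reached by some finite chain $v R_\al u_1 R_\al \cdots R_\al u_n = u$, and $(\M,v) \models_\rho \Inv^\al([\al]\transl{A}{4^\mu}{\psi'})$ propagates along this chain — one shows $(\M,u_i) \models_\rho \transl{A}{4^\mu}{\psi'}$ for each $i$ by induction on $i$ using the unfolding of $\Inv^\al$ — so $(\M,u) \models_\rho \transl{A}{4^\mu}{\psi'}$, and the induction hypothesis gives $(\M_t,u) \models_\rho \psi'$; dually for $\diam{\al}\psi'$ with $Eve^\al$, using Lemma \ref{lem:iterativeTarski} to extract a witnessing finite chain from a satisfied least fixed point.

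The main obstacle I anticipate is the least-fixed-point direction inside the modal diamond case: for $\diam{\al}\psi'$ with $\al \in A$, showing that $(\M,v) \models_\rho Eve^\al(\diam{\al}\transl{A}{4^\mu}{\psi'})$ forces the existence of an actual finite $R_\al$-path to a state satisfying $\transl{A}{4^\mu}{\psi'}$ requires invoking the iterative (ordinal-approximant) characterization of $\mn$ from Lemma \ref{lem:iterativeTarski} and a transfinite-induction argument that the approximant stages correspond to path lengths; then one must check that a finite $R_\al$-path is exactly an $R^t_\al = R_\al^+$ edge. Interleaving this with the outer structural induction on $\psi$ — so that the induction hypothesis is available for the strictly smaller $\psi'$ at the endpoint state — needs to be set up carefully, but is otherwise routine. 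A secondary subtlety is verifying, when $\LG_2(\al)$ has $B$, that adding $R_\al^+$ preserves symmetry (it does, since the transitive closure of a symmetric relation is symmetric), which is precisely what the non-obstruction in Lemma \ref{lem:conditionsarepreserved} guarantees.
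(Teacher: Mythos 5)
Your proposal is correct and takes essentially the same route as the paper: the forward direction shows that on an already-transitive $\LG_1$-model the translation is semantically equivalent to $\psi$, and the converse takes the transitive closure $\M' = (W,R^+,V)$ and proves $\trueset{\transl{A}{4^{\mu}}{\psi},\rho}_\M = \trueset{\psi,\rho}_{\M'}$ by induction on (possibly open) subformulas $\psi$ over all environments. The paper leaves that induction as ``not hard''; the details you supply — propagating $\Inv^\al$ along finite $R_\al$-chains for the box case, extracting a finite witnessing path from $Eve^\al$ via Lemma \ref{lem:iterativeTarski} for the diamond case, and checking via Lemma \ref{lem:conditionsarepreserved} that the closure preserves $D$, $T$, $B$ — are exactly the right ones.
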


\begin{proof}
	If $\transl{A}{4^{\mu}}{\varphi}$ is satisfied in a model $\M = (W,R,V)$, let $\M' = (W,R^+,V)$, where $R^+_\al$ is the transitive closure of $R_\al$, if $\al \in A$, and $R^+_\al = R_\al$, otherwise.
	It is now not hard to use induction on $\psi$ to show that for every (possibly open) subformula $\psi$ of $\varphi$ and for every environment $\rho$,
	$\trueset{\transl{A}{4^{\mu}}{\psi},\rho}_\M = \trueset{\psi,\rho}_{\M'}$.
	The other direction is more straightforward, similarly demonstrating that if $\varphi$ is satisfied in an $\LG_1$-model $\M = (W,R,V)$ (where $R_\al$ is already transitive for $\al \in A$), then for every (possibly open) subformula $\psi$ of $\varphi$ and for every environment $\rho$,
	$\trueset{\transl{A}{4^{\mu}}{\psi},\rho}_\M = \trueset{\psi,\rho}_{\M}$.
\end{proof}

As Example \ref{ex:all_paths_fin} hints, in order to produce a similar translation for symmetric frames, we need to use a more involved construction. Moreover, 
we only prove the correctness of the following translation for formulas without least-fixed-point operators.

\begin{translation}\label{transl:mu_calc_symm}
Let $A\subseteq \act$. Let \change{$p \notin \mycap{PVar}$ be a new propositional variable.} 
The function $\transl{A}{\blogic^{\mu}}{-}$ is defined 
as
 \begin{align*}
     \transl{A}{\blogic^{\mu}}{\varphi} = 
 \varphi ~
 \land ~\bigwedge_{\al \in A} \Inv\big( & \stbox{\al}(\neg p \impl \diam{\al} p) \\
 &\wedge 
 \bigwedge_{\psi \in \subc(\varphi)} (cl(\psi) \rightarrow [\al][\al](p \impl \cl(\psi))) \\
 &\wedge 
 \bigwedge_{\diam{\al}\psi \in \subc(\varphi)} (\diam{\al}cl(\psi) \impl \diam{\al}(\neg p \land \cl(\psi)))
 \big)\change{.}
 \end{align*} 
%
%
\end{translation}
\changeJ{The intuition for this translation is in two parts. In the second conjunct, we force each existing pair $(a,b)$ of an agent's accessibility relation to be given a continuation $(b,c)$ leading to a marked state $c$, which represents $a$, the state from which $b$ is accessible. However, we are not guaranteed that this continuation leads back to the original state -- i.e. that $a=c$, -- and it is not enough to help us construct a truly symmetric model as we would need to do later on. This is why, in the second conjunct, we now require that all subformulas of $\varphi$ will have to be carried through these transitions to be satisfied in all $p$-marked states.
Therefore, $p$-marked states act as accessible clones of a preceding state.
This guarantees that later in the proof we will manage to identify all $p$-marked states with the original one the two transitions originated in (that is, $a$ and $c$ in the above example), thus resulting in a symmetric model. 
The third conjunct can be read in pair with the first one inside the invariant operator, and it ensures its correct behavior. Diamond formulas must be satisfied in state $a$ through a transition to a state $b$ that is not marked with $p$, and therefore, due to the first conjunct, can access a state $c$ that is marked with $p$.}
%

\begin{thm}\label{thm:mu_calc_symm_translation}
	Let $\LG_1, \LG_2$ be logics, such that 
	$\LG_1(\al) = \LG_2(\al) + \blogic$ when $\al \in A$, and $\LG_2(\al)$ otherwise, and $\LG_2(\al)$ at most includes frame condition $D$.
	Then, a  formula $\varphi$ that has no $\mn X$ operators is $\LG_1$-satisfiable if and only if 
	$\transl{A}{\blogic^{\mu}}{\varphi}$ is $\LG_2$-satisfiable.
\end{thm}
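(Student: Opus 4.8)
The plan is to prove the two implications separately: in each direction I would transform a model of one logic into a model of the other and then establish a satisfaction-preservation statement by induction on the subformulas of $\varphi$, greatest fixed points being handled through the Knaster--Tarski characterisation (exhibiting a post-fixed point). The absence of $\mn X$ in $\varphi$ is what keeps these arguments sound: the model transformations add, remove or redirect edges, and least-fixed-point subformulas such as $\mn X.[\al]X$ are not preserved under such operations (cf.\ Example~\ref{ex:all_paths_fin}).

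For the direction from $\LG_1$-satisfiability to $\LG_2$-satisfiability, assume $(\M,w)\models\varphi$ with $\M=(W,R,V)$ an $\LG_1$-model, so $R_\al$ is symmetric for every $\al\in A$. I would first pass to the tree unfolding of $(\M,w)$ (Definition~\ref{def:unfolding}); by Lemma~\ref{lem:ufoldings-are-bisimilar} and Theorem~\ref{thm:hmt} it still satisfies $\varphi$, it remains an $\LG_1$-model after closure, and it has the crucial property that from any state at most one state is reached in exactly two $\al$-steps. On this tree I would adjoin, for each $\al$-edge into a state $b$ with $\al\in A$, a fresh $p$-marked auxiliary structure whose root is reached from $b$ by $\al$ and which ``stands for'' the predecessor of $b$, agreeing with it on all closure-subformulas of $\varphi$; the structure is built so that it creates no state reached in two $\al$-steps from a state it must not agree with, and, if $\LG_2(\al)$ has $D$, one also takes the seriality closure. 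Calling the result $\M'$, one checks that $\M'$ is an $\LG_2$-model (adjoining the structures and taking the seriality closure forces no reflexivity, symmetry or transitivity), that the body of the invariant holds at every reachable state of $\M'$ --- the first conjunct because every $\neg p$ state occurring as an $\al$-successor is the root of an adjoined structure or, in the original tree, the source of an $\al$-edge and hence, by symmetry of $R_\al$, again the source of one, so it has a $p$-marked $\al$-successor; the second conjunct by the placement of the auxiliary structures and the uniqueness of two-$\al$-step successors; and the third conjunct because symmetry of $R_\al$ makes the ``backtracking'' $\neg p$ successor available to witness any $\diam{\al}cl(\psi)$ --- and finally that $\M',w\models\varphi$, since $(\M',w)$ is bisimilar to $(\M,w)$ over $\mycap{PVar}$ and $\varphi$ does not mention $p$.

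For the converse, assume $(\M,w)\models\transl{A}{\blogic^{\mu}}{\varphi}$ with $\M=(W,R,V)$ an $\LG_2$-model, and restrict to the states reachable from $w$, so that the body of the invariant holds everywhere. The idea is to symmetrise the relations of the agents in $A$ using the continuations supplied by the first conjunct. Fix $\al\in A$ and an edge $a\,R_\al\,b$ between two $\neg p$ states; the first conjunct at $a$ gives $c$ with $b\,R_\al\,c$ and $c\models p$, and the second conjunct at $a$, applied to both $\psi$ and $\neg\psi$ for each $\psi\in\subc(\varphi)$ (using $cl(\neg\psi)\equiv\neg cl(\psi)$ under the canonical environment), shows that $c$ satisfies exactly the closure-subformulas of $\varphi$ that $a$ does. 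Thus $c$ is a legitimate stand-in for $a$ as seen from $b$, so it is safe to add the reverse edge. Formally I would let $\M^\dagger$ consist of the reachable $\neg p$ states with, for $\al\in A$, the symmetric closure of $R_\al$ restricted to them and with $R_\al$ unchanged for $\al\notin A$; deleting the $p$-states costs nothing because diamond-subformulas of $\varphi$ are witnessed at $\neg p$ states --- by the third conjunct for $\al\in A$, and trivially for $\al\notin A$ where the relation is left as is. Then $\M^\dagger$ is an $\LG_1$-model (symmetry is now explicit, and seriality, if present in $\LG_2$, is unaffected). It remains to show that $(\M^\dagger,v)$ and $(\M,v)$ satisfy the same subformulas of $\varphi$ for every state $v$, which one proves by induction: the Boolean, variable and fixed-point cases are immediate from the induction hypothesis, while for $\psi=[\al]\chi$ and $\psi=\diam{\al}\chi$ with $\al\in A$ one uses precisely the first, second and third conjuncts together with $\chi\in\subc(\varphi)$ and the agreement of the stand-in states with the originals on $\subc(\varphi)$. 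Taking $\psi=\varphi$ gives $(\M^\dagger,w)\models\varphi$, so $\varphi$ is $\LG_1$-satisfiable.

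I expect the delicate step to be the $\LG_1\Rightarrow\LG_2$ direction: one must build the $p$-marked auxiliary structures so that the first and third conjuncts hold (every $\neg p$ state has a $p$-successor; diamonds over $\subc(\varphi)$ have $\neg p$ witnesses) without ever creating a $p$-marked state that is reached in two $\al$-steps from a state with which it does not agree (needed for the second conjunct), all while keeping the model bisimilar to the original over the $p$-free signature so that $\varphi$ survives; it is exactly this tension that forces the use of the unfolding and of the $\mn X$-free-ness of $\varphi$. The converse direction is then comparatively routine, the only subtleties being the handling of $p$-marked states when passing to the symmetric model and the verification that adding reverse edges never falsifies a $[\al]$-subformula of $\varphi$, which is precisely what the first two conjuncts of the translation are engineered to prevent.
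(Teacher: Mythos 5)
Your proposal is correct and follows essentially the same route as the paper: the forward direction unfolds the model and attaches $p$-marked clones of each state's predecessor (reachable from and leading back to its successors), and the converse prunes the $p$-states, takes the symmetric closure for agents in $A$, and proves preservation by induction, using the first and second conjuncts to show the $p$-marked two-step successor is a faithful stand-in for the source of a reversed edge and the third conjunct to keep diamond witnesses among $\neg p$ states, with greatest fixed points handled via post-fixed points exactly as in the paper. The only difference is presentational: the paper pins down your "auxiliary structure" concretely as a single mirrored clone $s_p$ per state with edges $\{(s_p,t),(t,s_p)\mid s R_\al t\}$, which automatically satisfies the constraints you correctly identify as the delicate part.
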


\begin{proof}
	First we assume that the \mn-free formula $\varphi$ is satisfied in $(\M,w)$, where $\M = (W,R,V)$ is an $\LG_1$-model and $w \in W$. We assume that $ p \notin V(u)$ for every $ u \in W$. 
	 	We construct an $\LG_2$-model that satisfies $\transl{A}{B^{\mu}}{\varphi}$.
   Let $\M_u = (W_u,R^u,V^u)$ be the unfolding of $(\mathcal{M},w)$, 
   and let $W_p = \{ u_p \mid u \in W_u \}$ be a set of distinct copies of states from $W_u$.
We define \changeJ{$\M_p =  (W_p \cup W_u,R^p,V^p)$}, 
 where:
\begin{itemize}
\item for each $u \in W_u$, $V^p(u) = V(u)$ and $V^p(u_p) = V(u) \cup \{p\}$; and 
\item 
$R_{\al}^p = R_{\al}^u \cup \{(s_p,t), (t,s_p)
\mid (s,t) \in R^u_\al \}$,
if $\al \in A$, and $R_{\al}^p = R_{\al}^u$ otherwise.
\end{itemize}
This construction 
is depicted in
Figure \ref{fig:sym_transl}.

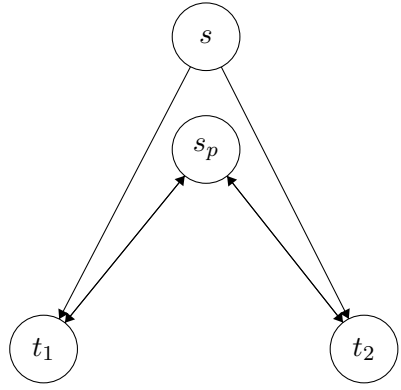
\begin{figure}
\begin{center}
\begin{tikzpicture}[scale=0.15]
\tikzstyle{every node}+=[inner sep=0pt]
\draw [black] (39.1,-5.8) circle (3);
\draw (39.1,-5.8) node {$s$};
\draw [black] (24.7,-33.5) circle (3);
\draw (24.7,-33.5) node {$t_{1}$};
\draw [black] (53.1,-33.5) circle (3);
\draw (53.1,-33.5) node {$t_{2}$};
\draw [black] (39.1,-15.8) circle (3);
\draw (39.1,-15.8) node {$s_p$};
\draw [black] (37.72,-8.46) -- (26.08,-30.84);
\fill [black] (26.08,-30.84) -- (26.9,-30.36) -- (26.01,-29.9);
\draw [black] (40.45,-8.48) -- (51.75,-30.82);
\fill [black] (51.75,-30.82) -- (51.83,-29.88) -- (50.94,-30.33);
\draw [black] (51.24,-31.15) -- (40.96,-18.15);
\fill [black] (40.96,-18.15) -- (41.07,-19.09) -- (41.85,-18.47);
\draw [black] (26.59,-31.17) -- (37.21,-18.13);
\fill [black] (37.21,-18.13) -- (36.31,-18.43) -- (37.09,-19.06);
\draw [black] (40.96,-18.15) -- (51.24,-31.15);
\fill [black] (51.24,-31.15) -- (51.13,-30.21) -- (50.35,-30.83);
\draw [black] (37.21,-18.13) -- (26.59,-31.17);
\fill [black] (26.59,-31.17) -- (27.49,-30.87) -- (26.71,-30.24);
\end{tikzpicture}
\end{center}
\caption{The new model, based on a state $s$ in $\mathcal{M}$, with two neighbours~$t_1,t_2$.}
\label{fig:sym_transl}
\end{figure}
We will now argue that $(\M_p,w) \models \transl{A}{\blogic^{\mu}}{\varphi}$.
To this end, we first 
argue that $(\M_p,w) \models \varphi$. 
To see this, observe that $\R = \{ (s,s), (s,s_p) \mid s \in W_u \}$ is a bisimulation between $(\M_u,w)$ and $(\M_p,w)$, if we omit variable $p$ from the language; Theorem  \ref{thm:hmt} then yields that $(\M_p,w) \models \varphi$. 
More generally, 
let 
$\rho, \rho'$ be environments on $\M_u$ and $\M_p$ respectively, such that 
$\rho(X) = \trueset{\fix(X),\rho}_\M$ and $\rho'(X) = \trueset{\fix(X),\rho'}_{\M_p}$, for every recursion variable $X$ that appears in $\varphi$.
Theorem  \ref{thm:hmt} then yields 
that for every $s \in W_u$ and $\psi \in \sub(\varphi)$, $(\M_u,s) \models_{\rho} \psi$ 
if and only if
$(\M_p,s) \models_{\rho'} \psi$ if and only if $(\M_p,s_p) \models_{\rho'} \psi$.
%

We now prove that for each $\al \in A$,
\[ 
    (\M_p, w) \models \Inv\left( \bigwedge_{\psi \in \subc(\varphi)} cl(\psi) \impl \stbox{\al} \stbox{\al} (p \impl \cl(\psi)) \right) .
\]
It suffices to prove that 
for every $s \in W_u$ and $\psi \in \subc(\varphi)$, 
$(\M_p,s) \models_{\rho'} \psi$ 
implies that 
 $(\M_p,s) \models_{\rho'} [\al][\al] \psi$
 and 
 $(\M_p,s_p) \models_{\rho'} \psi$ 
implies that 
 $(\M_p,s) \models_{\rho'} [\al][\al] \psi$, but the second implication results from the first one.
 Let $(\M_p,s) \models_{\rho'} \psi$ and let $s_1,s_2 \in W_u \cup W_p$, such that $s R_\al^p s_1 R_\al^p s_2$ and $(\M_p,s_2) \models p$. It suffices to prove that  $(\M_p,s_2) \models_{\rho'} \psi$.
 From $(\M_p,s_2) \models p$ we get that $s_2 \in W_p$, and therefore by the construction of $\M_p$, $s_1 \in W_u$.
 Since $\M_u$ is an unfolding of a model, there is a unique $t \in W_u$, such that $t R_\al^p s_1$ and either $s = t$ or $s = t_p$.
 Therefore, $s_2 = t_p$ and $(\M_p,t) \models_{\rho'} \psi$, therefore $(\M_p,s_2) \models_{\rho'} \psi$, which is what we needed to prove.

We now
show that, for each $\al \in A$,
\[ 
    (\M_p, w) \models \Inv\big(\stbox{\al} (\neg p \impl \diam{\al} p)) .
\]
To see this, it suffices to prove that each state $s \in W_u$ that is reachable from $w$ in $\M_p$, 
$(\M_p,s) \models \stbox{\al} (\neg p \impl \diam{\al} p)$ and $(\M_p,s_p) \models \stbox{\al} (\neg p \impl \diam{\al} p)$, where $\alpha \in A$. 
Let $s R_\al^p t$ or $s_p R_\al^p t$.
By the construction of $R^p$, 
if $(\M_p,t) \models \neg p$, then $t \in W_u$, and therefore 
$t R_\al^p s_p$ and $(\M_p,s_p) \models p$.
Therefore, we can conclude that $(\M_p,s) \models \stbox{\al}(\neg p \impl \diam{\al} p)$ and $(\M_p,s_p) \models \stbox{\al} (\neg p \impl \diam{\al} p)$, concluding the proof of the first implication.

Finally, we prove that for each $\al \in A$,
\[ 
    (\M_p, w) \models \Inv\left(\bigwedge_{\diam{\al}\psi \in \subc(\varphi)} (\diam{\al}cl(\psi) \impl \diam{\al}(\neg p \land \cl(\psi)))\right) .
\]
Let $(\M_p,s) \models \diam{\al}cl(\psi) $. Then, there exists some $t$, such that $s R_\al^p t$ and $(\M_p,t) \models \cl(\psi) $. If $s \in W_p$, then by construction, $t \in W_u$ and therefore $(\M_p,t) \models \neg p$, yielding that 
$(\M_p,s) \models \diam{\al}(\neg p \land \cl(\psi)) $.
On the other hand, if $s \in W_u$, then we have that $(\M,s) \models \diam{\al}cl(\psi)$, and therefore there exists some $t' \in W_u$, such that $s R_\al t'$ and $(\M,t') \models \cl(\psi)$. 
As we have observed above, $(\M_u,t')$ and $(\M_p,t')$ are bisimilar, and by the construction of $\M_p$ we have that  $s R_\al^p t'$ and $(\M_p,t') \models cl(\psi)$, yielding that $(\M_p,s) \models \diam{\al}(\neg p \land \cl(\psi)) $. 


For the converse, we assume an $\LG_2(\al)$ model $\M = (W,R,V)$ and $w \in W$, such that 
$(\M,w) \models \transl{A}{B^{\mu}}{\varphi}$, and we construct an $\LG_1$-model $\M'$ that satisfies ${\varphi}$. 
Let $\M' = (W',R',V')$, where 
\begin{itemize}
    \item $W' = W \setminus\{ s \in W \mid (\M,s) \models  p  \text{ and } \exists (t,s) \in R_\al \}$;
    \item $R_\al'$ is the symmetric closure of $R_\al \cap (W')^2$, if $\al \in A$, and  $R'_\al = R_\al  \cap (W')^2$ otherwise;
    \item $V'$ is the restriction of $V$ on $W'$.
\end{itemize}
Let $\rho$ be an environment such that for every $Y$, $\rho(Y)=\trueset{\cl(Y)}$ -- that is, each $X$ is mapped to the interpretation of its recursive closure -- and let $\rho'(Y) = \rho(Y) \cap W'$.
We prove that for every state $v \in W'$ that is reachable from $w$ and $\psi \in \sub(\varphi)$, 
$(\M,v) \models_\rho \psi$ 
implies that $(\M',v)\models_{\rho'} \psi$.
We now proceed by induction on $\psi$. 
\begin{itemize}
	\item 
Propositional cases, the case of logical variables, and the cases of $\psi = [\al]\psi'$ and $\psi = \diam{\al}\psi'$, where $\al \notin A$, are straightforward.
\item 
For the case of $\psi = \diam{\al}\psi'$, 
let $(\M,v) \models_\rho \psi$. 
From $(\M,w) \models \transl{A}{B^{\mu}}{\varphi}$ we get that 
$(\M,v) \models_\rho \diam{\al}(\neg p \land \psi')$, and therefore there exists some $u \in W'$, such that 
$v R_\al u$ and 
$(\M,u) \models_\rho \psi'$.
By the construction of $R'$ and the inductive hypothesis, 
we get that $v R'_\al u$ and 
$(\M',u) \models_{\rho'} \psi'$, yielding that $(\M',v) \models_{\rho'} \psi$.

\item 
For the case of $\psi = [\al]\psi'$, where $\al \in A$, let $(v,v') \in R'_\al$.
It suffices to prove that $(\M',v') \models_{\rho'} \psi'$. If 
$(v,v') \in R_\al$, then we are done by the inductive hypothesis. If not, then $v' R_\al v$, and therefore there is some $(v,v'') \in R_\al$, such that $(\M,v'') \models p$.
Therefore, $(\M,v'' )\models_\rho {\psi'}$. 
From $(\M,w) \models \transl{A}{B^{\mu}}{\varphi}$ we get that 
$(\M,v) \models_\rho \neg \psi' \impl [\al][\al]( p \impl \neg \psi')$.
Therefore, if we assume $(\M,v') \not\models_\rho {\psi'}$, we see that $(\M,v'') \models_\rho \neg \psi'$, which contradicts our observations above.
We conclude that
$(\M,v') \models_\rho {\psi'}$, which by the inductive hypothesis yields $(\M',v') \models_\rho {\psi'}$.
\item For the case of $\psi = \mx Y.\psi'$, notice that 
$\trueset{\psi',\rho}_{\M} 
\subseteq 
\trueset{\psi',\rho'}_{\M'}$ by the inductive hypothesis, and 
$$\trueset{\psi,\rho}_{\M} = \trueset{\psi',\rho[Y \mapsto \trueset{\psi,\rho}_{\M}]}_{\M} = 
\trueset{\psi',\rho}_{\M} 
$$ and $$ 
\trueset{\psi',\rho'}_{\M'} = 
\trueset{\psi',\rho'[Y \mapsto \trueset{\psi,\rho'}_{\M'}]}_{\M} = \trueset{\psi,\rho'}_{\M'},$$ by 
the properties of the fixed points and the definitions of $\rho, \rho'$.
By combining these observations, we get that 
$\trueset{\psi,\rho}_{\M} 
\subseteq 
\trueset{\psi,\rho'}_{\M'}$, which is what we wanted to prove.
\qedhere
\end{itemize}
\end{proof}

Lemma~\ref{lem:conditionsarepreserved} 
assures us that 
closure under some conditions does not affect other conditions.
This observation allows our translations above to remove a frame condition without affecting others, which in turn lets us safely compose our translations.
For example, we can combine Theorems \ref{thm:mu_calc_reflex_translation} and \ref{thm:mu_calc_symm_translation} 
to translate from logics with agents that have both the symmetry and reflexivity condition ($B$ and $T$).
One can first apply the translation of Theorem~\ref{thm:mu_calc_reflex_translation} to reduce the satisfiability problem to one where the reflexivity condition is not in the logic, and then apply the translation of Theorem~\ref{thm:mu_calc_symm_translation} to obtain the following result. 

\begin{cor}
    Let $\LG_1, \LG_2$ be logics, such that 
	$\LG_1(\al) = \LG_2(\al) + B + T$ when $\al \in A$, and $\LG_2(\al)$ otherwise, and $\LG_2(\al)$ at most includes frame condition $D$.
	Then, a  formula $\varphi$ that has no $\mn X$ operators is $\LG_1$-satisfiable if and only if 
	$\transl{A}{\blogic^{\mu}}{\transl{A}{T^{\mu}}{\varphi}}$ is $\LG_2$-satisfiable.
\end{cor}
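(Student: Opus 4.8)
The plan is to obtain this corollary by composing the two single-step translations of Theorems~\ref{thm:mu_calc_reflex_translation} and~\ref{thm:mu_calc_symm_translation}, after checking that their hypotheses line up. Define the intermediate logic $\LG_3$ by $\LG_3(\al) = \LG_2(\al) + B$ for $\al \in A$ and $\LG_3(\al) = \LG_2(\al)$ for $\al \notin A$. Since $\LG_2(\al)$ at most includes the frame condition $D$, the logic $\LG_3(\al)$ at most includes the conditions $D$ and $B$; moreover $\LG_1(\al) = \LG_3(\al) + T$ for $\al \in A$ and $\LG_1(\al) = \LG_3(\al)$ otherwise.

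First I would apply Theorem~\ref{thm:mu_calc_reflex_translation} to the pair $(\LG_1, \LG_3)$. Its hypothesis is met because $\LG_3(\al)$ at most includes conditions $D$ or $B$, so the theorem gives that $\varphi$ is $\LG_1$-satisfiable if and only if $\transl{A}{T^{\mu}}{\varphi}$ is $\LG_3$-satisfiable. In the soundness direction of that theorem one passes from a $\LG_3$-model to a $\LG_1$-model by taking a reflexive closure; Lemma~\ref{lem:conditionsarepreserved}, applicable since $(B,T) \notin \{(4,B),(5,T),(5,B)\}$, guarantees that this closure disturbs neither symmetry nor seriality, so the construction does yield a $\LG_1$-model. (In fact this is already built into the proof of Theorem~\ref{thm:mu_calc_reflex_translation}.)

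Next I would apply Theorem~\ref{thm:mu_calc_symm_translation} to the pair $(\LG_3, \LG_2)$. Its frame-condition hypothesis, that $\LG_2(\al)$ at most includes $D$, holds by assumption, and $\LG_3(\al) = \LG_2(\al) + B$ for $\al \in A$ as required. The only restriction that theorem places on the input formula is that it contain no $\mn X$ operator, so the point to verify is that $\transl{A}{T^{\mu}}{\varphi}$ is $\mn$-free whenever $\varphi$ is. This is immediate from the definition of $\transl{A}{T^{\mu}}{-}$: it acts homomorphically, and its two special clauses for $[\al]\psi$ and $\diam{\al}\psi$ add only the connectives $\land$, $\lor$, $[\al]$, $\diam{\al}$, so no fixed-point operator is ever introduced. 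Hence $\transl{A}{T^{\mu}}{\varphi}$ is $\LG_3$-satisfiable if and only if $\transl{A}{\blogic^{\mu}}{\transl{A}{T^{\mu}}{\varphi}}$ is $\LG_2$-satisfiable, and chaining the two equivalences yields the claim.

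The one genuinely non-routine point, which I would state explicitly, is the order in which the conditions must be removed: $T$ has to be eliminated \emph{before} $B$, exactly as the composition $\transl{A}{\blogic^{\mu}}{\transl{A}{T^{\mu}}{\cdot}}$ does, because Theorem~\ref{thm:mu_calc_symm_translation} can be applied only to a target logic whose per-agent conditions lie in $\{D\}$, whereas removing $B$ first would leave $T$ in the target and block that step. This is precisely the constraint encoded by the fixed closure-preserving order $D, T, B, 4, 5$ used throughout the paper, so no difficulty beyond this bookkeeping arises; everything else is a direct appeal to the already-proved translation theorems and to bisimulation invariance (Theorem~\ref{thm:hmt}).
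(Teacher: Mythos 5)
Your proof is correct and follows essentially the same route as the paper, which simply observes (citing Lemma~\ref{lem:conditionsarepreserved}) that one first applies Theorem~\ref{thm:mu_calc_reflex_translation} to eliminate $T$ and then Theorem~\ref{thm:mu_calc_symm_translation} to eliminate $B$. Your explicit checks — that the intermediate logic $\LG_3$ satisfies both theorems' side conditions, and that $\transl{A}{T^{\mu}}{-}$ preserves $\mn$-freeness — are exactly the bookkeeping the paper leaves implicit.
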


\begin{remark}\label{remark:finitemodelB5}
	A translation for euclidean frames and for the full logic on symmetric frames would need different approaches.
	 D'Agostino and Lenzi show in \cite{Dagostino2013S5} that $\sv_2^\mu$ does not have a finite-model property, and their result can be easily extended to any logic $\LG$ with fixed-point operators, where there are at least two distinct agents $\al,\alb$, such that $\LG(\al)$ and $\LG(\alb)$ have constraint $B$ or $5$.
	 Therefore, 
	 as our constructions for the translations to $\klogic^\mu_k$ guarantee the finite-model property for the corresponding logics, they do not apply to multimodal logics with $B$ or $5$.
\end{remark}

\subsection{Embedding \texorpdfstring{$\klogic_n^\mu$}{K\textunderscore n\textasciicircum mu}}
In this subsection, we present translations from logics with fewer frame conditions to ones with more conditions.
These translations will allow us to prove \EXP-completeness results in the following subsection.
	Let
$p,q$ be distinguished propositional variables that do not appear in 
our formulas.
We let $\vec{p}$ range over $p \wedge q$, $p \wedge \neg q$,
and
$\neg p\wedge q$%
.

\begin{defi}[Function $next$]
The function $next$ is defined thus:	$next(p \wedge q) = p \wedge \neg q$, $next(p\wedge \neg q) = \neg p \wedge q$, and $next(\neg p \wedge q) = p \wedge q$. 
\end{defi}


We use a uniform translation from $\klogic^\mu_k$ to any logic with a combination of conditions $D,T,B$. The intuition behind this translation is to describe an unfolded model where all of the unfolded states are iteratively marked with the next value of $\vec{p}$. Then we use the value of $p$ and $q$ in a state to distinguish whether the current state is a duplicate or a ``true'' successor of the original state (a duplicate might be caused due to reflexive or symmetric edges while unfolding). In other words, we want to be able to use modalities to quantify over the transitions in a reflexive or symmetric frame, while excluding any loops introduced by the frame's conditions.

\begin{translation}\label{transl:lower}
Let $A \subseteq \act$.
The function 
$\transl{A}{\klogic^\mu}{-}$
on formulas is defined such that:
\begin{itemize}
	\item $\transl{A}{\klogic^\mu}{\diam{\al} \psi} = \bigwedge_{\vec{p}}(\vec{p} \rightarrow \diam{\al} (next(\vec{p}) \wedge \transl{A}{\klogic^\mu}{\psi}))
	$, if 
	$\al \in A$;
	\item $\transl{A}{\klogic^\mu}{[\al] \psi}=  \bigwedge_{\vec{p}} (\vec{p} \rightarrow [\al] (next(\vec{p}) \rightarrow \transl{A}{\klogic^\mu}{\psi}))
	$, if $\al \in A$;
	\item $\transl{A}{\klogic^\mu}{-}$ commutes with all other operations.
\end{itemize}
\end{translation}

We note that there are simpler translations for the cases of logics with only $D$ or $T$ as a constraint, but the $\transl{A}{\klogic^\mu}{-}$ is uniform for all the logics that we consider in this subsection.

\begin{thm}\label{thm:lower-translations}
	Let $\emptyset \neq A\subseteq \act$, $|\act|=k$, 
	and let $\LG$ be 
	such that 
	$\LG(\al)$ 
	includes only frame conditions from $D,T,B$
	when $\al \in A$, and $\LG(\al)=\klogic$ otherwise.
	Then, $\varphi$ is $\klogic_k^{\mu}$-satisfiable if and only if 
	$\transl{A}{\klogic^{\mu}}{\varphi}$
	is $\LG$-satisfiable.
\end{thm}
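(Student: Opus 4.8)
The plan is to prove the two implications by a model construction in each direction, followed by an induction on the subformulas of $\varphi$ (ranging over all environments, since subformulas of $\varphi$ may have free recursion variables) showing that $\transl{A}{\klogic^{\mu}}{-}$ preserves truth in the appropriate sense. Throughout, fix a cyclic assignment of the three ``colours'' $p\wedge q$, $p\wedge\neg q$, $\neg p\wedge q$ to the residues $0,1,2$ modulo $3$, so that applying $next$ corresponds to incrementing the residue by one; the key arithmetic fact is that $next$ has order $3$, so a colour $c$, $next(c)$ and $next(next(c))$ are pairwise distinct.

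For the forward direction, assume $(\M,w)\models\varphi$ for an arbitrary $\klogic_k^{\mu}$-model $\M$. Pass to the tree unfolding $\M'$ of $(\M,w)$ (Definition~\ref{def:unfolding}), which by Lemma~\ref{lem:ufoldings-are-bisimilar} and Theorem~\ref{thm:hmt} still satisfies $\varphi$ at the root $w_0$, and colour each state of $\M'$ at depth $d$ from the root with the colour of residue $d \bmod 3$ (extending the valuation to $p,q$). Let $\M^{\LG}$ be the $\LG$-closure of $\M'$, obtained by closing $R'_\al$ under the conditions of $\LG(\al)$ for each $\al\in A$ in the order $D,T,B$; this leaves the state set and the colouring unchanged. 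The crucial observation is that every edge added by these closures is \emph{mis-coloured}: a $D$- or $T$-loop goes from a colour $c$ to the same $c\ne next(c)$, and a $B$-edge runs from depth $d$ to depth $d-1$, i.e.\ from colour $c$ to $next(next(c))\ne next(c)$; and the closures, applied in this fixed order, do not compose to create further edges. Hence, for every $\al\in A$ and state $s$ of colour $c$, the $R^{\LG}_\al$-successors of $s$ carrying colour $next(c)$ are exactly its $R'_\al$-successors, while for $\al\notin A$ the relation is unchanged. Using this, a routine induction on $\psi\in\sub(\varphi)$ over all environments $\rho$ gives $\trueset{\transl{A}{\klogic^{\mu}}{\psi},\rho}_{\M^{\LG}}=\trueset{\psi,\rho}_{\M'}$: the modal cases for $\al\in A$ are exactly the observation above (a $\vec p$-conjunct of the translation is vacuous at $s$ unless $\vec p$ is the colour of $s$), the remaining modal and Boolean cases are immediate, and for $\psi=\mn X.\psi'$ (resp.\ $\mx X.\psi'$) the translation commutes syntactically and the two approximating maps $S\mapsto\trueset{\transl{A}{\klogic^{\mu}}{\psi'},\rho[X\mapsto S]}$ over the common state set agree by the induction hypothesis, so their least (resp.\ greatest) fixed points agree. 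Taking $\psi=\varphi$ yields $(\M^{\LG},w_0)\models\transl{A}{\klogic^{\mu}}{\varphi}$, and $\M^{\LG}$ is an $\LG$-model.

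For the converse, assume $(\M,w)\models\transl{A}{\klogic^{\mu}}{\varphi}$ for an $\LG$-model $\M=(W,R,V)$; replacing $(\M,w)$ by its tree unfolding if needed (bisimilar, hence still a model of the bisimulation-invariant $\mu$-calculus formula $\transl{A}{\klogic^{\mu}}{\varphi}$), build $\M^{-}$ by keeping, for each $\al\in A$, only the \emph{colour-respecting} edges — those $(s,t)\in R_\al$ with $t$ of colour $next(c)$ whenever $s$ has colour $c$ — keeping $R_\beta$ for $\beta\notin A$, and restricting to the states reachable from $w$ along these edges (the translation forces $w$ and the reachable portion to carry the expected colours, so that this state set is closed exactly under the transitions that $\transl{A}{\klogic^{\mu}}{\varphi}$ can traverse). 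Since $\M^{-}$ carries no frame restrictions, it is a $\klogic_k^{\mu}$-model, and the dual induction — again using that in $\M$ the colour-$next(c)$ successors of a colour-$c$ state coincide with its $R^{-}_\al$-successors — shows that $\trueset{\transl{A}{\klogic^{\mu}}{\psi},\rho}_{\M}$ and $\trueset{\psi,\rho'}_{\M^{-}}$ agree on $\M^{-}$ for the correspondingly restricted environments, in particular $(\M^{-},w)\models\varphi$.

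I expect the main obstacle to be twofold. First, the colour bookkeeping in the forward direction: one must verify that none of the frame closures, nor any interaction between them once they are applied in the fixed order $D,T,B$, ever turns a mis-coloured edge into a colour-respecting one — this is precisely where three colours (rather than two) are needed, since a single extra colour would collide under symmetric closure. Second, handling the fixed-point operators uniformly in both inductions: although $\transl{A}{\klogic^{\mu}}{-}$ commutes with $\mn X$ and $\mx X$ syntactically, soundness of this commutation at the level of semantics must be argued from the pointwise equality of the two approximating sequences together with the iterative characterisation of fixed points (Lemma~\ref{lem:iterativeTarski}), and in the converse direction one must additionally check that the fixed-point semantics computed in $\M$ restrict correctly to the subframe $\M^{-}$, which relies on $W^{-}$ being closed under all transitions reachable by the translated formula.
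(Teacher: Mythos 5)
Your proposal reproduces the paper's proof essentially step for step: the forward direction colours the tree unfolding so that every original edge is colour-respecting and observes that the $\LG$-closure adds only mis-coloured edges (the paper states this as $R_\al = R'_\al \cap \bigcup_{\vec{p}} (\trueset{\vec{p}}\times \trueset{next(\vec{p})})$), the converse keeps exactly the colour-respecting edges, and both directions conclude by the same induction on subformulas over all environments, with the fixed-point cases handled by pointwise agreement of the approximating maps. The one place where you assert more than the translation literally guarantees --- that ``the translation forces $w$ and the reachable portion to carry the expected colours'' (at a state satisfying $\neg p \wedge \neg q$ every $\vec{p}$-guarded conjunct is vacuous, so the root's colour is not forced and would need to be normalised or imposed by an extra conjunct) --- is glossed over in exactly the same way in the paper's own converse direction, so your argument is faithful to, and no weaker than, the published proof.
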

\begin{proof}
For the ``only if'' direction, let $\M=(W,R,V)$ be an unfolded model and $w \in W$ be its root, such that $(\M,w) \models \varphi$. Variables $p$ and $q$ do not appear in $\varphi$,
so we can assume that $(\M,w) \models p \land q$, and that, for each $v R_\al v'$ with  $\al \in A$ and each $\vec{p}$, 
 if $(\M,v) \models \vec{p}$ then $(\M,v') \models next(\vec{p})$.
Let $\M'=(W,R',V)$, where $R'$ is the closure of $R$ under the conditions of $\LG$.
We observe that for every $\al \in A$, 
$R_\al = R_\al' \cap \bigcup_{\vec{p}} (\trueset{\vec{p}}\times \trueset{next(\vec{p})})$.
We prove that for every environment $\rho$, every $\psi \in \sub(\varphi)$, and every $v \in W$, $(\M,v) \models_\rho \psi$ iff $(\M',v) \models_\rho \transl{A}{\klogic^\mu}{\psi}$.
The proof proceeds by induction on $\psi$. We fix a $v \in W$ and a $\vec{p}$, such that $(\M,v) \models \vec{p}$.
\begin{itemize}
	\item The propositional cases and the case of $\psi = X$ are immediate, and so
	 are the cases of $\psi = \diam{\al}\psi'$ and $\psi = [\al]\psi'$, where $\al \notin A$.
	\item For the case of $\psi = \mx X.\psi'$ or $\psi = \mn X.\psi'$, note that due to the inductive hypothesis, for any $S \subseteq W$, $\trueset{\psi',\rho[X\mapsto S]}_\M = \trueset{\transl{A}{\klogic^\mu}{\psi'},\rho[X\mapsto S]}_{\M'}$, and therefore $\trueset{\psi,\rho}_\M = \trueset{\transl{A}{\klogic^\mu}{\psi},\rho}_{\M'}$.
	\item For the case of $\psi = \diam{\al}\psi'$, where $\al \in A$, we reason as follows:  
	$(\M,v) \models_\rho \psi$ iff $(\M,v) \models_\rho \diam{\al} \psi'$
	iff there is some $(v,v') \in R_\al$, such that $(\M,v') \models_\rho \psi'$ iff
	there is some $(v,v') \in R'_\al$, where $(\M,v') \models next(\vec{p})$, $(\M,v') \models_\rho \psi'$  iff there is some $(v,v') \in R'_\al$, such that $(\M,v') \models next(\vec{p}) \wedge \psi'$ iff 
	$$(\M,v') \models \transl{A}{\klogic^\mu}{\diam{\al} \psi'} = 
	\bigwedge_{\vec{p}}
	(\vec{p} \rightarrow \diam{\al} 
		(next(\vec{p}) \wedge \transl{A}{\klogic^\mu}{\psi'})
	).$$
	\item For the case of $\psi = [\al]\psi'$, where $\al \in A$, 
	$(\M,v) \models_\rho \psi$ iff $(\M,v) \models_\rho [\al] \psi'$
	iff for every $(v,v') \in R_\al$, $(\M,v') \models_\rho \psi'$ iff
	for every $(v,v') \in R'_\al$, where $(\M,v') \models next(\vec{p})$, $(\M,v') \models_\rho \psi'$ iff for every $(v,v') \in R'_\al$, $(\M,v') \models next(\vec{p}) \impl \psi'$ iff 
	$$(\M,v') \models \transl{A}{\klogic^\mu}{[\al] \psi'} = 
	\bigwedge_{\vec{p}}
	(\vec{p} \rightarrow [\al] 
	(next(\vec{p}) \impl \transl{A}{\klogic^\mu}{\psi'})
	),$$ 
	which completes the proof by induction.
\end{itemize}

For the ``if'' direction, 
let $\M=(W,R,V)$ be an  $\LG$-model and $w \in W$, 
be such that $(\M,w) \models \transl{A}{\klogic^\mu}{\varphi}$.
Let $\M'=(W,R',V)$, where for every $\al \in A$, 
$R'_\al = R_\al \cap \bigcup_{\vec{p}} (\trueset{\vec{p}}\times \trueset{next(\vec{p})})$, and for every $\al \notin A$, $R_\al' = R_\al$.
We can prove that for any environment $\rho$, any $\psi \in \sub(\varphi)$, and any $v \in W$, $(\M,v) \models_\rho \transl{A}{\klogic^\mu}{\psi}$ iff $(\M',v) \models_\rho \psi$.
The proof proceeds by induction on $\psi$ and is very similar to the one for the ``only if'' direction. 
\end{proof}

\subsection{Complexity Results}

	\change{We observe that, with the exception of Translations \ref{translation:ml} and \ref{transl:mu_calc_symm}, our translations yield formulas of size at most linear with respect to the input formula. Translations \ref{translation:ml} and \ref{transl:mu_calc_symm} have a quadratic cost.}

\begin{cor}\label{cor:mu-upper1}
		If  $\LG$ only has frame conditions $D,T$, then its satisfiability problem is \EXP-complete;
		if $\LG$ only has frame conditions $D,T,4$, then its satisfiability problem is in \EXP.
\end{cor}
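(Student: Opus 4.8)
The plan is to obtain the upper bounds by chaining together the satisfiability-preserving translations of Theorems~\ref{thm:mu_calc_transitive_transl}, \ref{thm:mu_calc_reflex_translation} and \ref{thm:mu_calc_serial_transl} to reduce $\LG$-satisfiability to satisfiability for the $\mu$-calculus $\klogic^\mu_k$, and then to invoke Theorem~\ref{prp:muCalc-sat}; the matching lower bound in the $D,T$ case will come from Theorem~\ref{thm:lower-translations} together with Theorem~\ref{prp:muCalc-sat}.

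For the \emph{upper bound}, suppose $\LG$ uses only conditions among $D,T,4$ (the $D,T$ case being the subcase in which no agent carries $4$). I would strip the conditions one layer at a time, in the order $4$, then $T$, then $D$; this order is dictated by the side conditions of the three theorems. First, let $A_4$ be the set of agents whose logic contains $4$. Applying $\transl{A_4}{4^{\mu}}{-}$ and Theorem~\ref{thm:mu_calc_transitive_transl} reduces $\LG$-satisfiability to $\LG'$-satisfiability, where $\LG'$ is $\LG$ with $4$ removed from the agents in $A_4$; the hypothesis of Theorem~\ref{thm:mu_calc_transitive_transl} (``$\LG'$ has conditions among $D,T,B$'') holds, since $\LG'$ has only $D,T$. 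Next, let $A_T$ be the set of agents whose logic in $\LG'$ contains $T$; applying $\transl{A_T}{T^{\mu}}{-}$ and Theorem~\ref{thm:mu_calc_reflex_translation} reduces to $\LG''$-satisfiability, where in $\LG''$ every agent's logic is $\dlogic$ or $\klogic$, and the hypothesis (``at most $D$ or $B$'') holds since $\LG''$ has only $D$. Finally, with $A_D$ the set of agents whose logic in $\LG''$ is $\dlogic$, applying $\transl{A_D}{D^{\mu}}{-}$ and Theorem~\ref{thm:mu_calc_serial_transl} reduces to $\klogic^\mu_k$-satisfiability. Each of these three translations is computable in polynomial time --- indeed, of linear size, as noted just before the statement --- so their composition is a polynomial-time reduction, and (by Lemma~\ref{lem:conditionsarepreserved}, as discussed above) they compose without interfering with one another. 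Hence Theorem~\ref{prp:muCalc-sat} yields that $\LG$-satisfiability is in \EXP.

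For the \emph{lower bound} in the $D,T$ case: if $\LG$ assigns $\klogic$ to every agent, then $\LG = \klogic^\mu_k$ and the claim is exactly Theorem~\ref{prp:muCalc-sat}. Otherwise, set $A = \{\al \mid \LG(\al) \neq \klogic\}$, which is nonempty; for $\al \in A$ the logic $\LG(\al)$ uses only conditions from $\{D,T\} \subseteq \{D,T,B\}$, while $\LG(\al) = \klogic$ for $\al \notin A$, so Theorem~\ref{thm:lower-translations} applies and gives a polynomial-time reduction from $\klogic^\mu_k$-satisfiability to $\LG$-satisfiability. Since the former is \EXP-hard (Theorem~\ref{prp:muCalc-sat}), so is the latter. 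Combining the two bounds, $\LG$-satisfiability is \EXP-complete when $\LG$ has only conditions $D,T$, and is in \EXP\ when $\LG$ has only conditions $D,T,4$.

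I do not anticipate a genuine obstacle here: the mathematical substance is carried entirely by the translation theorems already proved, and the only care required is bookkeeping of their side conditions and hence the order of composition ($4$ before $T$, and both before $D$, since removing $T$ needs $4$ already absent and removing $D$ needs both $T$ and $4$ already absent). It is worth remarking explicitly why no matching lower bound is claimed once a $4$ condition is present: Theorem~\ref{thm:lower-translations} embeds $\klogic^\mu_k$ only into logics whose conditions lie among $D,T,B$, so the translation technique yields no \EXP-hardness reduction into a logic carrying a transitivity constraint.
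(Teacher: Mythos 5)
Your proposal is correct and follows essentially the same route as the paper, whose proof simply cites Theorems~\ref{thm:mu_calc_serial_transl}, \ref{thm:mu_calc_reflex_translation}, \ref{thm:mu_calc_transitive_transl}, and \ref{thm:lower-translations}; you have merely made explicit the composition order ($4$, then $T$, then $D$) dictated by the theorems' side conditions and the appeal to Theorem~\ref{prp:muCalc-sat} for the \EXP\ bounds.
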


\begin{proof}
	Immediately from Theorems \ref{thm:mu_calc_serial_transl}, \ref{thm:mu_calc_reflex_translation}, \ref{thm:mu_calc_transitive_transl}, and \ref{thm:lower-translations}.
\end{proof}

\begin{cor}\label{cor:mu-upper2}
	If $\LG$ only has frame conditions $D,T,B$, then 
	\begin{enumerate}
		\item 
		$\LG$-satisfiability is \EXP-hard;  and 
		\item 
		the restriction of $\LG$-satisfiability on formulas without $\mn X$operators is \EXP-complete.
	\end{enumerate}
\end{cor}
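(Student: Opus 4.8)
The plan is to obtain both bounds by chaining together the translations of this section. For the \EXP-hardness of $\LG$-satisfiability (part~1), I would apply Theorem~\ref{thm:lower-translations} with $A=\act$: by hypothesis every agent of $\LG$ carries only conditions among $D$, $T$, $B$, so the requirement of that theorem is met (its clause ``$\LG(\al)=\klogic$ otherwise'' being vacuous when $A=\act$), and $\varphi\mapsto\transl{\act}{\klogic^{\mu}}{\varphi}$ is then a polynomial-time reduction from $\klogic_k^{\mu}$-satisfiability to $\LG$-satisfiability, where $k=|\act|$. Since $\klogic_k^{\mu}$-satisfiability is \EXP-hard (Theorem~\ref{prp:muCalc-sat}), part~1 follows. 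For the \EXP-hardness half of part~2, I would restrict the very same reduction to formulas without $\mn X$ operators: Translation~\ref{transl:lower} commutes with every connective and replaces the modal subformulas only by Boolean combinations of modal formulas, so it introduces no fixed-point operator and sends $\mn X$-free formulas to $\mn X$-free formulas; since the max-fragment of the $\mu$-calculus is already \EXP-hard, even with a single agent (Theorem~\ref{prp:fragmantsMu}), this yields a polynomial-time reduction from $\mn X$-free $\klogic_k^{\mu}$-satisfiability to $\mn X$-free $\LG$-satisfiability.

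For the \EXP membership in part~2, I would compose three translations towards $\klogic_k^{\mu}$, in the order $T$, then $B$, then $D$. This is the only order in which the applicability conditions line up: removing $D$ requires a pure-$\klogic$ target (Theorem~\ref{thm:mu_calc_serial_transl}) and removing $B$ requires a target with at most condition $D$ (Theorem~\ref{thm:mu_calc_symm_translation}), whereas removing $T$ tolerates $D$ and $B$ in the target (Theorem~\ref{thm:mu_calc_reflex_translation}). Let $A_T$ and $A_B$ be the sets of agents whose single-agent logic in $\LG$ contains, respectively, condition $T$ and condition $B$, and let $\varphi$ be $\mn X$-free. First apply $\transl{A_T}{T^{\mu}}{-}$, obtaining a logic $\LG^{(1)}$ all of whose agents carry only conditions among $D$ and $B$ (deleting $T$ from a combination of $\{D,T,B\}$ leaves a combination of $\{D,B\}$); the formula stays $\mn X$-free because this translation adds no recursion. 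Then apply $\transl{A_B}{\blogic^{\mu}}{-}$ (the agents carrying $B$ in $\LG^{(1)}$ are still exactly those in $A_B$), obtaining a logic $\LG^{(2)}$ all of whose agents carry at most condition $D$; this step is available precisely because its input is still $\mn X$-free. Finally apply $\transl{A'}{D^{\mu}}{-}$ with $A'=\{\al\mid\LG^{(2)}(\al)=\dlogic\}$, landing in $\klogic_k^{\mu}$. Each step preserves satisfiability by the cited theorem, and since each translation incurs at most a quadratic blow-up, their composition is a polynomial-time reduction from $\mn X$-free $\LG$-satisfiability to $\klogic_k^{\mu}$-satisfiability, which lies in \EXP by Theorem~\ref{prp:muCalc-sat}. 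Together with the hardness above, this proves part~2.

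The hard part --- and the reason part~2 must be limited to $\mn X$-free formulas --- is the $B$-elimination step, because Theorem~\ref{thm:mu_calc_symm_translation} is proved only for formulas without least fixed points; the composition must therefore be arranged so that $\transl{\cdot}{\blogic^{\mu}}{-}$ is invoked while the formula is still $\mn X$-free. Placing the $T$-step first is safe exactly because $\transl{\cdot}{T^{\mu}}{-}$ introduces no recursion; the two later translations may add $\mx X$ operators, but this is harmless since the final target $\klogic_k^{\mu}$ admits both fixed-point operators. The remaining checks are routine: using Lemma~\ref{lem:conditionsarepreserved} one verifies that taking reflexive closures does not disturb which agents are symmetric (so $A_B$ remains the correct index set at the second step), and that the intermediate logics $\LG^{(1)}$ and $\LG^{(2)}$ are among the logics for which the respective theorems are stated.
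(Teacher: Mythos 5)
Your proof is correct and follows essentially the same route as the paper, whose own proof simply cites Theorems~\ref{thm:mu_calc_serial_transl}, \ref{thm:mu_calc_reflex_translation}, \ref{thm:mu_calc_symm_translation}, and \ref{thm:lower-translations}; you have merely made explicit the composition order $T$, then $B$, then $D$ that those theorems' side conditions force, and the observation that the translations preserve $\mn X$-freeness where needed. No gaps.
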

\begin{proof}
Immediately from Theorems \ref{thm:mu_calc_serial_transl}, \ref{thm:mu_calc_reflex_translation}, \ref{thm:mu_calc_symm_translation}, and \ref{thm:lower-translations}.
\end{proof}

\begin{table}
	\centering
\begin{tabular}{|l|l|l|l|}
\hline
$\#$ agents      & Restrictions on syntax/frames & \multicolumn{1}{l|}{Upper Bound} & Lower Bound \\ \hline
                      & frames with B or 5 & ?                                & \EXP-hard  \ref{prp:EXP-hard-k2}  \\
$\geq 2$ & not B , 5                                  & \EXP~ \ref{cor:mu-upper1} & \EXP-hard \ref{prp:EXP-hard-k2}   \\
                      & not 5, not $\mu$.X                             & \EXP~ \ref{cor:mu-upper2} & \EXP-hard   \ref{prp:fragmantsMu} \\ \hline
                    & with 5 (or B4)             & \NP~ \ref{thm:ladhalp} & \NP-hard   \ref{thm:ladhalp}  \\
       1               & with 4                                      & \EXP~\ref{cor:mu-upper1} & \PSPACE-hard \ref{thm:ladhalp} \\
                      & Any other restrictions                       & \EXP~  \ref{prp:muCalc-sat} & \EXP-hard  \ref{prp:muCalc-sat}  \\ \hline
\end{tabular}
\caption{Summary of the complexity of satisfiability checking for various modal logics with recursion up to section \ref{sec:translations}. 
A logic has the complexity of the first line from the top of the table that contains its description. The respective theorems from either this work or already known are linked. }
\label{tab:summary}
\end{table}

A summary of our results can be seen in Table \ref{tab:summary}. The only mentioned result that was already known is that  the satisfiability problem for the $\mu$-calculus over one agent is \EXP-complete.

We can clearly see from the summary in Table \ref{tab:summary} that the bound we have obtained via the translation-based methods presented in this section is not tight for the satisfiability problem for the single-agent $\mu$-calculus over transitive frames. 
Specifically, the translations established already only guarantee that the problem is in \EXP, but the only available lower bound is \PSPACE-hardness from the corresponding modal logic (without recursion). 
Thus, to have a tight bound, we would need to prove that the satisfiability checking for $\kf^\mu$ is either in \PSPACE, or that it is \EXP-hard. 
We do not have an answer to this question that relies on translations, but we will present some musings on this topic in the discussion section (Sec. \ref{sec:multi_ml_conclusion}).


In what follows, instead of applying the translation method to obtain a tight complexity bound for satisfiability for the single-agent $\mu$-calculus over transitive frames, we will utilize the tableaux we will present in Section  \ref{sec:multi} to produce a \PSPACE~algorithm, showing that the problem is, in fact, PSPACE-complete.



\section{Tableaux for \texorpdfstring{$\LG_k^\mu$}{L\textunderscore k\textasciicircum mu}}
\label{sec:multi}

We give a sound and complete tableau system for logic $\LG$.
Furthermore, if $\LG$ has a finite-model property, then we give terminating conditions for its tableau. 
The system that we give in this section is based on Kozen's tableaux for the $\mu$-calculus \cite{Kozen1983} and the tableaux of Fitting \cite{fitting1972tableau} and Massacci \cite{Massacci1994} for \ML.
We can use Kozen's finite model theorem \cite{Kozen1983} to help us ensure
the termination of the tableau for some of these logics.

\begin{thmC}[\cite{Kozen1983}]
	\label{thm:smallKmodel}
	There is a computable $\kappa:\nat \to \nat$, such that every $\logicize{K}_k^\mu$-satisfiable formula $\varphi$ is satisfied in a model with at most $\kappa(|\varphi|)$ states.\footnote{The tableau in 
		\cite{Kozen1983} yields an upper bound of 
		$2^{2^{O(n^3)}}$ for $\kappa_0(n)$, 
		but that bound is not useful to 
		obtain
		a ``good'' decision procedure. The purpose of this section is not to establish any good upper bound for satisfiability testing, which is done in Sections \ref{sec:KfourMu}  and \ref{sec:TtT}.}
\end{thmC}

\begin{cor}\label{cor:smallLmodel}
	If $\LG$ only has frame conditions $D, T, 4$, then 
	there is a computable $\kappa:\nat \to \nat$, such that  every $\LG$-satisfiable formula $\varphi$ is satisfied in a model with at most $\kappa(|\varphi|)$ states.
\end{cor}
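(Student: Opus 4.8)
The plan is to reduce $\LG$-satisfiability to satisfiability for the $\mu$-calculus $\klogic_k^\mu$ by composing the satisfiability-preserving translations of Subsection~\ref{sec:transl_mu_cal}, to apply Kozen's finite-model theorem (Theorem~\ref{thm:smallKmodel}) to the resulting $\mu$-calculus formula, and then to push the small model back along the translations, observing that each backward model construction leaves the state set unchanged.

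Concretely, I would compose three translations in the order that the side conditions of Theorems~\ref{thm:mu_calc_transitive_transl}, \ref{thm:mu_calc_reflex_translation}, and~\ref{thm:mu_calc_serial_transl} force, namely first remove $4$, then $T$, then $D$. Writing $A_4=\{\al \mid 4\in\LG(\al)\}$, Theorem~\ref{thm:mu_calc_transitive_transl} gives that $\varphi$ is $\LG$-satisfiable iff $\transl{A_4}{4^\mu}{\varphi}$ is $\LG'$-satisfiable, where $\LG'$ is $\LG$ with condition $4$ dropped from every agent, so each $\LG'(\al)$ has only conditions from $D,T$. Then, with $A_T=\{\al \mid T\in\LG'(\al)\}$, Theorem~\ref{thm:mu_calc_reflex_translation} reduces this to $\LG''$-satisfiability of $\transl{A_T}{T^\mu}{\transl{A_4}{4^\mu}{\varphi}}$, where each $\LG''(\al)$ has at most condition $D$; finally Theorem~\ref{thm:mu_calc_serial_transl}, with $A_D=\{\al\mid D\in\LG''(\al)\}$, reduces this to $\klogic_k^\mu$-satisfiability of $f(\varphi):=\transl{A_D}{D^\mu}{\transl{A_T}{T^\mu}{\transl{A_4}{4^\mu}{\varphi}}}$. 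Along the way I would check that each intermediate logic satisfies the hypotheses of the next theorem (it does, since it carries only conditions from $D,T$), and note that each of the three translations increases the size $|\cdot|$ by at most a constant factor (the discussion preceding Corollary~\ref{cor:mu-upper1}), so $|f(\varphi)|\le c\,|\varphi|$ for a fixed constant $c$.

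Next I would apply Theorem~\ref{thm:smallKmodel}: if $\varphi$ is $\LG$-satisfiable then $f(\varphi)$ is $\klogic_k^\mu$-satisfiable, hence satisfied in some model with at most $\kappa_0(|f(\varphi)|)\le\kappa_0(c\,|\varphi|)$ states, where $\kappa_0$ is Kozen's computable bound. The final step is to run the model constructions in the proofs of Theorems~\ref{thm:mu_calc_serial_transl}, \ref{thm:mu_calc_reflex_translation}, and~\ref{thm:mu_calc_transitive_transl} in reverse order: from a $\klogic_k^\mu$-model of $\transl{A_D}{D^\mu}{\cdot}$ one gets an $\LG''$-model by restricting to the states reachable from the evaluation point and taking serial closures; then reflexive closures (the model $\M_p$ in the proof of Theorem~\ref{thm:mu_calc_reflex_translation}); then transitive closures (the model $\M'$ in the proof of Theorem~\ref{thm:mu_calc_transitive_transl}). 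Since none of these three constructions introduces new states, the resulting $\LG$-model of $\varphi$ still has at most $\kappa_0(c\,|\varphi|)$ states, so $\kappa(n)=\kappa_0(c\,n)$ works and is computable because $\kappa_0$ is computable and $c$ is a fixed constant.

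The hard part here is not any single argument but the bookkeeping: one must make sure the three translations compose (their side conditions align only in the order $4,T,D$), that the size blow-up is measured in the subformula count $|\cdot|$ used by Theorem~\ref{thm:smallKmodel} — so that the duplication in $\transl{}{T^\mu}{-}$ stays linear rather than exponential — and that each of the three backward model constructions genuinely reuses the same carrier set. Each point is routine once isolated, but all three must be stated with care.
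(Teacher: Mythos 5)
Your proposal is correct and follows essentially the same route as the paper, whose proof simply cites Theorems~\ref{thm:smallKmodel}, \ref{thm:mu_calc_serial_transl}, \ref{thm:mu_calc_reflex_translation}, and \ref{thm:mu_calc_transitive_transl} together with Lemma~\ref{lem:conditionsarepreserved}: compose the three translations in the order $4$, $T$, $D$, apply Kozen's bound, and push the small model back through the closure constructions, none of which add states. Your explicit bookkeeping of the side conditions, the linear bound on subformula count, and the carrier-set preservation is exactly what the paper's terse proof leaves implicit.
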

\begin{proof}
	Immediately, from Theorems \ref{thm:smallKmodel}, \ref{thm:mu_calc_serial_transl}, \ref{thm:mu_calc_reflex_translation}, and \ref{thm:mu_calc_transitive_transl}, and Lemma \ref{lem:conditionsarepreserved}.
\end{proof}

\begin{remark}
	We note that not all modal logics with recursion have a finite-model property --- see Remark \ref{remark:finitemodelB5}.
\end{remark}

\change{Intuitively, a tableau attempts to build a model that satisfies the given formula. 
When it needs to consider two possible cases, it branches, and thus it may generate several branches. Each branch that satisfies certain consistency conditions, which we define below, represents a corresponding model.}

Our tableaux use \emph{prefixed formulas}, that is, formulas of the form $\sigma~\varphi$, where $\sigma \in (\act\times L)^*$ and $\varphi \in L$; $\sigma$ is the prefix of $\varphi$ in that case, and we say that $\varphi$ is prefixed by $\sigma$.
We note that we separate the elements of $\sigma$ with a dot.
\changeJ{
Furthermore, in the tableau prefixes, we write $\al\diam{\psi}$ to mean the pair $(\al,\psi)\in \act \times L$.
Therefore, for example, $(\al,\phi)(\beta,\chi)(\al,\psi)$ is written $\al\diam{\phi}.\beta\diam{\chi}.\al\diam{\psi}$ as a tableau prefix.}
We say that the prefix $\sigma$ is $\al$-flat when agent $\al$ has axiom $5$ and $\sigma = \sigma'\nStt{\al}{\psi}$ for some $\psi$.
Each prefix possibly represents a state in a corresponding model, and a prefixed formula $\sigma~\varphi$ declares that $\varphi$ is satisfied in the state represented by $\sigma$. As we will see below, the prefixes from $(\act\times L)^*$ allow us to keep track of the diamond formula that generates a prefix through the tableau rules. For 
	agents with condition $5$, this allows us to restrict the generation of new prefixes and avoid certain redundancies, due to the similarity of euclidean binary relations to equivalence relations \cite{nagle_thomason_1985,Halpern2007Characterizing}. For single-agent logics with condition $4$, this notation will further let us avoid redundancies and achieve a more efficient decision procedure --- see Section \ref{sec:KfourMu}.

\begin{table}[t]
	{\centering 
\begin{gather*}
		\AxiomC{$\sigma~\pi X.\varphi$}
		\RightLabel{(\textsf{fix})}
		\UnaryInfC{$\sigma~\varphi$}
		\DisplayProof \quad
		\AxiomC{$\sigma~X$}
		\RightLabel{(\textsf{X})}
		\UnaryInfC{$\sigma~\fix(X)$}
		\DisplayProof 
		\quad
		\AxiomC{$\sigma~\varphi \lor \psi$}
		\RightLabel{(\textsf{or})}
		\UnaryInfC{$\sigma~\varphi\mid\sigma~\psi$}
		\DisplayProof \quad		
		\AxiomC{$\sigma~\varphi \land \psi$}
		\RightLabel{(\textsf{and})}
		\UnaryInfC{$\sigma~\varphi$}
		\noLine
		\UnaryInfC{$\sigma~\psi$}
		\DisplayProof 
		\\[4ex]
	\AxiomC{$\sigma~[\al]\varphi$}
	\RightLabel{(\textsf{B})}
	\UnaryInfC{$\sigma\nStt{\al}{\psi}~\varphi$}
	\DisplayProof \quad
	\AxiomC{$\sigma~\diam{\al}\varphi$}
	\RightLabel{(\textsf{D})}
	\UnaryInfC{$\sigma\nStt{\al}{\varphi}~\varphi$}
	\DisplayProof~ \quad
	\AxiomC{$\sigma~[\al]\varphi$}
	\RightLabel{(\textsf{d})}
	\UnaryInfC{$\sigma\nStt{\al}{\varphi}~\varphi$}
	\DisplayProof~\quad
	\AxiomC{$\sigma~[\al]\varphi$}
	\RightLabel{(\textsf{4})}
	\UnaryInfC{$\sigma\nStt{\al}{\psi}~[\al]\varphi$}
	\DisplayProof
	\\
\end{gather*}
\\[2ex]
\noindent
where, for rules (\textsf{B}) and (\textsf{4}), 
\change{$\sigma\nStt{\al}{\psi}$} has already appeared in the branch;
and for (\textsf{D}), $\sigma$ is not $\al$-flat.
\noindent
\begin{gather*}
	\AxiomC{$\sigma\nStt{\al}{\psi}~[\al]\varphi$}
	\RightLabel{(\textsf{B5})}
	\UnaryInfC{$\sigma~[\al]\varphi$}
\DisplayProof \quad
\AxiomC{$\sigma\nStt{\al}{\psi}~\diam{\al}\varphi$}
\RightLabel{(\textsf{D5})}
\UnaryInfC{$\sigma\nStt{\al}{\psi}\nStt{\al}{\varphi}~\varphi$}
\DisplayProof \quad
\AxiomC{$\sigma\nStt{\al}{\psi}~[\al]\varphi$}
\RightLabel{(\textsf{b})}
\UnaryInfC{$\sigma~\varphi$}
\DisplayProof \quad
\AxiomC{$\sigma~[\al]\varphi$}
\RightLabel{(\textsf{t})}
\UnaryInfC{$\sigma~\varphi$}
\DisplayProof
\\[4ex]
\AxiomC{$\sigma\nStt{\al}{\psi}~[\al]\varphi$}
\RightLabel{(\textsf{B55})}
\UnaryInfC{$\sigma\nStt{\al}{\psi'}~[\al]\varphi$}
\DisplayProof \quad
\AxiomC{$\sigma\nStt{\al}{\psi}\nStt{\al}{\psi'}~\diam{\al}\varphi$}
\RightLabel{(\textsf{D55})}
\UnaryInfC{$\sigma\nStt{\al}{\psi}\nStt{\al}{\varphi}~\varphi$}
\DisplayProof  \quad
\AxiomC{$\sigma\nStt{\al}{\psi}~[\al]\varphi$}
\RightLabel{(\textsf{b4})}
\UnaryInfC{$\sigma~[\al]\varphi$}
\DisplayProof
\\
\end{gather*}
\\[2ex]
\noindent }
 \begin{flushleft} 
where, 
for rule (\textsf{B55}),  
$\sigma\nStt{\al}{\psi'}$ has already appeared in the branch;
for rule (\textsf{D5}),  $\sigma$ is not $\al$-flat, and 
$\sigma~\diam{\al}\varphi$ does not appear in the branch;
for rule (\textsf{D55}),  
$\sigma~\diam{\al}\varphi$ does not appear in the branch.
%
\end{flushleft}
\caption{The tableau rules for $\logicize{L}=\logicize{L}^\mu_n$.}
\label{tab:tableau}
\end{table}

The tableau rules that we use appear in Table \ref{tab:tableau}. These include fixed-point and propositional rules, as well as rules that deal with modalities. Depending on the logic that each agent $\al$ is based on, a different set of rules applies for $\al$: for rule (\textsf{d}), $\logicize{L}(\al)$ must have condition $D$; for rule (\textsf{t}), $\logicize{L}(\al)$ must have condition $T$; for rule (\textsf{4}), $\logicize{L}(\al)$ must have condition $4$; for rules (\textsf{B5}), (\textsf{D5}), and (\textsf{D55}), $\logicize{L}(\al)$ must have condition $5$; for rule (\textsf{b}) $\logicize{L}(\al)$ must have condition $B$; and for rule (\textsf{b4}) $\logicize{L}(\al)$ must have both $B$ and $4$.
Rule (\textsf{or}) is the only rule that splits the current tableau branch into two.
\changeJ{
When an agent $\al$ does not have condition $5$, diamond formulas such as $\sigma~\diam{\al}\psi$ are analysed by extending the current prefix $\sigma$ to $\sigma.\al\diam{\psi}$, using rule $(\textsf{D})$. 
But when $\al$ has condition $5$, we need to take into account that due to the Euclidean condition, formulas prefixed with $\sigma$ may affect prefixes that neither extend nor are prefixes of $\sigma$, in a similar way as for states in a Kripke model. As Lemma \ref{lem:NImodels} demonstrates later on in this section, condition $5$ results in an accessibility relation that is very close to an equivalence relation, and our tableau rules for agents with condition $5$ (\emph{i.e.} rules (\textsf{B5}), (\textsf{B55}), (\textsf{D5}), and (\textsf{D55})) reflect this.}

A tableau branch is propositionally closed when $\sigma~\false$ or both $\sigma~p$ and $\sigma~\neg p$ appear in the branch for some prefix $\sigma$.
For each prefix $\sigma$ that appears in a \change{fixed} tableau \change{branch}, let $\form(\sigma)$ be the set of formulas prefixed by $\sigma$ \change{in that branch}. 
We use the notation $\sigma \prec \sigma'$ to mean that $\sigma'=\sigma.\sigma''$ for some $\sigma''$, 
in which case
$\sigma$ is an ancestor of $\sigma'$.
We denote with $\sigma(b)$ the set of prefixes that appear in a branch $b$.

We define the \changeJ{dependence} relation $\xrightarrow{X}$ on prefixed formulas in a tableau \change{branch} as $\chi_1 \xrightarrow{X} \chi_2$, if 
$\chi_2$ was introduced to the branch by a tableau rule
with $\chi_1$ as its premise, 
and $\chi_1$ is not of the form $\sigma~Y$, where $X<Y$; then, $\xrightarrow{X}^+$ is the transitive closure of $\xrightarrow{X}$ and $\xrightarrow{X}^*$ is its reflexive and transitive closure.
We can also extend this relation to prefixes, so that $\sigma \xrightarrow{X} \sigma'$, if and only if $\sigma~\psi \xrightarrow{X} \sigma'~\psi'$, for some $\psi\in\form(\sigma)$ and $\psi'\in\form(\sigma')$.
If in a branch
there is a
$\xrightarrow{X}$-sequence where $X$ is a least fixed-point and appears infinitely often,
then the branch is called fixed-point-closed.
A branch is closed when it is either fixed-point-closed or propositionally closed; if it is not closed, then it is called open.

Now, assume that there is a 
$\kappa:\nat \to \nat$, such that  every $\LG$-satisfiable formula $\varphi$ is satisfied in a model with at most $\kappa(|\varphi|)$ states.
An open tableau branch is called (\emph{resp. locally}) \emph{maximal} when 
all tableau rules (\emph{resp. the tableau rules that do not produce new prefixes}) have been applied.
A branch is called \emph{sufficient} for $\varphi$ when it is locally maximal and for every $\sigma~\psi$ in the branch, for which a rule can be applied and has not been applied to $\sigma~\psi$, 
$|\sigma| > |\act|\cdot\kappa(|\varphi|)^{|\varphi|^2}\cdot 2^{|\varphi|+1}$.
%
A tableau is called maximal when all of its open branches are maximal, and closed when all of its branches are closed.
It is called sufficiently closed for $\varphi$ if it is propositionally closed, or for some least fixed-point variable $X$, it 
has a $\xrightarrow{X}$-path, where $X$ appears at least $\kappa(|\varphi|)+1$ times.
%
A sufficient branch for $\varphi$ that is not sufficiently closed is called sufficiently open for $\varphi$.

A tableau for $\varphi$ starts from $\varepsilon~\varphi$ and is built using the tableau rules of Table \ref{tab:tableau}.
A tableau proof for $\varphi$ is a closed tableau for the negation of $\varphi$.

The following example demonstrates the taubleau system in action. 

\begin{example}
	\change{Let $\act = \{a,b\}$ and $\LG$ be a logic, such that 
		$\LG(a) = \klogic^\mu$ 
		and $\LG(b) = \kv^\mu$. 
		Let
	\begin{align*}
		\varphi_1 &= (p \land \diam{a}p) \land \mn X.(\neg p \vee [a]X)& &and& &\varphi_2 = \diam{b}p \land \mn X.([b]\neg p \vee [b]X).
	\end{align*}
	As we see in Figure \ref{tab:tableauex}, the tableau for $\varphi_1$ produces an open branch, while the one for $\varphi_2$ has all of its branches closed, the leftmost one due to an infinite $\xrightarrow{X}$-sequence.
}
	
	\alwaysRootAtTop    
	\def\defaultHypSeparation{\hskip .1in}
		

\begin{figure}
\begin{subfigure}[b]{0.4\textwidth}
\scalebox{0.90}{
$	\AxiomC{$a\diam{p}~[a]X$}
	\AxiomC{\closed}
	\noLine
	\UnaryInfC{$a\diam{p}~\neg p$}
	\BinaryInfC{$a\diam{p}~\neg p \vee [a]X$}\RightLabel{\scriptsize(\textsf{fix})}
	\UnaryInfC{$a\diam{p}~\mn X.(\neg p \vee [a]X)$}\RightLabel{\scriptsize(\textsf{X})}
	\UnaryInfC{$a\diam{p}~X$}\RightLabel{\scriptsize(\textsf{B})}
	\UnaryInfC{$a\diam{p}~p$}\RightLabel{\scriptsize(\textsf{D})}
	\UnaryInfC{$\varepsilon~[a]X$}
	\AxiomC{\closed}
	\noLine
	\UnaryInfC{$\varepsilon~\neg p$}
	\BinaryInfC{$\varepsilon~\neg p \vee [a]X$}\RightLabel{\scriptsize(\textsf{fix})}
	\UnaryInfC{$\varepsilon~\diam{a}p$}
	\noLine 
	\UnaryInfC{$\varepsilon~p$}
	\UnaryInfC{$\varepsilon~p \land \diam{a}p$}
	\noLine 
	\UnaryInfC{$\varepsilon~\mn X.(\neg p \vee [a]X)$}
	\UnaryInfC{$\varepsilon~(p \land \diam{a}p) \land \mn X.(\neg p \vee [a]X)$}
	\DisplayProof $ }
\end{subfigure}
\begin{subfigure}[b]{0.55\textwidth}
\scalebox{0.9}{
$ \AxiomC{\vdots}\RightLabel{\scriptsize(\textsf{X})}
	\UnaryInfC{$b\diam{p}~X$}\RightLabel{\scriptsize(\textsf{B})}
	\UnaryInfC{$\varepsilon~[b]X$}\RightLabel{\scriptsize(\textsf{B5})}
	\UnaryInfC{$b\diam{p}~[b]X$} 
	\AxiomC{\closed}\noLine 
	\UnaryInfC{$b\diam{p}~\neg p$}\RightLabel{\scriptsize(\textsf{B})}
	\UnaryInfC{$\varepsilon~[b]\neg p$}\RightLabel{\scriptsize(\textsf{B5})}
	\UnaryInfC{$b\diam{p}~[b]\neg p$} 
	\BinaryInfC{$b\diam{p}~[b]\neg p \vee [b]X$}\RightLabel{\scriptsize(\textsf{fix})}
	\UnaryInfC{$b\diam{p}~\mn X.([b]\neg p \vee [b]X)$}\RightLabel{\scriptsize(\textsf{X})}
	\UnaryInfC{$b\diam{p}~X$}\RightLabel{\scriptsize(\textsf{B})}
	\UnaryInfC{$\varepsilon~[b]X$}
	\AxiomC{\closed}
	\noLine
	\UnaryInfC{$b\diam{p}~\neg p$}\RightLabel{\scriptsize(\textsf{B})}
	\UnaryInfC{$\varepsilon~[b]\neg p$}
	\BinaryInfC{$\varepsilon~[b]\neg p \vee [b]X$}\RightLabel{\scriptsize(\textsf{fix})}
	\UnaryInfC{$b\diam{p}~p$}\RightLabel{\scriptsize(\textsf{D})}
	\UnaryInfC{$\varepsilon~\diam{b}p$}
	\noLine 
	\UnaryInfC{$\varepsilon~\mn X.([b]\neg p \vee [b]X)$}
	\UnaryInfC{$\varepsilon~\diam{b}p \land \mn X.([b]\neg p \vee [b]X)$}
	\DisplayProof $ }
\end{subfigure}
\caption{\change{Tableaux for $\varphi_1$ and $\varphi_2$. The dots represent that the tableau keeps repeating as from the identical node above. The \texttt{x} mark represents a propositionally closed branch.}}
\label{tab:tableauex}
\end{figure}
\end{example}

\begin{thm}[Soundness, Completeness, and Termination of $\LG_k^\mu$-Tableaux]
	\label{thm:tableaux}
 The first two of the following three statements are equivalent for any formula $\varphi\in L$ and any logic $\LG$. Furthermore, if 
	there is a 
	$\kappa:\nat \to \nat$, such that  every $\LG$-satisfiable formula $\varphi$ is satisfied in a model with at most $\kappa(|\varphi|)$ states, then all the following \changeJ{statements} are equivalent.
	\begin{enumerate}
	\itemsep0em 
		\item $\varphi$ has a maximal $\logicize{L}$-tableau with an open branch;
		\item $\varphi$ is $\logicize{L}$-satisfiable; and
		\item $\varphi$ has an $\logicize{L}$-tableau with a sufficiently open branch for $\varphi$.
	\end{enumerate}
\end{thm}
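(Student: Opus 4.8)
\emph{Overview.} The plan is to prove $(1)\Leftrightarrow(2)$ first, by the two customary arguments — soundness of the tableau (every satisfiable $\varphi$ admits a maximal tableau that stays open) and model existence (an open branch yields a model) — and then, under the small-model hypothesis $\kappa$, to add $(2)\Rightarrow(3)$ and $(3)\Rightarrow(2)$; together with the already-available $(2)\Rightarrow(1)$ this makes all three equivalent. \emph{Soundness, $(2)\Rightarrow(1)$:} assume $(\M,w)\models\varphi$ for an $\LG$-model $\M$, which by Lemma~\ref{lem:ufoldings-are-bisimilar} we may take to be the $\LG$-closure of a tree unfolding, and fix an lfp-finite dependency relation $\rightarrow$ on $\M$ and the subformula closures of $\varphi$, which exists by Theorem~\ref{thm:lfp-finite-dep}. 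I would grow a maximal tableau for $\varphi$ together with a \emph{realization} $f$ sending every prefix $\sigma$ appearing in the branch to a state $f(\sigma)$ of $\M$, maintaining the invariants that $(\M,f(\sigma))\models\psi$ whenever $\sigma~\psi$ is in the branch, and that every tableau step $\sigma~\psi\xrightarrow{X}\sigma'~\psi'$ is tracked by a $\rightarrow$-path from $(f(\sigma),\psi)$ to $(f(\sigma'),\psi')$ that regenerates $X$ exactly when the tableau step does. The boolean, fixed-point and propositional rules just copy the choices prescribed by $\rightarrow$; the modal rules follow the accessibility structure of $\M$, using reflexivity for $(\textsf{t})$, seriality for $(\textsf{d})$, transitivity for $(\textsf{4})$, symmetry for $(\textsf{b})$ and $(\textsf{b4})$, and the near-equivalence structure of euclidean relations (Lemma~\ref{lem:NImodels}) for $(\textsf{B5})$, $(\textsf{B55})$, $(\textsf{D5})$, $(\textsf{D55})$. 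The resulting branch is never propositionally closed, since a single state cannot witness $p$ and $\neg p$ nor $\false$, and never fixed-point-closed, since an $\xrightarrow{X}$-sequence in the branch with a least-fixed-point $X$ occurring infinitely often would, by the second invariant, produce one for $\rightarrow$, contradicting lfp-finiteness. Saturating all applicable rules gives a maximal tableau with an open branch.

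\emph{Model existence, $(1)\Rightarrow(2)$.} From a maximal tableau with an open branch $b$ I would build a model $\M_b$ with carrier $\sigma(b)$ — quotiented by the identifications that condition $5$ forces, again read off from Lemma~\ref{lem:NImodels} — with $\sigma R_\al \sigma\nStt{\al}{\psi}$ plus the closure of each $R_\al$ under the conditions of $\LG(\al)$ taken in the fixed order (Lemma~\ref{lem:conditionsarepreserved}), and $p\in V(\sigma)$ iff $\sigma~p\in b$. Local maximality then gives a \emph{truth lemma}, $(\M_b,\sigma)\models\psi$ for every $\sigma~\psi\in b$, proved by induction on $\psi$: the boolean and modal cases are routine from the rule set (each frame-closure rule exactly compensates the corresponding closure added to $R_\al$), the greatest-fixed-point case is coinductive since all the needed unfoldings occur in $b$, and the least-fixed-point case $\psi=\mn X.\chi$ is handled by assigning ordinal ranks to the prefixed formulas in the $\xrightarrow{X}$-region of $\sigma~\psi$ — well-defined because $b$ is not fixed-point-closed, so $\xrightarrow{X}^{+}$ has no infinite $X$-regenerating path — and running a secondary induction on this rank inside the formula induction, so that every such node is seen to hold in the appropriate approximant $S_X^\zeta$. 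This is the Kozen/Streett–Emerson argument transported to prefixed tableaux, and $(\M_b,\varepsilon)\models\varphi$ follows.

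\emph{$(2)\Rightarrow(3)$ and $(3)\Rightarrow(2)$ under $\kappa$.} For $(2)\Rightarrow(3)$, take $(\M,w)\models\varphi$ with $|\M|\le\kappa(|\varphi|)$ and run the soundness construction, but cease applying the prefix-creating rules $(\textsf{D})$, $(\textsf{d})$, $(\textsf{D5})$, $(\textsf{D55})$ once a prefix exceeds length $|\act|\cdot\kappa(|\varphi|)^{|\varphi|^2}\cdot2^{|\varphi|+1}$, while still saturating the remaining local rules; since each prefix carries only finitely many prefixed formulas and spawns only finitely many children, this terminates with a branch that is sufficient for $\varphi$. To make it not sufficiently closed, I would pick $\rightarrow$ via Lemma~\ref{lem:iterativeTarski} and the construction in the proof of Theorem~\ref{thm:lfp-finite-dep} so that each regeneration of a least-fixed-point $X$ strictly decreases the approximation rank $\min\{\zeta: f(\sigma)\in S_X^\zeta\}$ of the realizing state; since over a model with at most $\kappa(|\varphi|)$ states the approximant chain of any least fixed point stabilizes within $\kappa(|\varphi|)$ steps and a satisfying state of $\fix(X)$ has rank at least $1$, no $\xrightarrow{X}$-path can regenerate $X$ more than $\kappa(|\varphi|)$ times, so the branch is sufficiently open. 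For $(3)\Rightarrow(2)$, start from a sufficiently open branch $b$: the depth bound was chosen to exceed the number of distinguishable prefix types — which depend only on $\form(\sigma)$ together with the bounded $5$-agent and fixed-point-rank information, whence the factors $\kappa(|\varphi|)^{|\varphi|^2}$ and $2^{|\varphi|+1}$ — so every maximal unexpanded prefix is type-equal to a strict ancestor; folding each such prefix back to the nearest type-equal ancestor and closing under the frame conditions produces a finite $\LG$-model $\M_b$ with few enough states, and the truth lemma above gives $(\M_b,\varepsilon)\models\varphi$, its least-fixed-point case now using that a failure would force, in this finite folded model, a cycle regenerating some $X$, which unrolls to a $\xrightarrow{X}$-path in $b$ regenerating $X$ more than $\kappa(|\varphi|)$ times, contradicting ``not sufficiently closed''. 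Finally $(2)\Rightarrow(1)$ is the soundness construction with the depth cut-off removed, closing the triangle.

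\emph{Expected main obstacle.} The hardest part will be the least-fixed-point bookkeeping in the two model-constructing directions: in $(1)\Rightarrow(2)$ one must set up the nested rank induction so that it survives the $5$-agent quotient and the frame closures, and in $(3)\Rightarrow(2)$ one must calibrate the folding so that the folded model is small enough that the \emph{bounded}-regeneration condition (rather than the outright absence of $X$-regenerating cycles) already validates every least fixed point — this calibration is exactly what fixes the shape of the bound $|\act|\cdot\kappa(|\varphi|)^{|\varphi|^2}\cdot2^{|\varphi|+1}$, and making the exponents match the count of prefix types is the delicate point.
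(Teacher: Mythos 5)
Your proposal is correct in substance and follows the paper's own strategy for three of the four implications: $(2)\Rightarrow(1)$ and $(2)\Rightarrow(3)$ via a realization map guided by an lfp-finite dependency relation (Theorem \ref{thm:lfp-finite-dep}), and $(1)\Rightarrow(2)$ via the canonical model on prefixes with frame closures and a truth lemma whose least-fixed-point case rests on the well-foundedness of $\xrightarrow{X}$-regeneration (the paper argues this by exhibiting, for any $S\not\ni\sigma$, a $\xrightarrow{X}$-maximal witness showing $S$ is not a prefixed point, which is the same well-foundedness argument as your ordinal ranks). Where you genuinely diverge is in handling statement (3): the paper proves $(3)\Rightarrow(1)$ by \emph{pumping} the sufficiently open branch --- using the pigeonhole argument on $\equiv$-classes (determined by $\form(\sigma)$, the counts $c(\sigma~\psi,X)$, the last agent, and readiness) to find, for each productive leaf, an equivalent ancestor whose subtree can be grafted on repeatedly, yielding a maximal open branch and then invoking $(1)\Rightarrow(2)$ --- whereas you propose $(3)\Rightarrow(2)$ directly by \emph{folding} each deep prefix back onto its type-equal ancestor to get a small model. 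These are dual views of the same combinatorial fact, and your route has the advantage of producing a finite model immediately; the paper's route avoids having to re-verify the truth lemma on a quotient.

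One step in your folding argument needs repair. You justify the least-fixed-point case of the folded truth lemma by saying that an $X$-regenerating cycle in the folded model ``unrolls to a $\xrightarrow{X}$-path in $b$ regenerating $X$ more than $\kappa(|\varphi|)$ times.'' That unrolling lives in the pumped branch, not in $b$ itself --- $b$ contains only one copy of the segment, so traversing the cycle repeatedly is not available inside $b$, and a single traversal regenerating $X$ once does not contradict sufficient openness. The correct justification is more direct: since your prefix types include the regeneration counts $c(\sigma~\psi,X)$, and these counts strictly increase along any $\xrightarrow{X}$-segment on which $X$ occurs, the segment of $b$ joining a prefix to its type-equal ancestor cannot regenerate $X$ at all; hence the folded model has \emph{no} $X$-regenerating cycles, and the same rank induction as in $(1)\Rightarrow(2)$ applies verbatim. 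With that substitution your argument goes through.
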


Here we state and prove certain definitions and lemmata that will be used in the proof of Theorem \ref{thm:tableaux}.
For an agent $\al \in \act$ and a state $s$ in a model $(W,R,V)$, let $Reach_\al(s)$ be the states that are reachable in $W$ by $R_\al$.
We also use $R|_S$ for the restriction of a relation $R$ on a set $S$.

For a logic $\LG$,we say that a state $s$ in a model $\M = (W,R,V)$
is $\LG$ \emph{flat} when, for every $\al \in \act$ for which $\LG(\al)$ has constraint $5$, there is a set of states $W_0$, such that:
\begin{itemize}
\itemsep0em 
	\item 
	$Reach_\al(s) = \{s\} \cup W_0$;
	\item
	$R_\al|_{Reach_\al(s)} = E_0 \cup E_1$, where 
			\subitem 
	$E_0 \subseteq \{s\} \times W_0$ and 
			\subitem 
	$E_1 = W_0^2$; and 
	\item if $\LG(\al)$ has constraint $T$, or $E_0\neq \emptyset$ and $\LG(\al)$ has constraint $B$, then $s \in W_0$.
\end{itemize}
%
\begin{lemC}[\cite{nagle_thomason_1985,Halpern2007Characterizing}]\label{lem:NImodels}
	Every pointed $\LG$-model 
	is bisimilar to 
	$\LG$-model whose states are all $\LG$ flat.
\end{lemC}

We are now ready to proceed with our proof of Theorem \ref{thm:tableaux}. 
\begin{proof}[Proof of Theorem \ref{thm:tableaux}]	
This proof is in three parts. 

\textbf{We first prove that statement 1 implies statement 2}
		  To this end, let $b$ be a maximal open branch in the tableau for $\varphi$.
		We construct a $\klogic_k^\mu$-model $\M = (W,R,V)$ for $\varphi$ in the following way.
		Let $W$ be the set of prefixes that appear in the branch, and 
		let, for each $\al \in \act$, 
		$$R_\al^0 = \{ (\sigma,\sigma\nStt{\al}{\psi}) \in W^2 \} \cup \left\{ (\sigma,\sigma) \in W^2 \mid \LG(\al) \text{ has }
		\begin{array}{l}
			\text{ reflexive frames, or } \\
			\text{ serial frames and } \\ ~\forall \psi.~\sigma\nStt{\al}{\psi} \notin W^2
		\end{array} 
		\right\};$$
		$R_\al^1$ is the symmetric closure of $R_\al^0$, if $\LG(\al)$ has symmetric frames, and it is $R_\al^0$ otherwise;
		$R_\al^2$ is the euclidean closure of $R_\al^1$, if $\LG(\al)$ has euclidean frames, and it is $R_\al^1$ otherwise;
		and finally, 
		$R_\al$ is the transitive closure of $R_\al^2$, if $\LG(\al)$ has transitive frames, and it is $R_\al^2$ otherwise.
		By Lemma \ref{lem:conditionsarepreserved}, $R_a$ satisfies all the necessary closure conditions.
		We also set $V(p) = \{ \sigma \in W\mid \sigma~p \text{ appears in the branch} \}$. 
		
		It is now possible to prove, by straightforward induction, that for every subformula $\psi$ of $\varphi$, if $\sigma~\psi$ appears in the branch, then for any environment 
		$\rho$, such that $\{ \sigma' \in W\mid \sigma~\psi \xrightarrow{X}^* \sigma'~X 
		\}\subseteq\rho(X)$, \changeJ{$\sigma \in \trueset{\psi,\rho}$}.
		The only interesting cases are fixed-point formulas, so let $\psi = \mx X.\psi'$. Let $S_X$ be the set of prefixes of $X$ in the branch. We can immediately see that if $\sigma~X$ appears in the branch, then so does $\sigma~\psi'$, and therefore, by the inductive hypothesis,
		\changeJ{$S_X \subseteq \trueset{\psi',\rho[X\mapsto S_X]}$}. 
		From the semantics in Table \ref{table:semantics}, 
		$\sigma \in \trueset{\psi,\rho}$.
		
		On the other hand, if $\psi = \mn X.\psi'$, then 
		we prove that if $\sigma \notin S \subseteq W$, then
		$S\not\supseteq\trueset{\psi',\rho[X\mapsto S]}$.
		Let 
		\changeJ{$\Psi = \{ \sigma'~\chi \mid \sigma~\psi' \xrightarrow{X}^* \sigma'~\chi  \text{ \and } \sigma' \notin S \}$}. 
		We know that $\sigma~\psi' \in \Psi$, so $\Psi\neq \emptyset$.
		There are no infinite $\xrightarrow{X}$-paths in the branch, so there is some $\sigma'~\psi' \in \Psi$, such that $\sigma'~\psi' \not \xrightarrow{X}^+ \sigma''~\psi'$ for any $\sigma''$.
		Then, we see that $\{ \sigma'' \in W\mid \sigma'~\psi \xrightarrow{X}^* \sigma''~X 
		\}\subseteq S $, because if $\sigma'~\psi \xrightarrow{X}^* \sigma''~X$, then $\sigma'~\psi' \xrightarrow{X}^* \sigma''~\psi'$. 
		But then, by the inductive hypothesis, $\sigma' \in \trueset{\psi',\rho[X\mapsto S]}$, and therefore $S\not\supseteq\trueset{\psi',\rho[X\mapsto S]}$, which was what we wanted to prove.
		
\textbf{We now prove that 2 implies 3,} when
	there is a 
	$\kappa:\nat \to \nat$, such that  every $\LG$-satisfiable formula $\varphi$ is satisfied in a model with at most $\kappa(|\varphi|)$ states. 
	       	Assume that $\M = (W,R,V)$ be a $\logicize{L}$-model, $w\in W$, $(\M,w)\models \varphi$, and $W$ has at most $\kappa(|\varphi|)$ states. 
		In the rest of the proof, we fix a lfp-finite dependency relation on $\M$ and $\varphi$, according to Theorem \ref{thm:lfp-finite-dep}, and use it to construct an $\LG$-tableau with a sufficiently open branch for $\varphi$.
		The tableau starts with $\varepsilon~\varphi$ and we can keep expanding this branch to a sufficient one using the tableau rules, such that every prefix is mapped to a state in $W$, whenever $\sigma$ is mapped to $u$, $(\M,u) \models_\rho \psi$ for every $\psi \in \form(\sigma)$, and if $\sigma\nStt{\al}{\psi}$ to $v$, then $u R_\al v$; furthermore, this can be done by following the lfp-finite dependency relation.
		This is by straightforward induction on the application of the tableau rules. A special case are the agents with euclidean accessibility relations, for which we can use 
		Lemma \ref{lem:NImodels}.
		It is not hard to see that in this way we generate a set of branches that are not  propositionally closed.
		Furthermore, since the tableau rule applications follow a lfp-finite dependency relation and $W$ has at most $\kappa(|\varphi|)$ states, it is not hard to see that
		for every least-fixed-point variable $X$, on every
		$\xrightarrow{X}$-path,  $X$ appears at most $\kappa(|\varphi|)$ times.

 \textbf{We now prove that 2 implies 1,} without assuming the finite model property for $\LG$. To do this, we follow exactly the reasoning of the implication from $2$ to $3$, without assuming that the model has $\kappa(|\varphi|)$ states.

		\textbf{Finally, we  prove that 3 implies 1,}  when
	there is a 
	$\kappa:\nat \to \nat$, such that  every $\LG$-satisfiable formula $\varphi$ is satisfied in a model with at most $\kappa(|\varphi|)$ states. 
		To do so, we assume that there is a sufficient open branch $b$ in a tableau for $\varphi$ and we demonstrate that $\varphi$ has a maximal tableau with an open branch --- specifically, we construct such an open branch from $b$.
		
		We call a prefix $\sigma$ a leaf when there is no $\sigma'\neq \sigma$ in the branch, such that $\sigma \prec \sigma'$; we call $\sigma$ productive when a tableau rule on a formula $\sigma~\psi$ in the branch can produce a new prefix.
		We call $\sigma$ ready if it is of the form $\sigma'\nStt{\al}{\psi_1}\nStt{\al}{\psi_2}$, or if $\sigma$ is not $\al$-flat for any $\al \in \act$.
		
		For each $\sigma~\psi$ in $b$ and each least-fixed-point variable $X$, let 
		$$c(\sigma~\psi, X) = \max \left\{ n  ~~\Big|~~ \text{there is a }\xrightarrow{X}
		\begin{array}{l}\text{-path that ends in }\sigma~\psi, \\ \text{where } X \text{ appears }n \text{ times} 
		\end{array} \right\}.$$
		Since $b$ is not sufficiently closed, always $c(\sigma~\psi,X)\leq \kappa(|\varphi|)$.
		We use
		the notation $\sigma \sim \sigma'$ to mean that $\sigma=\sigma_1\nStt{\al}{\psi_1}$, $\sigma'=\sigma_2\nStt{\al}{\psi_2}$ for some $\sigma_1,\sigma_2,\al$, and $\form(\sigma)=\form(\sigma')$;
		and the notation $\sigma \equiv \sigma'$ to mean that 
  \begin{itemize}
      \item $\sigma \sim \sigma'$,  
      \item for every least-fixed-point variable $X$ and $\psi \in \form(\sigma)$,
		$c(\sigma~\psi,X)=c(\sigma'~\psi,X)$, 
		and 
        \item 
		either  both $\sigma$ and $\sigma'$ are ready, 
		or 
		both $\sigma$ and $\sigma'$ are not ready and \changeJ{$\sigma_1\sim\sigma_2$}, where $\sigma=\sigma_1\nStt{\al}{\psi}$ and $\sigma'=\sigma_2\nStt{\al}{\psi'}$.
  \end{itemize}
		We then say that $\sigma$ and $\sigma'$ are equivalent.

            The $\equiv$-equivalence class of a ready prefix $\sigma$ is determined by $c(\sigma~\psi,X)$ for every $\psi$ and $X$; by the last action $\al \in \act$ that appears in $\sigma$; and by $\Phi(\sigma) \in 2^{\sub(\varphi)}$.
            Therefore, there are at most $|\act|\cdot \kappa(|\varphi|)^{|\varphi|^2}\cdot 2^{|\varphi|}$ equivalence classes of ready prefixes.
		
		
		Since $b$ is sufficiently open, every productive leaf $\sigma$ in the branch has 
             more than $|\act|\cdot \kappa(|\varphi|)^{|\varphi|^2}\cdot 2^{|\varphi|+1}$ ancestors, and at least half of these must be ready. 
             Therefore, by the Pigeonhole Principle, every productive leaf $\sigma$ has  
            an ancestor $e(\sigma)$ that has 
		a
		distinct, ready, and $\equiv$-equivalent ancestor; let $s(\sigma)$ be 
		such an ancestor of $e(\sigma)$.
		We further assume that $e(\sigma)$ is the $\prec$-minimal ancestor of $\sigma$ with these properties.
		We note that for any two productive leaves $\sigma_1$ and $\sigma_2$, 
		if  $e(\sigma_1)\prec \sigma_2$, then $e(\sigma_1) = e(\sigma_2)$.
		
		
		Let $b_0$ be the branch that results by removing from $b$ all prefixed formulas of the form $\sigma\nStt{\al}{\psi_1}~\psi_2$, where $e(\sigma') \prec\sigma$ for some productive leaf $\sigma'$.
		Observe that $b_0$ is such that each of its productive leaves has an equivalent ancestor that is not a leaf.
		%
		To complete the proof, it suffices to show how to extend any branch $b_i$, 
		where each of its productive leaves $\sigma$ has an equivalent proper ancestor $s(\sigma)$, to a branch $b_{i+1}$ that preserves this property, and has an increased minimum length of its productive leaves.
		To form $b_{i+1}$, simply add to $b_i$ all formulas of the form 
		$\sigma.\sigma'~\psi$, where $s(\sigma).\sigma'~\psi$ appears in $b_i$.
		%
 \end{proof}

\begin{cor}
	$\logicize{L}$-tableaux are sound and complete for $\logicize{L}$.
\end{cor}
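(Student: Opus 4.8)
The plan is to derive this corollary from the equivalence of statements (1) and (2) of Theorem~\ref{thm:tableaux}, which holds for an arbitrary logic $\LG$ and is therefore available here without the finite-model assumption. Recall that a tableau proof of $\varphi$ is a closed $\LG$-tableau for $\neg\varphi$; \emph{completeness} asks that every $\LG$-valid formula admit such a proof, and \emph{soundness} is the converse.

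For completeness I would argue as follows. If $\varphi$ is $\LG$-valid, then $\neg\varphi$ is not $\LG$-satisfiable, so by the contrapositive of the implication $(1)\Rightarrow(2)$ of Theorem~\ref{thm:tableaux}, $\neg\varphi$ has no maximal $\LG$-tableau with an open branch. I would then build a maximal $\LG$-tableau for $\neg\varphi$ by exhaustively and fairly applying the rules of Table~\ref{tab:tableau}; since the system is non-terminating in general this tree may be infinite, but every one of its branches must then be closed, so the tableau is closed and is a tableau proof of $\varphi$. The only point to check here is that such a maximal object exists and is admissible as a tableau, which is routine.

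For soundness the plan is to reuse, on a given closed tableau $T$ for $\neg\varphi$, the model-tracing construction from the proof of the implication $(2)\Rightarrow(1)$ of Theorem~\ref{thm:tableaux}. Assuming towards a contradiction that $(\M,w)\models\neg\varphi$ for some $\LG$-model, I would fix an lfp-finite dependency relation on $\M$ and $\neg\varphi$ (Theorem~\ref{thm:lfp-finite-dep}) and trace a path through $T$ starting at the root $\varepsilon~\neg\varphi$, which is realised by $(\M,w)$, always descending into a child branch that remains realised by $\M$ along this dependency relation. This is possible at every node because each tableau rule is locally sound: the premise of $(\textsf{or})$, if realised, has a realised conclusion on at least one side, and every non-branching rule — including $(\textsf{B5})$, $(\textsf{B55})$, $(\textsf{D5})$, $(\textsf{D55})$ for euclidean agents, via Lemma~\ref{lem:NImodels} — preserves realisability. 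The resulting maximal path is a branch $b$ of $T$; being realised, $b$ is not propositionally closed, and, because we follow the lfp-finite dependency relation, no least-fixed-point variable can occur infinitely often on a $\xrightarrow{X}$-path of $b$, so $b$ is not fixed-point closed either. Thus $b$ is open, contradicting the closedness of $T$, and $\neg\varphi$ must be unsatisfiable.

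The main obstacle is the fixed-point-closed case of the soundness argument: one must set up the notion of "a tableau branch realised by $\M$" so that it records enough fixed-point bookkeeping — concretely, so that the tableau's $\xrightarrow{X}$ relation embeds into an lfp-finite dependency relation of $\M$ and $\neg\varphi$ — to preclude infinite regeneration of a least-fixed-point variable. This is exactly the ingredient isolated in Theorem~\ref{thm:lfp-finite-dep} and already exploited in the proof of Theorem~\ref{thm:tableaux}, so once that correspondence is stated carefully the remaining work is routine rule-by-rule verification. I would also flag that soundness does not follow from the equivalence $(1)\Leftrightarrow(2)$ by pure logic alone: knowing that \emph{one} tableau for $\neg\varphi$ is closed is a priori weaker than knowing that $\neg\varphi$ has no maximal tableau with an open branch, which is precisely why the local-soundness/tracing step through $T$ is genuinely needed rather than a mere restatement of Theorem~\ref{thm:tableaux}.
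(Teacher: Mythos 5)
Your proposal is correct and follows the route the paper intends: the corollary is stated without a separate proof because it is meant to fall out of the equivalence of statements (1) and (2) of Theorem~\ref{thm:tableaux}, with completeness obtained by building a maximal tableau for $\neg\varphi$ (all of whose branches must then be closed) and soundness by re-running the model-tracing construction from the proof of the implication from (2) to (1) on the given closed tableau, using the lfp-finite dependency relation of Theorem~\ref{thm:lfp-finite-dep} to rule out fixed-point closure of the traced branch. Your observation that the soundness direction is not a purely formal consequence of the biconditional, since it requires tracing through an \emph{arbitrary} given tableau rather than the one constructed in the theorem's proof, is a legitimate point of care that the paper leaves implicit, but the machinery you use is exactly the paper's.
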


\section{The case of \texorpdfstring{$\kf^\mu$}{K4\textasciicircum mu}}
\label{sec:KfourMu} 

In this section, we prove that the satisfiability problem for the (single-agent) logics $\kf^\mu$, $\df^\mu$, and $\sr^\mu$ is in \PSPACE, using an optimized decision procedure based on the tableaux from Section \ref{sec:multi}.
To achieve this, we will examine how transitivity allows us to restrict the parts of a branch that we need to examine, in order to verify that a formula is satisfiable.

 
The tableaux rules we use are shown in Table \ref{tab:tableau_for_four}, and are 
the restriction of 
the rules defined in Table \ref{tab:tableau} to $\kf^\mu$, $\df^\mu$, and $\sr^\mu$ \changeJ{for the single-agent setting}.
We note that as the prefixes of the tableau represent states in a satisfying model, the combination of rules (\textsf{B}) and (\textsf{4}) ensures that from $\Box \varphi$ we introduce $\varphi$ to all subsequent prefixes, as one would want in a transitive frame.
\alwaysRootAtBottom
\begin{table}[t]
	{\centering 
\begin{gather*}
		\AxiomC{$\sigma~\pi X.\varphi$}
		\RightLabel{(\textsf{fix})}
		\UnaryInfC{$\sigma~\varphi$}
		\DisplayProof \quad
		\AxiomC{$\sigma~X$}
		\RightLabel{(\textsf{X})}
		\UnaryInfC{$\sigma~\fix(X)$}
		\DisplayProof 
		\quad
		\AxiomC{$\sigma~\varphi \lor \psi$}
		\RightLabel{(\textsf{or})}
		\UnaryInfC{$\sigma~\varphi\mid\sigma~\psi$}
		\DisplayProof \quad		
		\AxiomC{$\sigma~\varphi \land \psi$}
		\RightLabel{(\textsf{and})}
		\UnaryInfC{$\sigma~\varphi$}
		\noLine
		\UnaryInfC{$\sigma~\psi$}
		\DisplayProof 
		\\[4ex]
	\AxiomC{$\sigma~\Box\varphi$}
	\RightLabel{(\textsf{B})}
	\UnaryInfC{$\sigma\nStt{}{\psi}~\varphi$}
	\DisplayProof \quad
	\AxiomC{$\sigma~\Diamond\varphi$}
	\RightLabel{(\textsf{D})}
	\UnaryInfC{$\sigma\nStt{}{\varphi}~\varphi$}
	\DisplayProof~ \quad
	\AxiomC{$\sigma~\Box\varphi$}
	\RightLabel{(\textsf{d})}
	\UnaryInfC{$\sigma\nStt{}{\varphi}~\varphi$}
	\DisplayProof~\quad
	\AxiomC{$\sigma~\Box\varphi$}
	\RightLabel{(\textsf{4})}
	\UnaryInfC{$\sigma\nStt{}{\psi}~\Box\varphi$}
	\DisplayProof~\quad
	\AxiomC{$\sigma~\Box\varphi$}
	\RightLabel{(\textsf{t})}	
	\UnaryInfC{$\sigma~\varphi$}
	\DisplayProof
	\\
\end{gather*}  }
\\[2ex]
\noindent
where, for rules (\textsf{B}) and (\textsf{4}), 
\change{$\sigma\nStt{}{\psi}$} has already appeared in the branch.
\noindent
\caption{The tableau rules for $\logicize{L} \in \{\logicize{\kf}^\mu, \df^\mu, \sr^\mu\}$.}
\label{tab:tableau_for_four}
\end{table}

Given a branch $b$ and prefix $\sigma \in \sigma(b)$, we define $b[\sigma:] = \{ \sigma'~\psi \mid \sigma.\sigma'~\psi \in b \}$. 
A branch $b$ for $\varphi$ is called \emph{repeating} if 
$b[\sigma\diam{\psi}:] = b[\sigma\diam{\psi}\sigma'\diam{\psi}:]$
for every $\sigma\diam{\psi}\sigma'\diam{\psi} \in \sigma(b)$.

\begin{lem}\label{lem:4-repeating-branch}
	Let $\LG \in \{ \kf^\mu, \df^\mu, \sr^\mu \}$.
	A formula $\varphi$ is $\LG$-satisfiable if and only if 
	it has a maximal open, repeating branch $b$.
\end{lem}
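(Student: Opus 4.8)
The plan is to prove the two implications separately, with the ``if'' direction being immediate and the converse carrying all the work.

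For the ``if'' direction: a maximal open branch $b$ for $\varphi$ is in particular a maximal $\logicize{L}$-tableau with an open branch, so by the equivalence of statements (1) and (2) in Theorem~\ref{thm:tableaux}, $\varphi$ is $\LG$-satisfiable; the repeating property is not needed here.

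For the ``only if'' direction, assume $\varphi$ is $\LG$-satisfiable. Since $\LG\in\{\kf^\mu,\df^\mu,\sr^\mu\}$ has only conditions among $D,T,4$, Corollary~\ref{cor:smallLmodel} supplies a computable $\kappa$ witnessing the finite-model property, so by (the construction behind) Theorem~\ref{thm:tableaux} there is a maximal open branch built by following a finite satisfying pointed model, with each prefix $\sigma$ interpreted by a state so that every $\psi\in\form(\sigma)$ holds there and the interpretations of $\sigma$ and $\sigma\nStt{}{\psi}$ are $\tran$-related. I would then transform this into a \emph{repeating} branch. The structural fact to exploit is that, for a single agent with axiom $4$, rules \textsf{B} and \textsf{4} (and \textsf{t}/\textsf{d} where present) propagate every box-subformula $\Box\chi$ that occurs at a prefix to all of its descendants; hence the set of such propagated box-subformulas is non-decreasing along any chain $\varepsilon\prec\sigma_1\prec\sigma_2\prec\cdots$, and, being a subset of the finite set $\subc(\varphi)$, it stabilises after at most $|\varphi|$ strict increases. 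After stabilisation, the saturated set $\form(\sigma_0\nStt{}{\psi})$ of a prefix ending with $\nStt{}{\psi}$ is determined by $\psi$, the stabilised box-set, and the disjunction/fixed-point choices (which I make according to the model); since there are only finitely many such sets and the model is finite, along any chain of prefixes ending with the same diamond label some saturated set, together with the state interpretation, must recur. So I build the branch by following the model and maintaining a folding rule: whenever \textsf{D} is about to analyse $\tau~\Diamond\psi$ and there is an ancestor $\tau_0\preceq\tau$ of the form $\tau_0'\nStt{}{\psi}$ whose propagated box-set, saturated formula set $\form(\tau_0)$, and interpreting state all coincide with those of $\tau\nStt{}{\psi}$, I copy the whole subtree below $\tau_0$ verbatim below $\tau\nStt{}{\psi}$ (relabelling prefixes) instead of expanding afresh. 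By the preceding observations this folding is eventually forced on every infinite chain of equally labelled diamond prefixes, the resulting branch is still a legitimate maximal $\logicize{L}$-tableau branch (a copied subtree is rule-saturated because the original one is, and state-consistency is preserved because both formulas and interpretations are copied), it is open (no propositional clash and no least-fixed-point variable occurring infinitely often on a $\xrightarrow{X}$-path, since the finite model has no such infinite descent), and by construction $b[\sigma\diam{\psi}:]=b[\sigma\diam{\psi}\sigma'\diam{\psi}:]$ for every nested pair of equal diamond labels, i.e.\ the branch is repeating.

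The step I expect to be the main obstacle is exactly reconciling the folding with \emph{maximality}: because \textsf{B} and \textsf{4} force box-subformulas of $\varphi$ and their consequences to accumulate as one descends, the saturated formula set at the first occurrence of a diamond label $\nStt{}{\psi}$ can genuinely differ from the one at a later, deeper occurrence, so one cannot simply fold onto an arbitrary earlier occurrence and must instead fold only onto an ancestor whose propagated box-set has already reached the value it has at the current prefix. The delicate point is thus the bookkeeping --- tracking the propagated box-set, the saturated formula set, and the interpreting state \emph{simultaneously}, and ordering the rule applications so that stabilisation is reached before any infinite chain of equally labelled diamonds is forced --- since the definition of a repeating branch demands that the copied subtrees be literally equal as sets of prefixed formulas.
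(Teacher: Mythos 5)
Your ``if'' direction and the skeleton of your converse (finite model via Corollary~\ref{cor:smallLmodel}, then the follow-the-model branch construction from the proof of Theorem~\ref{thm:tableaux}) match the paper. The gap is exactly at the step you flag as the ``main obstacle'', and it is not mere bookkeeping: your folding rule identifies a new prefix $\tau\nStt{}{\psi}$ with an ancestor $\tau_0'\nStt{}{\psi}$ \emph{only when} the propagated box-set, the saturated formula set and the interpreting state already coincide, and otherwise expands afresh. But the repeating property quantifies over \emph{every} $\sigma\diam{\psi}\sigma'\diam{\psi}\in\sigma(b)$, including pairs whose outer member $\sigma\diam{\psi}$ was created before your box-sets stabilised. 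For such a pair the two prefixes carry different formula sets (the portion of $b[\tau:]$ with empty suffix is precisely $\form(\tau)$, and the deeper occurrence has strictly more propagated box formulas), so $b[\sigma\diam{\psi}:]\neq b[\sigma\diam{\psi}\sigma'\diam{\psi}:]$ and the branch you build is simply not repeating. Eventual stabilisation does not rescue this, because the pre-stabilisation occurrences remain in the branch and are themselves ancestors of post-stabilisation occurrences with the same diamond label.

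The paper closes this hole with one idea your argument is missing: when rule (\textsf{D}) first creates $\sigma\diam{\psi}$, map it to a state $u_{\sigma\diam{\psi}}$ satisfying $\psi$ that lies in a \emph{maximally deep} strongly connected component of the DAG of SCCs of the finite transitive model. Any later witness $v$ for $\sigma\diam{\psi}\sigma'\diam{\psi}$ is then a $\psi$-state reachable from $u_{\sigma\diam{\psi}}$, hence by maximality it lies in the same SCC, so one can take $v=u_{\sigma\diam{\psi}}$ and replay the same subtree. The crucial difference from your scheme is that this choice is made at the \emph{first} occurrence of each diamond label along a chain, so there are no ``early, not yet stabilised'' occurrences left to reconcile afterwards. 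To repair your proof you should move the stabilisation into the choice of witness states (i.e., adopt the SCC selection, or some equivalent device guaranteeing that the very first $\diam{\psi}$-prefix already has the data of all its later same-labelled descendants) rather than trying to impose periodicity a posteriori by folding.
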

\begin{proof}
	Let $\varphi$ be satisfiable. 
	By Corollary \ref{cor:smallLmodel}, $\varphi$ has a finite model $\M = (W,R,V)$.
	We revisit the proof of Theorem \ref{thm:tableaux}, 
	and specifically the construction of the tableau from the finite model.
	We can ensure that when a new prefix $\sigma\diam{\psi}$
	is mapped to a state $u_{\sigma\diam{\psi}}$, that state is in a strongly connected component (SCC) that is maximally deep in the dag of the SCCs in $(W,R)$, such that $(\M,u_{\sigma\diam{\psi}}) \models \psi$.
	So, if an extension $\sigma\diam{\psi}\sigma'\diam{\psi}$ is mapped to some state $v$, then either the SCC of $u_{\sigma\diam{\psi}}$ is not maximal or $v$ and $u_{\sigma\diam{\psi}}$ are in the same SCC, and therefore we can ensure that $ v = u_{\sigma\diam{\psi}}$.
	The converse direction is immediate from Theorem \ref{thm:tableaux}.
\end{proof}

Repeating branches ensure that an algorithm that checks for the satisfiability of $\varphi$ only needs to consider prefixes of length linear with respect to $|\varphi|$. We can go further and consider branches with less redundancy: due to transitivity, the satisfaction of certain diamond formulas in some prefix may be further postponed.

\begin{defi}\label{def:fourfour}
We will use the following terms: 
\begin{enumerate}
    \item  We call a branch $b$ \emph{terse}, if 
        \begin{itemize}
              \item no formula appears twice in a prefix, and 
               \item for all $\psi_1 \neq \psi_2$, if $\sigma.\diam{\psi_1}, \sigma.\diam{\psi_2} \in \sigma(b)$, then $\Diamond\psi_2$ is not prefixed by any extension of $\sigma.\diam{\psi_1}$ in $b$. 
        \end{itemize}
    \item The branch $b$ is \emph{maximally terse}, if
        \begin{itemize}
            \item  it is terse, and
            \item closed under the tableau rules, except $(D)$ and $(d)$, but if $\sigma~\Diamond\psi \in b$ (\resp or $\sigma~\Box\psi \in b$, if $(d)$ is a tableau rule), then $\sigma'\diam{\psi}~\psi \in b$ for some $\sigma'$, where $\sigma'$ is an extension of $\sigma$ or $\sigma$ is an extension of $\sigma'$; and if $\sigma~\Box\psi \in b$ and $\sigma~\Diamond\psi' \in b$, then $\sigma'\diam{\psi'}~\psi \in b$ for any $\sigma$ that is an extension of $\sigma'\diam{\psi'}$.
        \end{itemize}
    \item A maximally terse branch $b$ is called $4$-open when $b$ remains open, after 
extending the definition of $\xrightarrow{X}$, so that additionally: 
\begin{itemize}
	\item $\sigma~\Diamond\psi \xrightarrow{X} \sigma'\diam{\psi}~\psi$ for every $\sigma'$, where $\sigma'$ is an extension of $\sigma$ or $\sigma$ is an extension of $\sigma'$; 
	\item $\sigma~\Box\psi \xrightarrow{X} \sigma'\diam{\psi}~\psi$ for every $\sigma'$, where $\sigma'$ is an extension of $\sigma$ or $\sigma$ is an extension of $\sigma'$, if $(d)$ is included in the tableau rules; and
	\item $\sigma~\Box\psi \xrightarrow{X} \sigma'\diam{\psi'}~\psi$ for every $\sigma'$, where $\sigma$ is an extension of $\sigma'\diam{\psi}$ and $\sigma~\Diamond\psi' \in b$.
\end{itemize}
\end{enumerate}
   
\end{defi}


\begin{lem}\label{lem:4-repeating-restricted-branch}
	Let $\LG \in \{ \kf^\mu, \df^\mu, \sr^\mu \}$.
	A formula $\varphi$ is $\LG$-satisfiable if and only if 
	it has a maximally terse, $4$-open branch.
\end{lem}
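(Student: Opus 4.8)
The plan is to prove the two implications by transferring between maximally terse, $4$-open branches and genuine $\LG$-models, reusing two constructions already in the paper: the model-to-tableau construction behind Theorem~\ref{thm:tableaux} and Lemma~\ref{lem:4-repeating-branch}, and the tableau-to-model construction of the ``$1\Rightarrow 2$'' part of Theorem~\ref{thm:tableaux}. The role of $4$-openness is to serve as the exact tableau-side shadow of an lfp-finite dependency relation (Theorem~\ref{thm:lfp-finite-dep}) once the transitive closure of the accessibility relation is taken into account: the extra clauses of the extended $\xrightarrow{X}$ in Definition~\ref{def:fourfour} record precisely the new ``$\Diamond$ witnessed far away'' and ``$\Box$ reaching a far-away witness'' dependencies created by transitivity and by the postponement of diamonds.

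For the direction from satisfiability to the existence of a maximally terse, $4$-open branch, I would start, by Corollary~\ref{cor:smallLmodel}, from a finite $\LG$-model $\M=(W,R,V)$ with $(\M,w)\models\varphi$, and fix an lfp-finite dependency relation on $\M$ and $\varphi$ (Theorem~\ref{thm:lfp-finite-dep}). I would then run the construction from the proof of Lemma~\ref{lem:4-repeating-branch}: build a branch depth-first following the dependency relation, map each prefix $\sigma$ to a state $u_\sigma$ with $(\M,u_\sigma)\models_\rho\chi$ for all $\chi\in\form(\sigma)$, and choose the witness of a diamond in a strongly connected component that is as deep as possible in the dag of SCCs of $(W,R)$. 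The new ingredients are: never record a formula twice in one prefix (a state makes only finitely many subformulas of $\varphi$ true, so this terminates every prefix, hence the whole branch); and, before applying rule $(D)$ (or $(d)$) to $\sigma~\Diamond\psi$, first check whether some prefix $\sigma'\diam{\psi}$ already on the current path or strictly below it is mapped to a state reachable from $u_\sigma$ --- which, because witnesses sit in maximally deep SCCs, always holds once the path is long enough --- and, if so, not apply the rule. Arranging the depth-first order so that this also delivers the purely syntactic terseness clause (no reoccurrence of $\Diamond\psi_2$ below a sibling $\sigma.\diam{\psi_1}$) is the delicate part here. Finally, $4$-openness holds because each extended edge $\sigma~\Diamond\psi\xrightarrow{X}\sigma'\diam{\psi}~\psi$, and similarly each extended $\Box$-edge, corresponds to a genuine edge of the dependency relation of $\M$ between the pairs over $u_\sigma$ and $u_{\sigma'\diam{\psi}}$: when $\sigma'\diam{\psi}$ lies strictly below $\sigma$ this is transitivity of $R$, and when it lies on or beside the path to $\sigma$ the two states share an SCC, so $R$ relates them both ways; hence an infinite $\xrightarrow{X}$-path with a least-fixed-point variable occurring infinitely often would contradict lfp-finiteness of the dependency relation on $\M$, while propositional openness is immediate since each $u_\sigma$ is a genuine state.

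For the converse, given a maximally terse, $4$-open branch $b$ for $\varphi$, I would build $\M=(W,R,V)$ with $W=\sigma(b)$, $V(p)=\{\sigma\mid\sigma~p\in b\}$, and $R$ the $\LG$-closure of the prefix-successor relation together with one back-edge $(\sigma,\sigma'\diam{\psi})$ for each deferred diamond $\sigma~\Diamond\psi$ whose chosen witness $\sigma'\diam{\psi}$ is not strictly below $\sigma$; by Lemma~\ref{lem:conditionsarepreserved}, $R$ still satisfies the conditions of $\LG$. I would then re-run the truth-lemma induction of the ``$1\Rightarrow 2$'' part of Theorem~\ref{thm:tableaux}, showing that for every $\sigma~\chi\in b$ and every environment $\rho$ with $\{\sigma'\mid\sigma~\chi\xrightarrow{X}^*\sigma'~X\}\subseteq\rho(X)$ one has $\sigma\in\trueset{\chi,\rho}$. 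The boolean, variable, $\Box$, and $\mx$ cases are as in Theorem~\ref{thm:tableaux}; the $\Diamond$ case uses maximal terseness (rule $(D)$ may have been skipped, but then a witness $\sigma'\diam{\psi}~\psi\in b$ exists, the added back-edge or transitivity puts it within $R$-reach of $\sigma$, and the last clause of maximal terseness guarantees the relevant box-formulas survive the detour); and the $\mn X.\psi'$ case uses exactly that $b$ is $4$-open, i.e.\ that there is no infinite $\xrightarrow{X}$-path in the extended sense with $X$ occurring infinitely often, which is what licenses choosing a $\xrightarrow{X}$-terminal element of the relevant set of prefixed copies of $\psi'$. The step I expect to fight with is pinning down $R$ so that deferred diamonds are truly witnessed and box-formulas propagate along the back-edges, and checking that with this $R$ the extended $\xrightarrow{X}$ of Definition~\ref{def:fourfour} indeed captures the dependency relation of $\M$ --- in particular its third, ``$\sigma~\Box\psi\xrightarrow{X}\sigma'\diam{\psi'}~\psi$'' clause; the alternative route for this direction, expanding $b$ into a maximal open repeating branch and invoking Lemma~\ref{lem:4-repeating-branch}, hits the same point, since controlling openness under that expansion is exactly what the extended $\xrightarrow{X}$ is designed for.
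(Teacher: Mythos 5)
Your converse direction is essentially a fleshed-out version of what the paper only gestures at (``construct a model as in the proof of Theorem~\ref{thm:tableaux}''), and the details you supply --- back-edges for deferred diamonds, the $\LG$-closure, and using $4$-openness to license the $\mn X$ case of the truth lemma --- are the right ones. Where you genuinely diverge is in the forward direction. The paper does \emph{not} rebuild a branch from the model with terseness enforced on the fly; it first invokes Lemma~\ref{lem:4-repeating-branch} to obtain a maximal, open, \emph{repeating} branch $b$, then takes a maximal terse sub-branch $b^-$ of $b$ and verifies the required properties abstractly: maximality of $b^-$ inside $b$ gives the ``every diamond eventually has a witness above or below'' clause, the repeating property of $b$ gives the box-propagation clause, and $4$-openness follows because every cycle of the extended $\xrightarrow{X}$ of Definition~\ref{def:fourfour} in $b^-$ unfolds (using rule $(\textsf{4})$ and the repeating property) to an infinite $\xrightarrow{X}$-path in $b$, contradicting openness of $b$. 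Your route instead ties $4$-openness directly back to the lfp-finite dependency relation on the model, which is morally the same fact one level down, but forces you to re-prove the content of Lemma~\ref{lem:4-repeating-branch} with extra bookkeeping.

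The spot you flag as ``the delicate part'' is a real obstruction to your version of the forward direction. The sibling clause of terseness forbids the \emph{formula} $\Diamond\psi_2$ from occurring anywhere below $\sigma.\diam{\psi_1}$ once $\sigma.\diam{\psi_2}$ exists as a sibling; it is not enough to refrain from applying rule $(\textsf{D})$ to such an occurrence. Since you cannot know, at the moment you decide whether to create the child $\sigma.\diam{\psi_2}$, whether $\Diamond\psi_2$ will later be regenerated (by (\textsf{and}), (\textsf{or}), (\textsf{fix}), or (\textsf{X})) somewhere inside the $\sigma.\diam{\psi_1}$ subtree, an on-the-fly depth-first construction has no sound local criterion for this choice. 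The paper's two-step route dissolves the problem: with the complete repeating branch $b$ in hand, one simply selects a maximal terse subset, and the redundant sibling subtrees can be discarded precisely because the repeating property guarantees their content recurs deeper down. I would recommend restructuring your forward direction around that post-processing step rather than trying to patch the direct construction.
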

\begin{proof}
	If $\varphi$ has a maximally terse, $4$-open repeating branch, we can construct a model for $\varphi$, similarly as in the proof of Theorem \ref{thm:tableaux}. 

	Conversely, Let $\varphi$ be satisfiable and let $b$ be a maximal, open, repeating branch for $\varphi$ by Lemma \ref{lem:4-repeating-branch}. 
	Let $b^-$ be a maximal 
	terse
	sub-branch of $b$.
	It is immediate that $b^-$ is closed under the tableau rules, except $(D)$ and $(d)$. 
	Due to the maximality of $b^-$, 
	if $\sigma~\Diamond\psi \in b^-$ (\resp or $\sigma~\Box\psi \in b$, if $(d)$ is a tableau rule), then $\sigma'\diam{\psi}~\psi \in b$ for some $\sigma'$, where $\sigma'$ is an extension of $\sigma$ or $\sigma$ is an extension of $\sigma'$ -- otherwise, $b^-$ can be extended to a larger sub-branch of $b$.
	Since $b$ is repeating, if $\sigma~\Box\psi \in b^-$ and $\sigma~\Diamond\psi' \in b^-$, then $\sigma'\diam{\psi'}~\psi \in b^-$ for any $\sigma$ that is an extension of $\sigma'\diam{\psi'}^-$.
	Therefore, $b^-$ is maximally terse.

	We use the extended  definition of $\xrightarrow{X}$
 as in the definition of $4$-open branches in Definition \ref{def:fourfour}.
	By the transitivity of $R$ and following the proof of Theorem \ref{thm:tableaux}, 
	since $b$ is repeating, and due to rule $(4)$,
        every $\xrightarrow{X}$-cycle in $b^-$ corresponds to an infinite $\xrightarrow{X}$-path in $b$, and therefore
	$b^-$ is $4$-open.
\end{proof}

\begin{cor}\label{cor:4-small-model}
	Let $\LG \in \{ \kf^\mu, \df^\mu, \sr^\mu \}$.
	A formula $\varphi$ is $\LG$-satisfiable if and only if 
	it is satisfied in a model of at most $2|\varphi|!-1$ states.
\end{cor}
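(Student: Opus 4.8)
The plan is to read the bound off directly from the tableau characterisation of satisfiability given by Lemma~\ref{lem:4-repeating-restricted-branch}. The right-to-left direction is immediate, since a model of at most $2|\varphi|!-1$ states is in particular an $\LG$-model. For the converse, suppose $\varphi$ is $\LG$-satisfiable. By Lemma~\ref{lem:4-repeating-restricted-branch}, $\varphi$ then has a maximally terse, $4$-open branch $b$; such a branch is repeating (vacuously, as no formula occurs twice in a prefix), so, exactly as in the model constructions in the proofs of Theorem~\ref{thm:tableaux} and Lemma~\ref{lem:4-repeating-restricted-branch}, one obtains an $\LG$-model of $\varphi$ whose set of states is the set $\sigma(b)$ of prefixes occurring in $b$. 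It therefore suffices to prove $|\sigma(b)|\le 2|\varphi|!-1$.

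To that end I would establish two facts about a maximally terse branch $b$. First, by the first clause of Definition~\ref{def:fourfour}, no formula occurs twice in a single prefix, so every prefix occurring in $b$ is a word $\nStt{}{\psi_1}\nStt{}{\psi_2}\cdots\nStt{}{\psi_k}$ whose labels $\psi_1,\dots,\psi_k$ are pairwise distinct. Second, every label $\psi_i$ is a \emph{proper} subformula of $\varphi$: the only rules of Table~\ref{tab:tableau_for_four} that create a new prefix are $(\textsf{D})$ and $(\textsf{d})$, which pass from $\sigma~\Diamond\psi$, respectively $\sigma~\Box\psi$, to $\sigma\nStt{}{\psi}~\psi$, and a routine induction over the rules shows that every formula occurring anywhere in $b$ lies in $\sub(\varphi)$ --- the only steps needing a remark being $(\textsf{fix})$, $(\textsf{X})$ and $(\textsf{4})$, which stay inside $\sub(\varphi)$ because $\fix(X)\in\sub(\varphi)$ for every variable $X$ bound in the closed formula $\varphi$ and because $(\textsf{4})$ merely re-introduces the same boxed formula. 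Hence $\Diamond\psi\in\sub(\varphi)$ or $\Box\psi\in\sub(\varphi)$, so $\psi\in\sub(\varphi)\setminus\{\varphi\}$. Combining the two facts, $\sigma(b)$ consists of words of pairwise-distinct letters over the $(|\varphi|-1)$-element alphabet $\sub(\varphi)\setminus\{\varphi\}$, so, writing $m=|\varphi|-1$, $|\sigma(b)|$ is at most $\sum_{k=0}^{m}\frac{m!}{(m-k)!}=m!\sum_{j=0}^{m}\frac{1}{j!}$. This last quantity equals $1$ when $m=0$ and is at most $m!\,e<3\,m!$ when $m\ge 1$; since $2|\varphi|!-1=2(m+1)m!-1\ge 4m!-1\ge 3m!$ for $m\ge 1$, and $2|\varphi|!-1=1$ when $m=0$, we conclude $|\sigma(b)|\le 2|\varphi|!-1$, which finishes the proof.

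The proof is short because Lemma~\ref{lem:4-repeating-restricted-branch} does the heavy lifting; the one place calling for care is the second fact above, namely checking that the fixed-point rules $(\textsf{fix})$ and $(\textsf{X})$ and the transitivity rule $(\textsf{4})$ never leave $\sub(\varphi)$, so that the labels along a prefix are genuinely drawn from a set of size $|\varphi|-1$. Everything else is an elementary counting estimate.
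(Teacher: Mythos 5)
The paper states this corollary without an explicit proof, and your argument is precisely the intended one: extract a maximally terse, $4$-open branch via Lemma~\ref{lem:4-repeating-restricted-branch}, note that terseness forces every prefix to be a sequence of pairwise-distinct proper subformulas of $\varphi$, and bound the number of such sequences over an alphabet of size $|\varphi|-1$ by $2|\varphi|!-1$. Both the reduction to the lemma and the counting estimate are correct, so nothing further is needed.
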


\begin{lem}\label{lem:4-terse-is-good}
	Let $b$ be a terse branch. If a $\xrightarrow{X}$-path in $b$ visits two prefixes, then one is an extension of the other.
\end{lem}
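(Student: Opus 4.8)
Lemma~\ref{lem:4-terse-is-good} asserts: in a terse branch $b$, if a $\xrightarrow{X}$-path visits two prefixes $\sigma_1$ and $\sigma_2$, then one of $\sigma_1,\sigma_2$ is an extension of the other (i.e. $\sigma_1 \prec \sigma_2$ or $\sigma_2 \prec \sigma_1$ or they are equal). Here $\xrightarrow{X}$ is the extended dependence relation from Definition~\ref{def:fourfour}, which includes the usual tableau-rule dependences plus the three extra clauses for $\Diamond\psi$ and $\Box\psi$ that let a diamond/box formula "jump" to a prefix that is comparable (via $\prec$) with the current one. Recall that in the single-agent transitive setting the tableau prefixes form a tree, so "comparable under $\prec$" is a genuine restriction.

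**Proof plan.** The natural approach is induction on the length of the $\xrightarrow{X}$-path, or equivalently, to prove the following invariant: whenever $\sigma~\chi \xrightarrow{X} \sigma'~\chi'$, then $\sigma$ and $\sigma'$ are $\prec$-comparable, and moreover the set of prefixes visited along any $\xrightarrow{X}$-path forms a $\prec$-chain. First I would check the one-step case by inspecting each tableau rule of Table~\ref{tab:tableau_for_four} together with the three extra clauses of Definition~\ref{def:fourfour}(3). For rules (\textsf{fix}), (\textsf{X}), (\textsf{or}), (\textsf{and}), (\textsf{t}), and (\textsf{d}) the prefix is unchanged, so $\sigma = \sigma'$. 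For rules (\textsf{B}), (\textsf{D}), (\textsf{4}) the new prefix is $\sigma\diam{\psi}$ or $\sigma\diam{\varphi}$, a one-step extension of $\sigma$, so $\sigma \prec \sigma'$. The three extra $\xrightarrow{X}$ clauses are explicitly of the form "$\sigma'$ is an extension of $\sigma$ or $\sigma$ is an extension of $\sigma'$", so comparability holds by fiat. Hence every single $\xrightarrow{X}$-step relates $\prec$-comparable prefixes.

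**The chain argument.** The subtle point is that $\prec$-comparability is not transitive as a binary relation, so a path $\sigma_0 \xrightarrow{X} \sigma_1 \xrightarrow{X} \cdots \xrightarrow{X} \sigma_n$ where consecutive terms are comparable need not have $\sigma_0$ comparable to $\sigma_n$ — unless we use terseness. This is where the main work lies. I would argue that along such a path the prefixes actually form a $\prec$-chain, by induction on $n$. Suppose $\sigma_0, \dots, \sigma_{n-1}$ is a chain and $\sigma_{n-1} \xrightarrow{X} \sigma_n$. If $\sigma_n$ is obtained by one of the prefix-extending rules (\textsf{B}), (\textsf{D}), (\textsf{4}), then $\sigma_{n-1} \prec \sigma_n$, so $\sigma_n$ extends every element of the chain and we are done. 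If $\sigma_n = \sigma_{n-1}$ there is nothing to prove. The only remaining possibility is one of the three extra clauses: here $\sigma_n$ is some prefix $\sigma'$ that is comparable to $\sigma_{n-1}$. If $\sigma_{n-1} \prec \sigma'$ we are again done. The dangerous case is $\sigma' \prec \sigma_{n-1}$, i.e.\ the path "jumps backwards". I would then invoke terseness: the jump happens because (say) $\sigma~\Diamond\psi \in b$ and $\sigma'\diam{\psi}~\psi \in b$ with $\sigma' \prec \sigma$; but a backward jump to $\sigma'\diam{\psi}$ that lies strictly inside the existing chain would produce, together with an earlier prefix $\sigma_i \succeq \sigma'\diam{\psi}$ on the chain, a situation where $\Diamond\psi$ (or the relevant formula) is prefixed both by $\sigma'\diam{\psi}$'s ancestor and by a sibling extension — contradicting the terseness clause "for all $\psi_1 \neq \psi_2$, if $\sigma.\diam{\psi_1}, \sigma.\diam{\psi_2} \in \sigma(b)$, then $\Diamond\psi_2$ is not prefixed by any extension of $\sigma.\diam{\psi_1}$" and the no-repetition clause. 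Working out exactly which terseness clause is violated in each of the three extra-clause cases is the crux of the argument; I expect one needs to distinguish whether the backward target $\sigma'\diam{\psi}$ equals one of the already-visited $\sigma_i$ (then use "no formula appears twice in a prefix" to bound the chain) or is a different sibling (then use the second terseness clause).

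**Expected main obstacle.** The hard part is precisely the bookkeeping in the last case: showing that a $\xrightarrow{X}$-path in a \emph{terse} branch cannot leave the $\prec$-chain it has built by jumping to an incomparable sibling prefix, using that terseness forbids the same diamond formula from reappearing on incomparable branches and forbids repeated formulas within a prefix. Once that is established, the lemma follows: any two prefixes visited by the path both lie on the single $\prec$-chain, hence are $\prec$-comparable. I would keep the write-up short by first stating the chain invariant as the real claim and then doing the three-way case analysis on the type of the last $\xrightarrow{X}$-step, citing terseness (Definition~\ref{def:fourfour}(1)) at the one place it is needed.
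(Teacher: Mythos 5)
The paper states Lemma~\ref{lem:4-terse-is-good} without proof, so there is no official argument to compare against; judged on its own terms, your proposal is a plan rather than a proof, and the step you defer is exactly the one that has to be carried out. Moreover, two of the claims you treat as immediate are in fact the places where terseness must already be invoked. First, for the extra clauses of Definition~\ref{def:fourfour}(3), the stated side condition makes $\sigma$ comparable with $\sigma'$, but the \emph{target} prefix is $\sigma'\diam{\psi}$, not $\sigma'$; when $\sigma' \prec \sigma$ you still have to rule out that $\sigma$ descends from $\sigma'$ through a sibling child $\sigma'\diam{\chi}$ with $\chi \neq \psi$, in which case $\sigma$ and $\sigma'\diam{\psi}$ would be incomparable. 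This is excluded precisely by the second terseness clause (once both children $\sigma'\diam{\chi}$ and $\sigma'\diam{\psi}$ exist, $\Diamond\psi$ may not be prefixed by any extension of $\sigma'\diam{\chi}$), so terseness is needed at the one-step level, not only in the chain argument; comparability does not hold ``by fiat.'' Second, the assertion that a forward step to $\sigma_n = \sigma_{n-1}\diam{\chi}$ ``extends every element of the chain'' is false as stated: chain elements that strictly extend $\sigma_{n-1}$ need not be extended by $\sigma_n$, only comparable with it, and establishing comparability again requires arguing that every previously visited strict descendant of $\sigma_{n-1}$ passes through the same child $\sigma_{n-1}\diam{\chi}$.

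The deferred crux is genuinely delicate and is not settled by the sketch you give. In particular, the third extra clause sends $\sigma~\Box\psi$ back to an ancestor prefix carrying the formula $\psi$; if $\psi$ is (or unfolds to) a diamond formula, the path may then take a (\textsf{D})-step at that ancestor, and the terseness clauses, which constrain only where \emph{diamond} formulas may be prefixed (the formula triggering the jump here is a box formula), do not obviously prevent the newly reached child from lying in a different subtree of that ancestor than the earlier-visited $\sigma$. To close this you need either a stronger invariant (for instance, tracking the $\prec$-maximal visited prefix and showing every subsequent target is comparable with it) or additional structure of (maximally) terse branches recording which formula generated each prefix in $\sigma(b)$. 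As written, ``I expect one needs to distinguish\dots'' concedes that the core of the argument is missing, so the proposal cannot be accepted as a proof.
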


\begin{thm}\label{thm:4-PSPACE}
	The satisfiability problem for $\kf^\mu$, $\df^\mu$, and $\sr^\mu$, is \PSPACE-complete.
\end{thm}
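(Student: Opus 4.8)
The lower bound is immediate, so the work is the \PSPACE\ upper bound, which I would obtain by a nondeterministic polynomial-space traversal of the tableaux of Section~\ref{sec:KfourMu}, exploiting the fact (Lemmas~\ref{lem:4-repeating-restricted-branch} and~\ref{lem:4-terse-is-good}) that for these transitive logics a maximally terse branch is finite, of polynomially bounded depth, and has its fixed-point dependency cycles confined to root-to-leaf chains.

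\textbf{Hardness.} For each $\LG\in\{\kf,\df,\sr\}$, the logics $\LG$ and $\LG^\mu$ are interpreted over exactly the same classes of frames, and a recursion-free formula has the same truth value at a pointed model no matter whether the ambient language permits recursion. Hence a recursion-free $\varphi$ is $\LG$-satisfiable if and only if it is $\LG^\mu$-satisfiable, and so the \PSPACE-hardness of $\kf$, $\df$, $\sr$ from Theorem~\ref{thm:ladhalp} transfers verbatim to $\kf^\mu$, $\df^\mu$, $\sr^\mu$.

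\textbf{Upper bound.} By Lemma~\ref{lem:4-repeating-restricted-branch} it suffices to decide whether $\varphi$ has a maximally terse, $4$-open branch. I would first record two size bounds. By terseness no formula occurs twice in a prefix, and by the rules of Table~\ref{tab:tableau_for_four} every component of a prefix is of the form $\diam{\chi}$ with $\Box\chi$ or $\diam{\chi}$ a subformula of $\varphi$; since there are fewer than $|\varphi|$ such $\chi$, every prefix has length $<|\varphi|$, so the prefix tree of a maximally terse branch is finite of depth $<|\varphi|$. Moreover, every formula appearing in a tableau for $\varphi$ is a subformula of $\varphi$, so for each prefix $\sigma$ the set $\form(\sigma)$ has size at most $|\varphi|$. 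Consequently a single root-to-leaf path of the branch together with the sets $\form(\sigma)$ along it occupies only polynomial space. The algorithm then nondeterministically builds a maximally terse branch depth-first: it keeps the current path $\varepsilon=\sigma_0\prec\sigma_1\prec\cdots\prec\sigma_m$ with the sets $\form(\sigma_i)$; it saturates $\form(\sigma_m)$ under the local rules (\textsf{fix}), (\textsf{X}), (\textsf{and}), (\textsf{B}), (\textsf{d}), (\textsf{t}), (\textsf{4}), guessing a disjunct at each use of (\textsf{or}); it rejects on a propositional clash; and whenever a diamond (or, under rule (\textsf{d}), a box) formula at $\sigma_m$ still needs a witnessing comparable prefix as required by Definition~\ref{def:fourfour}, it recurses into a suitable child $\sigma_m\diam{\psi}$, passing down $\psi$ together with all box formulas forced along $\sigma_0,\dots,\sigma_m$ by rule (\textsf{4})---all the ordering choices among children and the placement of witnesses being made nondeterministically, subject to the terseness constraints. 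Since the prefix tree is finite the recursion terminates, and since only one path is stored at a time it runs in polynomial space; by Savitch's theorem this places the problem in \PSPACE.

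\textbf{The crux: checking $4$-openness.} A finite maximally terse branch fails to be $4$-open precisely when it is propositionally closed, or when for some least-fixed-point variable $X$ the extended relation $\xrightarrow{X}$ of Definition~\ref{def:fourfour} has a cycle on which $X$ occurs. Here Lemma~\ref{lem:4-terse-is-good} is decisive: in a terse branch all prefixes visited along any $\xrightarrow{X}$-path are pairwise $\prec$-comparable, hence lie on a single root-to-leaf path of the branch; therefore every such cycle is contained among the vertices of the form $(\sigma_i,\psi)$ for the prefixes $\sigma_0,\dots,\sigma_m$ of some path, and becomes visible to the depth-first search once it has extended that path down to the deepest prefix of the cycle. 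Accordingly, the algorithm maintains along the current path, for each least-fixed-point variable $X$, the restriction of the (extended) graph $\xrightarrow{X}$ to the vertices $\{(\sigma_i,\psi)\mid \psi\in\form(\sigma_i)\}$---a graph with $O(m\cdot|\varphi|)$ vertices, hence polynomial---and rejects the branch as soon as this graph contains a cycle through a vertex $(\sigma_i,X)$. Soundness and completeness of this acceptance condition follow from Lemma~\ref{lem:4-repeating-restricted-branch} together with the confinement-to-chains observation above. I expect this last step---correctly formulating and justifying the local cycle check for least fixed points---to be the main obstacle; once it is in place, combining the algorithm with the matching lower bound yields \PSPACE-completeness.
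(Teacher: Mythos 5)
Your proposal is correct and follows essentially the same route as the paper's proof: both reduce the problem via Lemma~\ref{lem:4-repeating-restricted-branch} to the existence of a maximally terse, $4$-open branch, and both implement a nondeterministic polynomial-space depth-first exploration of the prefix tree in which the $\xrightarrow{X}$-cycle check is localized to a single root-to-leaf path by the comparability property of Lemma~\ref{lem:4-terse-is-good}. Your write-up merely spells out the size bookkeeping (prefix length and $\form(\sigma)$ bounds) slightly more explicitly than the paper does.
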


\begin{proof}
	\PSPACE-hardness results from Theorem \ref{thm:ladhalp}. We prove that $\kf^\mu$-, $\df^\mu$-, and $\sr^\mu$-satisfiability is in \PSPACE.
	We describe a non-deterministic polynomial-space algorithm that, given a formula $\varphi$, constructs and explores a maximally terse, $4$-open $\kf^\mu$-branch for $\varphi$, if one exists, and it verifies that the branch is $4$-open. By Lemma \ref{lem:4-repeating-restricted-branch}, this suffices for checking the satisfiability of $\varphi$.

	Our algorithm can generate and explore a maximally terse branch $b$ with a depth-first-search on its prefixes. 
	Note that the maximality and $4$-openness conditions for $b$ require only checking pairs of prefixes where one is an extension of the other.
	The algorithm keeps track of the diamond formulas (or also the box formulas, in case of rule (D)) that have generated a prefix.
	When the algorithm has visited a maximally long prefix $\sigma$, it can verify that it contains no $\xrightarrow{X}$-cycles where $X$ appears. It also verifies that no $\sigma'~\Diamond\psi$ is marked as having generated $\sigma'\diam{\psi}$, where $\sigma$ extends $\sigma'$ but not $\sigma'\diam{\psi}$.
	If either of those verifications fail, the algorithm rejects.
	The algorithm then backtracks until all these formulas have generated a prefix.
	If there is some maximally terse, $4$-open branch for $\varphi$, then the algorithm will succeed by following that branch.
	If, on the other hand, the algorithm succeeds, then the prefixed formulas it generates form a branch that is terse, as it marks the diamond formulas that generate a prefix, and verifies that the terseness condition is satisfied. 
	It is also not hard to see that the generated branch is maximally terse and $4$-open.
	The algorithm uses at most polynomial space to store a maximal prefix, the prefixed formulas, and the markings of these formulas, while it performs the depth-first-search exploration. The verifications it performs at each maximal prefix only require polynomial space.
\end{proof}

\section{Translations through Tableaux}
\label{sec:TtT} 

In this section, we give a uniform translation from every modal logic with fixed-points that we have defined in this paper to the $\mu$-calculus.
This translation thus establishes the decidability and an upper bound for the complexity of satisfiability for all the cases that were not provided an upper bound via the methods used in Section~\ref{sec:translations}.
The key observation in this section is that we already have prefixed tableau proof systems for our logics, with the following properties:
\begin{itemize}
    \item the subformula property, in that all formulas that appear in a tableau are prefixed subformulas of the starting formula;
    \item the tableau rules are local, in that they relate formulas prefixed either with the same prefix or with prefixes that differ very little from one another;
    \item the tableau proof systems produce a complete open branch when the starting formula is satisfiable. 
\end{itemize}
An open branch has a bounded-branching tree structure with respect to the prefixes, but that tree can be infinite.
The fact that a model or a branch can be infinite appears to cause difficulties in deciding the satisfiability of a logic, especially when symmetry or negative introspection is involved \cite{Dagostino2013S5}.
However, we observe that the fixed-points in the $\mu$-calculus, the logic that is the target of our translations, are explicitly designed to describe infinite models.
Our approach is to view an open branch as a model and to describe the property that the branch is open via a $\mu$-calculus formula.
The caveat is that our translation results in a formula that can be exponentially larger than the original, and thus yields a double-exponential upper bound for the satisfiability problem.
For the sake of simplicity, in what follows we first assume that axiom $5$ is not present.  
Afterward, we explain how to adjust our construction to handle agents with negative introspection.

\subsection{The case of logics without \NI}

For a given formula $\varphi$, we construct a new formula that describes an open tableau branch for $\varphi$. 
To that end, we introduce exponentially many new propositional variables, each a directed graph of subformulas of $\varphi$ and their negations, with edges that describe the tableau productions. 
We then describe in the $\mu$-calculus 
a propositionally open branch that is closed under the tableau rules, and then that the branch is open. 

Let $\varphi$ be a formula where all fixed-point variables are distinct, and let $mV$ be the set of least-fixed-point variables in $\varphi$.
Let $\G$ be the set of labelled directed graphs of the form 
$g = (\Phi,(\xrightarrow{X})_{X \in mV},\ell)$, where 
\begin{itemize}
    \item $\Phi \subseteq \sub(\varphi)$ is its set of vertices, 
    \item each edge is labelled with a variable from $mV$, and
    \item each node $\psi$ is additionally labelled with a set $\ell(\psi) \subseteq \{ in, out \}\times \{ top, bot \}$.
\end{itemize}  
We use the notation $\Phi_g$, $\xrightarrow{X}_g$, and $\ell_g$ for $\Phi$, $\xrightarrow{X}$, and $\ell$, respectively, when $g = (\Phi,{(\xrightarrow{X})}_{X \in mV},\ell)$.
We also abuse notation and write that $(\psi,u,v) \in g$ to mean that $\psi \in \Phi_g$ and $(u,v) \in \ell_g(\psi)$. 

\changeJ{
We use graphs from $\G$ to represent a part of a tableau branch that is prefixed with a single tableau prefix. Intuitively, we want a $g \in \G$ to represent a tableau prefix $\sigma$, such that $\Phi_g$ is the set of formulas prefixed with $\sigma$; $\xrightarrow{X}_g$ represents the relation $\xrightarrow{X}$ as defined in Section \ref{sec:multi}, restricted in the $\sigma$-prefixed part of the branch; and $(in,bot) \in \ell_g(\psi)$ (\resp $(out,bot) \in \ell_g(\psi)$) represents that $\psi$ may be affected by (\resp may affect) a subsequent prefix in the branch through some rule.
This intuition is reflected in the following definition, where we describe the graphs that are suitable to represent prefixes of a complete branch.
}

\begin{defi}\label{def:compatible}
We say that $g = (\Phi,(\xrightarrow{X})_{X \in mV},\ell)\in \G$ is \emph{\change{locally} compatible with the tableau rules} and write $com(g)$ when for every $\psi \in \Phi$ and $X \in mV$, where $\psi \neq Y$ for any $X<Y$, 
\changeJ{
the following conditions are satisfied. 
For every pair $\sigma, \sigma'$ of possible tableau prefixes and  $\psi'\in \sub(\varphi)$
}:
\begin{enumerate}
	\item 
 if 
 $\frac{\sigma~\psi}{\sigma~\psi'}$
  is a tableau rule, then 
  $\psi' \in \Phi$ and $\psi \xrightarrow{X} \psi'$;
	\item
 if
	$\frac{\sigma~\psi}{\genfrac{}{}{0pt}{1}{\sigma~\psi_1}{\sigma~\psi_2}}$
 is a tableau rule, then  
	$\psi_1 \in \Phi$, $\psi \xrightarrow{X} \psi_1$, $\psi_2 \in \Phi$, and $\psi \xrightarrow{X} \psi_2$;
	\item 
 if 
 $\frac{\sigma~\psi}{\sigma~\psi_1~\mid~ \sigma~\psi_2}$
 is a tableau rule, then 
 $\psi_1 \in \Phi$ and $\psi \xrightarrow{X} \psi_1$, or $\psi_2 \in \Phi$ and $\psi \xrightarrow{X} \psi_2$;
	\item 
 if 
 $\frac{\sigma~\psi}{\sigma'~\psi'}$
 is a tableau rule 
	and $\sigma'$ is a strict extension of $\sigma$, then 
	$(out,bot) \in \ell(\psi)$;
	\item 
 if 
 $\frac{\sigma~\psi}{\sigma'~\psi'}$
 is a tableau rule and
	$\sigma$ is a strict extension of $\sigma'$,  then 
	$(out,top) \in \ell(\psi)$; and
	\item 
	$\neg \psi \notin \Phi$ and $\false \notin \Phi$.
\end{enumerate}

\changeJ{Let $\G^t = \{ g \in \G \mid com(g) \}$. }
\end{defi}

\begin{defi}\label{def:achild}
Let $\al \in \act$, $\chi \in \sub(\varphi)$, and $g = (\Phi_g,(\xrightarrow{X}_g)_{X \in mV},\ell_g)$ and $h =(\Phi_h,{(\xrightarrow{X}_h)}_{X \in mV},\ell_h) \in G^t$.
We say that $g $ is \changeJ{a prospective} $\al\diam{\chi}$-child of $h$, and write $c_{\al\diam{\chi}}(h,g)$ when:
\begin{enumerate}
	\item $\diam{\al}\chi \in \Phi_h$, or  $[\al]\chi \in \Phi_h$ if $\mathsf{(d)}$ is a rule,  and $\chi \in \Phi_g$ and $(in,top) \in \ell_g(\chi)$;
	\item if $[\al]\psi \in \Phi_h$ and $\frac{\sigma~[\al]\psi}{\sigma.\al\diam{\psi_1}~\psi_2}$ is a tableau rule, then $\psi_2 \in \Phi_g$ and $(in,top) \in \ell_g(\psi_2)$; and 
	\item if $[\al]\psi \in \Phi_g$ and $\frac{\sigma.\al\diam{\psi_1}~[\al]\psi}{\sigma~\psi_2}$ is a tableau rule, then $\psi_2 \in \Phi_h$ and $(in,bot) \in \ell_h(\psi_2)$.
\end{enumerate}
We can similarly define that $g $ is \changeJ{a prospective} $\al$-child of $h$, and write $c_{\al}(h,g)$, by omitting the first condition.
\end{defi}

\begin{remark}
\changeJ{
We note that if one applies the conditions of Definition \ref{def:compatible} 
for $com(g)$ to a set of tableau formulas with a common prefix, they do not suffice to declare that 
the tableau rules have been applied correctly. 
As such, $com(g)$ ensures that if $g$ represents a prefix $\sigma$, then $\Phi_g$ is closed under the first line of Table \ref{tab:tableau}, and possibly rule $(\logicize{t})$.
The labeling $\ell_g$ marks the formulas that interact with other prefixes in the tableau -- the immediate extensions or prefixes of $\sigma$.
Similarly, in the context of $com(g)$ and $com(h)$,  $c_{\al\diam{\chi}}(h,g)$ ensures that 
if $h$ represents a tableau prefix $\sigma$, then the branch can be extended in a way such that $\sigma\al\diam{\chi}$ is represented by $g$.
If one views a tableau branch as the tree of its prefixes, labelled with sets of formulas, then the collection of pairs $h,g$, such that $c_{\al\diam{\chi}}(h,g)$, gives us the allowed $\al\diam{\chi}$-children from a node labelled with $\Phi_h$.
%
}

\changeJ{
Naturally, these predicates do not account for the conditions for applying the box tableau rules, such as (B) and (4). For that, we use the formula $\logicize{rules}$, introduced below (see also Lemma \ref{lem:b-to-model-representing}).}
\end{remark}

Observe that the size of $\G$ is at most exponential with respect to $\varphi$ and that it takes at most polynomial time to verify $com(g)$, $c_{\al\diam{\chi}}(h,g)$, and $c_{\al}(h,g)$.

\changeJ{We now start to describe a tableau branch in $\klogic^\mu_{|\sub(\varphi)|\cdot |\act|}$. We construct formulas that use each $g \in G^t$ }
as a propositional variable \changeJ{
and each $\al\diam{\psi}$, where $\al \in \act$ and $\psi \in \sub(\varphi)$, as an agent.
We use the notation $\act_{\varphi} = \{\al\diam{\psi} \mid \al \in \act \text{ and } \psi \in \sub(\varphi)\}$,  $[\al]\psi$ as a shorthand for $\bigwedge_{\chi \in \sub(\varphi)}[\al\diam{\chi}]\psi$, and dually, 
$\diam{\al}\psi$ as a shorthand for $\bigvee_{\chi \in \sub(\varphi)}\diam{\al\diam{\chi}}\psi$.
}
Let 
\begin{align*}
	\logicize{rules} := 
	\Inv
	\left(\bigvee_{g \in G^t} g \land \bigwedge_{\substack{g,h \in G^t \\ g \neq h}} \neg (g \land h) 
	~~\land~ 
 \begin{array}{l}
        \displaystyle
      \bigwedge_{\substack{g \in G^t \\ \diam{\al}\psi \in \Phi_g}} 
      \left(
      g \impl \diam{\al\diam{\psi}}\bigvee_{c_{\al\diam{\psi}}(g,h)} h
      \right)
	~~\land \\
       \displaystyle
        \bigwedge_{\substack{g \in G^t \\ [\al]\psi \in \Phi_g \\ \mathsf{(d)} \text{ is a rule}}} 
        \left(
        g \impl \diam{\al\diam{\psi}}\bigvee_{c_{\al\diam{\psi}}(g,h)} h
        \right)
      ~~\land 
      \\
       \displaystyle
      \bigwedge_{\substack{g \in G^t\\ \al \in \act}}
      \left(
      g \impl [\al]\bigvee_{c_{\al}(g,h)} h
      \right)
 \end{array}
	\right)
\end{align*}

\changeJ{
Formula $\logicize{rules}$ ensures that a model behaves in a way that its states correspond to prefixes in a maximal branch that is not propositionally closed.
We now make explicit the correspondence between the graphs in $G^t$ and the prefixes in a branch.}

Let $b$ be a tableau branch \changeJ{with dependence relations $\xrightarrow{X}$ for each $X \in mV$,} and $\sigma$ a prefix in $b$.
We define $g_b(\sigma) = (\Phi,(\xrightarrow{X}_g)_{X \in mV},\ell)$, where 
\begin{itemize} 
\item 
$\Phi = \{\psi \mid \sigma~\psi \in b\}$; 
\item 
for each $X \in mV$, 
\changeJ{
$\xrightarrow{X}_g = \{ (\psi_1,\psi_2) \in \Phi^2 \mid (\sigma~\psi_1,\sigma~\psi_2) \in \xrightarrow{X} \}$;} 
and 
\item 
for each $\psi \in \Phi$, $\ell(\psi)$ is such that:
\subitem 
$(in,top) \in \ell(\psi)$ (\resp $(in,bot) \in \ell(\psi)$) iff $\sigma'~\psi' \xrightarrow{X} \sigma~\psi$ for some $\sigma'~\psi' \in b$, where $\sigma'$ is  a strict prefix of $\sigma$ (\resp $\sigma$ is  a strict prefix of $\sigma'$); and
\subitem 
$(out,top) \in \ell(\psi)$ (\resp $(out,bot) \in \ell(\psi)$) iff $\sigma~\psi \xrightarrow{X} \sigma'~\psi'$ for some $\sigma'~\psi' \in b$, where $\sigma'$ is  a strict prefix of $\sigma$ (\resp $\sigma$ is  a strict prefix of $\sigma'$).
\end{itemize}
We may omit $b$ from $g_b(\sigma)$ and write $g(\sigma)$ instead, if $b$ is clear from the context.

\begin{defi}\label{def:model-represents-branch}
	Let $b$ be a tableau branch, 
	$\M = (W,R,V)$ be a model, using $\G^t$ as the set of propositional variables and $\act_\varphi$ as the set of agents, and $s \in W$.
	We say that $\M,s$ \emph{represent} $b$ when there is a 
	mapping $f: \sigma(b) \to W$, such that:
	\begin{enumerate}
		\item $f(\epsilon) = s$; 
		\item \label{item:states-describe-prefixes}
		for every $\sigma \in \sigma(b)$, 
		$g_b(\sigma) \in \G^t$ and 
		$(\M,f(\sigma)) \models 
            g_b(\sigma)$; 
		\item \label{item:states-describe-one-prefix}
		for every $\sigma \in \sigma(b)$ and 
		$g \in \G^t$, if 
		$(\M,f(\sigma)) \models 
            g$, then $g = g_b(\sigma)$; 
		\item \label{item:every-prefix-to-a-state}
		for every $\sigma.\al\diam{\psi} \in \sigma(b)$, $f(\sigma) R_{\al\diam{\psi}} f(\sigma.\al\diam{\psi})$; and
		\item \label{item:every-transition-to-a-prefix}
		for every $\sigma \in \sigma(b)$ and $t \in W$, $f(\sigma) R_{\al\diam{\psi}} t$ implies that $\sigma.\al\diam{\psi} \in \sigma(b)$ and $f(\sigma.\al\diam{\psi}) = t$.
	\end{enumerate}
	If the above conditions are satisfied, then we use the notation $f:b \xrightarrow{r} \M,s$.
\end{defi}

\begin{lem}\label{lem:rules-iff-ex-b--rep-impl-max-not-closed}
	For every model $\M = (W,R,V)$ and state $s \in W$:
	\begin{enumerate}
		\item 
		$(\M, s) \models \logicize{rules}$ if and only if $\M,s$  represent some branch that is maximal and not propositionally closed; and 
		\item if
		$(\M, s) \models \logicize{rules}$
		and 
		$\M,s$  represent $b$, then $b$ is maximal and not propositionally closed.
	\end{enumerate}
\end{lem}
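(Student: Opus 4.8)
The plan is to prove the second statement of the lemma first, then the ``if'' direction of the first statement, and finally the ``only if'' direction, which will re-use the second statement. Throughout I fix the logic $\LG$ (recall that $5$ is assumed absent) and use that $\logicize{rules}$ has the shape $\Inv(\chi)$ for a conjunction $\chi$ over $\act_\varphi$, so that $(\M,s)\models\logicize{rules}$ iff $\chi$ holds at every state reachable from $s$ along $\bigcup_{a\in\act_\varphi}R_a$.

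For the second statement, assume $(\M,s)\models\logicize{rules}$ and $f:b\xrightarrow{r}\M,s$. First, $b$ is not propositionally closed: if $\sigma~\false\in b$, or $\sigma~p,\sigma~\neg p\in b$, then $\false\in\Phi_{g_b(\sigma)}$, resp.\ $p,\neg p\in\Phi_{g_b(\sigma)}$, contradicting clause~(6) of $com(g_b(\sigma))$ in Definition~\ref{def:compatible}, which applies because $g_b(\sigma)\in\G^t$ by clause~(2) of Definition~\ref{def:model-represents-branch}. Next, $b$ is maximal: for every $\sigma~\psi\in b$ and every $\LG$-applicable rule with premise $\sigma~\psi$ we must see the conclusion(s) already in $b$. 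For the same-prefix rules $(\textsf{fix})$, $(\textsf{X})$, $(\textsf{and})$, $(\textsf{or})$, $(\textsf{t})$ this is clauses~(1)--(3) of $com(g_b(\sigma))$. For rules that descend to a child, $(\textsf{B})$, $(\textsf{D})$, $(\textsf{d})$, $(\textsf{4})$, one combines the relevant modal conjunct of $\chi$ at $f(\sigma)$ (to locate, for diamonds, or to quantify over, for boxes, the $R_{\al\diam{\psi}}$-successors of $f(\sigma)$) with clauses~(4)--(5) of Definition~\ref{def:model-represents-branch} (identifying these successors with $f(\sigma.\al\diam{\psi})$ and giving $\sigma.\al\diam{\psi}\in\sigma(b)$) and clause~(3) of that definition (to pin down $g_b(\sigma.\al\diam{\psi})$), and then reads off from clauses~(1)--(3) of Definition~\ref{def:achild} (for $c_{\al\diam{\psi}}$, resp.\ $c_\al$) that the rule's conclusion sits in $\Phi_{g_b(\sigma.\al\diam{\psi})}$. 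The rules that ascend to a parent, $(\textsf{b})$ and $(\textsf{b4})$, are dual, via clause~(3) of Definition~\ref{def:achild}. Hence $b$ is maximal and not propositionally closed.

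For the ``if'' direction of the first statement, assume $f:b\xrightarrow{r}\M,s$ with $b$ maximal and not propositionally closed; we show $(\M,s)\models\logicize{rules}$. By clause~(1) of Definition~\ref{def:model-represents-branch} and an induction using clause~(5), every state reachable from $s$ along $\act_\varphi$ is $f(\sigma)$ for some $\sigma\in\sigma(b)$, so it suffices to verify $\chi$ at each $f(\sigma)$. Its first two conjuncts hold by clauses~(2) and~(3) of Definition~\ref{def:model-represents-branch}. For the diamond conjunct: if $\diam{\al}\psi\in\Phi_{g_b(\sigma)}$ then $\sigma~\diam{\al}\psi\in b$, so by maximality rule~$(\textsf{D})$ has fired and $\sigma.\al\diam{\psi}\in\sigma(b)$, clause~(4) gives $f(\sigma)R_{\al\diam{\psi}}f(\sigma.\al\diam{\psi})$, and $c_{\al\diam{\psi}}(g_b(\sigma),g_b(\sigma.\al\diam{\psi}))$ follows from the definitions of $g_b$ and of its labelling together with maximality, which guarantees that all of $(\textsf{B})$, $(\textsf{4})$, $(\textsf{b})$ that apply have already produced exactly the formulas and dependency edges demanded by clauses~(1)--(3) of Definition~\ref{def:achild}. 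The $(\textsf{d})$ conjunct is analogous, and the $[\al]$ conjunct follows from clause~(5) and the same verification of $c_\al$.

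For the ``only if'' direction, assume $(\M,s)\models\logicize{rules}$. First prune $\M$: keep only the states reachable from $s$; at each such $t$ exactly one $g(t)\in\G^t$ is true; for each $a=\al\diam{\psi}\in\act_\varphi$, if $g(t)$ forces the disjunct $\diam{\al\diam{\psi}}\bigvee_{c_{\al\diam{\psi}}(g(t),h)}h$ of $\chi$ (because $\diam{\al}\psi\in\Phi_{g(t)}$, or because $[\al]\psi\in\Phi_{g(t)}$ and $(\textsf{d})$ is a rule), keep exactly one $a$-successor of $t$ realising that disjunct, otherwise delete all $a$-successors of $t$; finally re-restrict to reachable states. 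The resulting $\M'$ still satisfies $\logicize{rules}$ at $s$ — the propositional conjuncts are untouched, the diamond and $(\textsf{d})$ conjuncts are witnessed by the kept successors, and the box conjunct $g\impl[\al]\bigvee_{c_\al(g,h)}h$ cannot be broken by deleting transitions — and now every state of $\M'$ has at most one $a$-successor per $a\in\act_\varphi$. One then reads off a branch $b$ with $f(\epsilon)=s$, $\Phi_{g_b(\sigma)}=\Phi_{g(f(\sigma))}$ (with dependency edges and labels inherited from $g(f(\sigma))$) and $\sigma.\al\diam{\psi}\in\sigma(b)$ mapped to the unique $\al\diam{\psi}$-successor of $f(\sigma)$ in $\M'$ when it exists; using that $com$, $c_{\al\diam{\psi}}$ and $c_\al$ hold along $\M'$, one checks that $b$ is a genuine $\LG$-tableau branch and that $f:b\xrightarrow{r}\M',s$ meets clauses~(1)--(5) of Definition~\ref{def:model-represents-branch}, after which the second statement applied to $\M',s,b$ finishes. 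The main obstacle is the bookkeeping of the dependency relations $\xrightarrow{X}$ and of the $\{in,out\}\times\{top,bot\}$ labels: in the ``only if'' direction one must check that the graphs $g_b(\sigma)$ induced by the read-off branch coincide exactly with the $\G^t$-variables painted on the states (so that clauses~(2)--(3) of Definition~\ref{def:model-represents-branch} hold), and conversely that a represented branch exports its cross-prefix structure faithfully into the $c$-predicates of $\logicize{rules}$; the delicate rule cases in the maximality argument are the box rules $(\textsf{B})$, $(\textsf{4})$, $(\textsf{b})$, whose side conditions interact with clauses~(2)--(3) of Definition~\ref{def:achild}.
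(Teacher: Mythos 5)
Your proof follows the same overall architecture as the paper's: statement~2 is proved first by reading maximality off the conjuncts of $\logicize{rules}$ together with clauses~(2)--(5) of Definition~\ref{def:model-represents-branch} and the $c$-predicates, propositional openness comes from clause~(6) of $com$, and the two directions of statement~1 are handled by direct verification of the conjuncts at each $f(\sigma)$ and by a recursive read-off of a branch, respectively. The one genuine divergence is your pruning step in the ``only if'' direction. The paper builds the branch directly inside $\M$, choosing \emph{one} witness state per diamond/$(\mathsf{d})$ obligation and asserting that $\M,s$ itself represents the result; you instead pass to a sub-model $\M'$ with at most one $a$-successor per state and conclude that $\M',s$ represents a branch. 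Strictly speaking, your argument therefore establishes a statement different from the lemma's, which asserts that $\M,s$ represents a branch. That said, your instinct here is pointing at a real tension: clause~(5) of Definition~\ref{def:model-represents-branch} forces every $R_{\al\diam{\psi}}$-successor of $f(\sigma)$ to equal $f(\sigma.\al\diam{\psi})$, so a model in which some reachable state has two distinct $\al\diam{\psi}$-successors cannot represent any branch at all, and the paper's ``it is now straightforward to verify'' glosses over exactly this point. Your pruning repairs it, and the weaker conclusion (some model satisfying $\logicize{rules}\land\Inv(\neg\logicize{inf\_path})\land\bigvee g$ yields a represented open branch) is all that Theorem~\ref{thm:tableau-reduction} actually consumes. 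If you want to match the lemma as written, you should say explicitly that you are proving representation by a restriction of $\M$ and note why this suffices downstream, or else argue that clause~(5) should be read only over the transitions retained by the construction. Everything else --- the bookkeeping of $\xrightarrow{X}$, the labels, and the box-rule cases $(\mathsf{B})$, $(\mathsf{4})$, $(\mathsf{b})$, $(\mathsf{b4})$ via Definition~\ref{def:achild} --- is handled as in the paper.
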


\begin{proof}
	We first prove the second statement of the lemma.
	Let $b$ be represented by $(\M,s)\models \logicize{rules}$.
	As every prefix is mapped to a state that is reachable from $s$ (item \ref{item:every-prefix-to-a-state} of Definition \ref{def:model-represents-branch}), because of item \ref{item:states-describe-prefixes}, $b$ is locally maximal.
	We further observe that item \ref{item:every-prefix-to-a-state} of Definition \ref{def:model-represents-branch} and the conjunct  
	$$\bigwedge_{g \in \G^t} g \impl [\al]\bigvee_{c_{\al}(g,h)} h 
    $$ in $\logicize{rules}$
	ensures that $b$ is closed under all box rules except possibly for $\mathsf{(d)}$; and 
	item \ref{item:every-transition-to-a-prefix} of Definition \ref{def:model-represents-branch} and the conjuncts  
	$$\bigwedge_{\substack{g \in \G^t \\ \diam{\al}\psi \in \Phi_g}} g \impl \diam{\al}\bigvee_{c_{\al\diam{\psi}}(g,h)} h
        \land 
        \bigwedge_{\substack{g \in \G^t \\ [\al]\psi \in \Phi_g \\ \mathsf{(d)} \text{ is a rule}}} g \impl \diam{\al\diam{\psi}}\bigvee_{c_{\al\diam{\psi}}(g,h)} h
        $$ in $\logicize{rules}$
	ensures that $b$ is closed under all diamond rules and $\mathsf{(d)}$, if it is a rule of the logic for the tableau.
	Therefore, $b$ is maximal and not propositionally closed.

	We now prove the first statement of the lemma.
	If $(\M,s) \models \logicize{rules}$, then we demonstrate how to construct a branch $b$ that is represented by $\M,s$, together with the mapping $f$ of the prefixes to the states that are reachable from $s$, as in Definition \ref{def:model-represents-branch}. 
	As we have already proven the second part of the lemma, $b$ is maximal and not propositionally closed.
	Note that at every state $t$ that is reachable from $s$, exactly one variable from $\G^t$ is true, and we call it $g(t)$.
	The construction of $b$ is by straightforward recursion: 
	\begin{itemize}
		\item The branch 
		$b$ includes 
		the prefix $\varepsilon$, $f(\varepsilon) = s$, and $g(\varepsilon) = g(s)$.
		\item 
		For every $\sigma \in \sigma(b)$ and $\diam{\al}\psi \in \Phi_{g(f(\sigma))}$, $\sigma.\al\diam{\psi}$ is a prefix in $b$; from the conjunct 
		$$\bigwedge_{\substack{g \in \G^t \\ \diam{\al}\psi \in \Phi_g}} g \impl \diam{\al}\bigvee_{c_{\al\diam{\psi}}(g,h)} h$$  in $\logicize{rules}$, there is at least one state $t$, such that $f(\sigma) R_\al t$ and $c_{\al\diam{\psi}}(g(f(\sigma)),g(t))$.
		Let $f(\sigma.\al\diam{\psi}) = t$ for one of these states $t$, and $g(\sigma.\al\diam{\psi}) = g(t)$.
        \item 
		For every $\sigma \in \sigma(b)$ and $[\al]\psi \in \Phi_{g(f(\sigma))}$, if 
        $\mathsf{(d)}$ is a tableau rule, then
        $\sigma.\al\diam{\psi}$ is a prefix in $b$; from the conjunct 
		$$\bigwedge_{\substack{g \in \G^t \\ [\al]\psi \in \Phi_g \\ \mathsf{(d)} \text{ is a rule}}} g \impl \diam{\al\diam{\psi}}\bigvee_{c_{\al\diam{\psi}}(g,h)} h$$  in $\logicize{rules}$, there is at least one state $t$, such that $f(\sigma) R_\al t$ and $c_{\al\diam{\psi}}(g(f(\sigma)),g(t))$.
		Let $f(\sigma.\al\diam{\psi}) = t$ for one of these states $t$, and $g(\sigma.\al\diam{\psi}) = g(t)$.
	\end{itemize}
	It is now straightforward to verify that $\M,s$ represent $b$.

	We are now left to prove that if $\M,s$  represent some branch $b$ that is maximal and not propositionally closed, then $(\M,s) \models \logicize{rules}$.
	First observe that for every prefix $\sigma \in \sigma(b)$, $g(\sigma) \in \G^t$.
	Together with items \ref{item:states-describe-prefixes} and \ref{item:every-transition-to-a-prefix} of Definition \ref{def:model-represents-branch}, this means that 
	$(\M,s) \models \Inv
	\left(\bigvee_{g \in \G^t} g 
	\right)$.
	Item \ref{item:states-describe-one-prefix} of Definition \ref{def:model-represents-branch} yields that 
	\[(\M, s) \models 
	\Inv
	\left(\bigwedge_{\substack{g,h \in G^t \\ g \neq h}} \neg (g \land h) 
	\right); \]
    while items \ref{item:every-prefix-to-a-state} and \ref{item:every-transition-to-a-prefix} yield that 
	$$(\M,s )\models 
	\Inv
	\left(
	\begin{array}{l}
        \displaystyle
      \bigwedge_{\substack{g \in \G^t \\ 
      \diam{\al}\psi \in \Phi_g}} 
      \left(
      g \impl \diam{\al\diam{\psi}}\bigvee_{c_{\al\diam{\psi}}(g,h)} h
      \right)
	~~\land \\
       \displaystyle
        \bigwedge_{\substack{g \in \G^t \\ [\al]\psi \in \Phi_g \\ \mathsf{(d)} \text{ is a rule}}} 
        \left( g \impl \diam{\al\diam{\psi}}\bigvee_{c_{\al\diam{\psi}}(g,h)} h \right)
      ~~\land 
      \\
       \displaystyle
      \bigwedge_{\substack{g \in \G^t\\ \al \in \act}} 
      \left(
      g \impl [\al]\bigvee_{c_{\al}(g,h)} h
      \right)
 \end{array}
	\right).
	$$
    Since $Inc(\varphi_1) \land \Inv(\varphi_2)$ is logically equivalent to $\Inv(\varphi_1 \land \varphi_2)$, it follows that $(\M, s) \models \logicize{rules}$, which was what we wanted to demonstrate.
\end{proof}

\begin{lem}\label{lem:b-to-model-representing}
	For every propositionally open, maximal branch $b$, there is a pointed model $(\M,s)$, such that $\M,s$ represent $b$.
\end{lem}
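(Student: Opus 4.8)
The plan is to build, from the branch $b$, its \emph{canonical model}, whose states are exactly the prefixes occurring in $b$. Concretely, I would set $\M = (W,R,V)$ with $W = \sigma(b)$; for each agent $\al\diam{\psi}\in\act_\varphi$ I would let $\sigma \mathrel{R_{\al\diam{\psi}}} \sigma'$ hold precisely when $\sigma' = \sigma.\al\diam{\psi}\in\sigma(b)$; and for each $g\in\G^t$ I would put $V(g) = \{\sigma\in\sigma(b)\mid g_b(\sigma)=g\}$. Then I take $s=\varepsilon$ and let $f$ be the identity on $\sigma(b)$; note that the prefixes of $b$ form a tree under extension, so $R_{\al\diam{\psi}}$ is a well-defined partial function and $f$ is a bijection onto $W$. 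Because the tableau rules have the subformula property, every formula appearing in $b$ is a subformula of $\varphi$, so each $g_b(\sigma)$ is a well-defined labelled graph with vertex set $\Phi_{g_b(\sigma)}\subseteq\sub(\varphi)$, and $V$ assigns exactly one graph-variable to each state.

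The core of the argument is to show $g_b(\sigma)\in\G^t$ for every $\sigma\in\sigma(b)$, i.e.\ that $com(g_b(\sigma))$ holds, and I would discharge the clauses of Definition~\ref{def:compatible} as follows. For clauses (1)--(3): whenever $\frac{\sigma~\psi}{\cdots}$ is one of the rules that leaves the prefix unchanged --- (\textsf{fix}), (\textsf{X}), (\textsf{or}), (\textsf{and}), (\textsf{t}) --- maximality of $b$ forces the relevant conclusion(s) into $b$, hence into $\Phi_{g_b(\sigma)}$, and the definition of the tableau dependency relation contributes the edge $\sigma~\psi\xrightarrow{X}\sigma~\psi'$ exactly when $\psi$ is not of the form $Y$ with $X<Y$, which is precisely the side condition under which Definition~\ref{def:compatible} demands that edge (the case $\psi = X'$ and rule (\textsf{X}) is the delicate one, and there the condition ``$X\not< X'$'' on both sides matches). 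For clauses (4)--(5): whenever a rule maps $\sigma~\psi$ to $\sigma'~\psi'$ with $\sigma'$ a strict extension of $\sigma$ (rules (\textsf{B}), (\textsf{D}), (\textsf{d}), (\textsf{4})) or with $\sigma$ a strict extension of $\sigma'$ (rules (\textsf{B5}), (\textsf{b}), (\textsf{b4}), \dots), maximality again forces $\sigma'~\psi'\in b$ with $\sigma~\psi\xrightarrow{X}\sigma'~\psi'$, which by the definition of $g_b$ places $(out,bot)$, \resp $(out,top)$, in $\ell_{g_b(\sigma)}(\psi)$. Clause (6) --- that $\neg\psi\notin\Phi_{g_b(\sigma)}$ and $\false\notin\Phi_{g_b(\sigma)}$ for all $\psi\in\Phi_{g_b(\sigma)}$ --- follows from $b$ being propositionally open: $\false\notin\Phi_{g_b(\sigma)}$ is immediate, and a routine induction on $\psi$ (reducing a Boolean connective within the same prefix, and a box/diamond formula to a child prefix introduced by (\textsf{D}) together with the matching (\textsf{B})-consequence, while the distinctness of fixed-point variables and the uniqueness of each $\fix(X)$ prevent a fixed-point formula and its negation from both being demanded at a prefix) shows that if both $\psi$ and $\neg\psi$ occurred at $\sigma$ then some prefix would carry a propositional clash $p,\neg p$, contradicting openness.

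With $g_b(\sigma)\in\G^t$ in hand, the five conditions of Definition~\ref{def:model-represents-branch} are quick to verify. Condition~(1) is $f(\varepsilon)=s$ by definition; conditions~(4) and~(5) hold directly from the definition of $R$ and the fact that every prefix of a prefix occurring in $b$ also occurs in $b$; condition~(2) combines $g_b(\sigma)\in\G^t$ with $(\M,\sigma)\models g_b(\sigma)$, which holds since $\sigma\in V(g_b(\sigma))$ by construction of $V$; and condition~(3) holds because $V$ makes exactly one graph-variable true at each state. Hence $f:b\xrightarrow{r}\M,s$, establishing the lemma.

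I expect the main obstacle to be clause~(6) of $com$ and, more generally, the bookkeeping that matches the tableau's dependency relations $\xrightarrow{X}$ against the edges and $out$-labels demanded by Definition~\ref{def:compatible}: one must be careful that in a \emph{maximal} branch every applicable rule is treated as having been applied --- so that all required edges and labels are genuinely present in $g_b(\sigma)$, even when a conclusion formula already occurred in the branch --- and that the side condition ``$\psi\neq Y$ for $X<Y$'' is invoked uniformly in both definitions. On the other hand, there are no frame-condition obligations here, since the target $\M$ is simply an ordinary Kripke model over the agent set $\act_\varphi$.
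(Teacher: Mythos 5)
Your construction is exactly the one in the paper: states are the prefixes $\sigma(b)$, $R_{\al\diam{\psi}}$ is the prefix-extension relation, $V$ makes $g_b(\sigma)$ the unique graph-variable true at $\sigma$, $s=\varepsilon$, and $f$ is the identity. The paper stops there and declares the verification ``straightforward''; your clause-by-clause check of $com(g_b(\sigma))$ and of Definition~\ref{def:model-represents-branch} is just an expansion of that step, so the two proofs coincide in approach and substance.
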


\begin{proof}
	We define model $\M = (W,R,V)$, such that $W = \sigma(b)$, $R_{\al\diam{\psi}} = \{ (\sigma,\sigma.\al\diam{\psi}) \mid (\sigma,\sigma.\al\diam{\psi}) \in W^2 \}$ for each $\al\diam{\psi} \in \act_\varphi$, and $V(\sigma) = \{g(\sigma)\}$ for each $\sigma \in W$; and $s = \varepsilon$.
	It is now straightforward to verify that $\M,s$ represent $b$ with $f:b \xrightarrow{r} \M,s$, such that $f(\sigma) = \sigma$ for every $\sigma$.
\end{proof}

Let $X \in mV$. For a given prefix $\sigma$ of the tableau, we want to describe that there is no $\xrightarrow{X}$-path where $X$ appears infinitely often. 
Infinite $\xrightarrow{X}$-paths may include infinitely many prefixed formulas, or a cycle with finitely many formulas.
From a local view on a tableau prefix, it can be hard to tell whether two paths are part of the same infinite path of concern.
Thus, we need to rule out all possible combinations of paths that, combined with paths inside other prefixes, can give us infinitely many occurrences of $X$.

\changeJ{
We need to describe that there is a finite $\xrightarrow{X}$-path from one formula to another, and we do so by defining recursive predicates on pairs of formulas.
For this, it is useful to define
additional notation: 
\begin{itemize}
    \item we write $(\psi_1,\psi_2) \in g$ to mean that $(\psi_1,in,top) \in g$ and $ (\psi_2,out,top) \in g$;
    \item for $(\diam{\al}\psi_1,out,bot)$, $(\psi_2,in,bot) \in g$, we define $nx(\diam{\al}\psi_1,\psi_2) = (\psi_1,[\al]\psi_2)$, $nx(\diam{\al}\psi_1) = \psi_1$, and $\al(\diam{\al}\psi_1,\psi_2) = \al(\diam{\al}\psi_1) = \al\diam{\psi_1}$;
    \item for $([\al]\psi_1,out,bot)$, $(\psi_2,in,bot) \in g$, we define $nx([\al]\psi_1,\psi_2) = (\psi_1,[\al]\psi_2)$, $nx([\al]\psi_1) = \psi_1$,  and $\al([\al]\psi_1,\psi_2) = \al([\al]\psi_1) = \al$.
\end{itemize} 
Intuitively, some of the formulas that we construct search through a model to find all the pieces of a finite path from a formula $\psi_1$ to $\psi_2$ that starts and ends in $g$. 
Definition \ref{def:searchTtoS} below defines variations of $(\psi_1,\psi_2) \xrightarrow{g} S$ and  $\psi_1 \xrightarrow{g} S,\psi_2$, which mean that the search should continue with all the pairs in $S$, as illustrated in Figure \ref{fig:finite-paths}. In the case of  $(\psi_1,\psi_2) \xrightarrow{g} S$, the path is part of a path that comes from and returns to a parent prefix; in the case of $\psi_1 \xrightarrow{g} S,\psi_2$, the finite path is part of an infinite path that comes from the parent prefix and continues with an infinite path from $\psi_2$ to a child prefix.
}

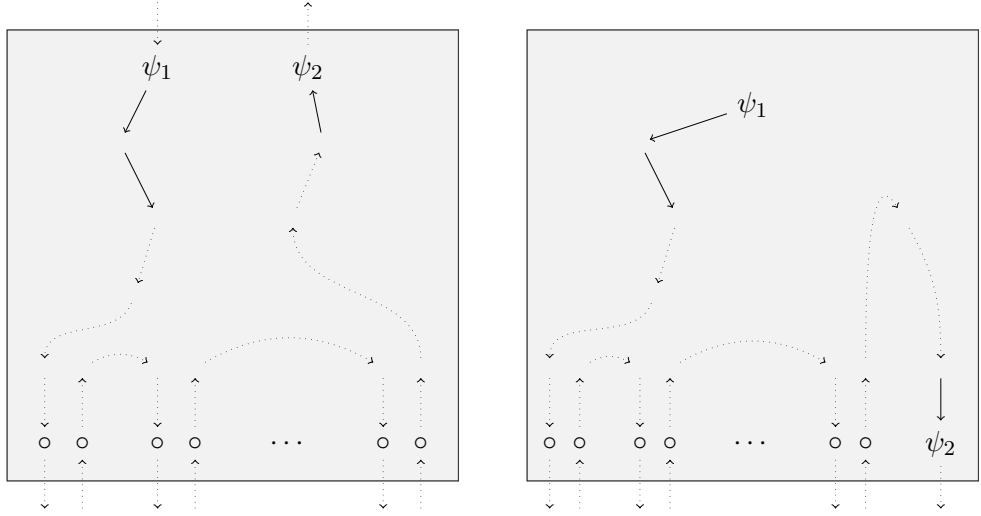
\begin{figure}
    \centering
    \begin{tikzpicture}

  \filldraw[gray!10, draw=gray!80] (0,0) -- (6,0) -- (6,6) -- (0,6) -- cycle;

  \draw[draw=black!80] (0,0) rectangle (6,6);

    \node (start) at (2,5.5) {$\psi_1$}; 
    \node (end) at (4,5.5) {$\psi_2$};
    \node (in) at (2,6.5) {}; 
    \node (out) at (4,6.5) {};

    \node (p1) at (1.5,4.5) {}; 
    \node (q1) at (4.2,4.5) {};

    \node (p2) at (2,3.5) {}; 
    \node (q2) at (3.8,3.5) {};

    \node (p3) at (1.7,2.5) {}; 
    \node (q3) at (5,2.5) {};

    \node (s1) at (0.5,0.5) {$\circ$};
    \node (s2) at (1,0.5) {$\circ$};
    
    \node (s3) at (2,0.5) {$\circ$};
    \node (s4) at (2.5,0.5) {$\circ$};

    \node (Sdots) at (3.75,0.5) {$\cdots$};
    
    \node (s5) at (5,0.5) {$\circ$};
    \node (s6) at (5.5,0.5) {$\circ$};

    \node (t1) at (0.5,-0.5) {};
    \node (t2) at (1,-0.5) {};
    
    \node (t3) at (2,-0.5) {};
    \node (t4) at (2.5,-0.5) {};
    
    \node (t5) at (5,-0.5) {};
    \node (t6) at (5.5,-0.5) {};

    \node (r1) at (0.5,1.5) {};
    \node (r2) at (1,1.5) {};
    
    \node (r3) at (2,1.5) {};
    \node (r4) at (2.5,1.5) {};
    
    \node (r5) at (5,1.5) {};
    \node (r6) at (5.5,1.5) {};


    \draw[line width=0.2pt,->,dotted] (r2) to [out=30,in=150]  (r3);
    \draw[line width=0.2pt,->,dotted] (r4) to [out=30,in=150]  (r5);

   \draw[line width=0.2pt,->,dotted] (p3) to [out=250,in=90]  (r1);
   \draw[line width=0.2pt,->,dotted] (r6) to [out=90,in=270]  (q2);

    \draw[line width=0.2pt,->,dotted] (in) -- (start);
    \draw[line width=0.2pt,->,dotted] (end) -- (out);

  \draw[line width=0.2pt,->] (start) -- (p1);
    \draw[line width=0.2pt,->] (p1) -- (p2);
  \draw[line width=0.2pt,->,dotted] (p2) -- (p3);
  \draw[line width=0.2pt,->] (q1) -- (end);
  \draw[line width=0.2pt,->,dotted] (q2) -- (q1);

 \draw[line width=0.2pt,->,dotted] (s1) -- (t1);
 \draw[line width=0.2pt,->,dotted] (t2) -- (s2);
 \draw[line width=0.2pt,->,dotted] (s3) -- (t3);
 \draw[line width=0.2pt,->,dotted] (t4) -- (s4);
 \draw[line width=0.2pt,->,dotted] (s5) -- (t5);
 \draw[line width=0.2pt,->,dotted] (t6) -- (s6);  

  \draw[line width=0.2pt,->,dotted] (r1) -- (s1);
 \draw[line width=0.2pt,->,dotted] (s2) -- (r2);
 \draw[line width=0.2pt,->,dotted] (r3) -- (s3);
 \draw[line width=0.2pt,->,dotted] (s4) -- (r4);
 \draw[line width=0.2pt,->,dotted] (r5) -- (s5);
 \draw[line width=0.2pt,->,dotted] (s6) -- (r6);  


\end{tikzpicture}
\qquad
    \begin{tikzpicture}

  \filldraw[gray!10, draw=gray!80] (0,0) -- (6,0) -- (6,6) -- (0,6) -- cycle;

  \draw[draw=black!80] (0,0) rectangle (6,6);

    \node (start) at (3,5) {$\psi_1$}; 
    \node (end) at (5.5,0.5) {$\psi_2$};
    
    \node (out) at (5.5,-0.5) {};

    \node (p1) at (1.5,4.5) {}; 
    \node (q1) at (5.5,1.5) {};

    \node (p2) at (2,3.5) {}; 
    \node (q2) at (5,3.5) {};

    \node (p3) at (1.7,2.5) {}; 
    \node (q3) at (5,2.5) {};

    \node (s1) at (0.3,0.5) {$\circ$};
    \node (s2) at (0.7,0.5) {$\circ$};
    
    \node (s3) at (1.5,0.5) {$\circ$};
    \node (s4) at (1.9,0.5) {$\circ$};

    \node (Sdots) at (3,0.5) {$\cdots$};
    
    \node (s5) at (4.1,0.5) {$\circ$};
    \node (s6) at (4.5,0.5) {$\circ$};

    \node (t1) at (0.3,-0.5) {};
    \node (t2) at (0.7,-0.5) {};
    
    \node (t3) at (1.5,-0.5) {};
    \node (t4) at (1.9,-0.5) {};
    
    \node (t5) at (4.1,-0.5) {};
    \node (t6) at (4.5,-0.5) {};

    \node (r1) at (0.3,1.5) {};
    \node (r2) at (0.7,1.5) {};
    
    \node (r3) at (1.5,1.5) {};
    \node (r4) at (1.9,1.5) {};
    
    \node (r5) at (4.1,1.5) {};
    \node (r6) at (4.5,1.5) {};


    \draw[line width=0.2pt,->,dotted] (r2) to [out=30,in=150]  (r3);
    \draw[line width=0.2pt,->,dotted] (r4) to [out=30,in=150]  (r5);

   \draw[line width=0.2pt,->,dotted] (p3) to [out=250,in=90]  (r1);
   \draw[line width=0.2pt,->,dotted] (r6) to [out=90,in=120]  (q2);

    \draw[line width=0.2pt,->,dotted] (end) -- (out);

  \draw[line width=0.2pt,->] (start) -- (p1);
    \draw[line width=0.2pt,->] (p1) -- (p2);
  \draw[line width=0.2pt,->,dotted] (p2) -- (p3);
  \draw[line width=0.2pt,->] (q1) -- (end);
  \draw[line width=0.2pt,->,dotted] (q2) to [out=300,in=90] (q1);

 \draw[line width=0.2pt,->,dotted] (s1) -- (t1);
 \draw[line width=0.2pt,->,dotted] (t2) -- (s2);
 \draw[line width=0.2pt,->,dotted] (s3) -- (t3);
 \draw[line width=0.2pt,->,dotted] (t4) -- (s4);
 \draw[line width=0.2pt,->,dotted] (s5) -- (t5);
 \draw[line width=0.2pt,->,dotted] (t6) -- (s6);  

  \draw[line width=0.2pt,->,dotted] (r1) -- (s1);
 \draw[line width=0.2pt,->,dotted] (s2) -- (r2);
 \draw[line width=0.2pt,->,dotted] (r3) -- (s3);
 \draw[line width=0.2pt,->,dotted] (s4) -- (r4);
 \draw[line width=0.2pt,->,dotted] (r5) -- (s5);
 \draw[line width=0.2pt,->,dotted] (s6) -- (r6);  


\end{tikzpicture}
%
    
    
    
    \caption{A finite $\xrightarrow{X}$-path from $\sigma~\psi_1$ to $\sigma~\psi_2$ may visit other tableau prefixes, and an infinite $\xrightarrow{X}$-path from $\sigma~\psi_1$ may include finite segments that visit other prefixes, before continuing with an infinite path from a formula $\psi_2$. Each square area represents a tableau prefix, or a graph $g \in G^t$. For $S$ the set of pairs of formulas that appear near the bottom, the left figure illustrates the relation $(\psi_1,\psi_2) \xrightarrow{g} S$ and the right one the relation $\psi_1 \xrightarrow{g} S, \psi_2$ (Definition \ref{def:searchTtoS}).}
    \label{fig:finite-paths}
\end{figure}

\begin{defi}\label{def:searchTtoS}
Given a graph $g = (\Phi,(\xrightarrow{X})_{X \in mV},\ell) \in \G$, $T =(\psi_1,\psi_2) \in (\sub(\varphi))^2$, $S \subseteq (\sub(\varphi))^2$, and $X \in mV$, we write
$T \xrightarrow{g,X} S$ to mean that 
\begin{enumerate}
	\item 
$\psi_1,\psi_2 \in \Phi$, $(in,top) \in \ell(\psi_1)$, $(out,top) \in \ell(\psi_2)$, 
\item 
for each $(\psi_1',\psi_2') \in S$, 
$\psi_1',\psi_2' \in \Phi$, $(out,bot) \in \ell(\psi_1')$, $(in,bot) \in \ell(\psi_2')$, 
and 
\item 
$\{ \psi_1',\psi_2', X \mid (\psi_1',\psi_2') \in S \}$ 
are on a path from $\psi_1$ to $\psi_2$ in 
$(\Phi,\xrightarrow{X} \cup \{(\psi_2,\psi_1)\} \cup S)$.
\end{enumerate}
We write
$\emptyset \xrightarrow{g,X} S$ to mean that Condition 2 above is satisfied, and 
$\{ \psi_1',\psi_2', X \mid (\psi_1',\psi_2') \in S \}$ is part of a 
cycle
in 
$(\Phi,\xrightarrow{X} \cup  S)$.
We can similarly define 
$T \xrightarrow{g} S$  and $\emptyset \xrightarrow{g} S$ by not requiring that $X$ is part of the path.

We 
write
$\psi_1 \xrightarrow{g,X} S,\psi_2$ to mean that 
\begin{enumerate}
	\item 
$\psi_1,\psi_2 \in \Phi$, 
$(out,bot) \in \ell(\psi_2)$, 
\item 
for each $(\psi_1',\psi_2') \in S$, 
$\psi_1',\psi_2' \in \Phi$, $(out,bot) \in \ell(\psi_1')$, $(in,bot) \in \ell(\psi_2')$, 
and 
\item 
$\{ \psi_1',\psi_2', X \mid (\psi_1',\psi_2') \in S \}$ 
are on a path from $\psi_1$ to $\psi_2$ in 
$(\Phi,\xrightarrow{X} \cup \{(\psi_2,\psi_1)\} \cup S)$.
\end{enumerate}
We can similarly define 
$\psi_1 \xrightarrow{g} S, \psi_2$  by not requiring that $X$ is part of path.
\end{defi}

To describe that there are no $\xrightarrow{X}$-paths where $X$ appears infinitely often, we first describe that \emph{there exists} some 
$\xrightarrow{X}$-path where $X$ appears infinitely often.
We do so, by defining the following formulas, recursively on two parameters $\varpi$ and $\varpi_X$, that are 
\changeJ{\emph{strings from}} 
$\{\emptyset\} \cup \sub(\varphi)^2$ or from $\sub(\varphi)$, \emph{i.e.}~$(\{\emptyset\} \cup \sub(\varphi)^2)^* \cup (\sub(\varphi))^*$.
These strings keep track of the path from the root of the syntax tree of the constructed formula, to the current subformula, allowing the construction to detect when to use a recursive call.
For $T \in \sub(\varphi)^2$ or $T = \emptyset$:

\begingroup
\allowdisplaybreaks
\begin{align*}
	\finb^X(T,\varpi) &:= 
	\begin{cases}
 \displaystyle
		\bigvee_{\emptyset \xrightarrow{g,X} \emptyset } g 
		\lor 
		\bigvee_{\emptyset \centernot{\xrightarrow{g,X}} \emptyset } 
		(g \land 
		\finb^X(\emptyset,\varepsilon,g))
		& \text{ if $T = \emptyset$}
		\\
  \displaystyle
		Z^X_T &\text{if $T$ appears in }\varpi \\
  \displaystyle
		\mn Z_T^X. 
	\bigvee_{\substack{T \in g\\ T \xrightarrow{g,X} \emptyset }} g 
	\lor 
	\bigvee_{\substack{T \in g\\ T \centernot{\xrightarrow{g,X}} \emptyset }} 
	(g \land 
	\finb^X(T,\varpi,g)),
	&\text{ otherwise}
	\end{cases}\\
	\finb(T,\varpi) &:= 
	\begin{cases}
		Z_T &\text{ if $T$ appears in }\varpi \\
  \displaystyle
		\mn Z_T. 
	\bigvee_{\substack{T \in g\\ T \xrightarrow{g} \emptyset }} g 
	\lor 
	\bigvee_{\substack{T \in g\\ T \centernot{\xrightarrow{g}} \emptyset }} 
	(g \land 
	\finb(T,\varpi,g)),
	&\text{ otherwise}
	\end{cases}
	\\
	\finb^X(T,\varpi,g) &:=
		\bigvee_{\substack{S \subseteq \Phi_g^2\\
						T \xrightarrow{g} S}}
						\bigvee_{s \in S}
						\diam{\al(s)} \finb^X(nx(s),\varpi T)
						\land 
						\bigwedge_{s \in S} 
						\diam{\al(s)} \finb(nx(s),\varepsilon)
			\\
			&	\qquad\qquad	\lor
			\bigvee_{\substack{S \subseteq \Phi_g^2\\
			T \xrightarrow{g,X} S}}
			\bigwedge_{s \in S} 
			\diam{\al(s)} \finb(nx(s),\varepsilon)
	\\
	\finb(T,\varpi,g) &:=
			\bigvee_{\substack{S \subseteq \Phi_g^2\\
			T \xrightarrow{g} S}}
			\bigwedge_{s \in S} 
			\diam{\al(s)} \finb(nx(s),\varpi T)
\end{align*}
\endgroup

\changeJ{
The formula $\finb((\psi_1,\psi_2),\varpi)$ (\resp $\finb^X((\psi_1,\psi_2),\varpi)$) describes that there is a finite $\xrightarrow{X}$-path from $\psi_1$ to $\psi_2$ that does not visit prefixes smaller than the current one (\resp where $X$ appears).
Formula $\finb(\emptyset,\varpi)$ describes that there is a finite cycle that does not visit prefixes smaller than the current one.
We note an asymmetry between $\al(\diam{\al}\psi_1,\psi_2)$ and $\al([\al]\psi_1,\psi_2)$, used in the above formulas. That is the case, because in a tableau branch, $\sigma~[\al]\psi$ has a dependency edge to all $\sigma.\al\diam{\psi'}~\psi$, but $\sigma~\diam{\al}\psi$ only has a dependency edge to $\sigma.\al\diam{\psi}~\psi$.
}

\changeJ{
It is also worth noting that the way these formulas are constructed, we do not need to consider any strings $\varpi$ or $\varpi_X$ where the same element appears twice. Therefore, in effect, $\varpi$ and $\varpi_X$ are permutations of subsets of $\{\emptyset\} \cup \sub(\varphi)^2$ and $\sub(\varphi)$.
}

\begin{lem}\label{lem:finb-equivalences-variables}
	For every $T,T' \in \sub(\varphi)^2$,
	\[
	\trueset{\finb^X(T,\varepsilon)}  = 
	 \trueset{\finb^X(T,T')[\finb^X(T',\varepsilon)/Z^X_{T'}]} 
	= 
	\trueset{\finb^X(T,T'),\rho\left[Z^X_{T'}\mapsto\trueset{\finb^X(T',\varepsilon)}\right]} 
	.\]
\end{lem}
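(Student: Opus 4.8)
The plan is to prove the two equalities separately. The second one, $\trueset{\finb^X(T,T')[\finb^X(T',\varepsilon)/Z^X_{T'}]}=\trueset{\finb^X(T,T'),\rho[Z^X_{T'}\mapsto\trueset{\finb^X(T',\varepsilon)}]}$, is an instance of the standard substitution lemma for the $\mu$-calculus (compositionality of the semantics in Table~\ref{table:semantics}). Indeed, $\finb^X(T',\varepsilon)$ is a \emph{sentence}: each time the recursion defining it first reaches a pair $U$ it opens a binder $\mn Z^X_U$, and every later mention of $U$ is just $Z^X_U$, so all fixed-point variables it contains are bound; hence the substitution is capture-free and the two interpretations agree in every $\rho$. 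Thus it suffices to prove the first equality, and since $T,T'$ range over all of $\sub(\varphi)^2$ we may assume $T\neq T'$: if $T=T'$ then $\finb^X(T,T')=Z^X_T$ and the statement is immediate, and if $T'$ is not reachable from $T$ along the recursive calls then $\finb^X(T,T')=\finb^X(T,\varepsilon)$, which is a sentence not containing $Z^X_{T'}$ free, so again both sides equal $\trueset{\finb^X(T,\varepsilon)}$.

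For the first equality the plan is to present $\{\finb^X(U,\cdot)\}_U$ as the nested-fixed-point solution of a single simultaneous system over pairs of subformulas. For each pair $U$, let $\Xi_U$ be the ``one-level body'' obtained from the defining clauses of $\finb^X(U,\varpi)$ and $\finb^X(U,\varpi,g)$ by replacing every recursive call $\finb^X(V,\ldots)$ with the bare variable $Z^X_V$ (the occurrences of $\finb(\cdot,\varepsilon)$ are sentences mentioning no $Z^X_V$, so they are inert and may be kept as parameters). Let $\mathcal P$ be the set of pairs reachable from $T$ along these calls, let $\Phi$ be the induced monotone operator on $(2^W)^{\mathcal P}$ whose $U$-th component sends a valuation $\vec x$ of the variables $(Z^X_V)_{V\in\mathcal P}$ to $\trueset{\Xi_U}$ evaluated with each $Z^X_V$ set to $x_V$, and let $(\ell_U)_{U\in\mathcal P}$ be its least fixed point, which exists by the Knaster--Tarski theorem (cf.\ Lemma~\ref{lem:iterativeTarski}). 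By induction on $|\mathcal P\setminus\{\text{pairs occurring in }\varpi\}|$, using the unfolding law $\trueset{\mn Z.\chi,\rho}=\trueset{\chi,\rho[Z\mapsto\trueset{\mn Z.\chi,\rho}]}$, the substitution lemma, and the ``rolling/elimination'' rule (that $\trueset{\mn z.\,f(z,\mn \vec y.\,g(z,\vec y))}$ is the $z$-component of $\mathrm{lfp}(f,g)$, itself a consequence of Knaster--Tarski), one shows that $\finb^X(U,\varpi)$ computes exactly the $U$-component of the least fixed point of the subsystem of $\{Z^X_V=\Xi_V\}_V$ in which the variables recorded in $\varpi$ are held fixed. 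In particular $\trueset{\finb^X(T,\varepsilon)}=\ell_T$; for $q:=\trueset{\finb^X(T',\varepsilon)}$, the value $\trueset{\finb^X(T,T'),\rho[Z^X_{T'}\mapsto q]}$ is the $T$-component of the least fixed point $(m_U)_{U\neq T'}$ of the operator $\Phi^{T'}_q$ obtained from $\Phi$ by freezing the $T'$-coordinate at $q$; and, since the pairs reachable from $T'$ form a closed subsystem, $\trueset{\finb^X(T',\varepsilon)}=\ell_{T'}$, i.e.\ $q=\ell_{T'}$.

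What then remains is the purely order-theoretic fact that freezing one coordinate of a monotone operator at the value it takes in its least fixed point leaves the other coordinates of the least fixed point unchanged: with $q=\ell_{T'}$ one has $m_U=\ell_U$ for all $U\neq T'$. One inclusion is immediate because $(\ell_U)_{U\neq T'}$ is a fixed point of $\Phi^{T'}_q$; for the other, $(m_U)_{U\neq T'}$ together with $\ell_{T'}$ is a pre-fixed point of $\Phi$ (that is, $\Phi$ of it is below it — using monotonicity and $m\le\ell$), and therefore dominates $\mathrm{lfp}(\Phi)=(\ell_U)_U$ by Knaster--Tarski. Chaining everything together,
\[
\trueset{\finb^X(T,T')[\finb^X(T',\varepsilon)/Z^X_{T'}]}=\trueset{\finb^X(T,T'),\rho[Z^X_{T'}\mapsto\ell_{T'}]}=m_T=\ell_T=\trueset{\finb^X(T,\varepsilon)}.
\]
I expect the main obstacle to be the bookkeeping in the middle paragraph: verifying precisely that the $\varpi$-threaded recursion defining $\finb^X$ realises Gaussian elimination of the fixed-point system in the order in which pairs are first reached, and in particular that the occurrences of $Z^X_{T'}$ that survive in $\finb^X(T,T')$ are exactly the ones that the binder $\mn Z^X_{T'}$ of $\finb^X(T',\varepsilon)$ is meant to fill.
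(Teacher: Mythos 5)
Your proposal is correct, but it takes a semantic, order-theoretic route where the paper argues syntactically. The paper proves the stronger, $\varpi$-parametrized statement that $\trueset{\finb^X(T,\varpi)} = \trueset{\finb^X(T,T'\varpi)[\finb^X(T',\varpi)/Z^X_{T'}]}$ for every string $\varpi$ in which $T'$ does not appear, by induction on the structure of $\finb^X(T,T'\varpi)$: the only case treated explicitly is $\finb^X(T,T'\varpi)=Z^X_T$ (split on whether $T=T'$), and the inductive step is dismissed as immediate, the underlying point being that prepending $T'$ to the string merely replaces, at the place where the recursion first reaches $T'$, the binder $\mn Z^X_{T'}$ by the free variable $Z^X_{T'}$, so the substitution undoes this replacement essentially syntactically. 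You instead recast the family $\{\finb^X(U,\cdot)\}$ as a simultaneous least-fixed-point system, identify $\finb^X(U,\varpi)$ with the $U$-component of the least fixed point of the subsystem in which the $\varpi$-variables are frozen (Beki\'c-style elimination), and reduce the lemma to the fact that freezing one coordinate of a monotone operator at its least-fixed-point value does not disturb the remaining coordinates --- which you prove correctly from Knaster--Tarski via the pre-fixed-point argument. Your treatment of the degenerate cases ($T=T'$, and $T'$ unreachable from $T$) and of the second equality (capture-free substitution of the sentence $\finb^X(T',\varepsilon)$, matching the paper's ``straightforward property of fixed-point operators'') agrees with the paper. What your route buys is a conceptually cleaner justification resting on a standard principle; what it costs is exactly the bookkeeping you flag at the end --- verifying that the $\varpi$-threaded recursion realises the elimination of the simultaneous system --- which is the same content the paper's terse inductive step silently absorbs. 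The two arguments carry a comparable burden, and yours is sound.
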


\begin{proof}
	Specifically, we prove the stronger statement, that for any 
 string
 $\varpi$ where $T'$ does not appear, 
	\[
	\trueset{\finb^X(T,\varpi), \rho}  = 
	 \trueset{\finb^X(T,T'\varpi)[\finb^X(T',\varpi)/Z^X_{T'}], \rho} 
	\]
	and 
	\[
		\trueset{\finb^X(T,T'\varpi)[\finb^X(T',\varpi)/Z^X_{T'}], \rho} 
		= 
		\trueset{\finb^X(T,T'\varpi),\rho\left[Z^X_{T'}\mapsto\trueset{\finb^X(T',\varpi)}\right], \rho} 
		.\]	
	The second equality is a straightforward property of fixed-point operators. We prove the first one by induction on 
	$\finb^X(T,T'\varpi)$.
	\begin{description}
		\item[If $\finb^X(T,T'\varpi) = Z^X_T$] then $T$ appears in $T'\varpi$. If $T = T'$, then $$\finb^X(T,\varpi) = \finb^X(T',\varpi) = Z^X_T[\finb^X(T',\varpi)/Z^X_{T'}] = \finb^X(T,T'\varpi)[\finb^X(T',\varpi)/Z^X_{T'}],$$ yielding the required semantic equality.
		On the other hand, if $T \neq T'$, then $T$ appears in $\varpi$, and therefore 
		$\finb^X(T,\varpi) = 
		\finb^X(T,T'\varpi)[\finb^X(T',\varpi)/Z^X_{T'}] = Z^X_T$.
		\item[Otherwise] the inductive step is straightforward from the definition of the formulas.
		\qedhere 
	\end{description}
\end{proof}

\changeJ{
The following Definition \ref{def:cyc} defines the predicates $cyc((\psi_1,\psi_2),\sigma)$ and $cyc_X((\psi_1,\psi_2),\sigma)$, which are the counterparts of $\finb(T,\varpi)$ and  $\finb^X(T,\varpi)$ for tableau branches.
}

\begin{defi}\label{def:cyc}
	Let $X \in mV$, $\psi_1,\psi_2\in \sub(\varphi)$,
	$f:b \xrightarrow{r} \M,s$,
	and $\sigma \in \sigma(b)$.
	We write $cyc((\psi_1,\psi_2),\sigma)$ to denote that the following hold:
	\begin{enumerate}
		\item $\sigma ~ \psi_1, \sigma~ \psi_2 \in b$,
		\item there is a finite $\xrightarrow{X}$-path in $b$, from $\sigma~\psi_1$ to $\sigma~\psi_2$, 
            which only goes through prefixes that extend $\sigma$.
	\end{enumerate}
	We write $cyc_X((\psi_1,\psi_2),\sigma)$ to denote that the same conditions as for $cyc((\psi_1,\psi_2),\sigma)$ are satisfied, with the addition that $X$ appears on the finite $\xrightarrow{X}$-path from $\sigma~\psi_1$ to $\sigma~\psi_2$.
\end{defi}

\begin{lem}\label{lem:finb-correct}
	Let $X \in mV$, $\psi_1,\psi_2\in \sub(\varphi)$, 
	$f:b \xrightarrow{r} \M,s$,
	and $\sigma \in \sigma(b)$.
	\begin{enumerate}
		\item $(\M,f(\sigma)) \models \finb((\psi_1,\psi_2),\varepsilon)$  if and only if 
		$cyc((\psi_1,\psi_2),\sigma)$ and $(\psi_1,\psi_2) \in f(\sigma)$, 
		\item 
		$(\M,f(\sigma)) \models \finb^X((\psi_1,\psi_2),\varepsilon)$ if and only if 
		$cyc_X((\psi_1,\psi_2),\sigma)$ and $(\psi_1,\psi_2) \in f(\sigma)$,
		\item $(\M,f(\sigma) )\models \finb(\emptyset,\varepsilon)$  if and only if 
		$cyc((\psi,\psi),\sigma)$ for some $\psi \in \Phi_{f(\sigma)}$, and
		\item 
		$(\M,f(\sigma)) \models \finb^X(\emptyset,\varepsilon)$ if and only if 
		$cyc_X((\psi,\psi),\sigma)$ for some $\psi \in \Phi_{f(\sigma)}$.
	\end{enumerate}
\end{lem}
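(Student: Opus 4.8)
## Proof plan for Lemma \ref{lem:finb-correct}

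The plan is to prove all four statements simultaneously, by a single structural induction that mirrors the recursive definitions of $\finb^X$, $\finb$, $\finb^X(-,-,g)$, and $\finb(-,-,g)$. The four statements are tightly coupled: the outer formulas $\finb^X(T,\varepsilon)$ and $\finb(T,\varepsilon)$ unfold through the auxiliary formulas $\finb^X(T,\varepsilon,g)$ and $\finb(T,\varepsilon,g)$, which in turn call back $\finb$ and $\finb^X$ at shifted arguments $nx(s)$ and at child prefixes reached via $\diam{\al(s)}$. So the natural induction is on a well-founded measure on the recursion: the subset of $\{\emptyset\}\cup\sub(\varphi)^2$ (or $\sub(\varphi)$) of pairs/formulas not yet recorded in the string parameter $\varpi$, ordered by inclusion. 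Lemma \ref{lem:finb-equivalences-variables} is exactly what justifies replacing the bound variable $Z^X_{T'}$ by its unfolding $\finb^X(T',\varepsilon)$, so at each recursive call we may assume we have reduced to the $\varepsilon$-parameter version, i.e. to the statement of the present lemma at a strictly smaller argument set.

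First I would pin down the key translation dictionary between the model side and the branch side. Since $f:b\xrightarrow{r}\M,s$, item \ref{item:states-describe-prefixes} of Definition \ref{def:model-represents-branch} gives $(\M,f(\sigma))\models g_b(\sigma)$ and item \ref{item:states-describe-one-prefix} gives that $g_b(\sigma)$ is the \emph{unique} $g\in\G^t$ true at $f(\sigma)$; by Definition \ref{def:compatible} the disjunction $\bigvee_{g\in\G^t}g$ already fixes which graph we are looking at, and the mutual-exclusion conjuncts in $\logicize{rules}$ make this unambiguous — though actually here I only need that $f(\sigma)$ satisfies exactly one $g$, which is built into $g_b$. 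The crucial correspondences are: $(\psi_1,\psi_2)\in f(\sigma)$ (i.e. $(\psi_1,in,top)$ and $(\psi_2,out,top)$ in $g_b(\sigma)$) translates, via the definition of $g_b$, to "$\sigma~\psi_1,\sigma~\psi_2\in b$ and there are dependency edges into $\sigma~\psi_1$ from a strict prefix and out of $\sigma~\psi_2$ to a strict prefix"; a set $S$ with $T\xrightarrow{g_b(\sigma)}S$ corresponds to a collection of $\xrightarrow{X}$-segments inside $\sigma$'s graph joining $\psi_1$ to $\psi_2$ with "holes" $s=(\psi_1',\psi_2')\in S$ that must be filled by detours through children; and each such hole is filled by moving along $R_{\al(s)}$ to a child prefix $\sigma.\al(s)\diam{nx(s)\text{-part}}$ and recursively finding a path from $nx(s)$ there. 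The asymmetry noted in the text between $\al(\diam{\al}\psi_1,\psi_2)=\al\diam{\psi_1}$ and $\al([\al]\psi_1,\psi_2)=\al$ must be respected: for a diamond formula $\sigma~\diam{\al}\psi_1$ the only child carrying the dependency is $\sigma.\al\diam{\psi_1}$, whereas for $\sigma~[\al]\psi$ every child $\sigma.\al\diam{\psi'}$ carries a dependency edge to $\sigma.\al\diam{\psi'}~\psi$. I would prove, as a preliminary claim, that $T\xrightarrow{g_b(\sigma)}S$ together with the $\logicize{rules}$-guaranteed children exactly captures "one layer" of a finite $\xrightarrow{X}$-path through $\sigma$ and its immediate children, and that iterating this layer-by-layer reconstruction reconstructs an arbitrary finite $\xrightarrow{X}$-path that stays within $\sigma$ and its descendants.

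With that dictionary in hand, the forward and backward directions of each of the four statements become a matter of matching the Boolean structure of the defining formula to the combinatorics of paths. For statement 1, "$\Rightarrow$": if $(\M,f(\sigma))\models\finb((\psi_1,\psi_2),\varepsilon)$, then either $(\psi_1,\psi_2)\xrightarrow{g_b(\sigma)}\emptyset$, giving a direct $\xrightarrow{X}$-path inside $\sigma$, so $cyc((\psi_1,\psi_2),\sigma)$ holds immediately; or $(\M,f(\sigma))\models g_b(\sigma)\land\finb((\psi_1,\psi_2),\varepsilon,g_b(\sigma))$, and unfolding $\finb(-,-,g)$ we get some $S$ with $(\psi_1,\psi_2)\xrightarrow{g_b(\sigma)}S$ and, for each $s\in S$, a successor satisfying $\finb(nx(s),(\psi_1,\psi_2))$; by Lemma \ref{lem:finb-equivalences-variables} that successor satisfies $\finb(nx(s),\varepsilon)$, so by the induction hypothesis (applied at the child prefix and the strictly smaller pair $nx(s)$, using that the measure decreased because $(\psi_1,\psi_2)$ was added to $\varpi$) we get the detour paths, and splicing them into the $S$-holes yields the required finite $\xrightarrow{X}$-path. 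The "$\Leftarrow$" direction runs the same argument backwards: take a finite $\xrightarrow{X}$-path in $b$ from $\sigma~\psi_1$ to $\sigma~\psi_2$ through descendants of $\sigma$, chop it at the points where it leaves $\sigma$, read off the set $S$ of entry/exit pairs at the first level, observe $(\psi_1,\psi_2)\xrightarrow{g_b(\sigma)}S$, and apply the induction hypothesis to each maximal sub-path living below a child. Statement 2 is identical except one carries the extra bookkeeping that $X$ occurs on the path, which is exactly what the $\xrightarrow{g,X}$ variants of $T\xrightarrow{g}S$ track (the clause "$X$ is on the path from $\psi_1$ to $\psi_2$ in $(\Phi,\xrightarrow{X}\cup\{(\psi_2,\psi_1)\}\cup S)$"); one must check that $X$ can appear either inside some $\sigma$-layer (handled by a $\xrightarrow{g,X}$ clause) or inside one of the recursive detours (handled by the $\finb^X(nx(s),\dots)$ disjunct rather than $\finb(nx(s),\dots)$), which is precisely the shape of the two big disjuncts in $\finb^X(T,\varpi,g)$. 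Statements 3 and 4 are the $T=\emptyset$ base cases, where one replaces "path from $\psi_1$ to $\psi_2$" by "cycle", i.e. a path returning to its start; the formula $\finb(\emptyset,\varepsilon)$ is defined to quantify over exactly this, and the argument is the same with the endpoint identified with the start point.

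The main obstacle I expect is the bookkeeping of the string parameters $\varpi$, $\varpi_X$ and the corresponding well-foundedness of the induction: one has to argue carefully that along the recursion no pair $T$ is ever re-expanded (so that $\varpi$ really is a permutation of a subset and the recursion terminates), and simultaneously that this non-re-expansion does not lose any paths — a finite $\xrightarrow{X}$-path in $b$ could in principle revisit the "same-looking" configuration $(\psi_1',\psi_2')$ at the same prefix, and one must show that such a revisit can be short-circuited without changing whether the path is finite or whether $X$ occurs on it. This is where Lemma \ref{lem:finb-equivalences-variables} does the real work: it lets the $\mn Z_T$-bound variable be interpreted as the least fixed point that closes off exactly these repeats, so that "there is a finite path" on the branch side matches "the least fixed point is reached" on the formula side. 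A secondary subtlety is the direction of the dependency edges at box formulas versus diamond formulas — getting $\al(s)$ right and making sure that when $s$ comes from a box formula $[\al]\psi_1$ the recursion at the child prefix is about $nx(s)=(\psi_1,[\al]\psi_2)$, carrying the box along so that transitivity/rule $(\textsf{4})$-style propagation is correctly tracked — but this is local and follows mechanically from Definitions \ref{def:searchTtoS} and the $nx$/$\al$ notation once the global induction is set up.
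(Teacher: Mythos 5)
Your combinatorial dictionary between graphs and branch segments, and your treatment of the backward direction (chop the finite $\xrightarrow{X}$-path where it leaves $\sigma$, read off the set $S$ of exit/entry pairs, splice in the recursively obtained detours) match what the paper does; the paper formalizes that direction as an induction on the maximal depth $k$ by which the path extends $\sigma$, which is well-founded precisely because the witnessing path is finite. The gap is in the forward direction. Your induction is supposed to run on the set of pairs not yet recorded in $\varpi$, but the step where you invoke the hypothesis fails: in $\finb^X(T,\varpi,g)$ the conjuncts $\diam{\al(s)}\,\finb(nx(s),\varepsilon)$ carry the string $\varepsilon$, not $\varpi T$, and they are evaluated at a \emph{different} state $f(\sigma.\varsigma)$. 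So neither the string-based measure nor any formula-size measure decreases there, and the ``induction hypothesis at the child prefix'' you appeal to is just the lemma itself at an incomparable instance. A structural induction cannot, on its own, extract a finite witnessing path from the satisfaction of a $\mn$-formula; that is exactly the point where least-fixed-point semantics must enter.

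The paper closes this by arguing with the fixed-point semantics directly: it fixes the candidate denotation $C_{Z_T^X}=\{f(\sigma')\mid cyc_X(T,\sigma')\}$, proves the contrapositive (that $\neg cyc_X(T,\sigma)$ implies $(\M,f(\sigma))\not\models_\rho\finb^X(T,\varpi)$ under the environment sending each $Z_{T'}^X$ to $C_{Z_{T'}^X}$), and reduces this to checking that $C_{Z_T^X}$ is a prefixed point of the body of $\mn Z_T^X.(\cdots)$ --- every state satisfying the body already satisfies $cyc_X$ --- so the least fixed point is contained in $C_{Z_T^X}$ while $f(\sigma)$ is not. Your closing remark that the $\mn Z_T$ binder ``closes off repeats'' gestures at this idea, but as written your plan never makes the prefixed-point (or ordinal-approximant) argument, so the forward implication of all four items is unsupported. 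A secondary inaccuracy: items 3 and 4 are not base cases of the induction but separate statements about cycles, which the paper dispatches by the same two-sided argument.
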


\begin{proof}
	We use the first biimplication of the lemma to prove the second 
	one.
	All other cases can be proven with similar reasoning.
	We first observe that
        $(\M,f(\sigma)) \models \finb^X((\psi_1,\psi_2),\varepsilon)$
	implies  $(\psi_1,\psi_2) \in f(\sigma)$, and 
        prove that  
	$(\M,f(\sigma)) \models \finb^X((\psi_1,\psi_2),\varepsilon)$
	implies  
	$cyc_X((\psi_1,\psi_2),\sigma)$.

	For each $T$, let $C_{Z_{T}^X} = \{ f(\sigma')  \mid \sigma' \in \sigma(b) \text{ and }  cyc_X(T,\sigma') \}$.
	Let $T = (\psi_1,\psi_2)$.
	To prove the implication, we proceed to prove the more general statement that 
	$\neg cyc_X(T,\sigma)$ implies that 
	$(\M,f(\sigma)) \not\models_\rho \finb^X(T,\varpi)$
	for every 
	$\varpi \in (\sub(\varphi)^2)^*
    $ and environment $\rho$, such that $\rho(Z_{T'}^X) = C_{Z_{T'}^X}$ for each $Z_{T'}^X$. 
	We assume that 
	$\neg cyc_X(T,\sigma)$ and proceed by induction on the structure of $\finb^X(T,\varpi)$.
		\textbf{If $T$ appears in $\varpi$} or, equivalently, 
	$\finb^X(T,\varpi) = Z_T^X$, by the definition of $\rho$, $f(\sigma) \notin \rho(Z^X_T)$, so 
	$(\M,f(\sigma)) \not\models_\rho \finb^X(T,\varpi)$.
	\textbf{We now assume that $T$
	does not appear in $\varpi$}
	If 
    Condition 1
    for $cyc_X(T,\sigma)$ does not hold, then 
	it is evident from the definitions of $T \xrightarrow{g} S$, $T \xrightarrow{g,X} S$, and $T \xrightarrow{g,X} \emptyset$ that $(\M,f(\sigma)) \not\models \finb^X(T,\varpi)$, and we are done. 
	We now assume that Condition 1 
    holds, but there is no finite 
	$\xrightarrow{X}$-path in $b$, from $\sigma~\psi_1$ to $\sigma~\psi_2$, that only visits extensions of $\sigma$, where $X$ appears.
	From the fixed-point semantics of Table \ref{table:semantics}, it suffices to prove two statements:
	\begin{enumerate}
		\item for every $\sigma' \in \sigma(b)$, if 
		$$(\M, f(\sigma')) \models \bigvee_{\substack{T \in g\\ T \xrightarrow{g,X} \emptyset }} g 
		\lor 
		\bigvee_{\substack{T \in g\\ T \centernot{\xrightarrow{g,X}} \emptyset }} 
		(g \land 
		\finb^X(T,\varpi,g)),$$ then 
		$f(\sigma') \in C_{Z_T^X}$; and 
		\item $f(\sigma) \notin C_{Z_T^X}$. 
	\end{enumerate}
	The second statement is immediate from our assumption that $\neg cyc_X(T,\sigma)$.
	We now prove the first statement.
	Clearly, if $(\M,f(\sigma'))  \models \bigvee_{\substack{T \in g\\ T \xrightarrow{g,X} \emptyset }} g $, we have $cyc_X(T,\sigma')$.
	Otherwise, we have that there is some $g \in G^t$, such that 
    $T \in g$,  
		$(\M,f(\sigma'))  \models g$, and $(\M,f(\sigma') ) \models \finb^X(T,\varpi,g)$.
		Therefore, there is some $S \subseteq \sub(\varphi)^2$, such that either 
		\begin{description}
			\item[$T \xrightarrow{g,X} S$] and for each $s \in S$, there is some 
            $\varsigma = \al(s)$ or $\varsigma = \al(s)\diam{\psi}$, such that 
            $\sigma'.\varsigma \in \sigma(b)$, 
            $f(\sigma') R_{\varsigma} f(\sigma'.\varsigma)$ and $(\M, f(\sigma'.\varsigma)) \models \finb(nx(s),\varepsilon)$, or 
			\item[$T \xrightarrow{g} S$] and for each $s \in S$, there is some 
            $\varsigma = \al(s)$ or $\varsigma = \al(s)\diam{\psi}$, 
            such that 
            $\sigma'.\varsigma \in \sigma(b)$, 
            $f(\sigma') R_{\varsigma} f(\sigma'.\varsigma)$ and $(\M, f(\sigma'.\varsigma)) \models \finb(nx(s),\varepsilon)$ and 
			for some
			$s \in S$, there is some 
            $\varsigma = \al(s)$ or $\varsigma = \al(s)\diam{\psi}$, 
            such that $\sigma'.\varsigma \in \sigma(b)$, 
            $f(\sigma') R_{\varsigma} f(\sigma'.\varsigma)$ and $(\M, f(\sigma'.\varsigma)) \models \finb^X(nx(s),\varpi T)$.
		\end{description}
        In the first case, 
        for each $s =(\chi_1,\chi_2) \in S$, where $nx(s) = (\chi_1',\chi_2')$, by the 
        first item of the lemma, 
        there is a
		finite 
		$\xrightarrow{X}$-path in $b$, from 
        $\sigma'.\varsigma~\chi_1'$ to $\sigma'.\varsigma~\chi_2'$, that only visits extensions of $\sigma'.\varsigma$.
        We know that $\chi_2' = [\al]\chi_2''$, and 
        either $\chi_1 = [\al]\chi_1'$ or $\chi_1 = \diam{\al}\chi_1'$.
        By the conditions of $\xrightarrow{X}$ for box formulas, 
        $\sigma'.\varsigma~\chi_2' \xrightarrow{X} \sigma'~\chi_2$, and if $\chi_1 = [\al]\chi_1'$, then 
        $\sigma'~\chi_1 \xrightarrow{X} \sigma'.\varsigma~\chi_1'$.
        If $\chi_1 = \diam{\al}\chi_1'$, then we know that ($\varsigma = $) $\al(s)=\al\diam{\chi_1'}$, and therefore 
        $\frac{\sigma'~\chi_1}{\sigma'.\al\diam{\chi_1'}~\chi_1'}$ is a tableau rule, and 
        by the conditions of $\xrightarrow{X}$, 
        $\sigma'~\chi_1 \xrightarrow{X} \sigma'.\al(s)~\chi_1'$.
        Therefore, 
        there is a
		finite 
		$\xrightarrow{X}$-path in $b$, from 
        $\sigma'.~\chi_1$ to $\sigma'.~\chi_2$ that only visits extensions of $\sigma'$, where $\sigma~X$ appears.

        In the second case, 
        we can similarly conclude that 
        there is a
		finite 
		$\xrightarrow{X}$-path in $b$, from 
        $\sigma'.~\chi_1$ to $\sigma'.~\chi_2$ that only visits extensions of $\sigma'$.
        From $f(\sigma') R_{\varsigma} f(\sigma'.\varsigma)$ and $(\M, f(\sigma'.\varsigma)) \models \finb^X(nx(s),\varpi T)$ and the inductive hypothesis, we see that some $\sigma''~X$ occurs on that path.
        %
		Therefore, 
        both 
        conditions for $cyc_X(T,\sigma')$ are satisfied and 
		$f(\sigma') \in C_{Z_T^X}$, completing the proof of the first statement and the induction.

	We now prove that $cyc_X((\psi_1,\psi_2),\sigma)$ implies 
	$(\M,f(\sigma)) \models \finb^X((\psi_1,\psi_2),\varepsilon)$. 
	By Condition 
    2
    from $cyc_X((\psi_1,\psi_2),\sigma)$, there is a finite $\xrightarrow{X}$-path from $\psi_1$ to $\psi_2$ that only visits prefixes that extend $\sigma$, including $\sigma$.
	Let $k$ be the length of the longest $\sigma'$, such that the path goes through $\sigma.\sigma'$.
	We proceed by induction on $k$. Let $T = (\psi_1,\psi_2).$
		\textbf{If $k=0$} then the finite path remains in $\sigma$, and therefore $(\psi_1,\psi_2) \xrightarrow{g(\sigma),X} \emptyset$. As $(\M,f(\sigma)) \models g(\sigma)$, we conclude that $(\M,f(\sigma)) \models_\rho \finb^X((\psi_1,\psi_2),\varepsilon)$. 
		\textbf{If $k>0$} we demonstrate that 
        $$(\M,f(\sigma)) \models \finb^X((\psi_1,\psi_2),\varepsilon)
        .$$
		Let $p$ be the finite $\xrightarrow{X}$-path from $\psi_1$ to $\psi_2$. Since $k>0$, $p$ will leave $\sigma$ and return, perhaps multiple times. As $p$ leaves and returns to $\sigma$ $r$ times, in $\sigma$ it will visit exactly the formulas $\sigma~\chi_1,\sigma~\chi_2,\ldots,\sigma~\chi_{2r} \in b$, in that order.
		We note that for every $1 \leq i \leq r$, $(out,bot) \in \ell_{g(\sigma)}(\chi_{2i-1})$ and $(in,bot) \in \ell_{g(\sigma)}(\chi_{2i})$.
		Let $S = \{ s_i \mid  1 \leq i \leq r \}$, where $s_i= (\chi_{2i-1},\chi_{2i})$ for every $i$; we see that $(\psi_1,\psi_2) \xrightarrow{g(\sigma)} S$.
		Furthermore, if $p$ goes through $\sigma~X$, then $(\psi_1,\psi_2) \xrightarrow{g(\sigma),X} S$, and otherwise $p$ goes through some $\sigma'~X$ between some $\sigma~\chi_{2j-1}$ and $\sigma~\chi_{2j}$.
		For every $i \leq r$, 
        let $\varsigma_i = \al(s_i)$ or $\al(s_i)\diam{\psi_i'}$,
        depending on whether $\chi_{2i-1}$ is a box or a diamond formula; 
        there are some $\sigma.\varsigma_i~\chi'_{2i-1} \in b$ and $\sigma.\varsigma_i~\chi'_{2i} \in b$, such that 
		$\sigma~\chi_{2i-1} \xrightarrow{X} \sigma.\varsigma_i~\chi'_{2i}$ and $\sigma.\varsigma_i~\chi'_{2i} \xrightarrow{X} \sigma~\chi_{2i}$.
		Then, the part of $p$ from  $\sigma.\varsigma_i~\chi'_{2i-1}$ to $\sigma.\varsigma_i~\chi'_{2i}$ is a finite $\xrightarrow{X}$-path that only goes through extensions of $\sigma.\varsigma_i$ that extend the prefix by a length at most $k-1$.
		Therefore, $cyc(nx(s_i),\sigma.\varsigma_i)$ 
        and by the inductive hypothesis, 
		$(\M,f(\sigma.\varsigma_i)) \models \finb(nx(s_i),\varepsilon)$, and if it is not the case that $(\psi_1,\psi_2) \xrightarrow{g(\sigma),X} S$, 
        then
        $(\M,f(\sigma.\varsigma_j)) \models \finb^X(nx(s_j),\varepsilon)$.
		Using Lemma \ref{lem:finb-equivalences-variables}, we see that
        $$(\M,f(\sigma.\varsigma_j)) \models \finb^X(nx(s_j),(\psi_1,\psi_2))[\finb^X((\psi_1,\psi_2),\varepsilon)/Z^X_T].$$
        From  the fact that $f(\sigma) R_{\al(s_i)} f(\sigma.\varsigma_i)$, we conclude that 
		$$(\M,f(\sigma)) \models \finb^X((\psi_1,\psi_2),\varepsilon,g(\sigma))[\finb^X((\psi_1,\psi_2),\varepsilon)/Z^X_T].$$
        As $\finb^X((\psi_1,\psi_2),\varepsilon$ is a fixed-point formula, 
  this yields that 
		\begin{align*}
			(\M,f(\sigma)) \models \finb^X((\psi_1,\psi_2),\varepsilon).
			\tag*{\qedhere}
		\end{align*}
\end{proof}

We now express that there is an infinite path in $b$, where $X$ appears infinitely often, starting from $\sigma~\psi$.
\begingroup
\allowdisplaybreaks
\begin{align*}
	\infb^X(\psi,\varpi,\varpi_X) := &
	\begin{cases}
		Z_\psi^X &\text{ if $\psi$ appears in }\varpi_X \\
		Z_\psi &\text{ if }\psi  \in \varpi \text{ and }\not\in \varpi_X \\
	\mx Z_\psi.
	\mn Z_\psi^X.
	\displaystyle
	\bigvee_{
		\psi 	
	\in \Phi_g} g \land 
	\infb^X(\psi,\varpi,\varpi_X,g)
	&\text{ otherwise}
	\end{cases}
 \\ 
	\infb^X(\psi,\varpi,\varpi_X,g) :=&
		\bigvee_{
			\psi \xrightarrow{g,X} S,\psi'
			}
						\bigwedge_{s \in S} 
						\diam{\al(s)} \finb(nx(s),\varepsilon)
						\land 
						\diam{\al(\psi')} \infb^X(nx(\psi'),\varpi \psi,\varepsilon)
			\\
			&\lor
			\\
			&\bigvee_{
			\psi \xrightarrow{g} S,\psi'
			}
			\left(
			\begin{array}{l}
			\left(
			\begin{array}{l}\displaystyle
						\bigvee_{s \in S}
						\diam{\al(s)} \finb^X(nx(s),\varepsilon)
						\\ 
						\qquad 
						\land 
						\\
						\displaystyle
						\bigwedge_{s \in S} 
						\diam{\al(s)} \finb(nx(s),\varepsilon)
						\land 
						\diam{\al(\psi')} \infb^X(nx(\psi'),\varpi \psi,\varepsilon)
			\end{array}
			\right)
			\\
			\qquad \qquad \qquad \lor
			 \\
			\displaystyle 
			\bigwedge_{s \in S} 
						\diam{\al(s)} \finb(nx(s),\varepsilon)
						\land 
						\diam{\al(s)} \infb^X(nx(\psi'),\varpi \psi,\varpi_X\psi)
					\end{array}
					\right)
	\\
	\logicize{inf\_path}_X := &
		\finb^X(\emptyset,\epsilon)
		\lor 
		\bigvee_{\psi \in g}
	\left(
		g \land  
		\infb^X(\psi,\epsilon,\epsilon)
	\right)
	\\
	\logicize{inf\_path} := & \bigvee_{X \in mV}
	\logicize{inf\_path}_X 
\end{align*}
\endgroup 

\changeJ{
The formula $\infb^X(\psi,\varpi,\varpi_X,g)$ describes that there is an \emph{infinite} path in the branch that does not visit prefixes smaller than the current one, where $X$ appears infinitely often, according to the intuition in Figure \ref{fig:finite-paths}. We use the formulas $\finb$ to express that two formulas in the current prefix are connected by a finite path (that does not revisit the current prefix), and we use $\pi_X$ to keep track of whether we have encountered $X$ along the path, after the latest recursive call of $\infb^X(\psi,\varpi,\varpi_X,g)$. 
In the definition of $\infb^X(\psi,\varpi,\varpi_X,g)$, if $\psi$ appears in $\varpi_X$, then we activate the least-fixed-point operator, 
because $\psi$ appearing in $\varpi_X$ marks that $X$ was not encountered since the last activation of $\infb^X(\psi,-,-,g)$,
and otherwise the greatest-fixed-point operator. 
We do that to ensure that as the formula evaluates along a $\xrightarrow{X}$-path, $X$ is encountered infinitely often, because if it is not, then only the least-fixed-point operator would activate infinitely often.
}

\begin{lem}\label{lem:infb-equivalences-variables}
	For every $\psi,\psi' \in \sub(\varphi)$ and $\varpi \in (\sub(\varphi))^*$, such that $\psi'$ does not appear in $\varpi$,
	\[
	\trueset{\infb^X(\psi,\varpi,\varepsilon)}  = 
	 \trueset{\infb^X(\psi,\varpi\psi',\psi')[\infb^X(\psi',\varpi,\varepsilon)/Z^X_{\psi'}]} \]

  \[= 
	\trueset{\infb^X(\psi,\varpi\psi',\psi'),\rho\left[Z^X_{\psi'}\mapsto\trueset{\infb^X(\psi',\varpi,\varepsilon)}\right]} 
	.\]
\end{lem}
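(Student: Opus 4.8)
The plan is to mirror the proof of Lemma \ref{lem:finb-equivalences-variables} exactly, since the only structural difference is that $\infb^X$ carries two bookkeeping strings $\varpi$ and $\varpi_X$ instead of one, and nests a $\mx$ above a $\mn$ rather than using a single $\mn$. As in that lemma, the second equality is the standard fact that substituting a formula $\chi$ for a free variable $Z$ into a formula $\eta$ gives $\trueset{\eta[\chi/Z]} = \trueset{\eta,\rho[Z\mapsto\trueset{\chi}]}$ (for any $\rho$, since all formulas involved are closed except for the variable $Z^X_{\psi'}$, which is the only one mentioned). So the substance is the first equality, and I would prove the stronger statement: for every string $\varpi$ in which $\psi'$ does not appear,
\[
\trueset{\infb^X(\psi,\varpi,\varepsilon)} = \trueset{\infb^X(\psi,\varpi\psi',\psi')[\infb^X(\psi',\varpi,\varepsilon)/Z^X_{\psi'}]}.
\]
I prove this by induction on the structure of $\infb^X(\psi,\varpi\psi',\psi')$, just as before.

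The base cases come from the three-way case split in the definition of $\infb^X(\psi,\varpi,\varpi_X)$. If $\infb^X(\psi,\varpi\psi',\psi') = Z^X_\psi$, then $\psi$ appears in $\varpi_X = \psi'$, i.e.\ $\psi = \psi'$; then the left side is $\infb^X(\psi',\varpi,\varepsilon)$ and the right side is $Z^X_{\psi'}[\infb^X(\psi',\varpi,\varepsilon)/Z^X_{\psi'}] = \infb^X(\psi',\varpi,\varepsilon)$, so the two coincide syntactically. If $\infb^X(\psi,\varpi\psi',\psi') = Z_\psi$, then $\psi$ appears in $\varpi\psi'$ but not in $\psi'$, so $\psi$ appears in $\varpi$ and $\psi \neq \psi'$; hence $\infb^X(\psi,\varpi,\varepsilon) = Z_\psi$ as well (it is the ``$\psi \in \varpi$ and $\notin \varepsilon$'' case — here I should double-check the precise reading of the case guard ``$\psi \in \varpi$ and $\notin \varpi_X$'' with $\varpi_X = \varepsilon$, which is vacuous, so indeed $Z_\psi$), and the substitution does nothing to $Z_\psi$ since $\psi \neq \psi'$. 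In the remaining (``otherwise'') case, $\infb^X(\psi,\varpi\psi',\psi') = \mx Z_\psi.\mn Z^X_\psi.\bigvee_{\psi\in\Phi_g} g \land \infb^X(\psi,\varpi\psi',\psi',g)$, and I unfold one layer and appeal to the induction hypothesis on the subformulas $\infb^X(nx(s),\varpi\psi,\varepsilon)$, $\infb^X(nx(\psi'),\varpi\psi,\varepsilon)$, and $\infb^X(nx(\psi'),\varpi\psi,\varpi_X\psi)$ appearing inside $\infb^X(\psi,\varpi\psi',\psi',g)$; crucially the $\finb$-subformulas contain no $Z^X_{\psi'}$ (they use a disjoint variable family $Z_T$, $Z^X_T$ over $T \in \sub(\varphi)^2$), so the substitution passes through them unchanged, and the identity of fixed-point operators $\mu Z.\,F(Z) = \mu Z.\,F(Z)$ finishes the step.

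The main obstacle — really the only one — is checking that the side conditions on the bookkeeping strings are handled correctly so that the induction is well-founded and the case guards line up. Specifically: (i) that whenever a recursive call $\infb^X(\cdot,\varpi\psi,\varepsilon)$ or $\infb^X(\cdot,\varpi\psi,\varpi_X\psi)$ is made, the hypothesis ``$\psi'$ does not appear in the string'' is preserved (this holds because $\psi' \neq \psi$ in the ``otherwise'' case — $\psi'$ is not yet in $\varpi\psi$ — and resetting $\varpi_X$ to $\varepsilon$ or appending $\psi$ keeps $\psi'$ out), and (ii) that the nested $\mx$/$\mn$ structure does not create an extra case to worry about: since $\psi'$ never appears in $\varpi$ by hypothesis, the variable $Z_{\psi'}$ (greatest-fixed-point copy) is also never free in $\infb^X(\psi,\varpi,\varepsilon)$, so only $Z^X_{\psi'}$ can occur free and the substitution statement as written is the right one. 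Once these bookkeeping checks are in place the argument is a routine structural induction, word-for-word parallel to Lemma \ref{lem:finb-equivalences-variables}.
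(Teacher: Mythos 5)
Your proposal is correct and takes essentially the same route as the paper: the paper's own proof of this lemma is a one-line deferral to the argument for Lemma \ref{lem:finb-equivalences-variables}, and your plan is precisely an execution of that argument (the strengthened statement quantified over all strings $\varpi$ avoiding $\psi'$, the structural induction with the variable cases $Z^X_\psi$ and $Z_\psi$ as base cases, and the routine unfolding in the ``otherwise'' case). If anything, your write-up is more explicit than the paper's, particularly in flagging the bookkeeping on $\varpi$ and $\varpi_X$ and the fact that only $Z^X_{\psi'}$, not $Z_{\psi'}$, can occur free.
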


\begin{proof}
	The reasoning to prove this lemma is as in the proof of Lemma \ref{lem:finb-equivalences-variables}.
\end{proof}

\begin{lem}\label{lem:infb-correct}
	Let $X \in mV$, 
	$f:b \xrightarrow{r} \M,s$,
	and $\sigma \in \sigma(b)$. 
        Then
	$(\M,f(\sigma)) \models \infb^X(\psi,\varepsilon,\varepsilon)$ 
	if and only if 
	\begin{enumerate}
		\item $\sigma ~ \psi \in b$, and
		\item there is an infinite $\xrightarrow{X}$-path in $b$, from $\sigma~\psi$, where $X$ appears infinitely often, that only visits prefixes that extend $\sigma$.
	\end{enumerate}
\end{lem}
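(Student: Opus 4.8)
The plan is to prove Lemma~\ref{lem:infb-correct} by induction, mirroring the structure of the proof of Lemma~\ref{lem:finb-correct}, and leaning on the finite-path correctness already established there. The two directions are handled separately. For the ``only if'' direction, I would first observe that $(\M,f(\sigma)) \models \infb^X(\psi,\varepsilon,\varepsilon)$ immediately forces $\psi \in \Phi_{g(\sigma)}$, hence $\sigma~\psi \in b$ by item~\ref{item:states-describe-prefixes} of Definition~\ref{def:model-represents-branch}. The substantive part is producing the infinite $\xrightarrow{X}$-path visiting only extensions of $\sigma$ and containing infinitely many occurrences of $X$. The natural move is to interpret the outer $\mx Z_\psi$ operator coinductively: I would define, for each $\psi'$, the set $C_{Z_{\psi'}} = \{ f(\sigma') \mid \exists\text{ an infinite }\xrightarrow{X}\text{-path from }\sigma'~\psi'\text{ through extensions of }\sigma'\text{ with infinitely many }X\}$, and show it is a post-fixed point of the relevant monotone operator; by the greatest-fixed-point semantics of Table~\ref{table:semantics} this gives $\trueset{\infb^X(\psi',\varepsilon,\varepsilon)} \subseteq C_{Z_{\psi'}}$ after the usual unwinding. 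Concretely, at each activation of $\infb^X(\psi,\varpi,\varpi_X,g)$ the satisfied disjunct supplies a set $S$ with $\psi \xrightarrow{g,X} S, \psi'$ or $\psi \xrightarrow{g} S, \psi'$; Lemma~\ref{lem:finb-correct} turns each $s \in S$ into a concrete finite $\xrightarrow{X}$-path in a child prefix that returns, and the $\diam{\al(\psi')}$-conjunct gives a child prefix from which an infinite path continues. Splicing these finite detours together with the continuation path, and using the conditions on $\xrightarrow{X}$ for box and diamond formulas (exactly as in the ``$k>0$'' case of Lemma~\ref{lem:finb-correct}) to bridge between $\sigma$ and $\sigma.\varsigma$, yields the desired infinite path in $b$.

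The crucial subtlety — and the step I expect to be the main obstacle — is verifying that $X$ really appears \emph{infinitely} often on the path produced, not merely that the path is infinite. This is exactly what the alternation between $\mx Z_\psi$ and $\mn Z^X_\psi$ in the definition of $\infb^X(\psi,\varpi,\varpi_X)$ is designed to enforce, and the bookkeeping parameter $\varpi_X$ tracks whether $X$ has been met since the last $\mx$-activation. I would argue that along any branch of the evaluation, the least-fixed-point variable $Z^X_\psi$ cannot be unfolded infinitely often without the greatest-fixed-point variable $Z_\psi$ being re-activated, and $Z_\psi$ re-activates only when $\varpi_X$ is reset, which (by inspection of the recursive calls) happens precisely when an $X$-occurrence has been passed; hence infinitely many re-activations of the $\mx$-operator correspond to infinitely many $X$-occurrences on the path. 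Making this precise requires a careful parity/signature argument on the evaluation tree of the $\mu$-calculus formula, analogous to the reasoning behind memoryless determinacy of parity games invoked for Theorem~\ref{thm:lfp-finite-dep}; I would phrase it via the standard ``a $\mu$-variable cannot be regenerated infinitely often unless a higher $\nu$-variable is regenerated infinitely often'' principle.

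For the ``if'' direction, I would proceed by taking an infinite $\xrightarrow{X}$-path $p$ from $\sigma~\psi$ through extensions of $\sigma$ with infinitely many $X$-occurrences, and decompose it according to how it enters and leaves $\sigma$: either $p$ eventually stays within some prefix-subtree rooted at a strict extension of $\sigma$, or it returns to $\sigma$ infinitely often. In the first case I apply the inductive hypothesis (on the depth of the minimal prefix revisited infinitely often, or on the number of distinct formulas, as appropriate) to a child $\sigma.\varsigma$. In the second case, $p$ visits some formula $\sigma~\chi$ infinitely often (finitely many formulas have prefix $\sigma$), so the segments of $p$ between consecutive visits give rise, via Lemma~\ref{lem:finb-correct}, to satisfied $\finb$ and $\finb^X$ conjuncts in the appropriate $\infb^X(\cdot,\cdot,\cdot,g(\sigma))$ disjunct, and the tail continuation supplies the $\diam{\al(\psi')}\infb^X(\cdots)$ conjunct; unwinding the fixed points (using Lemma~\ref{lem:infb-equivalences-variables} to handle the substitution of $Z^X$-variables) then yields $(\M,f(\sigma)) \models \infb^X(\psi,\varepsilon,\varepsilon)$. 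Here too one must check that infinitely many $X$-occurrences on $p$ translate into the $\finb^X$-disjuncts (rather than only $\finb$-disjuncts) being triggered infinitely often, which is the mirror image of the obstacle above and is discharged by the same $\varpi_X$-bookkeeping. The remaining cases of the lemma not explicitly treated are, as in Lemma~\ref{lem:finb-correct}, obtained by the same reasoning, and I would simply remark this rather than repeat the argument.
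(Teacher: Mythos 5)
Your overall plan is the paper's: both directions rest on Lemma~\ref{lem:finb-correct}; one direction builds the infinite path by splicing together the finite detours read off from the satisfied disjuncts and rules out finitely many occurrences of $X$ via the ``a least-fixed-point variable cannot regenerate infinitely often'' principle (which the paper packages as the lfp-finite dependency relation of Theorem~\ref{thm:lfp-finite-dep}); the other decomposes the given infinite path at its returns to $\sigma$. However, you have attached the greatest-fixed-point machinery to the wrong direction, and as written that step fails. Showing that $C_{Z_{\psi'}}$ is a post-fixed point of the monotone operator yields $C_{Z_{\psi'}} \subseteq \trueset{\infb^X(\psi',\varepsilon,\varepsilon)}$, not the inclusion $\trueset{\infb^X(\psi',\varepsilon,\varepsilon)} \subseteq C_{Z_{\psi'}}$ that you claim: post-fixed points sit \emph{below} the greatest fixed point, so this argument cannot establish ``formula $\Rightarrow$ path''. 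It is exactly the right tool for the converse, and that is where the paper uses it: it defines $IP_{\psi'}$ (your $C_{Z_{\psi'}}$) and evaluates under the environment sending $Z_{\psi'}$ to $\{f(\sigma')\mid \sigma'\in IP_{\psi'}\}$, which is indispensable because in the ``path $\Rightarrow$ formula'' direction the path may return to $\sigma$ infinitely often and mere unwinding of the outer $\mx Z_\psi$ never terminates. Your ``if'' direction, which relies only on ``unwinding the fixed points'', is therefore missing this coinduction step.

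Two further points. For the inner least fixed point $\mn Z^X_\psi$ in the ``path $\Rightarrow$ formula'' direction, membership must be certified by a well-founded induction; the paper inducts on the length $k$ of the minimal extension $\sigma''$ such that the path reaches $\sigma'.\sigma''~\psi'$ \emph{after} having passed an occurrence of $X$, and your proposed measure (``depth of the minimal prefix revisited infinitely often'') is not obviously adequate for that purpose. For the ``formula $\Rightarrow$ path'' direction, once the misplaced coinduction is deleted, your concrete splicing construction plus the signature argument for infinitely many occurrences of $X$ is essentially the paper's proof; the paper just fixes an lfp-finite dependency relation up front and derives the contradiction from an infinite $\xrightarrow{Z^X_{\psi'}}$-regeneration of the largest infinitely recurring subformula, rather than re-proving the parity principle from scratch.
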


\begin{proof}
	We first assume that $\sigma ~ \psi \in b$,
	and that there is an infinite $\xrightarrow{X}$-path in $b$, from $\sigma~\psi$, where $X$ appears infinitely often, that only visits prefixes that extend $\sigma$, and prove that 
	$(\M ,f(\sigma))\models \infb^X(\psi,\varepsilon,\varepsilon)$.
	For every $\psi' \in \sub(\varphi)$, let $IP_{\psi'}$ be the set of prefixes $\sigma' \in \sigma(b)$, such that 
	$\sigma' ~ \psi' \in b$,
	and that there is an infinite $\xrightarrow{X}$-path in $b$, from $\sigma'~\psi'$, where $X$ appears infinitely often, that only visits prefixes that extend $\sigma'$.
	To prove the implication, we prove the stronger statement that for every $\varpi$, 
	$(\M ,f(\sigma))\models_\rho \infb^X(\psi,\varpi,\varepsilon)$, where for every $\psi' \in \sub(\varphi)$, $\rho(Z_{\psi'}) = \{f(\sigma') \mid \sigma' \in IP_{\psi'}\}$. 
	We proceed by induction on the structure of the formula.

	If $\psi$ appears in $\varpi$, 
	then $\trueset{\infb^X(\psi,\varpi,\varepsilon),\rho} = \trueset{Z_\psi} = \{f(\sigma') \mid \sigma' \in IP_{\psi}\} \ni f(\sigma)$, and we are done.
	Otherwise, using the semantics of Table \ref{table:semantics}, since 
	$$f(\sigma) \in \{f(\sigma') \mid \sigma' \in IP_{\psi'}\} = \rho(Z_\psi),$$ it suffices to prove that 
	\[
		\{f(\sigma') \mid \sigma' \in IP_{\psi'}\} \subseteq 	
		\trueset{\mn Z_\psi^X. \bigvee_{\psi \in \Phi_g} \infb^X(\psi,\varpi,\varepsilon,g),\rho}.
	\]
	Let $\sigma' \in IP_{\psi'}$. 
	There must be an infinite $\xrightarrow{X}$-path in $b$ from $\sigma'~\psi$ that visits $X$ infinitely often, and 
	avoids strict prefixes of $\sigma$.
	That path may visit certain prefixes more than once.
	Let $\sigma'.\sigma''$ be minimal, such that the path goes through some $\sigma'.\sigma''~\psi' \in b$ after visiting $X$ at least once after $\sigma'~\psi$.
	In particular, this means that $X$ is visited only in extensions of $\sigma'.\sigma''$.
	Furthermore, since the path is infinite, we can assume that 
	the path visits a strict extension of $\sigma'.\sigma''$ right after 
	$\sigma'.\sigma''~\psi'$. 
	Let $k$ be the length of $\sigma''$.
	
 We prove that $(\M,f(\sigma')) \models_\rho \mn Z_\psi^X. \bigvee_{\psi \in \Phi_g} \infb^X(\psi,\varpi,\varepsilon,g)$ and we do so by induction on~$k$.
\begin{description}
    \item[Case of $k=0$]
		in this case, the path starts from $\sigma'~\psi$ and visits $\sigma'~\psi'$ after visiting $X$ at least once.
		Therefore, since we have assumed that 
		the path visits a strict extension of $\sigma'$ right after 
		$\sigma'~\psi'$, 
		we have that for some $S \subseteq \sub^2(\varphi)$,
		where 
		$cyc(s,\sigma')$ through a path that only visits extensions of some $\sigma'.a_s$, for each $s \in S$,
		$\psi \xrightarrow{g,X} S \psi'$, 
		or $\psi \xrightarrow{g} S \psi'$ 
		and 
		$cyc_X(s_X,\sigma')$ through a path that only visits extensions of $\sigma'.a_{s_X}$, for some $s_X \in S$.
		For each $s\in S$, notice that $f(\sigma') R_{\al(s)} f(\sigma'.a_s)$, and that $cyc(nx(s),\sigma'.a_s)$ and $cyc_X(nx(s_X),\sigma'.a_{s_X})$.
		Therefore, for $g = g(\sigma')$, using  Lemma \ref{lem:finb-correct} and the inductive hypothesis for $\infb^X(nx(\psi'),\varpi \psi,\varepsilon)$, 
		either $\psi \xrightarrow{g,X} S \psi'$ and 
		\[
			(\M,f(\sigma')) \models_\rho \bigwedge_{s \in S} 
			\diam{\al(s)} \finb(nx(s),\varepsilon)
			\land 
			\diam{\al(\psi')} \infb^X(nx(\psi'),\varpi \psi,\varepsilon);
		\]
		or $\psi \xrightarrow{g} S \psi'$, 
		\begin{align*}
			(\M,f(\sigma')) \models_\rho& \diam{\al(s_X)} \finb^X(nx(s_X),\varepsilon); \text{ and}\\ 
			(\M,f(\sigma')) \models_\rho& \bigwedge_{s \in S} 
			\diam{\al(s)} \finb(nx(s),\varepsilon)
			\land 
			\diam{\al(\psi')} \infb^X(nx(\psi'),\varpi \psi,\varepsilon) . 
		\end{align*}
		For both cases, we 
		observe that $Z_\psi^X$ is not free in $\infb^X(nx(\psi'),\varpi \psi,\varepsilon)$, and therefore 
		either $\psi \xrightarrow{g,X} S \psi'$ and 
		\[
			(\M,f(\sigma')) \models_{\rho'} \bigwedge_{s \in S} 
			\diam{\al(s)} \finb(nx(s),\varepsilon)
			\land 
			\diam{\al(\psi')} \infb^X(nx(\psi'),\varpi \psi,\varepsilon);
		\]
		or $\psi \xrightarrow{g} S \psi'$, 
		\begin{align*}
			(\M,f(\sigma')) \models_{\rho'}& \diam{\al(s_X)} \finb^X(nx(s_X),\varepsilon); \text{ and}\\ 
			(\M,f(\sigma')) \models_{\rho'}& \bigwedge_{s \in S} 
			\diam{\al(s)} \finb(nx(s),\varepsilon)
			\land 
			\diam{\al(\psi')} \infb^X(nx(\psi'),\varpi \psi,\varepsilon) , 
		\end{align*}
		where $\displaystyle \rho' = \rho\left[Z_\psi^X \mapsto \trueset{\mn Z_\psi^X. \bigvee_{\psi \in \Phi_g} \infb^X(\psi,\varpi,\varepsilon,g)}\right]$.
		We 
		conclude therefore that 
		$(\M,f(\sigma')) \models_{\rho'} \bigvee_{\psi \in \Phi_g} \infb^X(\psi,\varpi,\varepsilon,g)$, and using the fact that \\ $\mn Z_\psi^X. \bigvee_{\psi \in \Phi_g} \infb^X(\psi,\varpi,\varepsilon,g)$ is a fixed point formula, we conclude that
		\[(\M,f(\sigma')) \models_\rho \mn Z_\psi^X. \bigvee_{\psi \in \Phi_g} \infb^X(\psi,\varpi,\varepsilon,g).\]
    \item[Case of $k>0$]
		let $\sigma'~\psi'$ be the last formula that the path visits in $\sigma'$.
		Therefore, 
		the path visits a strict extension of $\sigma'$ right after 
		$\sigma'~\psi'$, and 
		we have that for some $S \subseteq \sub^2(\varphi)$,
		where 
		$cyc(s,\sigma')$ through a path that only visits extensions of some $\sigma'.a_s$, for each $s \in S$,
		$\psi \xrightarrow{g} S \psi'$. 
		For each $s\in S$, notice that $f(\sigma') R_{\al(s)} f(\sigma'.a_s)$.
		Therefore, for $g = g(\sigma')$, using  Lemma \ref{lem:finb-correct} and the inductive hypothesis for $k$, 
		$\psi \xrightarrow{g} S \psi'$ and 
		\[
			(\M,f(\sigma')) \models_\rho \bigwedge_{s \in S} 
			\diam{\al(s)} \finb(nx(s),\varepsilon)
			\land 
			\diam{\al(\psi')} \infb^X(nx(\psi'),\varpi,\varepsilon).
		\]
		From this, Lemma \ref{lem:infb-equivalences-variables} yields that 
		\[
			(\M,f(\sigma')) \models_\rho \bigwedge_{s \in S} 
			\diam{\al(s)} \finb(nx(s),\varepsilon)
			\land 
			\diam{\al(\psi')} \infb^X(nx(\psi'),\varpi \psi,\psi)[\infb^X(nx(\psi),\varpi,\varepsilon)/Z^X_\psi].
		\]
		Therefore, $(\M,f(\sigma')) \models_\rho  \infb^X(\psi,\varpi,\varpi_X,g(\sigma'))[\infb^X(nx(\psi),\varpi,\varepsilon)/Z^X_\psi]$, and therefore 
		\[(\M,f(\sigma')) \models_\rho  \bigvee_{\psi \in \Phi_g} \infb^X(\psi,\varpi,\varpi_X,g)[\infb^X(nx(\psi),\varpi,\varepsilon)/Z^X_\psi].\]
		But $\bigvee_{\psi \in \Phi_g} \infb^X(\psi,\varpi,\varpi_X,g)[\infb^X(nx(\psi),\varpi,\varepsilon)/Z^X_\psi]$ is an unfolding of $\infb^X(nx(\psi),\varpi,\varepsilon)$, and we conclude that 
		\[(\M,f(\sigma')) \models_\rho  \infb^X(nx(\psi),\varpi,\varepsilon).\]
\end{description}	
	%

	We now assume that $(\M,f(\sigma)) \models \infb^X(\psi,\varepsilon,\varepsilon)$ 
	and prove that 
	\begin{enumerate}
		\item $\sigma ~ \psi \in b$, and
		\item there is an infinite $\xrightarrow{X}$-path in $b$, from $\sigma~\psi$, where $X$ appears infinitely often, that only visits prefixes that extend $\sigma$.
	\end{enumerate}
	It is straightforward to see that for every prefix $\sigma \in \sigma(b)$, $(\M,f(\sigma)) \models \infb^X(\psi,\varpi,\varpi_X)$ implies that $\sigma~\psi \in b$.

	Let $\rightarrow$ be an lfp-finite dependency relation, as ensured by Theorem \ref{thm:lfp-finite-dep}. We demonstrate how to recursively construct an infinite $\xrightarrow{X}$-path in $b$, starting from $\sigma~\psi$, where $X$ appears infinitely often. 
	Let $p_0 = \sigma~\psi = \sigma_0~\psi_0$. For a finite $\xrightarrow{X}$-path $p_i$ in $b$ that ends in $\sigma_i~\psi_i$, such that 
	$(\M,f(\sigma_i)) \models \infb^X(\psi_i,\varpi^i,\varpi^i_X)$, we show how to extend the path to $p_{i+1}$ 
	that ends in $\sigma_{i+1}~\psi_{i+1}$, such that 
	$(\M,f(\sigma_{i+1})) \models \infb^X(\psi_{i+1},\varpi^{i+1},\varpi^{i+1}_X)$, and $$(\sigma_i,\infb^X(\psi_i,\varpi^i,\varpi^i_X)) \rightarrow^+ (\sigma_{i+1},\infb^X(\psi_{i+1},\varpi^{i+1},\varpi^{i+1}_X))~.$$

	By observing the definition of  
	$
	\infb^X(\psi_i,\varpi^i,\varpi^i_X)
	$, 
	and using Lemma \ref{lem:finb-correct},
	it
	is not hard to observe that 
	$$(\sigma_i,\infb^X(\psi_i,\varpi^i,\varpi^i_X)) \rightarrow^+ (\sigma_{i+1},\infb^X(\psi_{i+1},\varpi^{i+1},\varpi^{i+1}_X))$$ for some $(\sigma_{i+1},\infb^X(\psi_{i+1},\varpi^{i+1},\varpi^{i+1}_X))$, where 
	\begin{enumerate}
		\item $\sigma_i R_a \sigma_{i+1}$ for some $a \in \act_\varphi$;
		\item there is a finite $\xrightarrow{X}$-path $\varkappa_i$ in $b$ from $\sigma_i~\psi_i$ to $\sigma_{i+1}~\psi_{i+1}$; and 
		\item $\varkappa_i$ goes through some $\sigma'~X$ if and only if $\varpi_X^{i+1} = \varepsilon$.
	\end{enumerate}
	We define $p_{i+1} = p_i\varkappa_i$. The infinite path $p$ is the path that has every $p_i$ as a prefix.
	If $X$ does not appear infinitely often in $p$, then there is some $i > 0$, such that every $\varkappa_j$, where $j >i$, does not go through any $\sigma'~X$. The dependency relation $\rightarrow$ goes through finitely many formulas, and therefore it goes through some infinitely many times. Let $\infb^X(\psi',\varpi,\varpi_X)$ be the largest subformula of $\infb^X(\psi,\varepsilon,\varepsilon)$ that appears infinitely often as a $\infb^X(\psi_j,\varpi^j,\varpi^j_X)$. 
	Then, we observe that there is an infinite $\xrightarrow{Z_{\psi'}^X}$-path from some $(f(\sigma_j),\infb^X(\psi',\varpi,\varpi_X))$ that visits $Z_{\psi'}^X$ infinitely often. Therefore, the dependency relation is not lfp-finite, contradicting our assumption that $(\M,f(\sigma)) \models \infb^X(\psi,\varepsilon,\varepsilon)$. We conclude that $X$ appears infinitely often in $p$, which completes our proof.
\end{proof}

\begin{thm}\label{thm:tableau-reduction}
	The formula $\logicize{rules} \land \Inv(\neg \logicize{inf\_path}) \land \bigvee_{\varphi \in g \in G^t} g$ is satisfiable if and only if $\varphi$ is $\logicize{L}$-satisfiable.
\end{thm}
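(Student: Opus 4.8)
The plan is to prove the two implications separately, in each case passing through a tableau branch for $\varphi$ and a model that represents it in the sense of Definition~\ref{def:model-represents-branch}, and invoking the correctness lemmas for $\logicize{rules}$ (Lemma~\ref{lem:rules-iff-ex-b--rep-impl-max-not-closed}), for the $\finb^X$ formulas (Lemma~\ref{lem:finb-correct}) and for the $\infb^X$ formulas (Lemma~\ref{lem:infb-correct}). Throughout, recall that $\Inv(\neg\logicize{inf\_path})$ holds at $s$ exactly when $\neg\logicize{inf\_path}$ holds at every state reachable from $s$ (including $s$ itself), and that, for a representation $f\colon b\xrightarrow{r}\M,s$, items~\ref{item:states-describe-prefixes}--\ref{item:every-transition-to-a-prefix} of Definition~\ref{def:model-represents-branch} make the image of $f$ coincide with the set of states reachable from $s$.

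For the implication ``$\varphi$ is $\logicize{L}$-satisfiable $\Rightarrow$ the formula is satisfiable'', I would start from a maximal $\logicize{L}$-tableau for $\varphi$ with an open branch $b$, which exists by Theorem~\ref{thm:tableaux}, and take the pointed model $(\M,s)$ with $f\colon b\xrightarrow{r}\M,s$ provided by Lemma~\ref{lem:b-to-model-representing}. Then I would verify each conjunct: $(\M,s)\models\logicize{rules}$ by Lemma~\ref{lem:rules-iff-ex-b--rep-impl-max-not-closed}(1), since $b$ is maximal and, being open, not propositionally closed; $(\M,s)\models\bigvee_{\varphi\in g\in G^t}g$ because $\varepsilon~\varphi\in b$, so $\varphi\in\Phi_{g_b(\varepsilon)}$, $g_b(\varepsilon)\in G^t$, and $(\M,f(\varepsilon))=(\M,s)\models g_b(\varepsilon)$; and $(\M,s)\models\Inv(\neg\logicize{inf\_path})$, for if some reachable state, necessarily of the form $f(\sigma)$, satisfied $\logicize{inf\_path}_X$ for some $X\in mV$, then one of the disjuncts $\finb^X(\emptyset,\varepsilon)$ or $g\wedge\infb^X(\psi,\varepsilon,\varepsilon)$ would hold at $f(\sigma)$, and Lemmas~\ref{lem:finb-correct} and \ref{lem:infb-correct} would yield an $\xrightarrow{X}$-cycle through $X$ (which, iterated, is an infinite $\xrightarrow{X}$-path on which $X$ occurs infinitely often) or directly such an infinite path, contradicting that $b$ is not fixed-point-closed.

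For the converse implication, I would take a model $(\M,s)$ of the whole formula, apply Lemma~\ref{lem:rules-iff-ex-b--rep-impl-max-not-closed}(1) to obtain a representation $f\colon b\xrightarrow{r}\M,s$ with $b$ maximal and not propositionally closed, and use the conjunct $\bigvee_{\varphi\in g\in G^t}g$ together with item~\ref{item:states-describe-one-prefix} of Definition~\ref{def:model-represents-branch} to conclude $\varepsilon~\varphi\in b$, so that $b$ is a maximal, propositionally open branch of a tableau for $\varphi$. By Theorem~\ref{thm:tableaux} it then suffices to show that $b$ is also not fixed-point-closed. Suppose, towards a contradiction, that there is an infinite $\xrightarrow{X}$-path $p$ in $b$, for some $X\in mV$, on which $X$ occurs infinitely often. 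The crucial step is to \emph{localise} $p$ to a subtree of the prefix-tree of $b$: viewing the prefixes visited by $p$ as a walk on this (finitely branching) tree, I would argue that $p$ has a tail that starts at some prefix $\sigma$ and thereafter visits only extensions of $\sigma$ --- if some prefix is visited infinitely often by $p$, take $\sigma$ to be a $\prec$-minimal such one and let the tail start at a visit to $\sigma$ later than the last visit to any strict ancestor of $\sigma$; otherwise every prefix is visited only finitely often, so, as $\sigma(b)$ contains only finitely many prefixes of each length, $p$ attains its minimal depth only finitely often, and one may take $\sigma$ to be the prefix $p$ is at at the last moment it attains this minimal depth, the tail starting there. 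If $\psi$ is the formula carried by $p$ at the start of that tail, then $\sigma~\psi\in b$ and the tail is an infinite $\xrightarrow{X}$-path from $\sigma~\psi$, on which $X$ occurs infinitely often, visiting only extensions of $\sigma$; Lemma~\ref{lem:infb-correct} then gives $(\M,f(\sigma))\models\infb^X(\psi,\varepsilon,\varepsilon)$, hence $(\M,f(\sigma))\models\logicize{inf\_path}$, contradicting $(\M,s)\models\Inv(\neg\logicize{inf\_path})$ since $f(\sigma)$ is reachable from $s$.

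The main obstacle I expect is exactly this localisation step: the formulas $\finb^X$, $\infb^X$ and Lemmas~\ref{lem:finb-correct}, \ref{lem:infb-correct} only describe $\xrightarrow{X}$-paths confined to the subtree of a given prefix, whereas an arbitrary infinite $\xrightarrow{X}$-path in $b$ may travel up and down the tree through the box rules; the tree-walk argument above shows that every such path nonetheless eventually commits to a subtree, which is precisely what makes Lemma~\ref{lem:infb-correct} applicable. The remaining points --- that the branch obtained in the converse direction is genuinely (a branch of) a tableau for $\varphi$, that $g_b(\sigma)\in G^t$ for every $\sigma\in\sigma(b)$ when $b$ is maximal and propositionally open, and the bookkeeping connecting reachability in $\M$ with the prefixes of $b$ --- are routine and follow the pattern already set in the proofs of Lemmas~\ref{lem:rules-iff-ex-b--rep-impl-max-not-closed} and \ref{lem:b-to-model-representing}.
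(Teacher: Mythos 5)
Your proposal is correct and takes essentially the same route as the paper's own proof: both directions pass through a represented tableau branch via Lemmas~\ref{lem:b-to-model-representing} and~\ref{lem:rules-iff-ex-b--rep-impl-max-not-closed}, appeal to Lemma~\ref{lem:infb-correct} for openness, and close with Theorem~\ref{thm:tableaux}. The paper's proof is much terser and in particular does not spell out your localisation step --- that an infinite $\xrightarrow{X}$-path in $b$, which may wander up and down the prefix tree, eventually commits to the subtree of a single prefix $\sigma$ so that Lemma~\ref{lem:infb-correct} becomes applicable --- and your tree-walk argument (using that the prefix tree is finitely branching and that each step moves by at most one tree edge) supplies that missing detail correctly.
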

\begin{proof}
	If $\logicize{rules} \land \Inv(\neg \logicize{inf\_path}) \land \bigvee_{\varphi \in g \in G^t} g$ is satisfiable, then there is a model $(\M, s) \models \logicize{rules} \land \Inv(\neg \logicize{inf\_path}) \land \bigvee_{\varphi \in g \in G^t} g$, and therefore $(\M, s) \models \logicize{rules}$. By Lemma \ref{lem:rules-iff-ex-b--rep-impl-max-not-closed}, $\M,s$ represent a maximal branch $b$ that is not locally closed. 
	From $(\M, s) \models \bigvee_{\varphi \in g \in G^t} g$, we have that $b$ is a branch of a tableau for $\varphi$, and from 
	$(\M, s) \models  \Inv(\neg \logicize{inf\_path})$ and Lemma \ref{lem:infb-correct}, that $b$ is open.
    Theorem \ref{thm:tableaux} now yields that $\varphi$ is $\LG$ satisfiable and we are done.

	On the other hand, if $\varphi$ is $\logicize{L}$-satisfiable, then its tableau has an open  branch, and similarly, by Lemmata \ref{lem:b-to-model-representing}, \ref{lem:rules-iff-ex-b--rep-impl-max-not-closed}, and \ref{lem:infb-correct}, 
	$\logicize{rules} \land \Inv(\neg \logicize{inf\_path}) \land \bigvee_{\varphi \in g \in G^t} g$ is satisfiable.
\end{proof}

\begin{cor}\label{cor:no5-in-eexp}
	$\logicize{L}$-satisfiability is in $\EEXP$, for every logic $\logicize{L}$, where no agent has constraint~$5$.
\end{cor}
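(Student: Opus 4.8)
The plan is to read the result off Theorem~\ref{thm:tableau-reduction}, which reduces $\logicize{L}$-satisfiability of a formula $\varphi$ to the satisfiability of the single $\mu$-calculus formula $\Psi_\varphi := \logicize{rules} \land \Inv(\neg \logicize{inf\_path}) \land \bigvee_{\varphi \in g \in G^t} g$ over the agent set $\act_\varphi = \{\al\diam{\psi} \mid \al \in \act,\ \psi \in \sub(\varphi)\}$. Since the present subsection assumes that no agent has axiom~$5$, the tableau rules, the graph predicates $com$, $c_{\al\diam{\chi}}$, $c_\al$, and the formulas $\logicize{rules}$ and $\logicize{inf\_path}$ are precisely the ones constructed above, so the reduction is available with no further work. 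What remains is to verify that $\Psi_\varphi$ is computable in, and has size at most, exponential in $|\varphi|$, and then to invoke the known complexity of $\mu$-calculus satisfiability.

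First I would bound $|\Psi_\varphi|$. The set $G^t \subseteq \G$ has at most $2^{\mathrm{poly}(|\varphi|)}$ elements, and $com(g)$, $c_{\al\diam{\chi}}(h,g)$, $c_\al(h,g)$, as well as the relations $T \xrightarrow{g,X} S$ and their variants from Definition~\ref{def:searchTtoS}, are all decidable in polynomial time, as already observed. Consequently $\logicize{rules}$, being a fixed Boolean combination indexed by $G^t$ and by the polynomially many agents of $\act_\varphi$, has size $2^{\mathrm{poly}(|\varphi|)}$ and is constructible in that time. For $\logicize{inf\_path}$ I would argue termination and size of the mutual recursion defining $\finb$, $\finb^X$, $\infb$, $\infb^X$: along any branch of the syntax tree of the constructed formula, each recursive unfolding of $\finb^X(\cdot,\varpi)$, $\finb(\cdot,\varpi)$, or $\infb^X(\cdot,\varpi,\varpi_X)$ either yields a fixed-point variable, when the current first argument already occurs in the tracking string, or strictly lengthens that string by one new element of $\sub(\varphi)^2 \cup \{\emptyset\}$ or of $\sub(\varphi)$; hence the nesting depth is polynomial in $|\varphi|$, and at each unfolding the fan-out (the disjunction over $g \in G^t$, over $S \subseteq \Phi_g^2$, and the conjunctions and disjunctions over $s \in S$) is again at most $2^{\mathrm{poly}(|\varphi|)}$. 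Multiplying these contributions, $\logicize{inf\_path}$, and therefore $\Psi_\varphi$, has size $2^{\mathrm{poly}(|\varphi|)}$ and is built in the same time bound.

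I would then conclude by Theorem~\ref{prp:muCalc-sat}: satisfiability for the multi-agent $\mu$-calculus is decidable in time $2^{O(|\Psi_\varphi|)}$, which is $2^{2^{\mathrm{poly}(|\varphi|)}}$, i.e.\ doubly exponential in $|\varphi|$, placing the problem in $\EEXP$. Together with the equivalence of Theorem~\ref{thm:tableau-reduction}, this shows that $\logicize{L}$-satisfiability is in $\EEXP$ whenever no agent carries constraint~$5$.

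The only genuinely delicate step I foresee is the size estimate for $\logicize{inf\_path}$: one must check that the two-parameter recursion of $\infb^X$ (with the auxiliary string $\varpi_X$ recording whether $X$ has been met since the last recursive call) and the nested $\finb$ subformulas inside $\finb^X$ (whose tracking string is reset to $\varepsilon$) still terminate at polynomial depth and do not stack one exponential on another, so that $\Psi_\varphi$ stays singly exponential; everything else is routine accounting built on the already-established polynomial-time checkability of the graph predicates and the at-most-exponential size of $G^t$.
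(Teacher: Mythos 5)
Your proposal is correct and follows essentially the same route as the paper: bound the syntax-tree depth of the $\finb$/$\infb$ formulas polynomially via the tracking strings, bound the branching by the exponential size of $\G^t$, conclude that the translated formula has size $2^{\mathrm{poly}(|\varphi|)}$, and then apply the \EXP\ upper bound for $\mu$-calculus satisfiability from Theorem~\ref{prp:muCalc-sat} together with Theorem~\ref{thm:tableau-reduction}. The delicate point you flag --- that the $\finb(\cdot,\varepsilon)$ subtrees with reset tracking strings multiply rather than compose exponentials --- is handled implicitly (and rather tersely) in the paper's proof, and your more explicit accounting resolves it correctly.
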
 

\begin{proof}
	Given a formula $\varphi$, the height of the syntax tree of 
	each $\finb(T,\varepsilon),\finb^X(T,\varepsilon)$, and $\infb^X(\psi,\varepsilon)$, due to the dependency on the strings  $\varpi$, is $O(|\varphi|^2)$. 
	 The syntax tree furthermore branches with at most $2^{|\varphi|}$ disjunctions or conjunctions at each node. Therefore, the formula $\logicize{rules} \land \Inv(\neg \logicize{inf\_path}) \land \bigvee_{\varphi \in g \in G^t} g$ has size at most $2^{O(|\varphi|)}$. 
	Each step of the construction only requires up to double-exponential time with respect to $|\varphi|$.
	Theorem \ref{prp:muCalc-sat} then yields the resulting double-exponential upper bound.
\end{proof}

\subsection{The case of logics with \NI}

We now explain how to adjust the construction of $\logicize{rules}$ and $\logicize{inf\_path}$, to also handle the case of agents with negative introspection. 
What differs in this case, is that the tableau rules for these agents relate prefixes that are not necessarily consecutive. 
These prefixes are, however, close, in that their closest common ancestor is at worst their parent prefix.
Therefore, the tableau rules are still, in that sense, local.
We adjust the contents of the graphs that we use as variables, to take into account these extended formula interactions through the tableau rules.
In effect, in the following construction, graphs are equipped with limited lookahead capabilities. Still, a graph represents one tableau prefix, but it also contains certain prefixed formulas, so it can describe certain extensions of the current prefix. This allows our formulas to describe $\xrightarrow{X}$-paths that traverse non-consecutive tableau prefixes, resulting from the tableau rules in Table \ref{tab:tableau} for agents with constraint $5$.

\changeJ{
We extend the set of formulas that appear in a graph, to include certain prefixed formulas. We also keep track of any agent with condition $5$ that appears at the end of the prefix. 
Let $\act_5 = \{ \al \in \act \mid \logicize{L}(\al) \text{ has condition } 5 \}$.
For every $\al \in \act_5$, 
let:
}
%
%
\changeJ{
\begin{align*}
\sub_5^{\al\al}(\varphi) &=
\{ \varepsilon~\psi, ~ \beta\diam{\psi_1}~\psi, ~ \beta\diam{\psi_1}.\beta\diam{\psi_2}.\psi
\mid \psi, \psi_1,\psi_2 \in \sub(\varphi),~ \al \neq \beta \in \act_5
\}
\\ 
\sub_5^{\al}(\varphi) &= \sub_5^{\al\al}(\varphi) \cup 
\{ \al\diam{\psi_1}~\psi \mid \psi, \psi_1 \in \sub(\varphi)  \}
\\
\sub_5^\varepsilon(\varphi) &= \sub_5^{\al}(\varphi) \cup 
\{ \al\diam{\psi_1}.\al\diam{\psi_2}.\psi \mid \psi, \psi_1,\psi_2 \in \sub(\varphi)  \}
.
\end{align*}
}

\changeJ{
We modify the definition of $\G$, so that the graphs in $\G$ are of the form 
\[g = (\Phi,(\xrightarrow{X})_{X \in mV},\ell,b),\] 
where $b \in \{ \varepsilon, \al, \al\al \mid \al \in \act_5 \}$, and 
$\Phi \subseteq \sub_5^b(\varphi)$.
We also define $\G^b = \{ (\Phi,{(\xrightarrow{X})}_{X \in mV},\ell,b) \in \G \}$, for each  
$b \in \{ \varepsilon, \al, \al\al \mid \al \in \act_5 \}$.
For a tableau prefix $\sigma$ and $\delta \in \sub_5(\varphi)$, we say that $\sigma$ is $\delta$-flat when $\sigma = \sigma'.\al\diam{\psi}$ and $\delta = \al.\diam{\psi'}.\delta'$ or $\delta = \diam{\al}\psi'$ for some $\al, \psi, \psi'$, such that $\logicize{L}(\al)$ has condition $\logicize{5}$.
We say that $\al \in \act$ is $b$-inactive when $b = \al \diam{\psi} b'$ for some $\psi, b'$.
A tableau rule $\frac{\sigma~\psi}{\sigma'~\psi'}$ is $b$-active when $\psi$ is not of the form $[\al]\psi''$ or $\diam{\al}\psi''$, where $\al$ is $b$-inactive.}

\changeJ{
The purpose of the additional prefixes in the graph formulas is to control the application of the tableau rules that relate to condition $5$. 
A graph $g$ that represents a prefix $\sigma$ may also describe the branch at later prefixes.
This ensures that our constructed formula has models that adhere to the more involved tableau rules and the conditions for applying these rules.
However, this alteration does not affect the construction of the formulas for the infinite paths.
In this case, however, we need a more detailed treatment of the tableau rules, due to their more complex behaviour. Figure \ref{fig:lookahead} (top) illustrates the structure of these new graphs.
}

\begin{figure}
    \centering
\begin{tikzpicture}[->,>=stealth',shorten >=1pt,auto,node distance=2.5cm,thin]
  \tikzstyle{area} = [draw, dotted, rectangle, minimum height=3cm, minimum width=3cm]
  \tikzstyle{edge} = [line width=0.15,densely dashed]

  \filldraw[gray!5, draw=gray!80,label=right:$g$] (-2,-4) -- (8,-4) -- (8,6.8) -- (-2,6.8) -- cycle;
  \filldraw[gray!5, draw=gray!80,label=right:$h$] (1,-11.5) -- (10.5,-11.5) -- (10.5,-4.5) -- (1,-4.5) -- cycle;

  \node (g) at (7.5,6.3) {$g$};
  \node (h) at (10,-5) {$h$};


  
  \node[area,label=150:$\varepsilon$] (area11) at (3,5) {};
  \node[area,label=150:$\al\diam{\psi_1}$] (area21) at (1,1.5) {};
  \node[area,fill=gray!15,label=150:$\al\diam{\psi_2}$] (area31) at (6,1.5) {};
  \node[area,fill=gray!15,label=150:$\al\diam{\psi_2}.\al\diam{\psi_3}$] (area41) at (5,-2) {};

  \node[area,fill=gray!15,label=150:$\varepsilon$] (area32) at (5,-6.3) {};
  \node[area,fill=gray!15,label=150:$\al\diam{\psi_3}$] (area42) at (4,-9.8) {};
  \node[area,label=150:$\beta\diam{\psi_4}$] (area12) at (8.5,-9.8) {};

  \def\marg{0.1}
  \def\margc{0.9}

  \foreach \x in {1,2,3,4} {
    \foreach \y in {1,2,3,4} {
      \if\y1
        \def\xx{\marg}
        \def\yy{1.5+\marg}
      \else 
        \if\y2
        \def\xx{\marg}
        \def\yy{\marg}
      \else 
        \if\y3
        \def\xx{1.5+\marg}
        \def\yy{\marg}
      \else 
        \if\y4
        \def\xx{1.5+\marg}
        \def\yy{1.5+\marg}        
      \fi
      \fi
      \fi  
      \fi 
      \pgfmathsetmacro{\rndx}{rand}
      \pgfmathsetmacro{\rndy}{rand}
      \def\rx{\rndx*\rndx*\margc+\xx}
      \def\ry{\rndy*\rndy*\margc+\yy}
      \if\x1
        \def\prf{\varepsilon}
      \else 
        \if\x4
        \def\prf{\al\diam{\psi_2}.\al\diam{\psi_\x}}
      \else 
        \def\prf{\al\diam{\psi_\x}}
      \fi 
      \fi 
      \if\x1
      \node (\x-\y) at ($ (area\x1.south west) + (0.3,0.3) + (\rx, \ry) $) {$\circ$};
      \else 
      \if\x2
      \node (\x-\y) at ($ (area\x1.south west) + (0.3,0.3) + (\rx, \ry) $) {$\circ$};
      \else 
      \foreach \g in {1,2}  {
      \node (\x-\y-\g) at ($ (area\x\g.south west) + (0.3,0.3) + (\rx, \ry) $) {$\circ$};
      }
      \fi \fi 
    }
  }

  \node (4-4-3) at ($ (area12.south west) + (0.3,0.3) + (1.3, 2) $) {$\circ$};
  \draw[edge] (3-3-2) -> (4-4-3);
  

  \draw[edge] (1-1) -> (1-2);
  \draw[edge] (1-1) -> (1-3);
  \draw[edge] (1-2) -> (1-4);
  \draw[edge] (1-4) -> (1-3);
  
  \draw[edge] (1-2) -> (2-1);
  \draw[edge] (1-2) -> (2-4);
  \draw[edge] (1-3) -> (2-4);

  \draw[edge] (2-1) -> (2-2);
  \draw[edge] (2-2) -> (2-3);
  \draw[edge] (2-2) -> (2-4);

  \draw[edge] (1-3) -> (3-1-1);
  \draw[edge] (2-4) -> (3-3-1);
  \draw[edge] (2-3) -> (4-1-1);
  \draw[edge] (3-2-1) -> (3-4-1);
  \draw[edge] (3-3-1) -> (4-4-1);
  \draw[edge] (4-2-1) -> (3-2-1);
  \draw[edge] (3-1-1) -> (2-2);
  \draw[edge] (3-4-1) -> (1-2);
  \draw[edge] (4-1-1) -> (4-3-1);
  \draw[edge] (3-4-1) -> (3-3-1);
  \draw[edge] (4-4-1) -> (4-2-1);
  \draw[edge] (3-2-2) -> (3-4-2);
  \draw[edge] (3-3-2) -> (4-4-2);
  \draw[edge] (4-2-2) -> (3-2-2);
  \draw[edge] (4-1-2) -> (4-3-2);
  \draw[edge] (3-4-2) -> (3-3-2);
  \draw[edge] (4-4-2) -> (4-2-2);
  
\end{tikzpicture}
%
    
    
    
    \caption{Graphs 
    for logics 
    with Condition $5$. Each rectangle represents a graph and 
    each dotted square represents the set of formulas that are prefixed by 
    its label.
    Graph $h$ is a prospective $\al\diam{\psi_2}$-child of $g$ and we see that the darker squares of the two graphs match, as in Definition \ref{def:achild5}.}
    \label{fig:lookahead}
\end{figure}
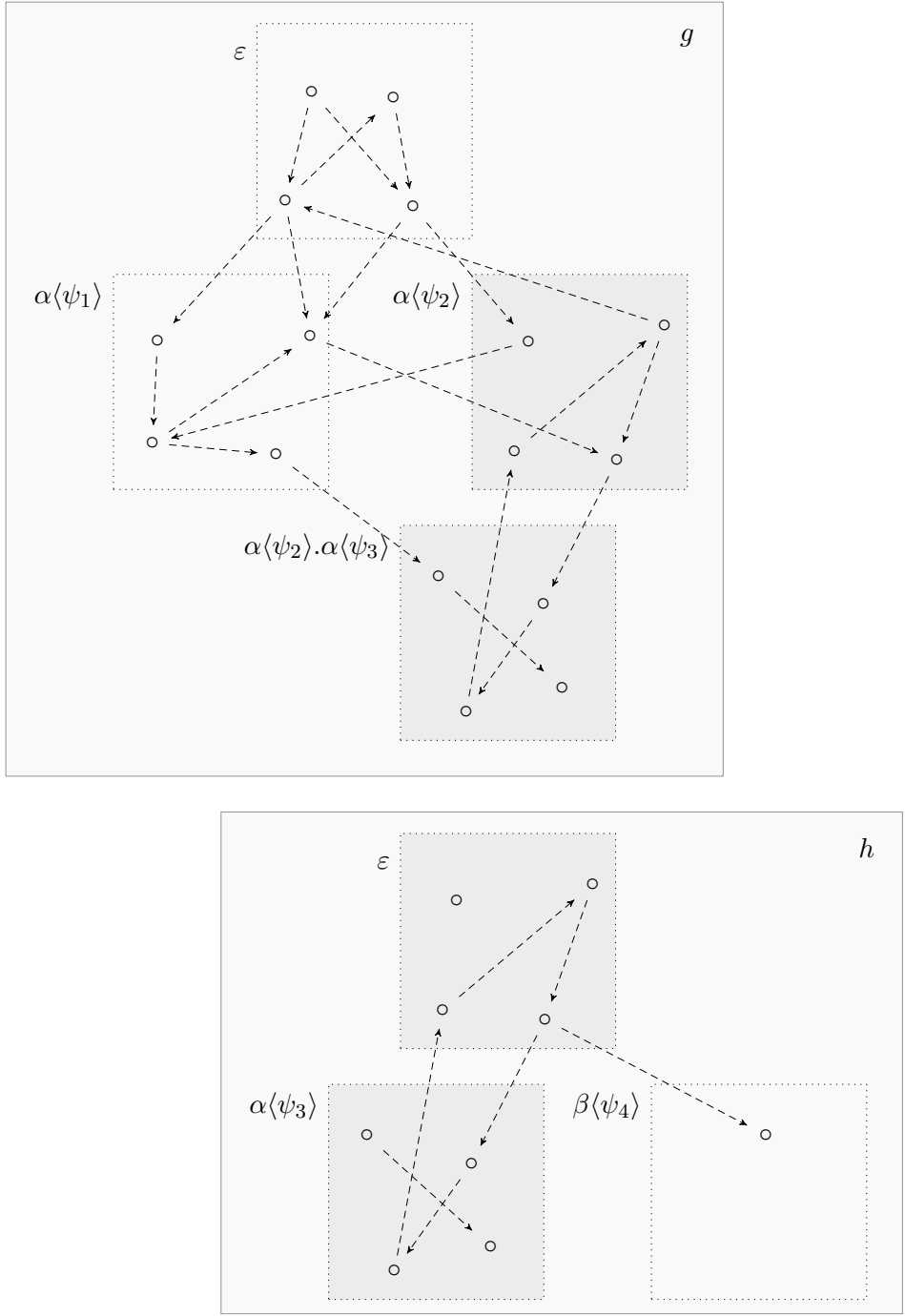

\begin{defi}\label{def:compatible5}
We  say that $g = (\Phi,(\xrightarrow{X})_{X \in mV},\ell,b)\in \G$ is \emph{locally 
compatible with the tableau rules} and write $com(g)$ when for every $\delta = \varsigma~\psi \in \Phi$, $\beta \neq \al$, and $X \in mV$, where $\psi \neq Y$ for any $X<Y$, the following conditions are satisfied. For every pair $\sigma, \sigma'$ of possible tableau prefixes and $\delta',\delta_1,\delta_2 \in \sub_5(\varphi)$, where $\sigma$ is not $\delta$-flat:
\begin{enumerate}[align=left] 
	\item 
	if $\frac{\sigma.\varsigma~\psi}{\sigma.
    \varsigma~
    \psi'}$ is a $b$-active tableau rule, 
    then 
    $\varsigma~\psi' \in \Phi$ and $\delta \xrightarrow{X} \varsigma~\psi'$;
    \item 
	if $\frac{\sigma.\delta}{\sigma.\delta'}$ 
    is a $b$-active instance of tableau rule ($\mathsf{D}$), ($\mathsf{d}$), 
    ($\mathsf{4}$), ($\mathsf{B5}$), ($\mathsf{D5}$), ($\mathsf{b}$), or ($\mathsf{b4}$), 
    then 
    $\delta' \in \Phi$ and $\delta \xrightarrow{X} \delta'$;
    \item 
	if $\frac{\sigma.\delta}{\sigma.\delta'}$ 
    is a $b$-active instance of tableau rule ($\mathsf{B}$), 
    $\delta = \varsigma~[\al]\psi_1$, and $\varsigma.\al\diam{\psi_2}~\psi_3 \in \Phi$ for some $\psi_1,\psi_2,\psi_3 \in \sub(\varphi)$,
    then 
    $\delta' \in \Phi$ and $\delta \xrightarrow{X} \delta'$;
    \item 
	if $\frac{\sigma.\delta}{\sigma.\delta'}$ 
    is a $b$-active instance of tableau rule ($\mathsf{B55}$), 
    $\delta = \varsigma~[\al] \psi_1$, and $\sigma.\al\diam{\psi_2}~ \psi_3 \in \Phi$ for some $\psi_1,\psi_2,\psi_3 \in \sub(\varphi)$,
    then 
    $\delta' \in \Phi$ and $\delta \xrightarrow{X} \delta'$;
	\item 
	if $\frac{\sigma.\delta}{\sigma.\delta'}$ 
    is a $b$-active instance of tableau rule ($\mathsf{D55}$), 
    $\delta = \varsigma~\diam{\al} \psi'$, and $\sigma~\diam{\al} \psi' \notin \Phi$ for some $\psi' \in \sub(\varphi)$,
    then 
    $\delta' \in \Phi$ and $\delta \xrightarrow{X} \delta'$;
	\item 
	if 
	$\frac{\sigma.\delta}{\genfrac{}{}{0pt}{1}{\sigma.\delta_1}{\sigma.\delta_2}}$ is an instance of a $b$-active tableau rule, 
    then 
	$\delta_1, \delta_2 \in \Phi$, $\delta \xrightarrow{X} \delta_1$, 
    and $\delta \xrightarrow{X} \delta_2$;
	\item 
	if $\frac{\sigma.\delta}{\sigma.\delta_1~\mid~ \sigma.\delta_2}$ is a $b$-active tableau rule, then $\delta_1 \in \Phi$ and $\delta \xrightarrow{X} \delta_1$, or $\delta_2 \in \Phi$ and $\delta \xrightarrow{X} \delta_2$;
	\item 
	if $\frac{\sigma.\varsigma~\psi}{\sigma.\varsigma'~\psi'}$, where 
    $\varsigma'~\psi' \notin \sub_5(\varphi)$ and
	$\varsigma'$ is a strict extension of $\varsigma$,  is a $b$-active tableau rule, then 
	$(out,bot) \in \ell(\delta)$;
	\item 
	if $\frac{\sigma.\varsigma.\psi}{\sigma'~\psi'}$ is a $b$-active tableau rule, 
    where 
        $\sigma.\varsigma$ is a strict extension of $\sigma'$,  
    then
	$(out,top) \in \ell(\delta)$; 
        and
	\item 
	$\varsigma~\neg \psi \notin \Phi$ and $\varsigma~\false \notin \Phi$.
\end{enumerate}
Let $\G^t = \{ g \in G \mid com(g) \}$. 
\end{defi}

\changeJ{
Let $\sigma = \varepsilon$ or $\sigma = \al\diam{\psi}$, where $\al \in \act_5$ and $\psi \in \sub(\varphi)$, 
and $g = (\Phi,(\xrightarrow{X})_{X \in mV},\ell,b)\in \G$. 
We define $g^\sigma = (\Phi^\sigma,(\xrightarrow{X}^\sigma)_{X \in mV},\ell^\sigma)$, where 
$\Phi^\sigma = \{ \delta \mid \sigma.\delta \in \Phi \}$; 
$\xrightarrow{X}^\sigma = \{ (\delta_1,\delta_2) \in (\Phi^\sigma)^2 \mid \sigma.\delta_1 \xrightarrow{X} \sigma.\delta_2 \}$; and 
$\ell^\sigma(\delta) = \ell(\sigma.\delta) \cap \{(out,bot)\}$ for every $\delta \in \Phi^\sigma$.}

\begin{defi}\label{def:achild5}
Let $\al \in \act$, $\chi \in \sub(\varphi)$, and 
$$g = (\Phi_g,(\xrightarrow{X}_g)_{X \in mV},\ell_g,b_g), ~h =(\Phi_h,(\xrightarrow{X}_h)_{X \in mV},\ell_h,b_h) \in G^t.$$
If $\al \in \act_5$, then we say that $g $ is a prospective $\al\diam{\chi}$-child of $h$, and write $c_{\al\diam{\chi}}(h,g)$ when
\begin{enumerate}
    \item $\diam{\al}\chi \in \Phi_h$, $\diam{\al}\chi \xrightarrow{X}_h \al\diam{\chi}~\chi$ for some $X$, $\chi \in \Phi_g$, and $(in,top) \in \ell_g(\chi)$;
    \item $b_h \neq \al\al$; 
    \item $g^\varepsilon = h^{\al\diam{\chi}}$; and 
    \item if $b_h = \al$, then $b_g = \al\al$, otherwise $b_g = \al$.
\end{enumerate}

If $\al \notin \act_5$, then we say that $g $ is a prospective $\al\diam{\chi}$-child of $h$, and write $c_{\al\diam{\chi}}(h,g)$ when:
\begin{enumerate}
	\item $\diam{\al}\chi \in \Phi_h$ and $\chi \in \Phi_g$ and $(in,top) \in \ell_g(\chi)$;
	\item if $[\al]\psi \in \Phi_h$ and $\frac{\sigma.~[\al]\psi}{\sigma.\al\diam{\psi_1}~\psi_2}$ is a tableau rule, then $\psi_2 \in \Phi_g$ and $(in,top) \in \ell_g(\psi_2)$;  
	\item if $[\al]\psi \in \Phi_g$ and $\frac{\sigma.\al\diam{\psi_1}~[\al]\psi}{\sigma~\psi_2}$ is a tableau rule, then $\psi_2 \in \Phi_h$ and $(in,bot) \in \ell_h(\psi_2)$; and
    \item $b_g = \varepsilon$.
\end{enumerate}
We can similarly define that $g $ is a prospective $\al$-child of $h$, and write $c_{\al}(h,g)$, by omitting the first condition in each case.
\end{defi}

Figure \ref{fig:lookahead} depicts two graphs, $g$ and $h$, where $c_{a\diam{\psi_2}}(g,h)$ for some formula $\psi_2$.

The size of $\G$ is still at most exponential with respect to $\varphi$ and it takes at most polynomial time to verify $com(g)$ and $c(g,h)$; furthermore, $|\act_5^\varphi|\leq 2|\act|\cdot |\sub(\varphi)|^2$.


Let $b$ be a tableau branch. 
For the remaining construction, to express when there exists a path that visits some $a~X$ infinitely often, it does not matter what objects the vertices of the graphs are.
We can then continue with the construction, similarly as for the case of logics with no agents with negative introspection.

\begin{cor}\label{cor:all-in-eexp}
	$\logicize{L}$-satisfiability is in $\EEXP$.
\end{cor}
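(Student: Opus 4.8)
\textbf{Proof plan for Corollary \ref{cor:all-in-eexp}.}

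The plan is to combine the reduction machinery of Section \ref{sec:TtT} in its two forms: Theorem \ref{thm:tableau-reduction} together with Corollary \ref{cor:no5-in-eexp} handle the logics where no agent has constraint $5$, and the adjusted construction sketched in the subsection ``The case of logics with \NI'' handles the remaining cases. Concretely, for an arbitrary logic $\logicize{L}$ and input formula $\varphi$, I would first fix the modified graph universe $\G$ (now with the extra component $b \in \{\varepsilon,\al,\al\al \mid \al \in \act_5\}$ and vertex sets drawn from $\sub_5^b(\varphi)$), together with the predicates $com$, $c_{\al\diam{\chi}}$, and $c_\al$ from Definitions \ref{def:compatible5} and \ref{def:achild5}. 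Then I would form the formulas $\logicize{rules}$, the $\finb$/$\infb$ families, and $\logicize{inf\_path}$ exactly as before, but over this enlarged universe, and assert the analogue of Theorem \ref{thm:tableau-reduction}: $\logicize{rules} \land \Inv(\neg \logicize{inf\_path}) \land \bigvee_{\varphi \in g \in \G^t} g$ is satisfiable in the $\mu$-calculus iff $\varphi$ is $\logicize{L}$-satisfiable.

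The key steps, in order, are: (1) verify that the analogues of Lemmata \ref{lem:rules-iff-ex-b--rep-impl-max-not-closed} and \ref{lem:b-to-model-representing} go through with the look-ahead graphs — here one must check that a graph $g$ representing a prefix $\sigma$, together with the $c_{\al\diam{\chi}}$ relation, correctly forces the $5$-rules (\textsf{B5}), (\textsf{B55}), (\textsf{D5}), (\textsf{D55}) and their side conditions, using the $\delta$-flatness and $b$-activity bookkeeping; (2) check that the $\finb$ and $\infb$ formulas and Lemmata \ref{lem:finb-equivalences-variables}--\ref{lem:infb-correct} are unaffected, since, as the excerpt notes, for describing the existence of a $\xrightarrow{X}$-path visiting some $a~X$ infinitely often ``it does not matter what objects the vertices of the graphs are'' — the recursion on the strings $\varpi,\varpi_X$ and the edge structure $\xrightarrow{X}$ are all that is used; (3) conclude the equivalence of satisfiability exactly as in the proof of Theorem \ref{thm:tableau-reduction}, invoking Theorem \ref{thm:tableaux} for soundness and completeness of the tableaux; (4) bound the size: $|\G|$ and hence $|\act_\varphi^5|\le 2|\act|\cdot|\sub(\varphi)|^2$ are still at most exponential in $|\varphi|$, $com$, $c_{\al\diam{\chi}}$, $c_\al$ are polynomial-time checkable, and the syntax trees of $\finb(T,\varepsilon)$, $\finb^X(T,\varepsilon)$, $\infb^X(\psi,\varepsilon)$ still have height $O(|\varphi|^2)$ with branching at most $2^{O(|\varphi|)}$ per node, so the whole target formula has size $2^{O(|\varphi|)}$; (5) apply Theorem \ref{prp:muCalc-sat} ($\mu$-calculus satisfiability is \EXP-complete) to the exponentially-sized formula, giving a $2\EXP$ bound for $\logicize{L}$-satisfiability. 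Since the no-$5$ case is already Corollary \ref{cor:no5-in-eexp}, these two cases cover every logic $\logicize{L}$ we have defined, yielding the corollary.

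The main obstacle I expect is step (1): checking that the look-ahead graphs faithfully simulate the tableau rules for agents with negative introspection. The subtlety is that rules like (\textsf{B5}), (\textsf{B55}), (\textsf{D5}), (\textsf{D55}) relate a prefix $\sigma.\al\diam{\psi}$ with $\sigma$ or with a sibling $\sigma.\al\diam{\psi'}$ rather than with a consecutive extension, so a single graph must carry enough of the branch around $\sigma$ (the $\sub_5^b(\varphi)$ component and the $b$-tag recording whether the current prefix ends in one or two $\al$-steps for $\al\in\act_5$) to decide locally whether a given $5$-rule application is licensed, and one has to confirm that the ``glueing'' condition $g^\varepsilon = h^{\al\diam{\chi}}$ in Definition \ref{def:achild5} makes the shared overlap of parent and child consistent. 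Once one is convinced that a model of $\logicize{rules}$ really does unwind into a maximal, propositionally-open, rule-closed branch (and conversely), the $\infb$/$\finb$ correctness and the complexity accounting are routine repetitions of the no-$5$ arguments.
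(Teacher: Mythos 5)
Your proposal follows the same route as the paper: the no-$5$ case is Corollary \ref{cor:no5-in-eexp}, and the case with Negative Introspection is handled by rerunning the tableau-to-$\mu$-calculus reduction over the enlarged look-ahead graph universe of Definitions \ref{def:compatible5} and \ref{def:achild5}, noting that $|\G|$ stays exponential, that $com$ and $c_{\al\diam{\chi}}$ remain polynomial-time checkable, and that the $\finb$/$\infb$ machinery is insensitive to what the graph vertices are, before invoking Theorem \ref{prp:muCalc-sat} on the exponentially-sized formula. The steps you flag as requiring verification (especially the correctness of the $5$-rule simulation via the $b$-tags and the gluing condition $g^\varepsilon = h^{\al\diam{\chi}}$) are exactly the points the paper leaves to the reader, so your plan matches the paper's own argument.
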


\section{Conclusions}
\label{sec:multi_ml_conclusion}

In this work we studied the complexity of the satisfiability problem for multi-modal logics with recursion.
 These logics mix the frame conditions from epistemic modal logic and the recursion operators of the $\mu$-calculus.
 
In the first part we gave translations among several of these logics that connect their satisfiability problems.
This approach allowed us to offer complexity bounds for
satisfiability and to prove finite-model results. We also presented a sound and complete tableau
that has termination guarantees, conditional on a logic’s finite-model property.
Using our translations, we obtained several hardness and completeness results for individual logics and their combinations, which are summarized in Table \ref{tab:summary}, with a few isolated cases missing.

In the second part of this work we aimed to fill these gaps in complexity results, specifically for the case of the multi-agent $\mu$-calculus over frames containing the $\blogic$ and $\kv$ restrictions and the case of single-agent $\mu$-calculus over transitive frames. 
To tackle the multi-agent case, we gave a translation from tableaux to $\mu$-calculus formulas, which allowed us to use the known algorithm for testing satisfiability over the  $\mu$-calculus  in that setting. However, the translation we provided was not polynomial in size. Thus, it yielded an upper bound on the complexity of satisfiability that did not match the \PSPACE\ hardness result stemming from classic work on modal logics. To address this complexity gap, we customized the tableaux exploration algorithm we developed to utilize the transitivity condition on frames. Doing so allowed us to establish a tight \PSPACE\ bound for the satisfiability problem in that case. 

Table \ref{tab:updated_summary} summarises the results. 

\subsection{Discussion on the existence of a translation from \texorpdfstring{$\kf^{\mu}$ to $\kf$}{K4\textasciicircum mu to K4}}

\begin{table}
	\centering
\begin{tabular}{|l|l|l|l|}
\hline
$\#$ agents      & Restrictions on syntax/frames & \multicolumn{1}{l|}{Upper Bound} & Lower Bound \\ \hline
                      & frames with B or 5 & {\color{purple}\textbf{\EEXP  }} \ref{cor:all-in-eexp} & \EXP-hard  \ref{prp:EXP-hard-k2}  \\
$\geq 2$ & not B , 5                                  & \EXP~ \ref{cor:mu-upper1} & \EXP-hard \ref{prp:EXP-hard-k2}   \\
                      & not 5, not $\mu$. X                             & \EXP~ \ref{cor:mu-upper2} & \EXP-hard   \ref{prp:fragmantsMu} \\ \hline
                    & with 5 (or B4)             & \NP~ \ref{thm:ladhalp} & \NP-hard   \ref{thm:ladhalp}  \\
       1               & with 4                                      & {\color{purple}\textbf{\PSPACE  }} \ref{thm:4-PSPACE} & \PSPACE-hard \ref{thm:ladhalp} \\
                      & Any other restrictions                       & \EXP~  \ref{prp:muCalc-sat} & \EXP-hard  \ref{prp:muCalc-sat}  \\ \hline
\end{tabular}
\caption{The updated summary of the complexity of satisfiability checking for various modal logics with recursion. Our additional contributions are highlighted.}
\label{tab:updated_summary}
\end{table}

The reader might wonder whether some of the complexity results we obtained with the use of tableaux might have also been obtained via translations.
For example, Section \ref{sec:KfourMu} is dedicated to proving the tight bound for the complexity of satisfiability for formulas in $\kf^\mu$. 
We will discuss here why we think this would be very hard to do via translations. 
To apply our translation technique over $\kf^\mu$, we would need to translate formulas in $\kf^\mu$ to formulas of some logic whose satisfiability problem is in \PSPACE. A natural choice would be modal logics without recursion.
 Amongst those, a prime candidate would be $\kf$ as the transitivity of the models would be easier to tackle.
Therefore we would need to focus our efforts on developing ``efficient'' translations that remove fixed-point operators from formulas. 

The above task seems relatively easy for $\Box$-free formulas since we can remove consecutive diamonds from such formulas due to the transitivity restriction on the frames. 
This is straightforward over transitive frames as any subformula starting with two or more consecutive diamonds is exactly satisfied when the similar formula, where all the consecutive diamonds are merged into one, is satisfiable. 

At the same time, we know that we can express the invariance operator $\Inv(-)$ in $\kf$ by using a box, and therefore we can
simulate some form of recursion. 
We would hope that an intuition like this would help us design a translation. 
Unfortunately, it turns out that tackling arbitrary formulas with boxes is less easy.
We now discuss the use of our translation formulas for the case of $\kf^{\mu}$. 

Till now, all translations we have presented have a simple characteristic. 
Namely, given a formula $\varphi$ over a logic $\LG$ and its translation $f(\varphi)$ for a logic $\LG'$, we have that 
all models that model the formula $f(\varphi)$, will also model the formula $\varphi$, or, in some sense, describe a model of $\varphi$. 
This can be seen by observing the way we have defined translations so far, and the proofs of their correctness. 
Being able to extract a model for $\varphi$ from a model of $f(\varphi)$ is an important aspect of our translations, because it is at the core of our arguments regarding compositionality (see Section \ref{subsec:compositionality}). 
 We demonstrate here our argument on why such a (well-behaved) translation may not be possible.
 
To do so, assume that such a translation exists and consider again the formula $\varphi_f=$ \change{$\mn$}$ X. \Box X$, which requires all paths in the model to be finite, from Example \ref{ex:all_paths_fin}.
The formula is both recursive and its satisfiability is not affected by the transitivity restriction.
Therefore, due to the translation, there should exist a satisfiable property $\phi'$ in $\kf$ that is the translation of $\varphi_f$. 
Let $k$ be the number of subformulas of $\phi'$.
We now write a formula $\psi \in \kf^{\mu}$ such that $\psi$ is satisfied on a (transitive) model having $2k$-many states connected in a path.
Said formula can even be expressed without recursion and a model $\mathcal{M}$ for it is shown in Figure \ref{fig:big_path_transitive_model} (before the addition of the red edge connecting $n_{i+1}$ to itself).

\begin{figure}
\begin{center}
\begin{tikzpicture}[scale=0.6,node distance=2.5cm]

\tikzstyle{every node}+=[inner sep=3pt]
\node (1) {$n_1$};
\node (2) [right of =1 ] {$n_2$};
\node (3) [right of =2] {$\ldots$};

\node (5) [right of =3]{$n_i$};
\node (6) [right of =5] {$n_{i+1}$};

\node (4) [right of=6]{$n_{2*k}$};

\path 
	(1) edge [->](2)
		edge [bend left, ->] (3)
		edge [bend left, ->] (4)
		edge [bend left, ->] (5)
		edge [bend left, ->] (6)

%
    (2) edge [->](3)
    		edge [bend right, ->] (4)
    		edge [bend right, ->] (5)
		edge [bend right, ->] (6)
	(3) edge [dashed,->] (5)
	(5) edge [->] (6)
	
	(6) edge [->,dashed] (4)
		edge [->,red, loop above] (6);
\end{tikzpicture}
\end{center}
\caption{The model $\mathcal{M}$ of property $\psi$, before and after adding the loop edge.}
\label{fig:big_path_transitive_model}
\end{figure}
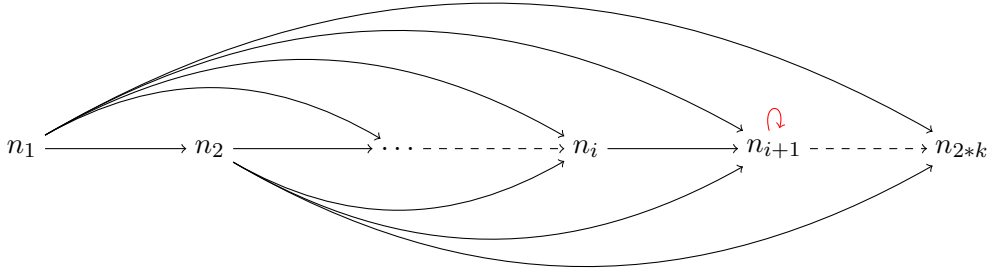

We now study $\mathcal{M}$ and its relation to $\phi'$. We see that we should have $\mathcal{M}\models \phi'$ as all paths of $\mathcal{M}$ are indeed finite.
 However, the length of the longest path of $\mathcal{M}$ is $2*k$ which is larger than the modal depth of $\phi'$.
  Moreover, since  $\mathcal{M}$ is transitive we know that as we ``move away'' from the root node, 
  more box-formulas are satisfied, due to the transitivity axiom $\Box p \implies \Box \Box p$. 
  Thus, the number of subformulas of $\phi'$ satisfiable in the states of $\mathcal{M}$ increases the further we move away from the root.
   Finally, we know the number of $\Box$ prefixed sub-formulas of $\phi'$ is finite and bounded by $k$. 

Thus, there must exist nodes $n_i$, $n_{i+1}$, as shown in Figure \ref{fig:big_path_transitive_model}, that satisfy exactly the same number of box-prefixed subformulas of $\phi'$.
 We now add the red loop-edge on $n_{i+1}$ as seen in the figure.
  The addition of said edge cannot affect the validity of any diamond-prefixed subformula of $\phi'$, since it only extends the reachable states of \changeJ{$n_{i+i}$}.
   Regarding the boxes, we inspect whether the addition of the loop edge could affect the boxes of $n_{i+1}$.
    However, the boxes of $n_{i}$ are the same as those of $n_{i+i}$ (before the addition of the loop), and due to transitivity and the form of $\mathcal{M}$ the states reachable by $n_i$ are exactly $n_{i+1}$ and the states reachable by \changeJ{$n_{i+1}$}.
     Therefore, all box-prefixed formulas of $n_i$ are also applied in $n_{i+1}$, and thus the addition of the loop 
     preserves the satisfaction
     of the box-formulas of $n_{i+1}$. 

We now have constructed the model $\mathcal{M}'$ (after the addition of the red loop-edge), which satisfies identical subformulas of $\phi'$ in the identical states as $\mathcal{M}$. 
 \changeJ{Note here that the assumption was that a translation that preserves satisfiability exists, and we used such a translation on the formula $\varphi$.}
 Thus, it must be the case that $\mathcal{M}' \models \phi'$. 
 \changeJ{However, 
 $\varphi_f$ states that all paths are finite, which means that $\mathcal{M} \models \varphi_f$, and $\mathcal{M}' \not \models \varphi_f$, while $\mathcal{M},\mathcal{M}' \models \varphi'$. 
 Therefore, it is hard to see how any translated formula would ``agree'' with the initial one on certain models, or describe models for $\varphi$. 
 So far the translations we have defined always work, with the proof producing a model that satisfies both the old formula and the new. 
 It is possible that more complicated translations exist that do not have this property, but we were not able to think of one. 
}

\paragraph{Conjectures and Future Work}
We currently do not have a tight complexity bound for the case of the multi-agent $\mu$-calculus over symmetric or euclidean frames. 
What is interesting is that we also do not have a counterexample to prove that the translations of Section \ref{sec:transl_mu_cal} that we already have, as well as other ones we attempted, do not preserve satisfiability. 

\change{The complexity of the model checking problem for the $\mu$-calculus is an important open problem, known to have a quasi-polynomial time solution, but not known whether it is in \P~\cite{DBLP:journals/lmcs/LehtinenPSW22,DBLP:conf/lics/Lehtinen18,jurdzinski2017succinct,calude2017deciding,fearnley2017ordered}.
The problem does not depend on the frame restrictions of the particular logic, though one may wonder whether additional frame restrictions would help solve the problem more efficiently.
Currently, we are not aware of a way to use our translations to obtain such an improvement.}



As, to the best of our knowledge, most of the logics described in this chapter have not been explicitly defined before, with notable exceptions such as \cite{DAGOSTINO20104273transitive,Dagostino2013S5,alberucci_facchini_2009}, they also lack any axiomatizations and completeness theorems.
We do expect the classical methods from \cite{Kozen1983,ladnermodcomp,Halpern1992} and others to work in these cases as well. However, it would be interesting and desirable to flesh out the details and see if there are any unexpected situations that arise.

Given the importance of common knowledge for epistemic logic and the fact that it has been known that common knowledge can be thought of as a (greatest) fixed point already from \cite{harman1977review,Barwise:1988:TVC:1029718.1029753}, we consider the logics that
we presented to be natural extensions of \ML.
Besides the examples given in Section \ref{sec:multi_ml_background}, we are interested in exploring what other natural concepts can be defined with this enlarged language.
Finally, it would be interesting to prove a result showing that, under some natural assumptions, there is no ``efficient'' satisfiability-preserving translation from $\kf^{\mu}$ to $\kf$ or other modal logics.

\section*{Acknowledgements}
We thank the anonymous reviewers for their insightful comments and criticisms on the submitted manuscript. We are most grateful to 
Laura Bozzelli, whose detailed feedback and pointers led to substantial improvements in both the presentation and the technical contents in this article.

%
%
%
\bibliographystyle{alphaurl}
\bibliography{mybib}
%

%

\end{document}